\documentclass[letterpaper,12pt]{article}
\usepackage{amsfonts}
\usepackage{amsmath, amsthm, astron, amssymb}
\usepackage{bbm}
\usepackage{xcolor}

\usepackage{multirow}
\usepackage{float}
\floatstyle{plaintop}
\restylefloat{table}

\usepackage{endnotes}
\let\footnote=\endnote

\usepackage[english]{babel}
\usepackage[utf8]{inputenc}
\usepackage{fancyhdr}
\fancyheadoffset{\textwidth}

\newtheorem{assumption}{Assumption}
\newtheorem{example}{Example}
\newtheorem{theorem}{Theorem}[section]

\newtheorem{definition}{Definition}
\newtheorem{lemma}[theorem]{Lemma}
\input{tcilatex}

\newcommand{\Tr}{{\rm Tr}}

\begin{document}

\title{\textbf{Dynamic Linear Panel Regression Models \\ with Interactive Fixed Effects}\thanks{This is the last working paper version of the paper published in \textit{Econometric Theory} \textbf{33}(1), 158--195, 2017; doi:10.1017/S0266466615000328. This paper is based on an unpublished manuscript of the authors that was
circulated under the title ``Likelihood Expansion for Panel Regression Models with Factors''
but is now completely assimilated by the current paper and Moon and Weidner~(2015).
We greatly appreciate comments
from the participants in the Far Eastern Meeting of the Econometric Society 2008,
the SITE 2008 Conference, the All-UC-Econometrics Conference 2008,
the July 2008 Conference in Honour of Peter Phillips in Singapore,
the International Panel Data Conference 2009,
the North American Summer Meeting of the Econometric Society 2009,
and from seminar participants at Penn State, UCLA, and USC.
 We are also grateful for the comments and suggestions of Guido Kuersteiner, Peter Phillips, and anonymous referees.
Moon is grateful for the financial support from the NSF via grant SES 0920903 and the faculty development award from USC. Weidner acknowledges support from the Economic and Social Research Council through the ESRC Centre for Microdata Methods and Practice grant RES-589-28-0001.}
 }

\author{\setcounter{footnote}{2}
Hyungsik Roger Moon\footnote{
Department of Economics and USC Dornsife INET, University of Southern California,
Los Angeles, CA 90089-0253.
Email: {\tt moonr@usc.edu}.
Department of Economics, Yonsei University, Seoul, Korea.
}
\and Martin Weidner\footnote{
 Department of Economics,
 University College London,
 Gower Street,
 London WC1E~6BT, U.K.,
 and CeMMaP.
 Email: {\tt m.weidner@ucl.ac.uk}.
}}

\date{August 2015}

\maketitle
 
\vspace{-0.8cm} 

\begin{abstract}
\noindent 
We analyze linear
panel regression models with interactive fixed effects
and predetermined regressors, for example lagged-dependent variables.
The first-order asymptotic theory of the
least squares (LS) estimator of the regression coefficients is worked out
in the limit where both the cross-sectional dimension and the number of time periods become large.
We find two sources of asymptotic bias of the LS estimator:
bias due to correlation or heteroscedasticity of the idiosyncratic error term,
and bias due to predetermined (as opposed to strictly exogenous) regressors.
We provide a bias-corrected LS estimator.
We also present 
bias-corrected versions of the three classical test statistics (Wald, LR, and LM test)
and show their asymptotic distribution is a $\chi^2$-distribution.
Monte Carlo simulations show the bias correction of the LS estimator and of the test statistics
also work well for finite sample sizes.
\end{abstract}

\pagebreak

\section{Introduction}

In this paper, we study a linear panel regression model in which the individual fixed
effects~$\lambda_{i}$, called factor loadings, interact with common time-specific
effects $f_{t}$, called factors. This interactive fixed effect specification contains the conventional 
individual specific effects
and time-specific effects as special cases but is significantly more flexible because it allows the
factors $f_t$ to affect each individual with a different loading $\lambda_i$.

Factor models have been widely studied in various economics disciplines, for example, in asset pricing, forecasting, empirical macro, 
and empirical labor economics.\footnote{See, e.g.,
Chamberlain and Rothschild \cite*{CR1983}, Ross \cite*{Ross1976}, and Fama and French \cite*{FamaFrench1993} for asset pricing;
Stock and Watson \cite*{StockWatson2002} and Bai and Ng \cite*{BaiNg2006} for forecasting;
 Bernanke, Boivin, and Eliasz \cite*{BernankeBoivinEliasz2005} for empirical macro; and Holtz-Eakin, Newey, and Rosen \cite*{HoltzEakin-Newey-Rosen1988} for empirical labor economics.}
The panel literature often uses factor models to represent time-varying individual effects (or heterogenous time effects), so-called interactive fixed effects. For panels with a large cross-sectional dimension ($N$) but a short time dimension ($T$), Holtz-Eakin, Newey, and Rosen \cite*{HoltzEakin-Newey-Rosen1988} (hereafter HNR) study a linear panel regression model with interactive fixed effects and lagged dependent variables. To solve the incidental parameter problem caused by the $\lambda_i$'s, they estimate a quasi-differenced version of the model using appropriate lagged variables as instruments, and treating $f_t$'s as a fixed number of parameters to estimate. Ahn, Lee, and Schmidt \cite*{AhnLeeSchmidt2001} also consider large $N$ but short $T$ panels. Instead of eliminating the individual effects $\lambda_i$ by transforming the panel data, they impose various second-moment restrictions including the correlated random effects $\lambda_i$, and derive moment conditions to estimate the regression coefficients. The more recent literature considers panels with comparable size of $N$ and $T$. The interactive fixed effect panel regression model of
Pesaran~\cite*{Pesaran2006} allows heterogenous regression coefficients. Pesaran's estimator is the common correlated effect (CCE) estimator that uses the cross-sectional averages of the dependent variable and the independent variables as control functions for the interactive fixed effects.\footnote{%
The theory of the CCE estimator was further developed in, e.g.,
Harding and Lamarche~\cite*{HardingLamarche2009,HardingLamarche2011}, 
Kapetanios, Pesaran, and Yamagata~\cite*{KapetaniosPesaranYamagata2011},
Pesaran and Tosetti~\cite*{PesaranTosetti2011},
Chudik, Pesaran, and Tosetti~\cite*{ChudikPesaranTosetti2011},
and Chudik and Pesaran~\cite*{ChudikPesaran2015}.
}

Among the interactive fixed effect panel literature, most closely related to our paper is Bai~\cite*{Bai2009}. Bai assumes the regressors are  \textit{strictly} exogenous and the number of factors is known. The estimator he investigates is the least squares (LS) estimator, which minimizes the sum of squared residuals of the model
jointly over the regression coefficients and
the fixed effect parameters $\lambda_i$ and $f_t$.\footnote{%
The LS estimator is sometimes called ``concentrated'' least squares estimator in the literature, and in an earlier version of the paper, we referred to it as the ``Gaussian Quasi Maximum Likelihood Estimator'', because LS estimation is equivalent to maximizing a conditional Gaussian likelihood function.}
Using alternative asymptotics where $N,T\rightarrow \infty$ at the same rate,\footnote{%
Hahn and Kuersteiner~\cite*{HahnKuersteiner2002}
introduced the
alternative asymptotics to
characterize the asymptotic bias due to incidental parameter problems in 
fixed effect dynamic  panel data models.
See also
Arellano and Hahn~\cite*{ArellanoHahn2007} 
and Moon, Perron, and Phillips~\cite*{MoonPerronPhillips2014}
and references therein.
}
Bai shows the LS estimator is $\sqrt{NT}$-consistent and asymptotically normal, but may have an asymptotic bias. The bias in the normal limiting distribution occurs when the regression errors are correlated or heteroscedastic. Bai also shows how to estimate the bias, and proposes a bias-corrected estimator.

Following the methodology in Bai~\cite*{Bai2009}, we investigate the LS estimator for a linear panel regression with a known number of interactive fixed effects. 
The main difference from Bai is that we consider \textit{predetermined} regressors,
thus allowing feedback of past outcomes to future regressors. 
One of the main findings of the present paper is that 
the limit distribution of the LS estimator has two types of  biases: one type of bias due to  correlated or heteroscedastic errors (the same bias as in Bai) and the other type of bias due to the predetermined regressors. This additional bias term is analogous to the incidental parameter bias of Nickell~\cite*{Nickell1981} in finite $T$ and the bias in Hahn and Kuersteiner~\cite*{HahnKuersteiner2002} in large $T$. 

In addition to allowing for predetermined regressors, we also extend Bai's results to
models in which both ``low-rank regressors'' (e.g., time-invariant and common regressors, or interactions of those two) and ``high-rank-regressors'' (almost all other regressors that vary across individuals and over time) are present simultaneously, wheras Bai (2009) only considers the low-rank regressors separately and in a restrictive setting (in particular, not allowing for
regressors that are obtained by interacting time-invariant and common variables).
A general treatment of low-rank regressors  is desirable because
they often occur in applied work, for example,  Gobillon and Magnac~\cite*{GobillonMagnac2013}.
The analysis of those regressors
is challenging, however, 
because the unobserved interactive fixed effects also represent a low-rank $N \times T$ matrix, thus posing a non-trivial
identification problem for low-rank regressors, which needs to be addressed. We provide conditions under which
the different types of regressors are identified jointly, and under which they can be estimated consistently as $N$
and $T$ grow large.

Another contribution of this paper is to establish the asymptotic theory of the three classical test
statistics (Wald test, LR test, and LM (or score) test)
for testing restrictions on the regression coefficients in a large $N$, $T$ panel framework.\footnote{%
The ``likelihood ratio'' and the score used in the tests are 
based on the LS objective function, which can be interpreted as the
(misspecified) conditional Gaussian likelihood function.
}
Regarding testing for  coefficient restrictions, Bai~\cite*{Bai2009} investigates the Wald test based on the
bias-corrected LS estimator, and HNR consider the LR test in their 2SLS estimation framework with fixed $T$.\footnote{%
Another type of widely studied tests in the interactive fixed effect panel literature are panel unit root tests,
e.g., Bai and Ng~\cite*{BaiNg2004}, Moon and Perron~\cite*{MoonPerron2004}, and Phillips and Sul~\cite*{PhillipsSul2003}.
}
What we show is that the conventional LR and LM test statistics based on the LS profile objective function have non-central chi-square limits due to incidental parameters in the interactive fixed effects. We therefore propose modified LR and LM tests whose asymptotic distributions are  conventional chi-square distributions.

To establish the asymptotic theories of the LS estimator and the three classical tests, we use 
the quadratic approximation of the profile LS objective function
derived  in Moon and Weidner~\cite*{MoonWeidner2015}. This method is different from Bai~\cite*{Bai2009}, who uses the first-order condition of the LS optimization problem as the starting point of his analysis.
One advantage of our methodology is that it can also directly be applied to derive the asymptotic properties of
the LR and LM test statistics.

In this paper, we assume the regressors are not endogenous and the number of factors is known,
which might be restrictive in some applications.
In other papers, we study how to relax these restrictions. Moon and Weidner~\cite*{MoonWeidner2015} investigates the asymptotic properties of the LS estimator of the linear panel regression model with factors when the number of factors is unknown and extra factors are included unnecessarily
in the estimation. 
We find that under suitable conditions,\footnote{In Moon and Weidner \cite*{MoonWeidner2015}
we do not consider low-rank regressors or testing problems,
and we impose more restrictive assumptions on the error term of the model implying that some leading bias terms of the LS estimator are not present.}
the limit distribution of the LS estimator is unchanged when the number of 
factors is overestimated. 
The extension to allow for endogenous regressors is very briefly discussed in section 6 of the current paper,
and
is closely related to the results in
Moon, Shum, and Weidner~\cite*{MoonShumWeidner2012} (hereafter MSW). MSW's main purpose is to extend the random coefficient multinomial logit demand model (known as the BLP demand model from  Berry, Levinsohn, and Pakes~\cite*{BerryLevinsohnPakes1995}) by allowing for interactive product and market specific fixed effects. Although the main model of interest is quite different from the linear panel regression model of the current paper, MSW's econometrics framework is directly applicable to the model of the current paper with endogenous regressors.\footnote{Lee, Moon, and Weidner~\cite*{LeeMoonWeidner2012} also apply the MSW estimation method to estimate a simple dynamic panel regression with interactive fixed effect and classical measurement errors.}

Comparing the different estimation approaches
for interactive fixed effect panel regressions proposed in the literature, it seems fair to say that the LS estimator in
Bai~\cite*{Bai2009} and our paper, the CCE estimator of Pesaran~\cite*{Pesaran2006}, and the IV estimator based on quasi-differencing in HNR all have their own relative advantages and disadvantages.
These three estimation methods handle the interactive fixed effects quite differently. The LS method concentrates out the interactive fixed effects by taking out the principal components. The CCE method controls the factor (or time effects) using the cross-sectional averages of the dependent and independent variables. The HNR's approach quasi-differences out the individual effects, treating the remaining time effects as parameters to estimate. 
The IV estimator of HNR should work well when $T$ is short, but is expected to also suffer
from an incidental parameter problem when $T$ becomes large,
because then many factors need to be estimated as parameters that enter the model non-linearly.
Pesaran's CCE estimation method does not require the number of factors to be known and does not require the strong
factor assumption that we will impose below, but for the CCE estimator to work, not only the DGPs of the dependent variable (e.g., the regression model) but also the DGPs of the explanatory variables need to be restricted such that their cross-sectional average can control for unobserved factors. The LS estimator and its bias-corrected version perform well under relatively weak restrictions on the regressors, but requires that $T$ should not be too small and that the factors should be sufficiently strong to be correctly picked up as the leading principal components.

The paper is organized as follows.
In section~\ref{sec:model}, we introduce the interactive fixed effect model and provide conditions for identifying the regression coefficients in the presence of the interactive fixed effects. In section~\ref{sec:estimator}, we define the LS estimator of the regression parameters and provide a set of assumptions that are sufficient to show
consistency of the LS estimator. In section~\ref{sec:limdist}, we work out the asymptotic distribution of the LS estimator under alternative asymptotics.
We also provide a consistent estimator for  the asymptotic bias and a bias-corrected LS estimator.
In section~\ref{sec:testing}, we consider the Wald, LR, and LM tests
for testing restrictions on the regression coefficients of the model.
We present bias-corrected versions of these tests and show that they have
chi-square limiting distributions.
In section~\ref{sec:Endogenous Regression}, we briefly discuss how to estimate the interactive fixed effect linear panel regression when the regressors are endogenous.
In section~\ref{sec:MC}, we present Monte Carlo simulation results for an ${\rm AR}(1)$ model with interactive fixed effects. The simulations show the LS estimator for the ${\rm AR}(1)$ coefficient is biased, and 
the tests based on it can have severe size distortions and power asymmetries, wheras the bias-corrected LS estimator and test statistics have better properties.
We conclude in section~\ref{sec:conclusion}. We present all proofs of theorems and some technical details
 in the appendix or  supplementary material.

A few words on notation are due. For a column vector $v$, the Euclidean norm is
defined by $\| v \| = \sqrt{v^{\prime}v}$. For the $n$-th largest
eigenvalues (counting multiple eigenvalues multiple times) of a symmetric
matrix $B$, we write $\mu_n(B)$. For an $m\times n$ matrix $A$,
the Frobenius norm is $\| A \|_{F} = \sqrt{{\rm Tr}%
(AA^{\prime})}$, and the spectral norm is $\| A \| = \max_{0 \neq v \in
\mathbb{R}^n} \, \frac{ \| A v \|} {\| v\|}$, or equivalently $\| A \| =
\sqrt{ \mu_1(A^{\prime}A) }$. Furthermore, we define $P_A = A
(A^{\prime}A)^{\dagger} A'$ and $M_A = \mathbb{I} - A (A^{\prime}A)^{\dagger} A'$,
where $\mathbb{I}$ is the $m\times m$ identity matrix,
and $(A^{\prime}A)^{\dagger}$ is the Moore-Penrose pseudoinverse, to allow for the case that
$A$ is not of full column rank. For
square matrices $B$, $C$, we write $B>C$ (or $B\geq C$) to indicate $B-C$ is positive (semi) definite.
For a positive definite symmetric matrix $A$, we write $A^{1/2}$ and $A^{-1/2}$ for
the unique symmetric matrices that satisfy $A^{1/2} A^{1/2}=A$ and $A^{-1/2} A^{-1/2} =A^{-1}$.
We use $\nabla$ for the gradient of a function; that is, $\nabla f(x)$ is the 
column vector of partial derivatives
of $f$ with respect to each component of $x$. We use ``wpa1'' for ``with probability approaching one''.

\section{Model and Identification}
\label{sec:model}

We study the following panel regression model with
cross-sectional size $N$, and $T$ time periods:
\begin{align}
Y_{it}&=\beta^{0\prime } X_{it}+\lambda_{i}^{0\prime }f_{t}^{0}+e_{it}, &
i=1\ldots N,\quad t=1\ldots T,  \label{model0}
\end{align}%
where $X_{it}$ is a $K\times 1$ vector of observable regressors, $\beta^{0}$
is a $K\times 1$ vector of regression coefficients, $\lambda_{i}^{0}$ is an
$R\times 1$ vector of unobserved factor loadings, $f_{t}^{0}$ is an $R\times1$ vector of
unobserved common factors, and $e_{it}$ are unobserved errors.
The superscript zero indicates the true parameters.  
We write $f_{tr}^{0}$ and $\lambda^0_{ir}$, where $r=1,\ldots,R$, for the components of $\lambda_{i}^{0}$
and $f_{t}^{0}$, respectively.
$R$ is the number of factors.
Note that we can have $f_{tr}^{0}=1$ for all $t$ and a particular  $r$, in which case
the corresponding $\lambda^0_{ir}$ become standard individual-specific effects. 
Analogously, we can have $\lambda^0_{ir}=1$ for all $i$ and a particular   $r$, so that the corresponding
$f_{tr}^{0}$ become standard time-specific effects.

Throughout this paper, we assume the true number of factors $R$ is known.\footnote{%
To remove this restriction, one could estimate $R$ consistently in the presence of the regressors. In the literature so far, however, consistent estimation procedures for $R$ are established mostly in pure factor models (e.g., Bai and Ng \cite*{BaiNg2002},
Onatski \cite*{Onatski2010} and Harding \cite*{Harding2007}). Alternatively, one could rely on Moon and Weidner \cite*{MoonWeidner2015} who consider a regression model with interactive
fixed effects when only an upper bound on the number of factors is known --- but 
extending those results to the more general setup considered here
 is mathematically challenging.}
We introduce the notation $\beta^0 \cdot X = \sum_{k=1}^{K}\,\beta_{k}^{0}\,X_{k}$.
In matrix notation, the model can then be written as
\begin{align*}
Y= \beta^0 \cdot X \,+\,\lambda^0 f^{0\prime }+e\;,
\end{align*}
where  $Y$, $X_{k}$, and $e$ are $N\times T$ matrices, $\lambda^0$ is an $N\times R$ matrix,
and $f^0$ is a $T\times R$ matrix. The elements of $X_k$ are denoted by $X_{k,it}$.

We separate the $K$ regressors into $K_1$
``low-rank regressors'' $X_{l}$, $l=1,\ldots,K_1$,
and $K_2=K-K_1$ ``high-rank regressors'' $X_{m}$, $m=K_1+1,\ldots,K$.
Each low-rank regressor $l=1,\ldots,L$ is assumed to satisfy ${\rm rank}(X_l) = 1$.
Therefore, we can write $X_l=w_l v_l'$, where $w_l$ is an  $N$-vector
and $v_l$ is a $T$-vector, and
we also define the $N \times K_1$ matrix $w=(w_1,\ldots,w_{K_1})$
and the $T \times K_1$ matrix $v=(v_1,\ldots,v_{K_1})$.

Let $l=1,\ldots,K_1$.
The two most prominent types of low-rank regressors are time-invariant regressors, which satisfy
$X_{l,it}=Z_{i}$ for all $i,t$, and common (or cross-sectionally invariant) regressors, in which case
$X_{l,it}=W_{t}$ for all $i,t$. Here, $Z_{i}$ and $W_{t}$ are some observed variables, which only vary over
$i$ or $t$, respectively. A more general low-rank regressor can be obtained by interacting $Z_i$ and $W_t$
multiplicatively, namely, $X_{l,it}=Z_{i} W_t$, an empirical example of which is given in 
Gobillon and Magnac~\cite*{GobillonMagnac2013}. 
In these examples, and probably for the vast majority of applications,
the low-rank regressors all satisfy ${\rm rank}(X_{l})=1$, but our results can easily be extended to more general
low-rank regressors.\footnote{%
If we have low-rank regressors with rank larger than one, then
we write $X_l=w_l v_l'$, where $w_l$ is an  $N\times {\rm rank}(X_l)$ matrix
and $v_l$ is a $T \times {\rm rank}(X_l)$ matrix, and we define
$w=(w_1,\ldots,w_{K_1})$ as a
$N \times \sum_{l=1}^L {\rm rank}(X_l)$ matrix,
and $v=(v_1,\ldots,v_{K_1})$ ae a $T \times \sum_{l=1}^L {\rm rank}(X_l)$ matrix.
All our results are then unchanged, as long as
${\rm rank}(X_l)$ is a finite constant for all $l=1,\ldots,K_1$, and we replace
$2 R + K_1$ by
 $ 2 R
       +   {\rm rank}\left(  w \right)$
    in Assumption~\ref{ass:id}$(v)$ and Assumption~\ref{ass:A4}$(ii)(a)$.
}

High-rank regressors are those whose distribution guarantees they have high rank
(usually full rank) when considered as an $N \times T$ matrix. For example, a regressor whose entries satisfy
$X_{m,it} \sim iid \, {\cal N}(\mu,\sigma)$,
with $\mu \in \mathbb{R}$ and $\sigma>0$, satisfies
${\rm rank}(X_m)=\min(N,T)$ with probability one.

This separation of the regressors into low- and high-rank regressors is  important to formulate our assumptions
for identification and consistency, but actually plays no role in the estimation and inference procedures for $\widehat \beta$
discussed below.

\begin{samepage}
\newtheorem{IDassumption}{Assumption}
\renewcommand{\theIDassumption}{ID}
\begin{IDassumption}[\bf Assumptions for Identification]~
   \label{ass:id}
    \begin{itemize}
      \item[(i)] {\bf Existence of Second Moments:} \\
        The second moments of $X_{k,it}$ and
                 $e_{it}$ conditional on $\lambda^0$, $f^0$, $w$ exist for all $i$, $t$, $k$.

      \item[(ii)]  {\bf Mean Zero Errors and Exogeneity:} \\
      $\mathbb{E}\left( e_{it} | \lambda^0, f^0,w \right)=0$,
      \, and \,
     $\mathbb{E}(X_{k, it} e_{it}  |  \lambda^0, f^0,w)=0$, \,  a.s., \,
   for all $i$, $t$, $k$.

\end{itemize}
The following two assumptions only need to be imposed if $K_1>0$, that is,
if low-rank regressors are present:

\begin{itemize}

      \item[(iii)]  {\bf Non-collinearity of Low-Rank Regressors:} \\
         Consider linear combinations
         $\alpha \cdot X_{{\rm low}} = \sum_{l=1}^{K_1} \alpha_l X_l$ of the low-rank regressors $X_l$
with $\alpha \in \mathbb{R}^{K_1}$. For all $\alpha \neq 0$, we assume
\begin{align*}
     \mathbb{E}\left[ (\alpha \cdot X_{{\rm low}}) M_{f^0} (\alpha \cdot X_{{\rm low}})'  \big|  \lambda^0, f^0,w \right]
       \neq 0 \, ,
       \qquad \text{a.s.}
\end{align*}

    \item[(iv)]  {\bf  No Collinearity between Factor Loadings and Low-Rank Regressors:} \\
    ${\rm rank}( M_{w} \lambda^0 ) =  {\rm rank}(\lambda^0)$.\footnote{%
       Note that ${\rm rank}(\lambda^0) = R$ if $R$ factors are present. Our identification results
       are consistent with the possibility that ${\rm rank}(\lambda^0) < R$, i.e., that $R$ only represents an upper bound
       on the number of factors, but later we assume  ${\rm rank}(\lambda^0) = R$ to show consistency.
    }
\end{itemize}
The following assumption only needs to be imposed if $K_2>0$, that is, if high-rank
regressors are present:

\begin{itemize}

\item[(v)] {\bf Non-collinearity of High-Rank Regressors:}  \\
 Consider linear combinations
$\alpha \cdot X_{{\rm high}} = \sum_{m=K_1+1}^K \alpha_m X_m$ of the high-rank regressors $X_m$
for $\alpha \in \mathbb{R}^{K_2}$,
where the components of the $K_2$-vector $\alpha$ are
denoted by $\alpha_{K_1+1}$ to $\alpha_{K}$.
 For all $\alpha \neq 0$, we assume
\begin{align*}
      {\rm rank}\left\{ \mathbb{E}\left[ (\alpha \cdot X_{{\rm high}}) (\alpha \cdot X_{{\rm high}})'
       \big| \lambda^0, f^0,w \right] \right\}
       > 2 R
       +  K_1 \, ,
       \qquad \text{a.s.}      
\end{align*}

\end{itemize}

\end{IDassumption}
\end{samepage}

All expectations in the assumptions are conditional on $\lambda^0$, $f^0$, and $w$; in particular,
$e_{it}$ is not allowed to be correlated with $\lambda^0$, $f^0$, and $w$. However, $e_{it}$
is allowed to be correlated with $v$
(i.e., predetermined low-rank regressors are allowed).
If desired, one can interchange the role
of $N$ and $T$ in the assumptions, by using the formal symmetry of the model under exchange
of the panel dimensions ($N \leftrightarrow T$,
$\lambda^0 \leftrightarrow f^0$, $Y \leftrightarrow Y'$, $X_k \leftrightarrow X_k'$,
$w \leftrightarrow v$).

Assumptions~\ref{ass:id}$(i)$ and $(ii)$ have standard interpretations, but the other assumptions
require some further discussion.

Assumption~\ref{ass:id}$(iii)$ states
 the low-rank regressors are non-collinear even after projecting out all variation that is explained by
the true factors $f^0$. This assumption would, for example, be violated if $v_l = f^0_r$ for some $l=1,\ldots,K_1$
and $r=1,\ldots,R$, because
then $X_l M_{f^0} = 0$ and we can choose $\alpha$ such that $X_{\rm low} = X_l$.
Similarly, Assumption~\ref{ass:id}$(iv)$ rules out, for example, that $w_l = \lambda^0_r$  for some $l=1,\ldots,K_1$
and $r=1,\ldots,R$, because then ${\rm rank}( M_{w} \lambda^0 ) <  {\rm rank}(\lambda^0)$, in general.
It ought  to be expected that $\lambda^0$ and $f^0$ have to feature in the identification conditions
for the low-rank regressors, because the interactive fixed effects structure 
and the low-rank regressors 
represent similar types of low-rank $N \times T$ structures.

Assumption~\ref{ass:id}$(v)$ is a generalized non-collinearity assumption
for the high-rank regressors,
which guarantees  any
 linear combination $\alpha \cdot X_{{\rm high}}$ 
 of the high-rank regressors is sufficiently different from the low-rank regressors and from the interactive fixed effects. 
A standard non-collinearity assumption can be formulated by demanding
 the $N \times N$ matrix
 $\mathbb{E}\big[  \allowbreak (\alpha \cdot X_{{\rm high}}) (\alpha \cdot X_{{\rm high}})'
       \big|  \allowbreak \lambda^0, f^0,w \big]$ is non-zero for all 
       non-zero 
       $\alpha \in \mathbb{R}^{K_2}$, which can be equivalently expressed as
${\rm rank}\big\{ \allowbreak \mathbb{E}\left[ (\alpha \cdot X_{{\rm high}}) (\alpha \cdot X_{{\rm high}})'
       \big| \lambda^0, f^0,w \right] \big\} \allowbreak > 0$
 for all non-zero $\alpha \in \mathbb{R}^{K_2}$.      
Assumption~\ref{ass:id}$(v)$ strengthens this standard non-collinearity
assumption by demanding the rank not only to be positive, but  larger than $2R+K_1$.
 This also explains the name ``high-rank regressors,'' because their
 rank has to be sufficiently large to satisfy 
 this assumption. Note also that only the number
 of factors $R$, but not $\lambda^0$ and $f^0$, features in Assumption~\ref{ass:id}$(v)$.
 The sample version of this assumption
 is given by Assumption~\ref{ass:A4}$(ii)(a)$
 below, which is also very closely related to Assumption~A
 in Bai~\cite*{Bai2009}.

\begin{theorem}[\bf Identification]
   \label{th:id}
    Suppose the Assumptions~\ref{ass:id} are satisfied.
    Then, the minima of the expected objective
    function
     $\mathbb{E}\left(\left\| Y \, - \, \beta \cdot X \, - \, \lambda \, f'
    \right\|^2_F \Big| \lambda^0, f^0,w \right)$
    over   $(\beta,\lambda,f) \in \mathbb{R}^{K + N \times R + T \times R}$
    satisfy
     $\beta=\beta^0$ and $\lambda f' = \lambda^0 f^{0 \prime}$.
     This shows that $\beta^0$ and $\lambda^0 f^{0 \prime}$ are identified.

\end{theorem}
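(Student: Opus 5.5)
Substituting the model $Y=\beta^0\cdot X+\lambda^0 f^{0\prime}+e$ into the objective and expanding, the plan is to show that the conditional expectation splits into a constant and a nonnegative term. With $\Delta\beta=\beta^0-\beta$, the residual is $e+\Delta\beta\cdot X+\lambda^0 f^{0\prime}-\lambda f'$. The cross term between $e$ and the rest has conditional expectation zero. For the part $\Delta\beta\cdot X$ this follows from Assumption~\ref{ass:id}$(ii)$. For $\lambda^0 f^{0\prime}$ it follows because this matrix is fixed given the conditioning set. The minimization is over arbitrary $(\beta,\lambda,f)$, so I will treat $\lambda,f$ as arbitrary nonrandom (i.e.\ measurable w.r.t.\ the conditioning set) choices, and then $\mathbb{E}(e_{it}|\cdot)=0$ kills that cross term too. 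Hence
\begin{align*}
\mathbb{E}\left(\|Y-\beta\cdot X-\lambda f'\|_F^2\,\big|\,\lambda^0,f^0,w\right)
=\mathbb{E}\left(\|e\|_F^2\,\big|\,\cdot\right)+\mathbb{E}\left(\|\Delta\beta\cdot X+\lambda^0f^{0\prime}-\lambda f'\|_F^2\,\big|\,\cdot\right).
\end{align*}
The value $\mathbb{E}(\|e\|_F^2|\cdot)$ is attained at $\beta=\beta^0$, $\lambda f'=\lambda^0f^{0\prime}$. It therefore suffices to show that the second term is strictly positive unless $\Delta\beta=0$ and $\lambda f'=\lambda^0 f^{0\prime}$. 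Once $\Delta\beta=0$, the second term is $\|\lambda^0f^{0\prime}-\lambda f'\|_F^2$, which vanishes only if $\lambda f'=\lambda^0f^{0\prime}$. So the core claim is: $\Delta\beta\neq0$ implies the second term is positive.

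To prove this, I will bound $\|A\|_F^2\ge \|M_{(\lambda^0,\lambda,w)}A\|_F^2$ with $A=\Delta\beta\cdot X+\lambda^0f^{0\prime}-\lambda f'$. Projecting on the left by $M_{(\lambda^0,\lambda,w)}$ removes both factor terms and all low-rank regressors, since $X_l=w_lv_l'$. The remainder is $M_{(\lambda^0,\lambda,w)}(\alpha\cdot X_{\rm high})$ with $\alpha$ the high-rank part of $\Delta\beta$. Its conditional expected squared norm equals $\Tr[M\,\mathbb{E}((\alpha\cdot X_{\rm high})(\alpha\cdot X_{\rm high})'|\cdot)]$. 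Because $M$ is a projection of corank at most $2R+K_1$, this is at least the sum of the eigenvalues of that expectation beyond the $(2R+K_1)$-th. By Assumption~\ref{ass:id}$(v)$ this sum is strictly positive if $\alpha\neq0$. Hence $\alpha=0$ at any minimizer, in the case $K_2>0$.

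It remains to handle $\Delta\beta$ with only low-rank components, $A=\alpha\cdot X_{\rm low}+\lambda^0f^{0\prime}-\lambda f'$, and this is the step I expect to be the main obstacle. Now $\lambda$ can partly absorb $\alpha\cdot X_{\rm low}$, and both $\lambda$ and $\lambda^0$ must be handled. I will first project on the right with $M_{f^0}$ and on the left with $M_\lambda$, giving
\begin{align*}
\|A\|_F^2\ge\|M_\lambda(\alpha\cdot X_{\rm low})M_{f^0}\|_F^2.
\end{align*}
If this is positive we are done. Otherwise $M_\lambda w\,\mathrm{diag}(\alpha)v'M_{f^0}=0$ a.s. By Assumption~\ref{ass:id}$(iii)$, $(\alpha\cdot X_{\rm low})M_{f^0}\neq0$, so the column span of $w\,\mathrm{diag}(\alpha)v'M_{f^0}$ is a nontrivial subspace of ${\rm span}(w)$ that lies in ${\rm span}(\lambda)$. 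Next I will use the alternative bound
\begin{align*}
\|A\|_F^2\ge\|M_w A\|_F^2=\|M_w(\lambda^0f^{0\prime}-\lambda f')\|_F^2.
\end{align*}
If this vanishes, then $M_w\lambda^0f^{0\prime}=M_w\lambda f'$. Assumption~\ref{ass:id}$(iv)$ gives ${\rm rank}(M_w\lambda^0)={\rm rank}(\lambda^0)$, so ${\rm rank}(M_w\lambda)\ge{\rm rank}(\lambda^0)$ with matching row space. Meanwhile $\lambda$, which has only $R$ columns, must also contain a direction inside ${\rm span}(w)$ from the previous step, and that direction is annihilated by $M_w$. The counting must then show that $\lambda$ cannot simultaneously spend rank on reproducing $M_w\lambda^0f^{0\prime}$ and on absorbing the nonzero part of $(\alpha\cdot X_{\rm low})M_{f^0}$. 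If ${\rm rank}(\lambda^0)=R$ this is a direct rank contradiction. In general I would refine it by comparing $f$ with $f^0$, since $M_w\lambda^0f^{0\prime}=M_w\lambda f'$ forces ${\rm span}(f^0)\subseteq{\rm span}(f)$ in the relevant directions. I would then argue via the column spans that the extra direction of $\lambda f'$ must be proportional to something in ${\rm span}(f^0)$ on the right, contradicting the fact that it produces a nonzero $\cdot\,M_{f^0}$ component. Working out this rank bookkeeping carefully, with both the $M_{f^0}$-projected and the $M_w$-projected pieces at once, is where the real work lies.
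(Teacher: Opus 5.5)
Your argument is correct, and your high-rank step is the paper's: project with $M_{(\lambda,\lambda^0,w)}$ and bound the trace by the eigenvalues beyond the $(2R+K_1)$-th. The low-rank step takes a different route. The paper concentrates out $f$, so a zero objective gives $M_{\bar\lambda}\,\mathbb{E}(BB')\,M_{\bar\lambda}=0$ for $B=\lambda^0f^{0\prime}-\alpha\cdot X_{\rm low}$, hence ${\rm rank}\,\mathbb{E}(BB')\le R$. It then splits $\mathbb{E}(BB')$ into its $P_{f^0}$ and $M_{f^0}$ parts and uses a rank inequality with $M_w$ and $P_w$ to get ${\rm rank}\,\mathbb{E}(BB')\ge R+{\rm rank}\,\mathbb{E}[(\alpha\cdot X_{\rm low})M_{f^0}(\alpha\cdot X_{\rm low})']$. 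Assumption~\ref{ass:id}$(iii)$ then forces $\alpha=0$.

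You instead use two separate lower bounds. First, $\|M_wA\|_F$, which is deterministic given the conditioning set, gives $M_w\lambda f'=M_w\lambda^0f^{0\prime}$. Hence ${\rm rank}(M_w\lambda)=R$ and ${\rm span}(\lambda)\cap{\rm span}(w)=\{0\}$. Second, $\|M_\lambda AM_{f^0}\|_F$ together with Assumption~\ref{ass:id}$(iii)$ forces ${\rm span}(\lambda)$ to contain a nonzero vector of ${\rm span}(w)$. Assumption~\ref{ass:id}$(iii)$ only gives $(\alpha\cdot X_{\rm low})M_{f^0}\neq0$ with positive conditional probability, but that suffices because $\lambda$ and $w$ are fixed given the conditioning set. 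This is the same dimension count with the same three ingredients. Your version states it directly in terms of subspaces and avoids the positive-semidefinite rank-additivity step; the paper's packages it into a single rank bound.

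Two corrections. First, the ``direct rank contradiction'' needs ${\rm rank}(\lambda^0f^{0\prime})=R$, not just ${\rm rank}(\lambda^0)=R$. You only get ${\rm rank}(M_w\lambda)\ge{\rm rank}(M_w\lambda^0f^{0\prime})={\rm rank}(\lambda^0f^{0\prime})$.

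Second, the refinement you plan for the rank-deficient case, which you call the real work, cannot succeed, because the statement is false there. Take $R=1$, $\lambda^0=0$, $f^0$ not proportional to $\iota_T$, $K=K_1=1$, a time-invariant regressor $X_1=Z\iota_T'$ with $Z\neq0$, and $e$ independent of everything. Assumptions~\ref{ass:id}$(i)$--$(iv)$ hold: $(iii)$ because $\iota_T'M_{f^0}\iota_T>0$, and $(iv)$ trivially. Yet $(\beta,\lambda,f)=(\beta^0+1,-Z,\iota_T)$ leaves residual exactly $e$, so it also attains the minimum. Taking $f^0=0$ with $\lambda^0\notin{\rm span}(Z)$ fails in the same way, which also shows ${\rm rank}(\lambda^0)=R$ alone is not enough.

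The paper's own proof uses full rank at the corresponding step: it asserts ${\rm rank}(M_w\lambda^0f^{0\prime}f^0\lambda^{0\prime}M_w)=R$, despite the footnote to Assumption~\ref{ass:id}$(iv)$. With ${\rm rank}(\lambda^0f^{0\prime})=R$ as a maintained condition, your proof is already complete. Simply drop the planned refinement.
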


The theorem shows the true parameters are identified as minima of the expected value
of $\left\| Y \, - \, \beta \cdot X \, - \, \lambda \, f'
    \right\|^2_F = \sum_{i,t} (  Y_{it} \beta^{\prime } - X_{it}- \lambda_{i}^{\prime }f_{t} )^2$,
which is the sum of squared residuals. We use the same objective function, to define the estimators
$\widehat \beta$, $\widehat \lambda$ and $\widehat f$ below.
Without further normalization conditions, the parameters $\lambda^0$ and $f^0$ are not separately identified,
because the outcome variable $Y$ is invariant under transformations $\lambda^0 \rightarrow \lambda^0 A'$
and $f^0 \rightarrow f^0 A^{-1}$, where $A$ is a non-singular $R\times R$ matrix.
However, the product $\lambda^0 f^{0 \prime}$ is uniquely identified according to the theorem.
Because our focus is on identification and estimation of $\beta^0$, we do not need to discuss those additional
normalization conditions for $\lambda^0$ and $f^0$ in this paper.

\section{Estimator and Consistency}
\label{sec:estimator}

The objective function of the model
is simply the sum of squared residuals, which in matrix notation can be expressed as
\begin{align}
   {\cal L}_{NT}(\beta,\lambda,f) &=
    \frac{1}{NT} \; \left\|  Y- \beta  \cdot X -\lambda f^{\prime } \right\|^2_F
   \nonumber  \\
    &=
   \frac{1}{NT} \; {\rm Tr}
    \left[ \left( Y- \beta  \cdot X -\lambda f^{\prime }
       \right)^{\prime }\left( Y- \beta  \cdot X -\lambda f^{\prime } \right) \right] \; .
   \label{DefCalL}
\end{align}
The estimator we consider is the LS estimator that jointly minimizes ${\cal L}_{NT}(\beta,\lambda,f)$
over $\beta$, $\lambda$ and~$f$. Our main objects of interest are the regression parameters
$\beta=\left( \beta_{1},...,\beta_{K}\right)^{\prime}$, whose estimator is given by
\begin{align}
   \widehat \beta &= \limfunc{argmin}_{\beta \in \mathbb{B}}\,L_{NT}\left( \beta \right) \;,
   \label{DefQMLE}
\end{align}
where $\mathbb{B} \subset \mathbb{R}^{K}$ is a compact parameter set that contains the true parameter, 
namely, $\beta^0 \in \mathbb{B}$,
and the objective function is the profile objective function
\begin{align}
   L_{NT}\left( \beta \right) &= \min_{\lambda ,f} \; {\cal L}_{NT}(\beta,\lambda,f)
      \nonumber \\
            &= \min_{f} \; \frac{1}{NT} \; {\rm Tr}%
  \left[ \left( Y-\beta  \cdot X \right) M_{f}
  \left( Y- \beta  \cdot X \right)^{\prime} \right]
      \nonumber \\
            &= \frac{1}{NT} \; \sum_{r=R+1}^T \;
   \mu_r \left[ \left( Y- \beta  \cdot X \right)^{\prime}
   \left(Y- \beta  \cdot X \right) \right] \; .
  \label{LNT123}
\end{align}
Here, the first expression for $L_{NT}(\beta)$ is its definition as
the minimum value of ${\cal L}_{NT}(\beta,\lambda,f)$ over $\lambda$ and $f$.
We denote the minimizing incidental parameters
by $\widehat \lambda(\beta)$ and $\widehat f(\beta)$,
and we define the estimators $\widehat \lambda = \widehat \lambda(\widehat \beta)$
and $\widehat  f = \widehat  f(\widehat \beta)$.
Those minimizing incidental parameters are not uniquely
determined  -- for the same reason that $\lambda^0$ and $f^0$ are non uniquely identified -- but the product
$\widehat \lambda(\beta) \widehat f'(\beta)$ is unique.

The second expression for $L_{NT}\left( \beta \right)$ in equation \eqref{LNT123} is obtained by concentrating out $\lambda$ (analogously, one can concentrate out $f$ to obtain a formulation whereby only the parameter $\lambda$ remains). The optimal $f$ in the second expression is given by the $R$ eigenvectors that correspond to the $R$ largest
eigenvalues of the $T\times T$ matrix
$\left( Y-\beta  \cdot X \right)^{\prime}\left( Y- \beta  \cdot X \right)$. This insight leads to the third line that presents the
profile objective function as the sum over the $T-R$ smallest
eigenvalues of this $T\times T$ matrix. Lemma~\ref{lemma:Optimization} in the appendix shows equivalence of the three expressions for $L_{NT}(\beta)$
given above.

Multiple local minima of $L_{NT}(\beta)$ may exist, and one should use
multiple starting values for the numerical optimization of $\beta$ to guarantee
the true global minimum $\widehat \beta$ is found.

To show consistency of the LS estimator $\widehat \beta$ of the interactive fixed effect model,
and also later for our first-order asymptotic theory,
we consider the limit $N,T\rightarrow \infty$.
In the following we present assumptions on $X_k$, $e$, $\lambda$, and $f$ that guarantee consistency.\footnote{%
We could write $X_k^{(N,T)}$, $e^{(N,T)}$, $\lambda^{(N,T)}$, and $f^{(N,T)}$,
because all these matrices, and even their dimensions, are functions on $N$ and $T$, but we suppress this
dependence throughout the paper.}
\begin{assumption}
\label{ass:A1} (i)~$\limfunc{plim}_{N,T \rightarrow \infty}\left(\lambda^{0\prime} \lambda^0/N\right) > 0$,
(ii)~$\limfunc{plim}_{N,T \rightarrow \infty} \left( f^{0\prime} f^0 / T \right) > 0$.
\end{assumption}
\begin{assumption}
\label{ass:A2}
   $\limfunc{plim}_{N,T \rightarrow \infty} \left[
      (NT)^{-1} {\rm Tr}(X_k \, e^{\prime}) \right] = 0$, for all $k=1,\ldots,K$.
\end{assumption}
\begin{assumption}
\label{ass:A3} $\limfunc{plim}_{N,T \rightarrow \infty} \left( \| e \| / \sqrt{NT} \right) = 0$.
\end{assumption}

Assumption \ref{ass:A1} guarantees the matrices $f^0$ and $\lambda^0$ have full rank, that is, that
$R$ distinct factors and factor loadings exist asymptotically, and that the norm of each factor and factor loading
 grows at a rate of $\sqrt{T}$ and $\sqrt{N}$, respectively.
Assumption \ref{ass:A2}
demands the regressors are weakly exogenous.
Assumption \ref{ass:A3} restricts the spectral norm of the $N \times T$ error matrix $e$.
We discuss this assumption in more detail in the next section,
and we give examples of error
distributions that satisfy this condition in section~\ref{app:error}
of the supplementary material.
The final assumption needed for consistency is an assumption on the regressors $X_k$.
We already introduced the distinction between the
$K_1$ ``low-rank regressors'' $X_{l}$, $l=1,\ldots,K_1$,
and the $K_2=K-K_1$ ``high-rank regressors'' $X_{m}$, $m=K_1+1,\ldots,K$ above.

\begin{assumption}
\label{ass:A4}
$\phantom{a}$
\begin{itemize}
\item[(i)] $\limfunc{plim}_{N,T \rightarrow \infty} \left[
     (NT)^{-1} \, \sum_{i=1}^N \, \sum_{t=1}^T \, X_{it} X_{it}' \right] > 0$.
\item[(ii)] The two types of regressors satisfy:
\begin{itemize}
\item[(a)] Consider linear combinations
$\alpha \cdot X_{{\rm high}}=\sum_{m=K_1+1}^K \alpha_m X_m$ of the high-rank regressors $X_m$
for $K_2$-vectors $\alpha$ with $\|\alpha\|=1$,
 where the components of the $K_2$-vector $\alpha$ are
denoted by $\alpha_{K_1+1}$ to $\alpha_{K}$. We assume a constant $b>0$ exists such that
\begin{align*}
   \min_{\{\alpha \in \mathbb{R}^{K_2}, \|\alpha\|=1\}}   \,
   \sum_{r=2R+K_1+1}^N \,
    \mu_{r} \left[ \frac{ (\alpha \cdot X_{{\rm high}}) (\alpha \cdot X_{{\rm high}})'  } {NT} \right] \;  &\geq b \; \qquad
   \text{wpa1.}
\end{align*}

\item[(b)] For the low-rank regressors, we assume ${\rm rank}(X_l)=1$, $l=1,\ldots,K_1$; that is,
they can be written as $X_l=w_l v_l'$ for $N$-vectors $w_l$ and
$T$-vectors $v_l$, and we define the $N\times K_1$ matrix $w=(w_1,\ldots,w_{K_1})$
and the $T\times K_1$ matrix $v=(v_1,\ldots,v_{K_1})$.
We assume a constant $B>0$ exists such that
$N^{-1} \, \lambda^{0\prime} \, M_{w} \, \lambda^0 \, > \, B \, \mathbb{I}_R$
and $T^{-1} \, f^{0\prime} \, M_{v} \, f^0 \, > \, B \, \mathbb{I}_R$, wpa1.
\end{itemize}
\end{itemize}
\end{assumption}

Assumption~\ref{ass:A4}$(i)$ is a standard non-collinearity condition for all the regressors.
Assumption~\ref{ass:A4}$(ii)(a)$ is an appropriate sample analog of the identification Assumption~\ref{ass:id}$(v)$.
If the sum in Assumption~\ref{ass:A4}$(ii)(a)$ were to start from $r=1$, we would have
$\sum_{r=1}^N
    \mu_{r} \left[ \frac{ (\alpha \cdot X_{{\rm high}}) (\alpha \cdot X_{{\rm high}})'  } {NT} \right]
    = \frac 1 {NT} {\rm Tr}[ (\alpha \cdot X_{{\rm high}}) (\alpha \cdot X_{{\rm high}})'  ]$, so that the
    assumption would become a standard non-collinearity condition. Not including the first $2R+K_1$
  eigenvalues in the sum implies the $N \times N$ matrix $(\alpha \cdot X_{{\rm high}}) (\alpha \cdot X_{{\rm high}})'$
  needs to have rank larger than $2R + K_1$.

Assumption~\ref{ass:A4}$(ii)(b)$ is closely related to  the identification Assumptions~\ref{ass:id}$(iii)$
and $(iv)$.  The appearance of the factors and factor loadings in this assumption on the low-rank regressors
is inevitable to guarantee consistency. For example, consider a low-rank regressor
that is cross-sectionally independent and proportional to the $r$'th unobserved factor, for example,
$X_{l,it}=f_{tr}$. The corresponding regression coefficient $\beta_l$ is then not identified,
because the model is invariant under a shift $\beta_l \mapsto \beta_l + a$, $\lambda_{ir} \mapsto \lambda_{ir} - a$,
for an arbitrary $a\in \mathbb{R}$. This phenomenon is well known from ordinary fixed effect models, where
the coefficients of time-invariant regressors are not identified.
Assumption \ref{ass:A4}(ii)(b) therefore guarantees for $X_l=w_l v_l'$ that $w_l$
is sufficiently different from $\lambda^0$, and $v_l$ is sufficiently
different from $f^0$.

\begin{theorem}[\bf Consistency]
\label{th:consistency}
Let Assumptions~\ref{ass:A1}, \ref{ass:A2}, \ref{ass:A3}, and \ref{ass:A4}
be satisfied; let the parameter set $\mathbb{B}$ be compact;  and let $\beta^0 \in \mathbb{B}$.
In the limit $N,T \rightarrow \infty$, we then have
\begin{align*}
\widehat \beta \; \limfunc{\longrightarrow}_p \; \beta^0 \; .
\end{align*}
\end{theorem}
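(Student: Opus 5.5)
The plan is to compare the profile objective at $\widehat\beta$ and at $\beta^0$, bound it below by a quadratic form in $\beta-\beta^0$ minus terms that vanish uniformly, and conclude that $\|\widehat\beta-\beta^0\|\to_p 0$. Write $\Delta\beta=\beta-\beta^0$, so that $Y-\beta\cdot X = \lambda^0 f^{0\prime} + e - \Delta\beta\cdot X$.

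\emph{Upper bound at $\beta^0$.} Using the second expression in \eqref{LNT123} with $f=f^0$ gives
$L_{NT}(\beta^0)\le (NT)^{-1}{\rm Tr}(e M_{f^0} e')\le (NT)^{-1}\|e\|_F^2$.
By minimality of $\widehat\beta$ we then have $L_{NT}(\widehat\beta)\le L_{NT}(\beta^0)$.

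\emph{Lower bound at general $\beta$.} For any $f$ (any $T\times R$ matrix),
\begin{align*}
{\rm Tr}\big[(Y-\beta\cdot X)M_f(Y-\beta\cdot X)'\big]
 = {\rm Tr}\big[(\lambda^0f^{0\prime}-\Delta\beta\cdot X)M_f(\cdot)'\big]
 + 2{\rm Tr}\big[(\lambda^0f^{0\prime}-\Delta\beta\cdot X)M_f e'\big]
 + {\rm Tr}(eM_fe').
\end{align*}
The last term is at least ${\rm Tr}(ee')-R\|e\|^2$, because $P_f$ has rank $R$.
For the cross term, ${\rm Tr}[\lambda^0 f^{0\prime}M_f e']$ is bounded by $R\|\lambda^0f^{0\prime}\|\,\|e\|=O_p(\sqrt{NT}\|e\|)$, which is $o_p(NT)$ by Assumptions~\ref{ass:A1} and \ref{ass:A3}. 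We split ${\rm Tr}[X_kM_fe']={\rm Tr}(X_ke')-{\rm Tr}(X_kP_fe')$. The first piece is $o_p(NT)$ by Assumption~\ref{ass:A2}. The second is bounded by $R\|X_k\|\,\|e\|\le R\|X_k\|_F\|e\|=O_p(\sqrt{NT})o_p(\sqrt{NT})$, where Assumption~\ref{ass:A4}(i) gives $\|X_k\|_F=O_p(\sqrt{NT})$. Combining these, and using compactness of $\mathbb{B}$ so that the bound is uniform,
\begin{align*}
L_{NT}(\beta)-L_{NT}(\beta^0)\ \ge\ \min_f \tfrac1{NT}{\rm Tr}\big[(\lambda^0f^{0\prime}-\Delta\beta\cdot X)M_f(\lambda^0f^{0\prime}-\Delta\beta\cdot X)'\big]-o_p(1)
\end{align*}
uniformly in $\beta\in\mathbb{B}$.

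\emph{Quadratic lower bound.} This is the main step. I want to show that the minimum over $f$ above is at least $c\,\|\Delta\beta\|^2$ for some $c>0$ wpa1, with the bound uniform in $\beta$. Split $\Delta\beta\cdot X=\alpha_{\rm low}\cdot X_{\rm low}+\alpha_{\rm high}\cdot X_{\rm high}$. We may replace $M_f$ by the smaller projector $M_{(f,v)}$, which only lowers the trace because it has the form $M_fM_vM_f$ type ordering in the PSD sense. This annihilates the low-rank part, and the high-rank term then faces a projection of rank at most $R+K_1$. Likewise, multiplying on the left by $M_{(\lambda^0,w)}$ removes $\lambda^0f^{0\prime}$ and leaves at most $2R+K_1$ directions projected out in total. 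Hence
\begin{align*}
\ \ge\ \sum_{r>2R+K_1}\mu_r\Big[\tfrac{(\alpha_{\rm high}\cdot X_{\rm high})(\alpha_{\rm high}\cdot X_{\rm high})'}{NT}\Big]\ge b\|\alpha_{\rm high}\|^2,
\end{align*}
by Assumption~\ref{ass:A4}(ii)(a).

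This bound controls only the high-rank part. For the low-rank part I would use a case split. If $\|\alpha_{\rm high}\|\ge c_0\|\Delta\beta\|$, the bound above already suffices. Otherwise the low-rank part dominates, and we bound instead using $M_{\lambda^0}$ on the left only. This gives ${\rm Tr}[M_{\lambda^0}(\Delta\beta\cdot X)M_f(\Delta\beta\cdot X)']$ with $f$ free. The term $M_{\lambda^0}(\alpha_{\rm low}\cdot X_{\rm low})$ has the form $M_{\lambda^0}w\,{\rm diag}(\alpha)v'$, and Assumption~\ref{ass:A4}(ii)(b) together with (i) should keep this away from zero by an amount proportional to $\|\alpha_{\rm low}\|^2$, after projecting out any $R$ columns $f$. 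The small high-rank piece is treated as a perturbation of size $O(c_0\|\Delta\beta\|)$.

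I expect this low-rank/mixed case to be the main obstacle. The minimization over $f$ can absorb the low-rank regressors, so I would use both the $w$–$\lambda^0$ and the $v$–$f^0$ separation conditions in (ii)(b). Concretely, I would take the minimizing $f$, argue that because the $\lambda^0f^{0\prime}$ term is $O(NT)$ large, $P_f$ must nearly span $f^0$ unless the objective is already large, and then use $T^{-1}f^{0\prime}M_vf^0>B$.

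\emph{Conclusion.} Combining the three steps gives $c\|\widehat\beta-\beta^0\|^2\le L_{NT}(\widehat\beta)-L_{NT}(\beta^0)+o_p(1)\le o_p(1)$, hence $\widehat\beta\to_p\beta^0$.
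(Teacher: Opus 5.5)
Your reduction to $\min_f \widetilde S(\beta,f)$ is correct and matches the paper's use of Bai's lemma. Here $\widetilde S$ denotes the quadratic form in your second display, and $\widetilde S^{(1)},\widetilde S^{(2)}$ its parts with $M_{(\lambda^0,w)}$, respectively $P_{(\lambda^0,w)}$, inserted on the left. The ingredients are the upper bound at $\beta^0$, ${\rm Tr}(eP_fe')\le R\|e\|^2$, and the cross terms via Assumptions~\ref{ass:A2}, \ref{ass:A3} and $|{\rm Tr}(C)|\le{\rm rank}(C)\,\|C\|$.

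Your high-rank bound is also the paper's, up to a miscount. With $M_{(f,v)}$ on the right you may put only $M_{\lambda^0}$ on the left. $M_{(\lambda^0,w)}$ together with $M_{(f,v)}$ removes up to $2R+2K_1$ directions, and then Assumption~\ref{ass:A4}(ii)(a) no longer applies. The paper instead uses $M_{(\lambda^0,w)}$ on the left and plain $M_f$ on the right.

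The genuine gap is the low-rank step, which you leave as a heuristic. Your concrete suggestion cannot work: put $M_{\lambda^0}$ on the left and bound ${\rm Tr}[M_{\lambda^0}(\Delta\beta\cdot X)M_f(\Delta\beta\cdot X)']$ from below with $f$ free. The matrix $M_{\lambda^0}(\alpha_{\rm low}\cdot X_{\rm low})=M_{\lambda^0}w\,{\rm diag}(\alpha_{\rm low})\,v'$ has rank at most $K_1$, and rank one if $\alpha_{\rm low}$ has a single nonzero entry. In that case, or whenever $K_1\le R$, choosing ${\rm span}(f)\supseteq{\rm span}(v\,{\rm diag}(\alpha_{\rm low}))$ annihilates it. So no assumption keeps it away from zero ``after projecting out any $R$ columns''. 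Once $\lambda^0f^{0\prime}$ has been projected away, nothing prevents $f$ from absorbing the low-rank regressors.

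The missing idea is to keep $\lambda^0f^{0\prime}$ inside the quadratic form, so that $f$ must spend its $R$ directions on it. The paper keeps $P_{(\lambda^0,w)}$ on the left, which turns the high-rank part into a perturbation. It uses ${\rm Tr}(ZM_fZ')\ge\mu_{R+1}(Z'Z)$ (Lemma~\ref{lemma:Optimization}). It then shows that $A=\lambda^0f^{0\prime}+\sum_l(\beta^0_l-\beta_l)w_lv_l'$ has $R+1$ singular values of order $\sqrt{NT}$:
\begin{itemize}
\item Set $A_1=M_w\lambda^0f^{0\prime}$, $A_2=P_wAM_{f^0}$, $A_3=P_wAP_{f^0}$. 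For suitable $a>1$, $A'A\ge[A_1'A_1-(a-1)A_3'A_3]+(1-a^{-1})A_2'A_2$.
\item The two pieces act on ${\rm span}(f^0)$ and its orthogonal complement, so $\mu_{R+1}(A'A)\ge\min\{(1-a^{-1})\|A_2\|^2,\,\mu_R[A_1'A_1-(a-1)A_3'A_3]\}$.
\item $\mu_R(A_1'A_1)\gtrsim NT$ follows from $N^{-1}\lambda^{0\prime}M_w\lambda^0>B\,\mathbb{I}_R$.
\item $\|A_2\|^2\gtrsim NT\,\|\beta^{\rm low}-\beta^{0,{\rm low}}\|^2$ needs the $v$-condition in the \emph{reverse} direction, $v'M_{f^0}v\ge B_0\,v'v$. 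This comes from $T^{-1}f^{0\prime}M_vf^0>B\,\mathbb{I}_R$ via symmetry of the smallest principal angle, combined with Assumption~\ref{ass:A4}(i) (Lemmas~\ref{lemma:angle}, \ref{lemma:wv}, \ref{lemma:lowrankprop}).
\end{itemize}

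Your ``$P_f$ must nearly span $f^0$'' heuristic can also be made rigorous. Since $P_{(\lambda^0,w)}=P_w+P_{M_w\lambda^0}$, ignoring high-rank terms gives $\widetilde S^{(2)}=(NT)^{-1}\|M_w\lambda^0f^{0\prime}M_f\|_F^2+(NT)^{-1}\|P_wAM_f\|_F^2$. The first term controls $\|P_{f^0}M_f\|_F$. The second is then close to $(NT)^{-1}\|(\alpha_{\rm low}\cdot X_{\rm low})M_{f^0}\|_F^2$. You still need $v'M_{f^0}v\ge B_0v'v$, not the assumption as stated. Either route gives a bound of the form $a_0x^2/(x^2+a_1x+a_2)$ with $x=\|\beta^{\rm low}-\beta^{0,{\rm low}}\|$, which is quadratic only on bounded sets.

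Finally, your case split does not give a uniform $c\|\Delta\beta\|^2$. The high-rank piece enters linearly, as $O(\|\alpha_{\rm high}\|)$, because it multiplies the $O(\sqrt{NT})$ matrix $\lambda^0f^{0\prime}$. So for small $\|\Delta\beta\|$ it swamps $\|\alpha_{\rm low}\|^2$. There are two fixes:
\begin{itemize}
\item Settle for a positive lower bound on $\{\|\Delta\beta\|\ge\delta\}$ by taking $c_0=c_0(\delta)$, which suffices for consistency.
\item Do as the paper does: first get $\widehat\beta^{\rm high}\to_p\beta^{0,{\rm high}}$ from $\widetilde S^{(1)}\ge b\|\beta^{\rm high}-\beta^{0,{\rm high}}\|^2$, $\widetilde S^{(2)}\ge0$ and boundedness of $\beta^{\rm low}$, and only then apply the low-rank bound.
\end{itemize}
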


We assume compactness of $\mathbb{B}$ to guarantee existence of the minimizing $\widehat \beta$.
We also use boundedness of $\mathbb{B}$ in the consistency proof, but only for those parameters
$\beta_l$, $l=1\ldots K_1$,
that correspond to low-rank regressors, that is, if only high-rank regressors ($K_1=0$) are present,
the compactness assumption can be omitted, as long as existence of $\widehat \beta$ is guaranteed
(e.g., for $\mathbb{B}=\mathbb{R}^K$).

Bai \cite*{Bai2009} also proves consistency of the LS estimator of the interactive fixed effect model, but under somewhat different assumptions.
He also employs what we call Assumptions~\ref{ass:A1} and \ref{ass:A2},
and he uses a low-level version of Assumption~\ref{ass:A3}.
He demands the regressors to be strictly exogenous.
Regarding consistency, the main difference between our assumptions and
his is the treatment of high- and low-rank regressors. He first gives a condition on the regressors (his Assumption~A) that rules out low-rank regressors, and
later discusses the case in which all regressors
are either time-invariant or common regressors (i.e., are all low rank).
By contrast, our Assumption~\ref{ass:A4} allows for a combination of high- and low-rank regressors, and for low-rank regressors that are more general than time-invariant and common regressors.

\section{Asymptotic Distribution and Bias Correction}
\label{sec:limdist}

Because we have already shown consistency of the LS estimator $\widehat \beta$, it is sufficient
to study the local properties of the objective function $L_{NT}(\beta)$
around $\beta^0$ to derive the first-order asymptotic theory
of $\widehat \beta$. Moon and Weidner~\cite*{MoonWeidner2015} derived a useful approximation of $L_{NT}(\beta)$ around $\beta^0$, and we briefly summarize
the ideas and results of this approximation in the following subsection.
We then apply those results to derive the asymptotic distribution
of the LS estimator, including working out the asymptotic bias, which was not done previously.
Afterward, we discuss bias correction and inference.

\subsection{Expansion of the Profile Objective Function}

The last expression in equation \eqref{LNT123} for the profile objective function
is convenient because it does not involve any minimization over
the parameters $\lambda$ or $f$. On the other hand,
this expression cannot be easily discussed by
analytic means, because in general, no explicit formula exists for the
eigenvalues of a matrix. The conventional method that involves a Taylor
series expansion in the regression parameters $\beta$
{\it alone} seems infeasible here.
In Moon and Weidner \cite*{MoonWeidner2015},
we showed how to overcome this problem by
expanding the profile objective function {\it jointly}
in $\beta$ and $\|e\|$.
The key idea is the following decomposition:
\begin{align*}
  Y -  \beta \cdot X
    &= \underbrace{ \lambda^0 f^{0\prime} }
   _{\begin{minipage}{1.3cm} \centering \scriptsize leading \\ term   \end{minipage}}
    \underbrace{ - \left( \beta  - \beta^0  \right) \cdot X
    + e }_{\text{perturbation term}}.
\end{align*}
If the perturbation term is zero, the profile objective
$L_{NT}(\beta)$ is also zero, because the leading term
$\lambda^0 f^{0\prime}$ has rank $R$, so that the $T-R$ smallest eigenvalues
of $f^0 \lambda^{0\prime} \lambda^0 f^{0\prime}$ all vanish. One may thus expect
that small values of the perturbation term should correspond to small values
of $L_{NT}(\beta)$. This idea can indeed be made mathematically precise.
By using the perturbation theory of linear operators
(see, e.g., Kato \cite*{Kato}), one can work
out an expansion of $L_{NT}(\beta)$ in the perturbation term,
and one can show this expansion is convergent
as long as the spectral norm of the perturbation term
is sufficiently small.

The assumptions on the model made so far are
in principle already sufficient to apply this
expansion of the profile objective function,
but to truncate the expansion
at an appropriate order and to provide a bound on the remainder term that is
sufficient to derive the first-order asymptotic theory of the LS estimator,
we need to strengthen Assumption~\ref{ass:A3} as follows.

\newtheorem*{assumption3p}{Assumption \ref{ass:A3}$^*$}
\begin{assumption3p}
   $\|e\|=o_p(N^{2/3})$.
\end{assumption3p}

In the rest of the paper, we only consider asymptotics in which $N$ and $T$ grow at the
same rate; that is, we could equivalently write $o_p(T^{2/3})$
instead of $o_p(N^{2/3})$ in Assumption \ref{ass:A3}$^*$.
In section~\ref{app:error}
of the supplementary material, we provide examples of error distributions that satisfy Assumption~\ref{ass:A3}$^*$. In fact, for these examples, we
have $\|e\|={\cal O}_p(\sqrt{\max(N,T)})$. A large literature studies the asymptotic behavior of the spectral norm of random matrices; see, for example, Geman \cite*{Geman1980},
Silverstein \cite*{Silverstein1989}, Bai, Silverstein, and Yin \cite*{BaiSilvYin1988},
Yin, Bai, and Krishnaiah \cite*{BaiKrishYin1988},
and Latala \cite*{Latala2006}.
Loosely speaking, we expect the result $\|e\|={\cal O}_p(\sqrt{\max(N,T)})$
to hold as long as the errors $e_{it}$ have mean zero,
uniformly bounded fourth moment, and weak time-serial and cross-sectional correlation
(in some well-defined sense, see the examples).

We can now present the quadratic approximation of the profile objective function $L_{NT}(\beta)$ that 
we derived in Moon and Weidner \cite*{MoonWeidner2015}.

\begin{theorem}[\bf  Expansion of  Profile Objective Function]
\label{th:ass_expand}
Let Assumption~\ref{ass:A1}, \ref{ass:A3}$^*$,
and \ref{ass:A4}(i) be satisfied, and consider the limit
$N,T \rightarrow \infty$ with $N/T \rightarrow \kappa^2$, $0<\kappa<\infty$.
Then, the profile objective function
satisfies $L_{NT}(\beta) = L_{q,NT}(\beta) + (NT)^{-1} \, R_{NT}(\beta)$,
where the remainder $R_{NT}(\beta)$ is such that for any sequence
$\eta_{NT}\rightarrow 0$, we have
\begin{align*}
  \sup_{\{\beta :\left\| \beta -\beta^{0} \right\| \leq \eta_{NT}\}} \frac{ \left|
  R_{NT}(\beta) \right| } { \left( 1 + \sqrt{NT} \, \left\| \beta -\beta^{0} \right\| \right)^2 } = o_{p}\left( 1 \right) ,
\end{align*}
and $L_{q,NT}(\beta)$ is a second-order polynomial in $\beta$; namely,
\begin{align*}
   L_{q,NT}(\beta) \, &= L_{NT}(\beta^0)
                       \, - \, \frac 2 {\sqrt{NT}}  \, (\beta-\beta^0)' \, C_{NT}
                        + \, (\beta-\beta^0)' \, W_{NT} \, (\beta-\beta^0) \; ,
\end{align*}
with $K\times K$ matrix $W_{NT}$ defined by
$W_{NT,k_1 k_2} = (NT)^{-1} \, {\rm Tr}(M_{f^0} \, X^{\prime}_{k_1} \, M_{\lambda^0} \, X_{k_2})$, and $K$-vector $C_{NT}$ with entries
$C_{NT,k} = C^{(1)}\left(\lambda^0 \, ,f^0 \, ,X_k \, e \right)
           +C^{(2)}\left(\lambda^0 \, ,f^0 \, ,X_k \, e \right)$,
where
\begin{align*}
  C^{(1)}\left(\lambda^0,\, f^0,\, X_{k},\, e \right) &= \frac 1 {\sqrt{NT}} \, {\rm Tr}(M_{f^0}
                       \, e^{\prime}\, M_{\lambda^0} \, X_k) \; ,  \notag \\
  C^{(2)}\left(\lambda^0,\, f^0,\, X_{k},\, e \right) &= - \, \frac 1 {\sqrt{NT}} \, \bigg[
       {\rm Tr}\left(e M_{f^0} \, e' \, M_{\lambda^0} \, X_k \,
              f^0 \, (f^{0\prime}f^0)^{-1} \, (\lambda^{0\prime}\lambda^0)^{-1} \, \lambda^{0\prime} \right)
    \nonumber \\ & \qquad \qquad \quad
       +{\rm Tr}\left(e^{\prime}M_{\lambda^0} \, e \, M_{f^0} \, X^{\prime}_k \,
              \lambda^0 \, (\lambda^{0\prime}\lambda^0)^{-1} \, (f^{0\prime}f^0)^{-1} \, f^{0\prime} \right)
    \nonumber \\ & \qquad \qquad \quad
       +{\rm Tr}\left(e^{\prime}M_{\lambda^0} \, X_k \, M_{f^0} \, e^{\prime}
                \, \lambda^0 \, (\lambda^{0\prime}\lambda^0)^{-1} \, (f^{0\prime}f^0)^{-1} \, f^{0\prime} \right)
                        \bigg]  \; .
\end{align*}

\end{theorem}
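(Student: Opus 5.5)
The plan is to expand the profile objective via perturbation theory of the $R$-dimensional leading eigenspace, following Moon and Weidner~\cite*{MoonWeidner2015}. Write $Y-\beta\cdot X = \lambda^0 f^{0\prime} + E(\beta)$ with perturbation $E(\beta) = e - (\beta-\beta^0)\cdot X$. By the last line of \eqref{LNT123}, $NT\,L_{NT}(\beta)$ is the sum of the $T-R$ smallest eigenvalues of $(\lambda^0 f^{0\prime}+E)'(\lambda^0 f^{0\prime}+E)$. Equivalently, it equals ${\rm Tr}(E' E)$ minus the sum of the $R$ largest eigenvalues minus ${\rm Tr}(f^0\lambda^{0\prime}\lambda^0 f^{0\prime}+\dots)$ cross terms. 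The cleanest route is to treat the unperturbed operator $f^0\lambda^{0\prime}\lambda^0f^{0\prime}$, whose nonzero eigenvalues are of order $NT$ by Assumption~\ref{ass:A1}. There is a spectral gap of order $NT$ between these and zero.

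First, using Kato's resolvent/Riesz projector formula, I will write $\sum_{r=R+1}^T\mu_r = {\rm Tr}[(\mathbb{I}-P(E))\,A(E)]$, where $A(E)$ is the perturbed Gram matrix and $P(E)$ is the projector onto the top-$R$ eigenspace. I will expand this as a power series $\sum_{g\ge 2} L^{(g)}(\lambda^0,f^0,E)$ in $E$. Here $L^{(g)}$ is a sum of traces of products of $g$ copies of $E$ interspersed with $M_{\lambda^0}$, $M_{f^0}$, and $S=\lambda^0(\lambda^{0\prime}\lambda^0)^{-1}(f^{0\prime}f^0)^{-1}f^{0\prime}$ and its transpose. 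The series converges whenever $\|E\|/\sqrt{NT}$ is below a constant determined by the smallest eigenvalue of $\lambda^{0\prime}\lambda^0 f^{0\prime}f^0/(NT)$. I will bound the $g$-th term by $c^g\,\|E\|^g(NT)^{1-g/2}$, with the trace absorbed using rank considerations: each term contains a factor of rank at most $R$, or else it is the $g=2$ term ${\rm Tr}(M_{\lambda^0}EM_{f^0}E')$. Explicitly, $L^{(2)}={\rm Tr}(M_{\lambda^0}EM_{f^0}E')$, and $L^{(3)}$ is minus the three trace terms of $C^{(2)}$ form, with $E$ in every slot.

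Next, I will substitute $E=e-(\beta-\beta^0)\cdot X$ into $L^{(2)}$ and $L^{(3)}$ and collect terms by powers of $\beta-\beta^0$. From $L^{(2)}$, the $e$-$e$ part goes into $L_{NT}(\beta^0)$. The linear part gives $-2\sqrt{NT}(\beta-\beta^0)'C^{(1)}$, and the quadratic part gives $NT\,(\beta-\beta^0)'W_{NT}(\beta-\beta^0)$. From $L^{(3)}$, the part linear in $\beta-\beta^0$ and quadratic in $e$ yields $C^{(2)}$. All remaining pieces must be shown to be $o_p((1+\sqrt{NT}\|\beta-\beta^0\|)^2)$ uniformly on shrinking balls: the $e^3$ part is absorbed into $L_{NT}(\beta^0)$, and we also need the parts of $L^{(3)}$ with two or three $X$'s, plus all $g\ge4$ terms. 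Here $\|X_k\|\le\|X_k\|_F=O_p(\sqrt{NT})$ by Assumption~\ref{ass:A4}(i), and $\|e\|=o_p(N^{2/3})$. As an example, a $g=3$ term with one $e$ and two $X$'s is $O(\|e\|\,NT\,\|\beta-\beta^0\|^2/\sqrt{NT})=o_p(NT\|\beta-\beta^0\|^2)$. The $g\ge4$ terms split into parts with $\|e\|^g (NT)^{1-g/2}$ and $\|e\|^{j}(\sqrt{NT}\|\beta-\beta^0\|)^{g-j}(NT)^{1-g/2}$ pieces. The pure-$e$ part does not depend on $\beta$, except that it is cleaner to isolate all $\beta$-free terms into $L_{NT}(\beta^0)$ exactly by subtracting the expansion at $\beta^0$.

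The main obstacle is the remainder bound for terms that are linear in $\beta-\beta^0$ but of order $g\ge4$ in $e$. These are bounded by $\sqrt{NT}\|\beta-\beta^0\|\cdot\|e\|^{g-1}(NT)^{(1-g)/2}\cdot\sqrt{NT}\cdot c^g$, and for $g=4$ this is $\sqrt{NT}\|\beta-\beta^0\|\cdot\|e\|^3/(NT)$. This is $o_p(\sqrt{NT}\|\beta-\beta^0\|)$ precisely because $\|e\|^3=o_p(N^2)=o_p(NT)$, which is exactly where Assumption~\ref{ass:A3}$^*$ and $N/T\to\kappa^2$ enter. Summing the geometric tail for $g\ge4$ requires $\|E\|/\sqrt{NT}\to0$, which holds on $\|\beta-\beta^0\|\le\eta_{NT}$. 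Applying $ab\le(a^2+b^2)/2$ then converts all cross terms into the normalized form $(1+\sqrt{NT}\|\beta-\beta^0\|)^2$. Care is needed that the combinatorial coefficients in the $g$-th term grow at most geometrically, which I would take from Kato's bound on the number of terms in the resolvent expansion.
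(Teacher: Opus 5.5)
Your proposal follows the same route as the paper, which takes this theorem from Moon and Weidner (2015). You expand $NT\,L_{NT}(\beta)=\sum_{g\ge 2}L^{(g)}$ in the perturbation $E(\beta)=e-(\beta-\beta^0)\cdot X$, read off $W_{NT}$, $C^{(1)}$ and $C^{(2)}$ from the exact $g=2,3$ terms, and bound every $g\ge 3$ term multilinearly by $c_0\,NT\,(c_1\|E\|/\sqrt{NT})^g$. As you say, the only delicate remainder (one $X_k$ and at least three copies of $e$) is controlled exactly by $\|e\|^3=o_p(NT)$, i.e.\ Assumption~\ref{ass:A3}$^*$ with $N/T\to\kappa^2$.

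One coefficient needs fixing. With $E$ in every slot, the three traces in $C^{(2)}$ coincide, so your description of $L^{(3)}$ gives $-3\,{\rm Tr}\bigl(E M_{f^0}E' M_{\lambda^0} E f^0(f^{0\prime}f^0)^{-1}(\lambda^{0\prime}\lambda^0)^{-1}\lambda^{0\prime}\bigr)$. The true cubic term is $-2$ times this trace, and it is exactly this factor $2$ that makes its part linear in $\beta-\beta^0$ equal to $-2\sqrt{NT}\,(\beta-\beta^0)'C^{(2)}$, as $L_{q,NT}$ requires.
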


We refer to $W_{NT}$ and $C_{NT}$ as the approximated Hessian and
the approximated score (at the true parameter $\beta^0$).
The exact Hessian and the exact score (at the true parameter $\beta^0$) contain higher-order expansion terms in $e$,
but the expansion up to the particular order above is sufficient
to work out the first-order asymptotic theory of the LS estimator, as the following
corollary shows.

\begin{corollary}
  \label{cor:limit}
Let the assumptions of Theorem~\ref{th:consistency} and \ref{th:ass_expand} hold;
let $\beta^0$ be an interior point of the parameter set $\mathbb{B}$;
and  assume $C_{NT}={\cal O}_p(1)$.
We then have
 $\sqrt{NT} \big( \widehat \beta - \beta^0 \big) =  W^{-1}_{NT} C_{NT} + o_p(1) = {\cal O}_p(1)$.
\end{corollary}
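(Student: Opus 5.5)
The plan is the standard argument for minimizers of locally quadratic objectives, in three steps: establish $\sqrt{NT}$-consistency, then localize, then compare with the minimizer of the quadratic approximation.

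\textbf{Preliminary: invertibility of $W_{NT}$.} First we need $W_{NT}$ to be positive definite wpa1, with smallest eigenvalue bounded away from zero. The plan is to obtain this from Assumption~\ref{ass:A4}.

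For high-rank combinations, write $\alpha\cdot X$ for a general unit vector $\alpha$. Then $(NT)\,\alpha' W_{NT}\alpha={\rm Tr}(M_{f^0}X_\alpha' M_{\lambda^0}X_\alpha)$. Projecting out the $R$-dimensional spaces of $\lambda^0$ and $f^0$, together with the rank-$K_1$ low-rank part, removes at most $2R+K_1$ eigenvalues. Assumption~\ref{ass:A4}(ii)(a) then bounds this quantity below by $b$ whenever the high-rank part of $\alpha$ is non-negligible.

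When $\alpha$ is concentrated on the low-rank regressors, Assumption~\ref{ass:A4}(ii)(b) gives the lower bound instead.

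This is essentially the same argument as in the consistency proof (Theorem~\ref{th:consistency}), which I take as available. Hence $W_{NT}^{-1}={\cal O}_p(1)$. I expect combining the two cases uniformly over $\alpha$ to be the main technical point. If needed, I would simply invoke the corresponding lower bound established in the proof of Theorem~\ref{th:consistency}.

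\textbf{Step 1: $\sqrt{NT}$-consistency.} By Theorem~\ref{th:consistency}, there is a sequence $\eta_{NT}\to0$ with $\|\widehat\beta-\beta^0\|\le\eta_{NT}$ wpa1. On that set, Theorem~\ref{th:ass_expand} applies at $\widehat\beta$.

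Write $\Delta=\sqrt{NT}(\widehat\beta-\beta^0)$. Since $\widehat\beta$ minimizes $L_{NT}$, we have $L_{NT}(\widehat\beta)\le L_{NT}(\beta^0)$, and $R_{NT}(\beta^0)=o_p(1)$. The expansion then gives
\[
0\ge -2\Delta'C_{NT}+\Delta'W_{NT}\Delta+o_p\big((1+\|\Delta\|)^2\big).
\]
With $\mu_K(W_{NT})\ge c>0$ wpa1 and $C_{NT}={\cal O}_p(1)$, this implies $c\|\Delta\|^2\le 2\|\Delta\|\,{\cal O}_p(1)+o_p(1+\|\Delta\|^2)$, so $\|\Delta\|={\cal O}_p(1)$.

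\textbf{Step 2: comparison with the quadratic minimizer.} Define $\beta^*=\beta^0+(NT)^{-1/2}W_{NT}^{-1}C_{NT}$. Then $\beta^*-\beta^0={\cal O}_p((NT)^{-1/2})$. Because $\beta^0$ is interior, $\beta^*\in\mathbb{B}$ wpa1, and $\beta^*$ lies in the $\eta_{NT}$-ball.

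Both $\widehat\beta$ and $\beta^*$ are within ${\cal O}_p((NT)^{-1/2})$ of $\beta^0$, so the remainder term is $(NT)^{-1}o_p(1)$ at each of them. Completing the square in $L_{q,NT}$ gives
\[
NT\big[L_{q,NT}(\beta)-L_{q,NT}(\beta^*)\big]=(\Delta-W_{NT}^{-1}C_{NT})'W_{NT}(\Delta-W_{NT}^{-1}C_{NT}).
\]
Minimality gives $L_{NT}(\widehat\beta)\le L_{NT}(\beta^*)$. Combining this with the remainder bounds at both points yields
\[
c\,\|\Delta-W_{NT}^{-1}C_{NT}\|^2\le o_p(1),
\]
so $\Delta=W_{NT}^{-1}C_{NT}+o_p(1)$. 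This quantity is ${\cal O}_p(1)$, which proves the claim.
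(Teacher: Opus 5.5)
Your Steps 1 and 2 are correct. They are the standard argument (Andrews, 1999; Moon and Weidner, 2015) to which the paper defers this part. Two small points: $R_{NT}(\beta^0)=0$ exactly, and you should take $\eta_{NT}$ with $\eta_{NT}\sqrt{NT}\to\infty$ (e.g.\ $\max\{\eta_{NT},(NT)^{-1/4}\}$) so that $\beta^*$ also lies in the ball wpa1.

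The gap is your preliminary step, which is the only part of this corollary the paper actually proves, and there you leave the decisive step open. Your fallback is not available: the consistency proof contains no lower bound on $\mu_K(W_{NT})$. It supplies ingredients such as Lemma~\ref{lemma:lowrankprop}, but its final bounds concern $\min_f \widetilde S_{NT}(\beta,f)$, which contains $\lambda^0 f^{0\prime}$. Moreover, the bound on $\widetilde S^{(2)}_{NT}$ carries the term $-a_3\|\beta^{\rm high}-\beta^{0,{\rm high}}\|$. This term is linear in the deviation, so it swamps the quadratic terms near $\beta^0$ and says nothing about $a'W_{NT}a$ in mixed directions. Also, the low-rank case cannot rest on Assumption~\ref{ass:A4}(ii)(b) alone. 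Two identical low-rank regressors satisfy (ii)(b) yet make $W_{NT}$ singular, so Assumption~\ref{ass:A4}(i) has to enter.

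What is needed, and what the paper's supplement does, is the following.
\begin{itemize}
\item Write $a=(\varphi',\alpha')'$, with low-rank part $\varphi$ and high-rank part $\alpha$, and split $M_{\lambda^0}=M_{(\lambda^0,w)}+P_{(M_{\lambda^0}w)}$.
\item Since $M_{(\lambda^0,w)}X_{{\rm low},\varphi}=0$, the first piece contributes $(NT)^{-1}{\rm Tr}(M_{f^0}X'_{{\rm high},\alpha}M_{(\lambda^0,w)}X_{{\rm high},\alpha})\ge b\|\alpha\|^2$ for every $a$, not only when $\alpha$ dominates. This follows from Lemma~\ref{lemma:Optimization} with $2R+K_1$ projected directions and Assumption~\ref{ass:A4}(ii)(a).
\item The second piece is nonnegative, and expanding it gives at least $c_2\|\varphi\|^2-c_3\|\varphi\|\|\alpha\|$. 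The cross term is bounded via $|{\rm Tr}(C)|\le{\rm rank}(C)\|C\|$ with ${\rm rank}(C)\le K_1$. The pure low-rank term equals $(NT)^{-1}{\rm Tr}(M_{f^0}X'_{{\rm low},\varphi}M_{\lambda^0}X_{{\rm low},\varphi})$, which Lemma~\ref{lemma:lowrankprop} bounds below by a multiple of $\|\varphi\|^2$.
\item Lemma~\ref{lemma:lowrankprop} is where (ii)(b) enters, after a principal-angle argument (Lemma~\ref{lemma:wv}) converts it into $w'M_{\lambda^0}w\ge B_0w'w$ and $v'M_{f^0}v\ge B_0v'v$. It is also where Assumption~\ref{ass:A4}(i) supplies non-collinearity among the $X_l$ themselves.
\item Hence $a'W_{NT}a\ge b\|\alpha\|^2+\max\{0,\,c_2\|\varphi\|^2-c_3\|\varphi\|\|\alpha\|\}$. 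Splitting according to whether $c_3\|\alpha\|\le c_2\|\varphi\|/2$ gives a positive lower bound on $\mu_K(W_{NT})$ wpa1.
\end{itemize}
With that in place, the rest of your proof goes through.
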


Combining consistency of the LS estimator and the expansion of the profile objective function in Theorem~\ref{th:ass_expand}, one obtains
$\sqrt{NT} \, W_{NT} \big( \widehat \beta - \beta^0 \big) =   C_{NT} + o_p(1)$;
see, for example, Andrews \cite*{Andrews1999}.
To obtain the corollary, one needs in addition that
$W_{NT}$ does not become degenerate as $N,T \rightarrow \infty$; that is,
the smallest eigenvalue of $W_{NT}$ should be bounded from below by a positive constant. Our assumptions
already guarantee existence of such a lower bound, as is shown in the supplementary material.

Analogous to the expansions of the profile objective function $L_{NT}(\beta)$, one can also derive 
expansions of the projectors $M_{\widehat \lambda}$ and $M_{\widehat f}$, and those can be used to show 
consistency of $\widehat \lambda$ and $\widehat f$, up to normalization; see Lemma~\ref{lemma:lambdafINV} in the 
supplementary material.

\subsection{Asymptotic Distribution}

We now apply Corollary \ref{cor:limit} to work out the asymptotic distribution of the LS estimator $\widehat \beta$.
For this purpose, we need more specific assumptions on $\lambda^0$, $f^0$, $X_k$, and $e$.

\begin{assumption}
   \label{ass:A5}
   A sigma algebra ${\cal C} = {\cal C}_{NT}$ (which in the following we will refer
   to as the conditioning set) exists that contains
      the sigma algebra generated by $\lambda^0$ and $f^0$, such that
      
      \begin{itemize}

          \item[(i)]  
         $\mathbb{E}\left[ e_{it} \, \big| \, {\cal C} \vee \sigma( \{ (X_{is}, e_{i,s-1}), s \leq t \}) \right] = 0$, 
        for all $i,t$.\footnote{%
          Here and in the following, we write $\sigma(A)$ for the sigma algebra generated by the
          (collection of) random variable(s) $A$, and we write ${\cal A} \vee {\cal B}$ for the sigma algebra generated
          by the unions of all elements in the sigma algebra ${\cal A}$ and ${\cal B}$, so that in the conditional expectation
          in Assumption~\ref{ass:A5}(ii), we condition jointly on ${\cal C}$ and $\{ (X_{is}, e_{i,s-1}), s \leq t \}$.
          }

          \item[(ii)] $e_{it}$ is independent over $t$, conditional on ${\cal C}$, for all $i$.

           \item[(iii)] $\{ (X_{it}, e_{it}) , t=1,\ldots,T \}$ is independent across $i$,
          conditional on ${\cal C}$.

          \item[(iv)]   $\frac 1 {NT} \sum_{i=1}^N
            \sum_{t,s=1}^T  \left| {\rm Cov}\left( X_{k,it},  X_{\ell,is} \Big| \, {\cal C} \right) \right|
            = {\cal O}_p(1)$, for all $k,\ell=1,\ldots,K$.

          \item[(v)] $\frac 1 {N T^2} \sum_{i=1}^N  \sum_{t,s,u,v=1}^T
              \left| {\rm Cov}\left( e_{it} \widetilde X_{k,is},  \, e_{iu} \widetilde X_{\ell,iv}   \Big| \, {\cal C} \right) \right|
              = {\cal O}_p(1)$, where $\widetilde X_{k,it} = X_{k,it} - \mathbb{E}\left[ X_{k,it} \big| {\cal C} \right]$, \\
              for all $k,\ell=1,\ldots,K$.

          \item[(vi)] 
          An $\epsilon>0$ exists such that
          $\mathbb{E}\left( e_{it}^8  \big| \, {\cal C} \right)$ and $\mathbb{E}\left( \| X_{it} \|^{8+\epsilon} \big| \, {\cal C} \right)$
          and $\mathbb{E} \| \lambda^0_i \|^4$ and
          $\mathbb{E} \| f^0_t \|^{4+\epsilon}$
           are bounded by a non-random constant,
                   uniformly over $i,t$ and $N,T$.

          \item[(vii)] $\beta^0$ is an interior point of the compact parameter set $\mathbb{B}$.

   \end{itemize}

\end{assumption}

\subsubsection*{Remarks on Assumption~\ref{ass:A5}}

\begin{itemize}
    \item[(1)] Part $(i)$ of Assumption~\ref{ass:A5} imposes
       that $e_{it}$ is a martingale difference sequence over time
       for a particular filtration.
       Conditioning on ${\cal C}$, the time series of $e_{it}$ is independent
       over time (part $(ii)$ of the assumption) and
       the error term $e_{it}$ and regressors $X_{it}$ are
       cross-sectionally independent (part $(iii)$
       of the assumption), but
       unconditional correlation is allowed.
       Part $(iv)$ imposes weak time-serial correlation of $X_{it}$.
       Part $(v)$ demands weak time-serial correlation of
       $\widetilde X_{k,it} =  X_{k,it} - \mathbb{E}\left[ X_{k,it} \big| {\cal C} \right]$
       and $e_{it}$.
       Finally, parts $(vi)$ and $(vii)$ require 
        bounded higher moments of the error term, regressors,
        factors and factor loadings,
          and a compact parameter set with an interior true parameter.

    \item[(2)] Assumption~\ref{ass:A5}$(i)$  
        implies
              $\mathbb{E} \left(X_{k,it}e_{it} | \mathcal{C} \right)
        = 0$
        and 
        $\mathbb{E} \left(X_{k,it}e_{it} X_{\ell,is}e_{is} | \mathcal{C} \right) = 0$
           for $t \neq s$. Thus, the assumption guarantees
            $X_{it}e_{it}$ is mean zero
            and  uncorrelated over~$t$, and independent across~$i$, conditional on ${\cal C}$.
            Notice the conditional mean independence restriction in Assumption~\ref{ass:A5}$(i)$ is weaker than Assumption~D of Bai (2009), besides sequential exogeneity. Bai imposes independence between $e_{it}$ and $(\{ X_{js},\lambda_j,f_s \}_{j,s})$.  
	   
    \item[(3)]  Assumption~\ref{ass:A5} is sufficient for
    Assumption~\ref{ass:A2}.
      To see this, notice
      $ {\rm Tr}(X_k \, e^{\prime}) = \sum_{i,t}  X_{k,it}e_{it}$,
      and also that the sequential exogeneity and the cross-sectional independence assumption imply
        $\mathbb{E}\left[ \left( (NT)^{-1}  \sum_{i,t}   X_{k,it}e_{it} \right)^2 \Big| {\cal C} \right]
        =(NT)^{-2} \sum_{i,t} \mathbb{E}\left[ \left(     X_{k,it}e_{it} \right)^2\Big| {\cal C} \right]$.
      Then, together with the assumption of bounded moments, we have 
      $(NT)^{-1}  \sum_{i,t}   X_{k,it}e_{it} = o_p(1)$.

    \item[(4)]   Assumption~\ref{ass:A5} is also sufficient for  Assumption~\ref{ass:A3}$^*$
    (and thus for  Assumption~\ref{ass:A3}), because $e_{it}$ is assumed independent over $t$ and
    across $i$ and has a bounded fourth moment,
    conditional on ${\cal C}$,
     which by using results in Latala \cite*{Latala2006},
    implies the spectral norm satisfies $\| e \| = \sqrt{\max(N,T)}$
    as $N$ and $T$ become large; see the supplementary material.

    \item[(5)] Examples of regressor processes, which satisfy Assumptions~\ref{ass:A5}$(iv)$ and $(v)$, are discussed in the following.
    These examples  also illuminate the role of the conditioning
    sigma field~${\cal C}$.

\end{itemize}

\subsubsection*{Examples of DGPs for $X_{it}$}

Here we provide examples of the DGPs of the regressors $X_{it}$ that satisfy the conditions  in
Assumption~\ref{ass:A5}.  Proofs for these examples are provided in the supplementary material.

\begin{example}
    The first example is a simple AR(1) interactive fixed effect regression:
\[
    Y_{it} = \beta^0 Y_{i,t-1} + \lambda_{i}^{0\prime} f_{t}^0 + e_{it},
\]
where $e_{it}$ is mean zero, independent across $i$ and $t$, and independent of $\lambda^0$ and $f^0$.
Assume $|\beta^0| < 1$ and that
$e_{it}$, $\lambda_{i}^{0}$, and $f_t^0$ all possess uniformly bounded moments of order $8+\epsilon$.
In this case, the regressor  is
$X_{it} = Y_{it-1} = \lambda_{i}^{0\prime} F_{t}^{0} + U_{it}$,
where $F_{t}^{0} = \sum_{s=0}^{\infty} (\beta^{0})^{s} f_{t-1-s}^{0}$ and $U_{it} = \sum_{s=0}^{\infty} (\beta^{0})^{s} e_{i,t-1-s}$.
For the conditioning sigma field $\mathcal{C}$ in Assumption~\ref{ass:A5}, we choose $\mathcal{C} = \sigma \left( \{ \lambda^0_{i} : 1 \leq i \leq N \}, \{ f^0_t : 1 \leq t \leq T \} \right) $.
Conditional on ${\cal C}$, the only variation in
$X_{it}$ stems from $U_{it}$, which is independent across $i$ and weakly correlated over $t$, so that
Assumption~\ref{ass:A5}$(iv)$ holds.
Furthermore, we have
 $\mathbb{E} \left( X_{it} | \mathcal{C} \right) = \lambda_{i}^{0\prime} F_{t}^{0}$ and $\widetilde X_{it} = U_{it}$,
 which allows us to verify Assumption~\ref{ass:A5}$(v)$.

 This example can be generalized to a ${\rm VAR}(1)$ model as follows:
\begin{align}
   {Y_{it} \choose Z_{it}} &=  {\cal B} \, 
   \underbrace{ {Y_{i,t-1} \choose Z_{i,t-1}} }_{=X_{it}}
                             + {\lambda^{0 \prime}_i f^{0}_t \choose d_{it}} +
                           \underbrace{   
                              {e_{it} \choose u_{it}} }_{=E_{it}} \; ,
   \label{VAR}
\end{align}
where $Z_{it}$ is an $m \times 1$ vector of additional variables and ${\cal B}$ is an
$(m+1)\times(m+1)$ matrix of VAR parameters
whose eigenvalues lie within the unit circle.
The $m\times 1$ vector $d_{it}$ and the factors $f^0_t$ and factor loadings
 $\lambda^0_i$ are assumed to be independent of the $(m+1) \times 1$ vector of innovations $E_{it}$.
 Suppose our interest is to estimate the first
row in equation \eqref{VAR}, which corresponds exactly to 
our interactive fixed effects model with regressors $Y_{i,t-1}$ and $Z_{i,t-1}$.
Choosing $\mathcal{C}$ to be the sigma field generated by all $f^0_t$, $\lambda^0_i$, $d_{it}$,
we obtain $\widetilde X_{it} = \sum_{s=0}^{\infty} {\cal B}^{s} E_{i,t-1-s}$.
Analogous to the AR(1) case, we then find 
Assumption~\ref{ass:A5}$(iv)$ and $(v)$  are satisfied in this example if the innovations $E_{it}$
are independent across $i$ and over $t$, and have appropriate bounded moments. 
\end{example}

\begin{example}
Consider a scalar $X_{it}$ for simplicity, and let
$X_{it}=g\left( v_{it}, \delta_{i},h_{t}\right)$.
We assume (i) $\left\{ \left( e_{it},v_{it}\right)
_{i=1,...,N;t=1,...,T}\right\} \bot \left\{ \left( \lambda_{i}^{0},\delta
_{i}\right)_{i=1,...,N},\left( f_{t}^{0},h_{t}\right)_{t=1,...,T}\right\} ,
$ (ii) $\left( e_{it},v_{it},\delta_{i}\right) $ are independent across $i$
for all $t,$ and (iii) $v_{is}\perp e_{it}$ for $s\leq t$ and all $i$.
Furthermore, assume $\sup_{it}\mathbb{E} |X_{it}|^{8+\epsilon} < \infty$ for some positive $\epsilon$.
For the conditioning sigma field $\mathcal{C}$ in Assumption~\ref{ass:A5}, we choose
$\mathcal{C}=\sigma \left( \left\{ \lambda_{i}^{0}:1\leq i\leq N\right\} ,%
\text{ }\left\{ \delta_{i}:1\leq i\leq N\right\} ,\text{ }\left\{
f_{t}^{0}: -\infty \leq t\leq \infty \right\} ,\text{ }\left\{ h_{t}: -\infty \leq t\leq \infty\right\}
\right) .$
Furthermore, as in Hahn and Kuersteiner~\cite*{HahnKuersteiner2011},
let
 $\mathcal{F}_{\tau}^{t}(i) = 
 {\cal C} \vee  \sigma \left( \{ (e_{is},v_{is}) : \tau \leq s \leq t \} \right)$, and define the conditional 
$\alpha$-mixing coefficient on $\mathcal{C}$:
\[
    \alpha_{m}(i) = \sup_{A \in \mathcal{F}_{-\infty}^{t}(i), B \in \mathcal{F}_{t+m}^{\infty}(i)} \left[ \mathbb{P} \left(A \cap B \right)
     - \mathbb{P} \left(A \right) \mathbb{P} \left(B \right) | \mathcal{C} \right].
\]
Let $\alpha_m = \sup_{i} \alpha_{m}(i)$, and assume
$  \alpha_m = O \left( m^{-\zeta} \right), \text{ where } \zeta > \frac{12 \, p}{4p-1}$ for $p>4$.
Then, Assumptions~\ref{ass:A5}$(iv)$ and $(v)$  are satisfied.

In this example, the  shocks $h_t$ (which may contain the factors $f^0_t$),
$\delta_i$ (which may contain the factor loadings $\lambda^0_i$),
and $v_{it}$ (which may contain past values of $e_{it}$)
 can enter in a general non-linear way into the regressor  $X_{it}$.

\end{example}

The following assumption guarantees the limiting variance and the asymptotic bias converge to constant values.

\begin{assumption}
   \label{ass:A6}
   Let ${\cal X}_k=M_{\lambda^0} \, X_{k} \, M_{f^0}$, which is an $N \times T$ matrix with
   entries ${\cal X}_{k,it}$.
    For each $i$ and $t$, define the $K$-vector ${\cal X}_{it}=({\cal X}_{1,it},\ldots,{\cal X}_{K,it})'$.
   We assume existence of the following probability limits
   for all $k=1,\ldots,K$:
     \begin{align*}
       W &= \limfunc{plim}_{N,T \rightarrow \infty}
          \frac 1 {NT} \, \sum_{i=1}^N \, \sum_{t=1}^T \, {\cal X}_{it} \, {\cal X}_{it}' \, ,
       \nonumber \\
       \Omega &=
       \limfunc{plim}_{N,T \rightarrow \infty}
    \frac 1 {NT} \, \sum_{i=1}^N \, \sum_{t=1}^T
            \, e_{it}^2  {\cal X}_{it} \, {\cal X}_{it}'  \; ,  
       \nonumber \\
       B_{1,k} &= \limfunc{plim}_{N,T \rightarrow \infty}
       \frac 1 {N} \, {\rm Tr} \left[ P_{f^0} \mathbb{E}\left( e' X_k \, \big| \, {\cal C} \right) \right] \; ,
       \nonumber \\
       B_{2,k} &= \limfunc{plim}_{N,T \rightarrow \infty}
       \frac 1 T
         {\rm Tr}\left[ \mathbb{E}\left( e e'  \, \big| \, {\cal C}  \right)   \, M_{\lambda^0} \, X_k \,
              f^0 \, (f^{0\prime}f^0)^{-1} \, (\lambda^{0\prime}\lambda^0)^{-1} \, \lambda^{0\prime} \right] \; ,
       \nonumber \\
       B_{3,k} &=
       \limfunc{plim}_{N,T \rightarrow \infty}
       \frac 1 N
       {\rm Tr}\left[ \mathbb{E}\left( e' e \, \big| \, {\cal C}  \right) \, M_{f^0} \, X^{\prime}_k \,
              \lambda^0 \, (\lambda^{0\prime}\lambda^0)^{-1} \, (f^{0\prime}f^0)^{-1} \, f^{0\prime} \right]
                   \; ,
    \end{align*}
    where ${\cal C}$ is the same conditioning set that appears in Assumption~\ref{ass:A5}.
\end{assumption}
Here, $W$ and $\Omega$ are $K\times K$ matrices,
and we define the $K$-vectors $B_1$, $B_2$, and $B_3$ with components
$B_{1,k}$, $B_{2,k}$ and $B_{3,k}$, $k=1,\ldots,K$.

\begin{theorem}[\bf Asymptotic Distribution]
   \label{th:limdis}
   Let Assumptions~\ref{ass:A1}, \ref{ass:A4}, \ref{ass:A5}, and \ref{ass:A6} be satisfied,\footnote{%
    Assumption~\ref{ass:A2} and~\ref{ass:A3}$^*$ are implied by Assumption~\ref{ass:A5} and therefore
    need not be explicitly assumed here.}
   and consider the limit
   $N,T \rightarrow \infty$ with $N/T \rightarrow \kappa^2$, where
                  $0<\kappa<\infty$.
   Then we have
   \begin{align*}
      \sqrt{NT} \left( \widehat \beta - \beta^0 \right) \, \limfunc{\rightarrow}_d
              \, {\cal N} \left( W^{-1} B , \; W^{-1} \, \Omega \, W^{-1} \right) \; ,
   \end{align*}
   where $B=-\kappa B_1-\kappa^{-1} B_2 - \kappa B_3$.
\end{theorem}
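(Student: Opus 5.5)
The plan is to invoke Corollary~\ref{cor:limit}. First check its hypotheses: Assumption~\ref{ass:A5} implies Assumptions~\ref{ass:A2} and \ref{ass:A3}$^*$ (Remarks (3), (4)), so Theorems~\ref{th:consistency} and \ref{th:ass_expand} apply. Consistency also needs Assumption~\ref{ass:A4}. Then we have $\sqrt{NT}(\widehat\beta-\beta^0)=W_{NT}^{-1}C_{NT}+o_p(1)$, once $C_{NT}={\cal O}_p(1)$ is shown. Since $W_{NT}=(NT)^{-1}\sum_{i,t}{\cal X}_{it}{\cal X}_{it}'\to_p W$ by Assumption~\ref{ass:A6}, and $W>0$ (the lower bound on $W_{NT}$ mentioned after Corollary~\ref{cor:limit}), the theorem reduces to showing
\[
C_{NT}\to_d{\cal N}(B,\Omega),
\]
followed by Slutsky's lemma.

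Decompose $C^{(1)}$ using $M_{\lambda^0}=\mathbb{I}-P_{\lambda^0}$ and $M_{f^0}=\mathbb{I}-P_{f^0}$:
\[
C^{(1)}_k=\frac1{\sqrt{NT}}\Big[{\rm Tr}(e'X_k)-{\rm Tr}(e'P_{\lambda^0}X_k)-{\rm Tr}(P_{f^0}e'X_k)+{\rm Tr}(P_{f^0}e'P_{\lambda^0}X_k)\Big].
\]
Next write $X_k=\mathbb{E}(X_k|{\cal C})+\widetilde X_k$. Every term where $e$ multiplies the ${\cal C}$-measurable part is mean zero given ${\cal C}$, with variance controlled by independence of $e_{it}$ over $i,t$.
\begin{itemize}
\item[(a)] The leading term combines with the ${\cal C}$-measurable parts of the projection terms into $(NT)^{-1/2}\sum_{i,t}e_{it}{\cal X}_{it}$ up to $o_p(1)$.
\item[(b)] The term $-(NT)^{-1/2}{\rm Tr}(P_{f^0}e'\widetilde X_k)$ has conditional mean $-(NT)^{-1/2}{\rm Tr}[P_{f^0}\mathbb{E}(e'X_k|{\cal C})]$, which converges to $-\kappa B_{1,k}$ because $\sqrt{N/T}\to\kappa$. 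This mean is generally nonzero, since $\widetilde X_{is}$ may depend on $e_{it}$ for $t<s$ (predeterminedness). Its fluctuation around the mean is $o_p(1)$ by Assumption~\ref{ass:A5}(v), since $P_{f^0}$ has rank $R$ and entries of order $1/T$.
\item[(c)] The term ${\rm Tr}(e'P_{\lambda^0}\widetilde X_k)$ has conditional mean zero, because $P_{\lambda^0}$ is ${\cal C}$-measurable and $e$, $X$ are independent across $i$, so only diagonal $i$-contributions $\mathbb{E}(e_{it}\widetilde X_{k,it}|{\cal C})=0$ survive. Its variance is ${\cal O}_p(1)$, so after scaling by $(NT)^{-1/2}$ this term is $o_p(1)$.
\item[(d)] The double-projection term is $o_p(1)$ by a similar argument.
\end{itemize}

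For $C^{(2)}$, replace $ee'$ and $e'e$ by their conditional expectations, showing the deviations are $o_p(1)$. Here $M_{f^0}$ and $M_{\lambda^0}$ can be replaced by identities up to rank-$R$ corrections that contribute $o_p(1)$, using $\|e\|={\cal O}_p(\sqrt{N})$. This yields $-\kappa^{-1}B_{2,k}-\kappa B_{3,k}$, since $\mathbb{E}(ee'|{\cal C})$ is diagonal with ${\cal O}(T)$ entries and $\mathbb{E}(e'e|{\cal C})$ is diagonal with ${\cal O}(N)$ entries. The third term of $C^{(2)}$ should be $o_p(1)$: its mean involves $\mathbb{E}(e_{it}e_{js})$ paired in a non-diagonal way that vanishes by independence, and its variance is small. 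The main obstacle is these variance bounds for the quadratic forms in $e$ interacting with $X_k$. The terms ${\rm Tr}(e M_{f^0}e'M_{\lambda^0}X_kf^0\cdots)$ involve $X_k$, which may depend on past $e$. I would first handle them via $\|e\|$ bounds on the rank-$R$ projection pieces. For the remaining centered sums $\sum_{i}\sum_{t}(e_{it}^2-\mathbb{E}e_{it}^2)(\cdots)$, I would use the martingale structure of Assumption~\ref{ass:A5}(i)--(iii) and the eighth moments from (vi) to compute second moments conditional on ${\cal C}$.

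Finally, the CLT for $(NT)^{-1/2}\sum_{i,t}e_{it}{\cal X}_{it}$ is needed. The summands form a martingale difference array with respect to the filtration ordered by $t$ then $i$ (conditional on ${\cal C}$, using Assumption~\ref{ass:A5}(i),(iii)). The complication is that ${\cal X}_{it}$ involves projections depending on all periods. I would therefore first replace ${\cal X}_{it}$ by $\widetilde X_{it}$ plus ${\cal C}$-measurable parts, with errors $o_p(1)$ as in step (c). I would then apply a martingale CLT, whose conditional variance converges to $\Omega$ by Assumption~\ref{ass:A6}, and check the Lindeberg condition with the $8+\epsilon$ moment bounds. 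Combining, $C_{NT}\to_d{\cal N}(B,\Omega)$ with $B=-\kappa B_1-\kappa^{-1}B_2-\kappa B_3$, and the result follows.
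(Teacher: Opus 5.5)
Your architecture is the paper's (Corollary~\ref{cor:limit}, reduction to $C_{NT}\to_d{\cal N}(B,\Omega)$, the splits $X_k=\overline X_k+\widetilde X_k$ and $M=\mathbb{I}-P$, $B_1$ from the conditional mean of ${\rm Tr}(P_{f^0}e'\widetilde X_k)$, a martingale CLT). However, two steps fail as written.

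First, the combination in (a) is \emph{exactly} $(NT)^{-1/2}{\rm Tr}(e'\mathfrak{X}_k)$ with $\mathfrak{X}_k=M_{\lambda^0}\overline X_kM_{f^0}+\widetilde X_k$. It is not $(NT)^{-1/2}\sum_{i,t}e_{it}{\cal X}_{k,it}$, which equals $C^{(1)}_k$ itself. The two differ by $-(NT)^{-1/2}{\rm Tr}\bigl(e'[\widetilde X_kP_{f^0}+P_{\lambda^0}\widetilde X_kM_{f^0}]\bigr)$. That is precisely your (b)+(c)+(d), whose conditional mean tends to $-\kappa B_{1,k}$, generally nonzero. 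So the replacement in your last paragraph (``errors $o_p(1)$ as in step (c)'') is false for the linear score. Taken literally it gives $C^{(1)}\to_d{\cal N}(0,\Omega)$, contradicting (b).

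The fix is to apply the martingale CLT directly to $(NT)^{-1/2}\sum_{i,t}e_{it}\mathfrak{X}_{it}$. This is an m.d.s.\ because $\mathfrak{X}_{it}$ is ${\cal C}\vee\sigma(X_{it})$-measurable. The swap ${\cal X}\leftrightarrow\mathfrak{X}$ is needed only inside the quadratic $(NT)^{-1}\sum_{i,t}e_{it}^2(\mathfrak{X}_{it}\mathfrak{X}_{it}'-{\cal X}_{it}{\cal X}_{it}')$, to identify the martingale variance with the $\Omega$ of Assumption~\ref{ass:A6}. There it is negligible, but this needs proof. It uses ${\rm rank}(\mathfrak{X}_k-{\cal X}_k)\le 2R$ together with $\|\widetilde X_kP_{f^0}\|,\ \|P_{\lambda^0}\widetilde X_k\|=o_p(\sqrt{NT})$. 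These norm bounds follow from conditional independence across $i$, Assumption~\ref{ass:A5}(iv), and the moment bounds in (vi).

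Second, in $C^{(2)}$ the rank-$R$ corrections are not $o_p(1)$ from $\|e\|={\cal O}_p(\sqrt N)$ and $\|X_k\|={\cal O}_p(\sqrt{NT})$ alone. We have $\|f^0(f^{0\prime}f^0)^{-1}(\lambda^{0\prime}\lambda^0)^{-1}\lambda^{0\prime}\|={\cal O}_p((NT)^{-1/2})$. So the crude bound on $(NT)^{-1/2}{\rm Tr}\bigl(eP_{f^0}e'M_{\lambda^0}X_kf^0(f^{0\prime}f^0)^{-1}(\lambda^{0\prime}\lambda^0)^{-1}\lambda^{0\prime}\bigr)$ is only $(NT)^{-1/2}R\,\|e\|^2\|X_k\|\,{\cal O}_p((NT)^{-1/2})={\cal O}_p(\sqrt{N/T})={\cal O}_p(1)$.

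The missing ingredient is $\|P_{\lambda^0}eP_{f^0}\|={\cal O}_p(1)$, which follows from $\mathbb{E}\|\lambda^{0\prime}ef^0\|_F^2={\cal O}(NT)$. It is usable because the trailing $\lambda^{0\prime}$ lets you insert $P_{\lambda^0}$ in front of $eP_{f^0}$.

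The third term of $C^{(2)}$ has the same defect. Its conditional mean is also not zero ``by independence'', since $X_k$ depends on past errors. Bound it instead with $\|P_{\lambda^0}eP_{f^0}\|$ and $\|P_{\lambda^0}eX_k'\|={\cal O}_p(\sqrt{NT})$. The latter comes from a second-moment computation of $\|\lambda^{0\prime}eX_k'\|_F$, in which predeterminedness and cross-sectional independence remove all off-diagonal terms.

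Likewise, in ${\rm Tr}\{[ee'-\mathbb{E}(ee'|{\cal C})]M_{\lambda^0}X_kf^0(f^{0\prime}f^0)^{-1}(\lambda^{0\prime}\lambda^0)^{-1}\lambda^{0\prime}\}$ you cannot treat $M_{\lambda^0}X_k$ as a fixed weight. Write $M_{\lambda^0}\widetilde X_kf^0=M_{\lambda^0}\widetilde X_kP_{f^0}f^0$ and bound that piece using $\|\widetilde X_kP_{f^0}\|=o_p(\sqrt{NT})$. Then do the conditional variance calculation only for the ${\cal C}$-measurable weight built from $\overline X_k$.
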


From Corollary \ref{cor:limit}, we already know the limiting distribution of $\widehat \beta$
is given by the limiting distribution of $W_{NT}^{-1} C_{NT}$.
Note $W_{NT} =   \frac 1 {NT} \, \sum_{i=1}^N \, \sum_{t=1}^T \, {\cal X}_{it} \, {\cal X}_{it}'$; that is, $W$ is simply defined as the probability limit of $W_{NT}$. Assumption \ref{ass:A4} guarantees $W$ is positive definite,
as shown in the supplementary material.

Thus, the main task in proving
Theorem~\ref{th:limdis} is to show the approximated score at the true parameter satisfies
$C_{NT} \limfunc{\rightarrow}_d  \, {\cal N} \left( B , \Omega \right)$.
We find the
 asymptotic variance $\Omega$ and the asymptotic bias $B_1$ originate from the $C^{(1)}$ term,
whereas the two further bias terms $B_2$ and $B_3$ originate from the $C^{(2)}$ term of $C_{NT}$.

The bias $B_1$ is due to correlation of the errors $e_{it}$ and the regressors $X_{k,i\tau}$ in the time direction
(for $\tau>t$). This bias term generalizes the  Nickell~\cite*{Nickell1981} bias that occurs in dynamic models
with standard fixed effects, and it is not present in Bai \cite*{Bai2009}, where only strictly exogenous regressors are considered.

The other two bias terms $B_2$ and $B_3$ are already described in Bai \cite*{Bai2009}.
If $e_{it}$ is homoscedastic, that is, if $\mathbb{E}(e_{it} | {\cal C}) = \sigma^2$,
then $\mathbb{E}\left( e e' | {\cal C} \right) = \sigma^2 \mathbb{I}_N$ and
$\mathbb{E}\left( e' e | {\cal C} \right) = \sigma^2 \mathbb{I}_T$, so that $B_2 = 0$ and $B_3=0$
(because the trace is cyclical and $ f^{0\prime}  M_{f^0} =0$ and $\lambda^{0\prime} M_{\lambda^0} = 0$).
Thus, $B_2$ is only non-zero if $e_{it}$ is heteroscedastic across $i$, and $B_3$ is only non-zero
if $e_{it}$ is heteroscedastic over $t$. Correlation in $e_{it}$ across $i$ or over $t$
would also generate non-zero bias terms of exactly the form $B_2$ and $B_3$, but is ruled out by our
assumptions.

\subsection{Bias Correction}
\label{sec:BiasCorrection}

Estimators for $W$, $\Omega$, $B_1$, $B_2$, and $B_3$ are obtained by forming suitable sample
analogs and replacing the unobserved $\lambda^0$, $f^0$, and $e$ by the estimates
$\widehat \lambda$, $\widehat f$, and the residuals $\widehat e$.

\begin{definition}
   \label{def:estimators}
    Let $\widehat{{\cal X}}_k=M_{\widehat \lambda} \, X_{k} \, M_{\widehat f}$. For each $i$ and $t$, define the $K$-vector
    $\widehat{{\cal X}}_{it}=(\widehat{{\cal X}}_{1,it},\ldots,\widehat{{\cal X}}_{K,it})'$.
Let $\Gamma: \mathbb{R} \rightarrow \mathbb{R}$ be the truncation kernel defined by $\Gamma(x)=1$ for $|x|\leq 1$, and $\Gamma(x)=0$
otherwise. Let $M$ be a bandwidth parameter that depends on $N$ and $T$.
    We define the $K\times K$ matrices $\widehat W$ and $\widehat \Omega$, and the
    $K$-vectors $\widehat B_{1}$, $\widehat B_{2}$, and $\widehat B_{3}$ as follows:
    \begin{align*}
       \widehat W &=
  \frac 1 {NT} \, \sum_{i=1}^N \, \sum_{t=1}^T \, \widehat{{\cal X}}_{it} \, \widehat{{\cal X}}_{it}' \; ,
       \nonumber \\
       \widehat \Omega &= \frac 1 {NT} \, \sum_{i=1}^N \, \sum_{t=1}^T \, (\widehat e_{it})^2
                                                     \, \widehat{{\cal X}}_{it} \, \widehat{{\cal X}}_{it}' \; ,
       \nonumber \\
       \widehat B_{1,k} &= \frac 1 N \, 
       \sum_{i=1}^N
       \sum_{t=1}^{T-1}
       \sum_{s=t+1}^T
        \Gamma\left( \frac{s-t} M  \right)
       \, \big[ P_{\widehat f} \big]_{t s} 
       \;   \widehat e_{it} \; X_{k,is}    \; ,
       \nonumber \\
       \widehat B_{2,k} &=
       \frac 1 T
        \sum_{i=1}^N
       \sum_{t=1}^T
        (\widehat e_{it})^2 
        \left[ M_{\widehat \lambda} \, X_k \,
              \widehat f \, (\widehat f^{\prime} \widehat f)^{-1} \,
                             (\widehat \lambda^{\prime}\widehat \lambda)^{-1} \, \widehat \lambda^{\prime} \right]_{ii} \; ,
       \nonumber \\
       \widehat B_{3,k} &=
       \frac 1 N
       \sum_{i=1}^N
       \sum_{t=1}^T
        (\widehat e_{it})^2 
        \left[ M_{\widehat f} \, X^{\prime}_k \,
              \widehat \lambda \, (\widehat \lambda^{\prime} \widehat \lambda)^{-1} \,
                             (\widehat f^{\prime}\widehat f)^{-1} \, \widehat f^{\prime} \right]_{tt}
                   \; ,
    \end{align*}
    where $\widehat e
    = Y \, - \,  \widehat \beta \cdot X \, - \, \widehat \lambda  \, \widehat f'$,
    and $\widehat e_{it}$ denotes the elements of 
    $\widehat e$,
    $\left[ A \right]_{ts}$ denotes
                             the (t,s)th element of the matrix $A$.  
\end{definition}

Notice the estimators $\widehat\Omega$, $\widehat B_2$, and $\widehat B_3$ are similar to White's standard error estimator under heteroskedasticity, and the estimator $\widehat B_1$ is similar to the HAC estimator with a kernel.
To show consistency of these estimators, we impose some additional assumptions.

\begin{samepage}
\begin{assumption} $\phantom{a}$
    \label{ass:bc}
    \begin{itemize}
        \item[(i)] $\| \lambda^0_i \|$ and $\|f^0_t\|$ are uniformly bounded
        over $i$, $t$, and $N$, $T$.

        \item[(ii)]
          There exist $c>0$ and $\epsilon >0$ such that for all $i,t,m,N$, and $T$, we have \\
         $ \left| \frac 1 N  \sum_{i=1}^N  \mathbb{E} (e_{it}  X_{k,it+m} \, \big| \, {\cal C}) \right|
                            \leq  c \, m^{- (1+ \epsilon)}$.
    \end{itemize}
\end{assumption}
\end{samepage}

Assumption~\ref{ass:bc}$(i)$ is made for convenience to simplify the consistency proof
for the estimators in Definition~\ref{def:estimators}. Weakening this assumption is possible by only
assuming suitable bounded moments of $\| \lambda^0_i \|$ and $\|f^0_t\|$.
To  show consistency
of $\widehat B_1$, we need to control how strongly $e_{it}$ and $X_{k,i \tau}$, $t<\tau$, are allowed to be correlated,
which is done by Assumption~\ref{ass:bc}$(ii)$.
It is straightforward to verify Assumption~\ref{ass:bc}$(ii)$ is satisfied in the two examples of regressor processes presented after Assumption~\ref{ass:A5}.

\begin{theorem}[\bf Consistency of Bias and Variance Estimators]
\label{th:biascorrection}
   Let Assumptions~\ref{ass:A1}, \ref{ass:A4}, \ref{ass:A5}, \ref{ass:A6},  and \ref{ass:bc} hold,
   and consider a limit $N,T \rightarrow \infty$ with $N/T \rightarrow \kappa^2$,
                  $0<\kappa<\infty$,
   such that  the bandwidth $M=M_{NT}$ satisifies $M \rightarrow \infty$ and $M^5 / T \rightarrow 0$.
  We then have
  $\widehat W=W + o_p(1)$, $\widehat \Omega=\Omega + o_p(1)$,
  $\widehat B_1=B_1 + o_p(1)$, $\widehat B_2=B_{2} + o_p(1)$,
  and $\widehat B_3=B_{3} + o_p(1)$.
\end{theorem}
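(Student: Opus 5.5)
The plan has two stages: first establish rates for the plug-in ingredients, then treat each estimator as its infeasible counterpart (built from $\lambda^0$, $f^0$, $e$) plus an error that those rates control.

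\textbf{Rates for the ingredients.} Theorem~\ref{th:limdis} gives $\sqrt{NT}(\widehat\beta-\beta^0)={\cal O}_p(1)$. Expanding the projectors as mentioned after Corollary~\ref{cor:limit} (Lemma~\ref{lemma:lambdafINV}) gives
$\|P_{\widehat f}-P_{f^0}\| = {\cal O}_p(N^{-1/2})$ and $\|P_{\widehat\lambda}-P_{\lambda^0}\|={\cal O}_p(N^{-1/2})$.
These rates use $\|e\|={\cal O}_p(\sqrt{\max(N,T)})$ (remark (4)) and $\|X_k\|={\cal O}_p(\sqrt{NT})$. Since the product $\widehat\lambda\widehat f'$ is invariant to normalization, we may choose the normalization so that $\widehat f$ and $\widehat\lambda$ are close to $f^0H$ and $\lambda^0H'^{-1}$ for an invertible $H$. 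The residuals decompose as
\begin{align*}
\widehat e = e - (\widehat\beta-\beta^0)\cdot X - (\widehat\lambda\widehat f'-\lambda^0 f^{0\prime}),
\end{align*}
with
\begin{align*}
\widehat\lambda\widehat f'-\lambda^0f^{0\prime} = -P_{\lambda^0}\,e\,P_{f^0}\text{-type terms} + P_{\lambda^0}e + eP_{f^0} + (\text{smaller}).
\end{align*}
Consequently $\widehat e_{it}-e_{it}$ is uniformly small in a mean-square sense. Here we use Assumption~\ref{ass:bc}$(i)$ (bounded $\lambda_i^0,f_t^0$), which makes the row and column norms $\|\widehat\lambda_i-H^{-1}\lambda^0_i\|$ and $\|\widehat f_t - H'f^0_t\|$ small on average.

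\textbf{$\widehat W$, $\widehat\Omega$, $\widehat B_2$, $\widehat B_3$.} For $\widehat W$, write $\widehat{\cal X}_k-{\cal X}_k$ as a sum of terms of the form $(P_{\widehat\lambda}-P_{\lambda^0})X_k M_{f^0}$ and similar. Each has Frobenius norm $\le \sqrt{\text{rank}}\cdot\|\cdot\|$ of order $\sqrt{2R}\,N^{-1/2}\|X_k\| = {\cal O}_p(\sqrt{T})$, which is $o_p(\sqrt{NT})$, so $\widehat W-W_{NT}=o_p(1)$. For $\widehat\Omega$, first replace $\widehat{\cal X}$ by ${\cal X}$ using the same bounds together with Cauchy–Schwarz and the eighth moments. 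Then replace $\widehat e_{it}^2$ by $e_{it}^2$, using $\frac1{NT}\sum(\widehat e_{it}-e_{it})^2\|{\cal X}_{it}\|^2\to0$, which again needs the higher moments in Assumption~\ref{ass:A5}$(vi)$. The estimators $\widehat B_2$ and $\widehat B_3$ are handled the same way. After swapping in the true quantities, one still has to show that $\frac1T\sum_i (e_{it}^2-\mathbb{E}(e_{it}^2|{\cal C}))[\cdot]_{ii}$ vanishes. This follows from a conditional variance computation: the errors are independent given ${\cal C}$, and the diagonal elements $[M_{\lambda^0}X_kf^0(f^{0\prime}f^0)^{-1}(\lambda^{0\prime}\lambda^0)^{-1}\lambda^{0\prime}]_{ii}$ are ${\cal O}_p(1/N)$ on average. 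I would be careful that $X_k$ inside the bracket is correlated with $e$. For that I would split $X_k=\mathbb{E}(X_k|{\cal C})+\widetilde X_k$ and bound the $\widetilde X_k$ part via Assumption~\ref{ass:A5}$(v)$.

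\textbf{$\widehat B_1$, the main obstacle.} The population target is
\begin{align*}
\frac1N{\rm Tr}[P_{f^0}\mathbb{E}(e'X_k|{\cal C})] = \frac1N\sum_i\sum_{t<s}[P_{f^0}]_{ts}\,\mathbb{E}(e_{it}X_{k,is}|{\cal C}),
\end{align*}
where only $t<s$ contributes by Assumption~\ref{ass:A5}$(i)$. I would proceed in four steps.
\begin{enumerate}
\item[(a)] Truncation bias. Since $|[P_{f^0}]_{ts}|={\cal O}(1/T)$ by bounded $f^0_t$ and Assumption~\ref{ass:A1}, Assumption~\ref{ass:bc}$(ii)$ bounds the omitted lags $s-t>M$ by $\sum_{m>M}c\,m^{-1-\epsilon}\to0$.
\item[(b)] Variance of the infeasible truncated estimator with $e$ and $P_{f^0}$. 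This is of order $M^2/(NT)\cdot$ const, using independence across $i$ and the covariance summability in Assumption~\ref{ass:A5}$(v)$.
\item[(c)] Replacing $P_{f^0}$ by $P_{\widehat f}$. Each entry changes by ${\cal O}_p(N^{-1/2}T^{-1})$ in an averaged sense, and there are about $TM$ pairs with $\frac1N\sum_i|e_{it}X_{k,is}|={\cal O}_p(1)$. This gives ${\cal O}_p(M N^{-1/2})$, which is $o_p(1)$.
\item[(d)] Replacing $e$ by $\widehat e$. This is the delicate step. The difference contains $P_{\lambda^0}e$, $eP_{f^0}$ and $\widehat\beta$-terms. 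Some of these, for example the $(\widehat\beta-\beta^0)\cdot X$ part multiplied by $X_{k,is}$, are crudely of size $M\cdot (NT)^{-1/2}\cdot T\cdot$ const, i.e. ${\cal O}_p(M/\sqrt N)$. I expect the worst products, coming from the $eP_{f^0}$ term times $X_{k,is}$ summed over the band, to require the condition $M^5/T\to0$. I would first try crude Cauchy–Schwarz bounds on each piece, and refine using cross-sectional independence of $e$ only where a bound fails.
\end{enumerate}
Combining (a)–(d) with Assumption~\ref{ass:A6} yields $\widehat B_1=B_1+o_p(1)$.
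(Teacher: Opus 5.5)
Your handling of $\widehat W$ and $\widehat\Omega$ matches the paper in substance: where you use Cauchy--Schwarz, the paper uses $|{\rm Tr}(C)|\le\|C\|\,{\rm rank}(C)$. For the bias estimators, however, you take a genuinely different route, and it is viable.

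\textbf{How the paper proceeds.} The paper never argues entry by entry. It proves operator-norm statements:
\begin{itemize}
\item $N^{-1}\|\mathbb{E}_{\cal C}(e'X_k)-(\widehat e'X_k)^{\rm truncR}\|=o_p(1)$, with analogues for $(\widehat e'\widehat e)^{\rm truncD}$ and $(\widehat e\widehat e')^{\rm truncD}$;
\item the crude bound $N^{-1}\|(\widehat e'X_k)^{\rm truncR}\|={\cal O}_p(MT^{1/8})$.
\end{itemize}
Each of $\widehat B_1,\widehat B_2,\widehat B_3$ then follows from the trace--rank inequality. This works because the other factor in each trace has rank at most $2R$: it is $P_{f^0}$, $P_{\widehat f}-P_{f^0}$, $f^0(f^{0\prime}f^0)^{-1}(\lambda^{0\prime}\lambda^0)^{-1}\lambda^{0\prime}$, or the difference between its estimated and true versions. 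The residuals enter only through $\widehat e=M_{\lambda^0}eM_{f^0}+e_{\rm rem}$ with $\|e_{\rm rem}\|={\cal O}_p(1)$.

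The binding use of $M^5/T\to0$ is in bounding the spectral norm of the centred banded matrix $[e'X_k-\mathbb{E}_{\cal C}(e'X_k)]^{\rm truncR}$. The paper bounds it by $M$ times its largest entry, and the maximum of $MT$ variables with bounded fourth moments is ${\cal O}_p((MT)^{1/4})$. That is the counterpart of your step (b). There, a conditional-variance computation of the scalar trace gives ${\cal O}_p(N^{-1}+M^2/(NT))$ without any such condition.

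\textbf{Trade-off.} The paper's route is modular: one inequality absorbs all plug-in errors. Yours exploits cancellation inside the trace and, once completed, appears to need a weaker bandwidth restriction. The price is that you need band-level control of the plug-in errors.

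\textbf{Where your justifications need changing.}
\begin{itemize}
\item \emph{Step (c).} A root-mean-square entry bound is not enough. Cauchy--Schwarz with $\|P_{\widehat f}-P_{f^0}\|_F={\cal O}_p(T^{-1/2})$ over the $TM$ band entries yields only ${\cal O}_p(\sqrt M)$. Your rate is true because of low rank. Write $\delta=P_{\widehat f}-P_{f^0}=\sum_{r\le 2R}\mu_r u_ru_r'$ with $\|u_r\|=1$; then $\sum_{0<s-t\le M}|\delta_{ts}|\le 2RM\|\delta\|={\cal O}_p(MT^{-1/2})$. Also $\max_{0<s-t\le M}N^{-1}|\sum_i e_{it}X_{k,is}|={\cal O}_p(1)$, by Assumption~\ref{ass:bc}$(ii)$ together with a maximal inequality for the centred part. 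These two facts give your rate for (c).
\item \emph{Step (d).} This step is not delicate, and the $eP_{f^0}$ piece is not where $M^5/T$ enters. Your own mean-square statement already suffices. Since $\widehat e-e$ is $-(\widehat\beta-\beta^0)\cdot X$ plus a matrix of rank at most $2R$ and spectral norm ${\cal O}_p(\sqrt N)$, you have $\|\widehat e-e\|_F={\cal O}_p(\sqrt N)$. In addition, $|[P_{\widehat f}]_{ts}|={\cal O}_p(T^{-1})$ uniformly (bounded rows of $\widehat f$), and $\max_s\|X_{k,\cdot s}\|={\cal O}_p(\sqrt N\,T^{1/8})$. One Cauchy--Schwarz then bounds all of (d) by ${\cal O}_p(1)\cdot\frac{M}{NT}\sqrt T\,\|\widehat e-e\|_F\max_s\|X_{k,\cdot s}\|={\cal O}_p(MT^{-3/8})$.
\item \emph{Swapping $\widehat\lambda,\widehat f$ in $\widehat B_2$.} When you replace $\widehat\lambda,\widehat f$ by $\lambda^0,f^0$, a Frobenius bound on the full matrix difference gives only ${\cal O}_p(1)$. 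It cannot see that the diagonal entries themselves are of order $N^{-1}$. You need $\sum_i|A_{ii}|\le{\rm rank}(A)\|A\|$ (the paper's trace--rank step in another form) or genuine row-wise bounds.
\end{itemize}
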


The assumption $M^5/T \rightarrow 0$ can be relaxed if additional higher- moment restrictions on
$e_{it}$ and $X_{k,it}$ are imposed. Note also that for the construction of the estimators $\widehat W$,
$\widehat \Omega$, and $\widehat B_i$, $i=1,2,3$, knowing whether the regressors are strictly exogenous or
predetermined is unnecessary; in both cases, the estimators for $W$, $\Omega$, and $B_i$, $i=1,2,3$, are consistent.
We can now present our bias-corrected estimator and its limiting distribution.

\begin{corollary}
   \label{corr:biascorrected}
   Under the assumptions of Theorem~\ref{th:biascorrection}, the bias-corrected estimator
   \begin{align*}
      \widehat \beta^* &= \widehat \beta + \widehat W^{-1}
            \left( T^{-1} \widehat B_1 + N^{-1} \widehat B_2  + T^{-1} \widehat B_3 \right)
   \end{align*}
   satisfies $\sqrt{NT} \left( \widehat \beta^* - \beta^0 \right) \, \limfunc{\rightarrow}_d
              \, {\cal N} \left(0 , \; W^{-1} \, \Omega \, W^{-1} \right)$.
\end{corollary}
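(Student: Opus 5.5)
This follows from Theorem~\ref{th:limdis}, Theorem~\ref{th:biascorrection}, and Slutsky's lemma. Write
\begin{align*}
   \sqrt{NT}\left(\widehat\beta^*-\beta^0\right)
   = \sqrt{NT}\left(\widehat\beta-\beta^0\right)
   + \widehat W^{-1}\left( \sqrt{N/T}\,\widehat B_1 + \sqrt{T/N}\,\widehat B_2 + \sqrt{N/T}\,\widehat B_3 \right).
\end{align*}
The second term is handled as follows. Since $N/T\to\kappa^2$, we have $\sqrt{N/T}\to\kappa$ and $\sqrt{T/N}\to\kappa^{-1}$. By Theorem~\ref{th:biascorrection}, $\widehat B_j = B_j + o_p(1)$ for $j=1,2,3$ and $\widehat W = W + o_p(1)$. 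Because $W$ is positive definite under Assumption~\ref{ass:A4} (as noted after Theorem~\ref{th:limdis}), the continuous mapping theorem gives $\widehat W^{-1} = W^{-1}+o_p(1)$; in particular $\widehat W$ is invertible wpa1, so $\widehat\beta^*$ is well defined wpa1. Hence the second term equals
\[
W^{-1}(\kappa B_1+\kappa^{-1}B_2+\kappa B_3)+o_p(1) = -W^{-1}B + o_p(1),
\]
with $B=-\kappa B_1-\kappa^{-1}B_2-\kappa B_3$ as in Theorem~\ref{th:limdis}.

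For the first term, Theorem~\ref{th:limdis} gives $\sqrt{NT}(\widehat\beta-\beta^0)\to_d \mathcal N(W^{-1}B,\,W^{-1}\Omega W^{-1})$. The bias term above is asymptotically a nonrandom constant. Slutsky's lemma then shows that the sum converges in distribution to $\mathcal N(W^{-1}B - W^{-1}B,\ W^{-1}\Omega W^{-1}) = \mathcal N(0,\,W^{-1}\Omega W^{-1})$.

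There is essentially no obstacle once the two theorems are available. The only points to check are the scaling of the correction terms against the $\sqrt{NT}$ normalization and the invertibility of $\widehat W$ wpa1, which is needed for the estimator to be defined. All the substantive difficulty, namely the consistency of $\widehat B_1$ with its kernel truncation and of the estimators built from $\widehat\lambda,\widehat f,\widehat e$, is contained in Theorem~\ref{th:biascorrection}.
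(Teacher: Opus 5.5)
Your proposal is correct and is exactly the argument the paper intends: the paper gives no separate proof and treats the corollary as an immediate consequence of Theorem~\ref{th:limdis} and Theorem~\ref{th:biascorrection}. Your rescaling of the correction terms by $\sqrt{NT}$, the step $\widehat W^{-1}=W^{-1}+o_p(1)$ (using positive definiteness of $W$), and the final Slutsky argument supply precisely the details the paper leaves implicit.
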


According to Theorem~\ref{th:biascorrection}, a consistent estimator of the asymptotic variance
of $\widehat \beta^*$ is given by $\widehat W^{-1} \, \widehat \Omega \, \widehat W^{-1}$.

An alternative to the analytical bias-correction result given by Corollary~\ref{corr:biascorrected} is
to use Jackknife bias correction to eliminate the asymptotic bias.
For panel models with incidental parameters only in the cross-sectional dimensions,
one typical finds a large $N,T$ leading  incidental parameter bias of order $1/T$ for the parameters of interest.
To correct for this $1/T$ bias, one can use the delete-one Jackknife bias correction if observations are iid over $t$
\cite{HahnNewey2004} and the split-panel Jackknife bias-correction if observations are correlated over $t$ \cite{DhaeneJochmans2015}.
In our current model, we have incidental parameters in both panel dimensions ($\lambda^0_i$ and $f^0_t$),
resulting in leading bias terms of order $1/T$ (bias term $B_1$ and $B_3$) and of order $1/N$ (bias term $B_2$).
Fern{\'a}ndez-Val and Weidner~\cite*{FernandezValWeidner2013} discuss the generalizations of the split-panel Jackknife bias-correction to that case.

The corresponding bias-corrected split-panel Jackknife
estimator reads $\widehat \beta^J = 3 \widehat \beta_{NT} - \overline \beta_{N,T/2} - \overline \beta_{N/2,T}$,
where $\widehat \beta_{NT} = \widehat \beta$ is the LS estimator obtained from the full sample,
$\overline \beta_{N,T/2}$ is the average of the two LS estimators that leave out the first and second halves of the time
periods, and $\overline \beta_{N/2,T}$ is the average of the two LS estimators that leave out half of the individuals.
Jackknife bias correction is convenient because only the order of the bias, and not the structure of the terms $B_1$, $B_2$, and $B_3$, needs not be known
in detail. However, one requires
additional stationarity assumptions over $t$ and homogeneity assumptions across $i$ to justify the Jackknife
correction and to show that
$\widehat \beta^J$ has the same limiting distribution as $\widehat \beta^*$ in Corollary~\ref{corr:biascorrected};
see Fern{\'a}ndez-Val and Weidner~\cite*{FernandezValWeidner2013} for more details. They also observe
through Monte Carlo simulations that the finite
sample variance of the Jackknife-corrected estimator is often larger
than of the analytically corrected estimator. We do not explore
Jackknife bias-correction further in this paper.

\section{Testing Restrictions on $\beta^0$}
\label{sec:testing}

In this section, we discuss the three classical test statistics for
testing linear restrictions on $\beta^0$. The null hypothesis is $H_0: \;   H \beta^0 = h$,
and the alternative is $H_a: \;   H \beta^0 \neq h$,
where $H$ is an $r \times K$ matrix of rank $r \leq K$, and $h$ is an $r \times 1$ vector. We restrict the presentation to testing a linear hypothesis for ease of exposition. One can generalize the discussion
to the testing of non-linear hypotheses, under conventional regularity
conditions.
Throughout this subsection, we assume $\beta^0$
is an interior point of $\mathbb{B}$; that is, no local restrictions are on $\beta$
as long as the null hypothesis is not imposed. Using the expansion of $L_{NT}(\beta)$,
one could also discuss testing when the true parameter is on the boundary, as shown in
Andrews \cite*{Andrews2001}.

The restricted estimator is defined by
\begin{align}
   \widetilde \beta &= \limfunc{argmin}_{\beta \in \widetilde{\mathbb{B}}}\,L_{NT}\left( \beta \right) \; ,
   \label{DefBtilde}
\end{align}
where $\widetilde{\mathbb{B}} = \{ \beta \in \mathbb{B}| \, H\beta = h \}$ is the restricted parameter set. Analogous to Theorem \ref{th:limdis} for the
unrestricted estimator $\widehat \beta$, we can use the
expansion of the profile objective function to derive the
limiting distribution of the restricted estimator.
Under the assumptions of Theorem~\ref{th:limdis}, we have
\begin{align*}
  \sqrt{NT}(\widetilde \beta - \beta^0)
           \; \;  &\limfunc{\longrightarrow}_d \; \;
{\cal N} \left( \mathfrak{W}^{-1} B , \; \mathfrak{W}^{-1} \, \Omega \, \mathfrak{W}^{-1} \right) \; ,
\end{align*}
where $\mathfrak{W}^{-1} = W^{-1} - W^{-1} H' (H W^{-1} H')^{-1} H W^{-1}$.
The $K\times K$ covariance matrix in the limiting distribution of $\widetilde \beta$ is not full rank,
but satisfies
${\rm rank}(\mathfrak{W}^{-1} \, \Omega \, \mathfrak{W}^{-1})=K-r$,
because $H \mathfrak{W}^{-1}=0$ and thus ${\rm rank}(\mathfrak{W}^{-1})=K-r$.
The asymptotic distribution of $\sqrt{NT}(\widetilde \beta - \beta^0)$
is therefore $K-r$ dimensional, as it should be for the restricted estimator.

\subsubsection*{Wald Test}

Using the result of Theorem~\ref{th:limdis}, we find that under the null hypothesis,
$\sqrt{NT} \left( H \widehat \beta - h \right)$ is asymptotically distributed as
${\cal N} \left( H W^{-1} B , \; H W^{-1} \, \Omega \, W^{-1} H' \right)$.
Thus, due to the presence of the bias~$B$, the standard Wald test statistic
$WD_{NT} = NT \left( H \widehat \beta - h \right)'
                 \left( H \widehat W^{-1} \, \widehat \Omega \, \widehat W^{-1} H' \right)^{-1}
                 \left( H \widehat \beta - h \right)$
is not asymptotically $\chi^2_r$ distributed.
Using the estimator $\widehat B = - \sqrt{\frac N T}  \,  \widehat B_{1}
                                - \sqrt{\frac T N}  \,  \widehat B_{2}
                                 - \sqrt{\frac N T}  \,  \widehat B_{3}$
for the bias, we can define the bias-corrected Wald test statistic as
\begin{align}
   WD^*_{NT} &=    \left[ \sqrt{NT} \left( H \widehat \beta^* - h \right) 
   \right]'
                 \left( H \widehat W^{-1} \, \widehat \Omega \, \widehat W^{-1} H' \right)^{-1}
                 \left[ \sqrt{NT} \left( H \widehat \beta^* - h \right)  \right] ,
   \label{DefWDs}
\end{align}
where $\widehat \beta^* = \widehat \beta - \widehat W^{-1} \widehat B$
is the bias-corrected estimator. $WD^*_{NT}$ is just the standard Wald test
statistics applied to $\widehat \beta^*$.
Under the null hypothesis
and the Assumptions of Theorem~\ref{th:biascorrection},
we find
$WD^*_{NT} \,  \limfunc{\rightarrow}_d \, \chi^2_r$.

\subsubsection*{Likelihood Ratio Test}

To implement the LR test, we need the relationship between the
asymptotic Hessian $W$
and the asymptotic score variance $\Omega$ of the profile objective function
to be of the form $\Omega=c W$, where $c>0$ is a scalar constant.
This condition is satisfied in our interactive fixed effect model
if $\mathbb{E}(e_{it}^2 | {\cal C}) = c$, that is, if the error is homoskedastic.
A consistent estimator for $c$ is then given by
$\widehat c = (NT)^{-1} \sum_{i=1}^N \sum_{t=1}^T \widehat e_{it}^2$, where $\widehat e
    = Y \, - \,  \widehat \beta \cdot X \, - \, \widehat \lambda  \, \widehat f'$.
Because the likelihood function for the interactive fixed effect model is just the sum of squared residuals,
we have $\widehat c = L_{NT}( \widehat \beta )$.
The likelihood ratio test statistic is defined by
\begin{align*}
   LR_{NT} &=\widehat c^{-1} \, NT \left[ L_{NT}\left(\widetilde \beta \right) - L_{NT}\left(\widehat \beta \right) \right] \; .
\end{align*}
Under the assumption of Theorem~\ref{th:limdis}, we then have
\begin{align*}
   LR_{NT} \; \;  \limfunc{\longrightarrow}_d  \quad &
    \, c^{-1} C' W^{-1} H' (H W^{-1} H')^{-1} H W^{-1} C \; ,
\end{align*}
where $C \sim {\cal N}(B,\Omega)$,
i.e. $C_{NT} \limfunc{\rightarrow}_d C$.
It is the same limiting distribution that one finds for the Wald test if $\Omega=c W$
(in fact, one can show $WD_{NT}=LR_{NT}+o_p(1)$). Therefore, we need to do a bias-correction for the LR test
to achieve a $\chi^2$ limiting distribution.
We define
\begin{align}
   LR_{NT}^{*} &=  \widehat c^{-1} NT\left[
    \min_{\{ \beta \in \mathbb{B}| \, H\beta = h \}}
     L_{NT}\left( \beta + (NT)^{-1/2} \widehat W^{-1} \widehat B \right)
   - \min_{\beta \in \mathbb{B}}
     L_{NT}\left( \beta + (NT)^{-1/2} \widehat W^{-1} \widehat B \right) \right] ,
   \label{DefLRs}
\end{align}
where $\widehat B$ and $\widehat W$ do not depend on the parameter $\beta$
in the minimization problem.\footnote{%
Alternatively,
one could use $\widehat B(\widetilde \beta)$ and $\widehat W(\widetilde \beta)$ as estimates for $B$ and $W$,
and would obtain the same limiting distribution of $LR_{NT}^*$ under the null hypothesis $H_0$.
These alternative estimators are not consistent if $H_0$ is false, i.e.
the power-properties of the test would be different. The question of which specification
should be preferred is left for future research.
}
Asymptotically, we have
$\min_{\beta \in \mathbb{B}}
  L_{NT}\left( \beta + (NT)^{-1/2} \widehat W^{-1} \widehat B \right) = L_{NT}(\widehat \beta)$,
because $\beta \in \mathbb{B}$ does not impose local constraints; in other words, close to $\beta^0$,
whether one minimizes over $\beta$ or
over $\beta + (NT)^{-1/2} \widehat W^{-1} \widehat B$ does not matter for the value of the minimum. 
The correction to the LR test therefore originates from the
first term in $LR_{NT}^{*}$. For the minimization over the restricted parameter set, 
whether the argument of $L_{NT}$ is $\beta$ or $\beta + (NT)^{-1/2} \widehat W^{-1} \widehat B$ matters, because
generically, we have $H W^{-1} B \neq 0$ (otherwise, no correction would be necessary for the LR statistics). One can show that
\begin{align*}
   LR^*_{NT} \; \; \limfunc{\longrightarrow}_d  \quad &
   c^{-1} (C-B)' W^{-1} H' (H W^{-1} H')^{-1} H W^{-1} (C-B) \; ;
\end{align*}
that is, we obtain the same formula as for $LR_{NT}$, but 
the bias-corrected term $C-B$ replaces the limit of the score $C$.
Under the Assumptions of Theorem~\ref{th:biascorrection},
if $H_0$ is satisfied, and for homoscedastic errors $e_{it}$,
we have $LR^*_{NT} \,  \limfunc{\rightarrow}_d \, \chi^2_r$.
In fact, one can show $LR^*_{NT} = WD^*_{NT}+o_p(1)$.

\subsubsection*{Lagrange Multiplier Test}

Let $\widetilde \nabla {\cal L}_{NT}$ be the
gradient of the LS objective function \eqref{DefCalL}
with respect to $\beta$, evaluated at the restricted parameter estimates; that is,
\begin{align*}
   \widetilde \nabla {\cal L}_{NT} \, = \,
   \nabla {\cal L}_{NT}(\widetilde \beta,\widetilde \lambda,\widetilde f) &=
          \left(
          \frac{\partial {\cal L}_{NT}(\beta,\widetilde \lambda,\widetilde f)}
               {\partial \beta_1} \bigg|_{\beta=\widetilde \beta} , \ldots ,
          \frac{\partial {\cal L}_{NT}(\beta,\widetilde \lambda,\widetilde f)}
               {\partial \beta_K} \bigg|_{\beta=\widetilde \beta} \right)'
      \nonumber \\
          &= \, - \, \frac 2 {NT} \Big(  {\rm Tr}\left(X'_1 \widetilde e\right) , \ldots ,
                                          {\rm Tr}\left(X'_K \widetilde e\right)  \Big)' \; ,
\end{align*}
where $\widetilde \lambda = \widehat \lambda(\widetilde \beta)$,
$\widetilde f = \widehat f(\widetilde \beta)$,
and  $\widetilde e =
   Y \, - \,  \widetilde \beta \cdot X \, - \, \widetilde \lambda  \, \widetilde f'$.
Under the assumptions of Theorem~\ref{th:limdis},
 and if the null hypothesis $H_0: \;   H \beta^0 = h$ is satisfied,
one finds that\footnote{The proof of the statement is given in the 
supplementary material as part of the proof
of Theorem~\ref{th:testing}.}
\begin{align}
   \sqrt{NT} \, \widetilde \nabla {\cal L}_{NT}
                   &=  \sqrt{NT} \, \nabla L_{NT}(\widetilde \beta) + o_p(1).
   \label{EquivGrads}
\end{align}
Due to this equation, one can base
the Lagrange multiplier test on the gradient of ${\cal L}_{NT}(\widetilde \beta,\widetilde \lambda,\widetilde f)$,
or on the gradient of the profile quasi-likelihood function $L_{NT}(\widetilde \beta)$, and obtain the same limiting distribution.

Using the bound on the remainder $R_{NT}(\beta)$ given in Theorem~\ref{th:ass_expand},
one cannot infer any properties of the score function, that is, of the gradient $\nabla L_{NT}(\beta)$,
because nothing is said about $\nabla R_{NT}(\beta)$. The following theorem gives
a bound
on $\nabla R_{NT}(\beta)$ that is sufficient to derive the limiting distribution of the Lagrange multiplier.

\begin{theorem}
   \label{th:gradient}
   Under the assumptions of Theorem~\ref{th:ass_expand}, and with $W_{NT}$ and $C_{NT}$ as
   defined there, the score function satisfies
   \begin{align*}
       \nabla L_{NT}(\beta) \, &= \, 2 \, W_{NT} \, (\beta-\beta^0)
                             \, - \, \frac 2 {\sqrt{NT}} \, C_{NT} \, + \frac 1 {NT} \, \nabla R_{NT}(\beta) \; ,
   \end{align*}
   where the remainder $\nabla R_{NT}(\beta)$ satisfies for any sequence $\eta_{NT}\rightarrow 0$:
  \begin{align*}
      \sup_{\{\beta :\left\| \beta -\beta^{0} \right\| \leq \eta_{NT}\}}
                  \frac{ \left\| \nabla R_{NT}(\beta) \right\| }
       { \sqrt{NT} \left(1 + \sqrt{NT} \, \left\| \beta -\beta^{0} \right\| \right) } = o_{p}\left( 1 \right) .
  \end{align*}
\end{theorem}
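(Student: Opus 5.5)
Following Moon and Weidner~\cite*{MoonWeidner2015}, I would work with the perturbation expansion behind Theorem~\ref{th:ass_expand} and differentiate it term by term. Write $\epsilon = \beta-\beta^0$. Then $L_{NT}(\beta) = (NT)^{-1}\sum_{r=R+1}^T \mu_r[(\lambda^0 f^{0\prime} + E(\beta))'(\lambda^0 f^{0\prime}+E(\beta))]$ with perturbation $E(\beta) = e - \epsilon\cdot X$. Kato's perturbation theory gives
\begin{align*}
  NT\, L_{NT}(\beta) = \sum_{g=2}^\infty L^{(g)}(\lambda^0,f^0,E(\beta)),
\end{align*}
where each $L^{(g)}$ is a homogeneous polynomial of degree $g$ in $E$. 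It is a sum of traces of products of $E$, $E'$, $M_{\lambda^0}$, $M_{f^0}$ and $S$-type matrices such as $\lambda^0(\lambda^{0\prime}\lambda^0)^{-1}(f^{0\prime}f^0)^{-1}f^{0\prime}$. The expansion converges absolutely whenever $\|E\|$ is less than a constant times $\sqrt{NT}$ times the smallest eigenvalue of $(\lambda^{0\prime}\lambda^0/N)^{1/2}(f^{0\prime}f^0/T)^{1/2}$. The number of terms in $L^{(g)}$ is at most exponential in $g$, and $|L^{(g)}| \le c\, a^g \|E\|^g/(\sqrt{NT})^{g-2}$ for constants $a,c$ given by Assumption~\ref{ass:A1}. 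The quadratic $L_{q,NT}$ consists of the terms of $L^{(2)}$ and $L^{(3)}$ that are of order at most one in $\epsilon$ and two in $e$, so $R_{NT}$ is the sum of all remaining terms.

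Because the series is a convergent power series in $\epsilon$ (each term is polynomial in $\epsilon$), it can be differentiated term by term inside the region of convergence. By Assumption~\ref{ass:A3}$^*$ and $\|X_k\| \le \|X_k\|_F = \mathcal{O}_p(\sqrt{NT})$, this region contains $\{\|\epsilon\|\le\eta_{NT}\}$ wpa1. The identity $\nabla L_{NT} = \nabla L_{q,NT} + (NT)^{-1}\nabla R_{NT}$ follows directly, and $\nabla L_{q,NT}(\beta) = 2W_{NT}\epsilon - 2C_{NT}/\sqrt{NT}$ by inspection. It remains to bound $\nabla R_{NT}$. For each $k$, $\partial_{\beta_k} L^{(g)}(E(\beta))$ replaces one occurrence of $E$ by $-X_k$. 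This gives at most $g$ times as many terms, each of the same trace form, bounded by
\begin{align*}
  g\, c\, a^g\, \|X_k\|\, \|E\|^{g-1}/(\sqrt{NT})^{g-2}.
\end{align*}

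Next I would sort the pieces of $R_{NT}$ as in the proof of Theorem~\ref{th:ass_expand}. For $g\ge4$, sum the derivative bounds using $\|E\|\le\|e\|+\|\epsilon\|\,\mathcal{O}_p(\sqrt{NT})$. Dividing by $\sqrt{NT}(1+\sqrt{NT}\|\epsilon\|)$ gives terms of order $\|X_k\|\,\|E\|^3/(NT)^{3/2}\cdot(1+\sqrt{NT}\|\epsilon\|)^{-1}$. This is $o_p(1)$ because $\|e\|^3=o_p(N^2)=o_p(NT)$ and $\|\epsilon\|^3 NT\le \eta_{NT}^2\cdot \sqrt{NT}\,\|\epsilon\|\sqrt{NT}$; the geometric factor $g a^g$ remains summable for $\|E\|/\sqrt{NT}$ small. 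The $g=2,3$ terms omitted from $L_{q,NT}$ are those with $\epsilon$-degree at least two, plus the pure-$e$ terms. The pure-$e$ terms have zero gradient. The $\epsilon^2 e$ and $\epsilon^3$ terms of $L^{(3)}$, and the $\epsilon^2$ part of $L^{(2)}$ beyond $W_{NT}$ (I expect none), are handled with crude spectral norm bounds after differentiation, giving $\mathcal{O}_p(\|\epsilon\|\,\|e\|\sqrt{NT}) + \mathcal{O}_p(\|\epsilon\|^2 NT)$. Both are $o_p(\sqrt{NT}(1+\sqrt{NT}\|\epsilon\|))$ since $\|e\|=o_p(\sqrt{NT})$ and $\|\epsilon\|\to0$.

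The main obstacle will be the $g=3$ terms that are linear in $\epsilon$ and quadratic in $e$, together with the cubic-in-$e$ pieces of $g\ge 3$ whose derivative is quadratic in $e$ with one $X_k$. Crude norm bounds give $\|X_k\|\,\|e\|^2/\sqrt{NT} = \mathcal{O}_p(\|e\|^2)$, which is not $o_p(\sqrt{NT})$ under Assumption~\ref{ass:A3}$^*$ alone. For $g=3$, however, exactly these terms are kept in $L_{q,NT}$ as $C^{(2)}$, so they do not enter $R_{NT}$. I therefore need to check carefully that every linear-in-$\epsilon$ term left in $R_{NT}$ comes from $g\ge4$. Those carry at least $\|e\|^3/(NT)$ relative to $\sqrt{NT}$: their derivative is bounded by $\sqrt{NT}\,\|e\|^3/NT=\sqrt{NT}\,o_p(1)$, using $\|e\|^3=o_p(N^2)$ and $N\asymp T$. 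This exponent bookkeeping, made uniform over $\|\epsilon\|\le\eta_{NT}$, is where I would concentrate the effort.
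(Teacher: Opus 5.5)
Your proposal is correct and is essentially the paper's proof. Both arguments differentiate the Moon--Weidner perturbation series term by term. Both identify the derivatives of the $g=2$ terms and of the part of $L^{(3)}$ linear in $\beta-\beta^0$ as $2W_{NT}(\beta-\beta^0)-2C_{NT}/\sqrt{NT}$. Both bound everything else with the geometric bound on $L^{(g)}$ together with $\|e\|^3=o_p(NT)$. The only difference is how the remainder is grouped. The paper separates terms linear in $\beta-\beta^0$ with $g\ge 4$, shown to be $o_p(\sqrt{NT})$, from all terms of degree at least two in $\beta-\beta^0$ with $g\ge 3$, shown to be $o_p(NT\|\beta-\beta^0\|)$. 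You group by $g$ instead, which is equally valid.

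One point to pin down is your ``I expect none''. Since $L^{(2)}(E)={\rm Tr}(M_{\lambda^0}EM_{f^0}E')$, its part quadratic in $\beta-\beta^0$ is exactly $NT\,(\beta-\beta^0)'W_{NT}(\beta-\beta^0)$. You genuinely need this exactness. Any leftover quadratic piece of $L^{(2)}$ would have gradient of order $NT\|\beta-\beta^0\|$, and your crude bounds could not make that $o_p(NT\|\beta-\beta^0\|)$. Relatedly, your initial description of $L_{q,NT}$ as the $g=2,3$ terms ``of order at most one in $\epsilon$'' is inaccurate. It must also contain this quadratic $L^{(2)}$ piece, plus the pure-$e$ terms of all orders, which have zero gradient.
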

From this theorem, and the fact that $\widetilde \beta$ is $\sqrt{NT}$-consistent
under $H_0$,
we obtain
\begin{align*}
   \sqrt{NT} \, \widetilde \nabla {\cal L}_{NT}
                   &= \sqrt{NT} \, \nabla L_{q,NT}(\widetilde \beta) +  o_p(1)
      \nonumber \\
                   &= 2 \sqrt{NT} \, W_{NT} \, (\widetilde \beta - \beta^0) - 2 \, C_{NT} + o_p(1) \; .
\end{align*}
Remember $\widetilde \beta$ is the restricted estimator defined
in equation~\eqref{DefBtilde}.
Using this result and the known limiting distribution of $\widetilde \beta$, we now find
\begin{align}
   \label{limNablaL}
   \sqrt{NT} \, \widetilde \nabla {\cal L}_{NT}
               \;\; \limfunc{\longrightarrow}_d \;\; - 2 H' (H W^{-1} H')^{-1} H W^{-1} C \; .
\end{align}
Note also that $\sqrt{NT} H W^{-1} \nabla L_{NT}(\widetilde \beta) \, \limfunc{\rightarrow}_d \, - 2 H W^{-1} C$.
We define $\widetilde B$, $\widetilde W$, and $\widetilde \Omega$,
analogous to $\widehat B$, $\widehat W$,
and $\widehat \Omega$, but with unrestricted parameter estimates replaced
by restricted parameter estimates.
The LM test statistic is then given by
\begin{align*}
  LM_{NT} &= \frac{NT} 4 \,
   (\widetilde \nabla {\cal L}_{NT})'
      \widetilde W^{-1}  H' (H \widetilde W^{-1} \widetilde \Omega \widetilde W^{-1} H')^{-1} H \widetilde W^{-1} \widetilde \nabla {\cal L}_{NT}
     \; .
\end{align*}
One can show the LM test is asymptotically equivalent to the Wald test: $LM_{NT}=WD_{NT}+o_p(1)$; that is, 
again, bias-correction is necessary. We define the bias-corrected LM test statistic as
\begin{align}
   LM^*_{NT} &= \frac{1} 4 \, \left(\sqrt{NT} \, \widetilde \nabla {\cal L}_{NT} + 2 \widetilde B \right)'
                   \widetilde W^{-1} H' (H \widetilde W^{-1} \widetilde \Omega \widetilde W^{-1} H')^{-1} H \widetilde W^{-1}
                      \left(\sqrt{NT} \, \widetilde \nabla {\cal L}_{NT} + 2 \widetilde B \right) \; .
   \label{DefLMs}
\end{align}

The following theorem summarizes the main results of the present subsection.
\begin{theorem}[\bf Chi-Square Limit of Bias-Corrected Test Statistics]
   \label{th:testing}
   Let the assumptions of Theorem~\ref{th:biascorrection}
  and the null hypothesis $H_0: \;   H \beta^0 = h$ be satisfied.
   For the bias-corrected Wald and LM test statistics introduced
   in equation \eqref{DefWDs} and \eqref{DefLMs}, we then have
   \begin{align*}
      WD^*_{NT} \; \;  &\limfunc{\longrightarrow}_d \; \; \chi^2_r \; , &
      LM^*_{NT} \; \;  &\limfunc{\longrightarrow}_d \; \; \chi^2_r \; .
   \end{align*}
   If, in addition, we assume $\mathbb{E}(e_{it}^2 | {\cal C}) = c$, that is, the idiosyncratic
   errors are homoscedastic,
   and we use $\widehat c = L_{NT}(\widehat \beta)$ as an estimator for $c$,
   the LR test statistic defined in equation \eqref{DefLRs} satisfies
   \begin{align*}
      LR^{*}_{NT} \; \;  &\limfunc{\longrightarrow}_d \; \; \chi^2_r \; .
   \end{align*}
\end{theorem}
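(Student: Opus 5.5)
The plan is to reduce all three statistics to quadratic forms in the bias-corrected score limit $C-B$, where $C_{NT}\to_d C\sim{\cal N}(B,\Omega)$. We first record the ingredients already available. From the proof of Theorem~\ref{th:limdis}, $C_{NT}\to_d C$. By Theorem~\ref{th:biascorrection}, $\widehat W$, $\widehat\Omega$, $\widehat B_1$, $\widehat B_2$, $\widehat B_3$ are consistent, so $\widehat B\to_p B$ (using $N/T\to\kappa^2$). Under $H_0$ the restricted estimator $\widetilde\beta$ is also $\sqrt{NT}$-consistent, and $\sqrt{NT}(\widetilde\beta-\beta^0)=\mathfrak W^{-1}C_{NT}+o_p(1)$. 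This follows from the expansion of Theorem~\ref{th:ass_expand} restricted to the affine set $\{H\beta=h\}$, exactly as in Corollary~\ref{cor:limit}, together with consistency of $\widetilde\beta$. Consistency of $\widetilde\beta$ follows as in Theorem~\ref{th:consistency}, because $\beta^0$ lies in the restricted set. The proof of Theorem~\ref{th:biascorrection} only uses $\sqrt{NT}$-consistency of the plugged-in $\beta$, so it applies verbatim to $\widetilde\beta$ and gives $\widetilde W\to_p W$, $\widetilde\Omega\to_p\Omega$, $\widetilde B\to_p B$.

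\emph{Wald test.} By Corollary~\ref{corr:biascorrected}, $\sqrt{NT}(H\widehat\beta^*-h)=HW^{-1}(C_{NT}-B)+o_p(1)\to_d{\cal N}(0,HW^{-1}\Omega W^{-1}H')$. The middle matrix $H\widehat W^{-1}\widehat\Omega\widehat W^{-1}H'$ converges in probability to the positive definite matrix $HW^{-1}\Omega W^{-1}H'$. Here we need $\Omega>0$, which holds because $\mathbb E(e_{it}^2|\mathcal C)$ is bounded away from zero, an assumption implicit in Theorem~\ref{th:limdis} being non-degenerate. The continuous mapping theorem then gives $\chi^2_r$.

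\emph{LM test.} We use Theorem~\ref{th:gradient} together with equation \eqref{EquivGrads}. The proof of \eqref{EquivGrads} is the one genuinely new step. It requires showing that the derivative of ${\cal L}_{NT}$ in $\beta$ at fixed $(\widetilde\lambda,\widetilde f)$ coincides with $\nabla L_{NT}(\widetilde\beta)$. This holds by an envelope argument: $\widehat\lambda(\beta),\widehat f(\beta)$ minimize ${\cal L}_{NT}$, so the derivative of the profile equals the partial derivative whenever the $R$-th and $(R+1)$-th eigenvalues of $(Y-\beta\cdot X)'(Y-\beta\cdot X)$ are separated. They are separated wpa1, because the $R$-th is of order $NT$ by Assumption~\ref{ass:A1} while the $(R+1)$-th is $O_p(\|e\|^2+NT\|\beta-\beta^0\|^2)$. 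In fact this gives exact equality wpa1. We then obtain
\[
\sqrt{NT}\,\widetilde\nabla{\cal L}_{NT}+2\widetilde B = 2W(\mathfrak W^{-1}-W^{-1})C_{NT}+2B+o_p(1).
\]
Premultiplying by $HW^{-1}$ and using $H\mathfrak W^{-1}=0$ gives $-2HW^{-1}(C_{NT}-B)+o_p(1)$. Plugging this into \eqref{DefLMs} with the consistent $\widetilde W,\widetilde\Omega$ yields the same quadratic form as in the Wald case, hence $\chi^2_r$.

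\emph{LR test.} Set $\delta=(NT)^{-1/2}\widehat W^{-1}\widehat B$ and substitute $\gamma=\beta+\delta$. Both minima in \eqref{DefLRs} are attained within $O_p((NT)^{-1/2})$ of $\beta^0$, by consistency and interiority. On such shrinking neighborhoods Theorem~\ref{th:ass_expand} lets us replace $NT\,L_{NT}$ by $NT\,L_{q,NT}$ up to $o_p(1)$, uniformly. The unconstrained minimum is unaffected by the shift, since $\mathbb B$ is locally unconstrained. The constrained minimum is a quadratic minimization over the shifted affine set $\{H\gamma=h+H\delta\}$. Completing the square gives
\[
\bigl(C_{NT}-\sqrt{NT}W\delta\bigr)'W^{-1}H'(HW^{-1}H')^{-1}HW^{-1}\bigl(C_{NT}-\sqrt{NT}W\delta\bigr)+o_p(1),
\]
with $\sqrt{NT}W\delta\to_p B$. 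Under homoscedasticity $\Omega=cW$ and $\widehat c=L_{NT}(\widehat\beta)\to_p c$. The last convergence follows from $L_{NT}(\beta^0)=(NT)^{-1}\mathrm{Tr}(eM_{f^0}e')+o_p(1)$ and a law of large numbers. Hence $c^{-1}(C-B)'W^{-1}H'(HW^{-1}H')^{-1}HW^{-1}(C-B)$ is $\chi^2_r$, because $(HW^{-1}H')^{-1/2}HW^{-1}(C-B)\sim{\cal N}(0,cI_r)$.

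The main obstacle is the uniform control of remainders along the restricted and shifted minimizers: verifying that they lie in the $\eta_{NT}$-balls of Theorems~\ref{th:ass_expand} and \ref{th:gradient}, and establishing the envelope identity \eqref{EquivGrads}. The rest is continuous mapping.
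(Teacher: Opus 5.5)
Your proposal is correct. The Wald and LR parts follow the paper's route. The paper says the Wald result is immediate from Theorem~\ref{th:limdis}. For LR it only proves $\widehat c\to_p c$, via $L_{NT}(\widehat\beta)=(NT)^{-1}{\rm Tr}(M_{\lambda^0}eM_{f^0}e')+o_p(1)$, the bound $2R\|e\|^2/(NT)$, and a LLN, and then declares the limit immediate. Your completing-the-square over the shifted affine set $\{H\gamma=h+H\delta\}$ is exactly the computation the paper leaves implicit.

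The genuine difference is in how you obtain \eqref{EquivGrads} for the LM test. The paper substitutes the residual expansion of Theorem~\ref{theorem:expansions} for $\widetilde e$ into $-2(NT)^{-1/2}{\rm Tr}(X_k'\widetilde e)$. It then reads off the terms $2\sqrt{NT}\,W_{NT}(\widetilde\beta-\beta^0)$, $C^{(1)}$ and $C^{(2)}$, and disposes of ${\rm Tr}(X_k'\widetilde e^{(\rm rem)})$ using ${\rm rank}\le 7R$ and the spectral-norm bound on the remainder. Only afterwards does it compare with Theorem~\ref{th:gradient}. So in the paper the expansion of the LS gradient comes directly from the residual expansion, and \eqref{EquivGrads} holds up to $o_p(1)$.

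You use an envelope argument instead. ${\cal L}_{NT}$ depends on $(\lambda,f)$ only through $\lambda f'$, and the optimal $\lambda f'$ is unique and smooth in $\beta$ once $\mu_R$ and $\mu_{R+1}$ of $(Y-\beta\cdot X)'(Y-\beta\cdot X)$ are separated. That separation holds wpa1 at $\widetilde\beta$: order $NT$ versus $O_p(N)$. Hence $\widetilde\nabla{\cal L}_{NT}=\nabla L_{NT}(\widetilde\beta)$ exactly wpa1, and you import the expansion from Theorem~\ref{th:gradient}.

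What each route buys: yours is shorter and, for this step, avoids the residual expansion and the remainder-rank bookkeeping, but it makes the LM limit rest on Theorem~\ref{th:gradient}. The paper's route gets the LM limit without Theorem~\ref{th:gradient}, which it then needs only to identify the LS gradient with the profile score.

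One minor quibble concerns your justification of $\Omega>0$ through a lower bound on $\mathbb{E}(e_{it}^2|{\cal C})$. That assumption is not in the paper, and under conditional heteroscedasticity given the regressors it would not obviously suffice anyway. It is cleaner to say that nonsingularity of $HW^{-1}\Omega W^{-1}H'$ is tacitly required for the statistics to be well defined, as it is in the paper.
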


\section{Extension to Endogenous Regressors}
\label{sec:Endogenous Regression}

In this section, we briefly discuss how to estimate the regression coefficient $\beta^0$ of Model~\eqref{model0} when some of the regressors in $X_{it}$ are endogenous with respect to the regression error $e_{it}$.
The question is how instrumental variables can be used to estimate the regression coefficients of the endogenous regressor in the presence of the interactive fixed effects~$\lambda_{i}^{0\prime }f_{t}^{0}$.

The existing literature has already investigated similar questions under various setups.
Harding and Lamarche~\cite*{HardingLamarche2009,HardingLamarche2011} investigate the problem of estimating an endogenous panel (quantile) regression with interactive fixed effects, and show how to use IVs in the CCE estimation framework.
Moon, Shum, and Weidner~\cite*{MoonShumWeidner2012} (hereafter MSW) estimate a random coefficient multinomial demand model (as in Berry, Levinsohn, and Pakes~\cite*{BerryLevinsohnPakes1995}) when the unobserved product-market characteristics have interactive fixed effects. The IVs are required to identify the parameters of the random coefficient
distribution and to control for price endogeneity.
They suggested a multi-step  ``least squares-minimum distance'' (LS-MD) estimator.\footnote{%
Chernazhukov and Hansen~\cite*{ChernozhukovHansen2005} also
used a similar method for estimating endogenous quantile regression models.}
The LS-MD approach is also applicable to linear panel regression models with endogenous regressors and interactive fixed effects, as demonstrated in Lee, Moon, and Weidner~\cite*{LeeMoonWeidner2012}
for the case of a dynamic linear panel regression model with interactive fixed effects and measurement error.

We now discuss how to implement the LS-MD estimation in our setup.
Let $X_{it}^{\rm end}$ be the vectors of endogenous regressors, and let $X_{it}^{\rm exo}$ be the vector
of exogenous regressors, with respect to $e_{it}$, such that $X_{it}= (X_{it}^{\rm end \prime},X_{it}^{\rm exo \prime})'$.
The model then reads
\[
Y_{it}=\beta_{\rm end}^{0\prime }X_{it}^{\rm end}+\beta_{\rm exo}^{0\prime
}X_{it}^{\rm exo}+\lambda_{i}^{0\prime }f_{t}^{0}+e_{it},
\]%
where $X_{it}^{\rm exo}$ denotes the exogenous
and $X_{it}^{\rm end}$ denotes the endogenous 
regressors (wrt to $e_{it}$).
Suppose $Z_{it}$ is an additional $L$-vector of
exogenous instrumental variables (IVs),
but $Z_{it}$ may be correlated with $\lambda^0_i$ and $f^0_t$.
The LS-MD estimator of $\beta ^{0}=\left( \beta_{\rm end}^{0\prime },\beta
_{\rm exo}^{0\prime }\right) ^{\prime }$ can then be calculated by the following
three steps:

\begin{itemize}
\item[(1)] For given $\beta_{\rm end}$, we run the least squares regression of $%
Y_{it}-\beta_{\rm end}^{\prime }X_{it}^{\rm end}$ on the included exogeneous
regressors $X_{it}^{\rm exo}$, the interactive fixed effects $\lambda
_{i}^{\prime }f_{t}$, and the IVs $Z_{it}:$%
\begin{align*}
& \left( \widetilde{\beta}_{\rm exo}\left( \beta_{\rm end}\right) ,\widetilde{\gamma}\left(
\beta_{\rm end}\right) ,\widetilde{\lambda}\left( \beta_{\rm end}\right) ,\widetilde{f}%
\left( \beta_{\rm end}\right) \right) \\
& \qquad = \; \limfunc{argmin}_{\{ \beta_{\rm exo},\gamma ,\lambda
,f \} } \; \sum_{i=1}^{N}\sum_{t=1}^{T}\left( Y_{it}-\beta_{\rm end}^{\prime
}X_{it}^{\rm end}-\beta_{\rm exo}^{\prime }X_{it}^{\rm exo}-\gamma ^{\prime
}Z_{it}-\lambda_{i}^{\prime }f_{t}\right) ^{2}.
\end{align*}

\item[(2)] We estimate $\beta_{\rm end}$ by finding
 $\widetilde{\gamma}\left( \beta
_{\rm end}\right) $, obtained by step (1), that is closest to zero.
 To do so, we choose a symmetric positive definite
 $L \times L$  weight matrix $W_{NT}^{\gamma }$ and compute
\[
\widehat{\beta}_{\rm end}=\limfunc{argmin}_{\beta_{\rm end}}\widetilde{\gamma}\left( \beta
_{\rm end}\right) ^{\prime } \, W_{NT}^{\gamma } \, \widetilde{\gamma}\left( \beta
_{\rm end}\right) .
\]

\item[(3)] We estimate $\beta_{\rm exo}$ (and $\lambda ,f)$ by running the
least squares regression of $Y_{it}-\widehat{\beta}_{\rm end}^{\prime }X_{it}^{\rm end}$
on the included exogeneous regressors $X_{it}^{\rm exo}$ and the interactive fixed
effects $\lambda_{i}^{\prime }f_{t}$:%
\begin{align*}
 \left( \widehat{\beta}_{\rm exo}  ,\widehat \lambda  ,\widehat{f}  \right)
  &= \; \limfunc{argmin}_{\{ \beta_{\rm exo},\gamma ,\lambda
,f \} } \; \sum_{i=1}^{N}\sum_{t=1}^{T}\left( Y_{it}- \widehat \beta_{\rm end}^{\prime
}X_{it}^{\rm end}-\beta_{\rm exo}^{\prime }X_{it}^{\rm exo} -\lambda_{i}^{\prime }f_{t}\right) ^{2}.
\end{align*}
\end{itemize}
The idea behind this estimation procedure is that valid instruments are excluded from the model
for $Y_{it}$, so that their first-step regression coefficients $\widetilde{\gamma}\left( \beta_{\rm end}\right)$
should be close to zero if $\beta_{\rm end}$ is close to its true value $\beta_{\rm end}^0$.
Thus, as long as $X_{it}^{\rm exo}$ and $Z_{it}$ jointly satisfy the assumptions of the current paper,
we obtain $\widetilde{\gamma}\left( \beta_{\rm end}^0 \right) = o_p(1)$ for the first-step LS estimator,
and we also obtain the asymptotic
distribution of $\widetilde{\gamma}\left( \beta_{\rm end}^0 \right)$ from the results derived in section~\ref{sec:limdist}.

However, to justify the second-step minimization formally, one needs to study
the properties of $\widetilde{\gamma}\left( \beta_{\rm end} \right)$ also for $ \beta_{\rm end} \neq \beta_{\rm end}^0$.
To do so, we refer to MSW.
Our $\beta_{\rm end},\beta_{\rm exo},$ and $Y_{it}-\beta_{\rm end}^{\prime
}X_{it}^{\rm end}$ correspond to their
$\alpha$, $\beta$, and $\delta_{jt}\left( \alpha \right)$, respectively.
Assumptions 1 - 5 in MSW can be translated accordingly, and
the results in MSW show large $N,T$ consistency and asymptotic normality
of the LS-MD estimator.

The final step of the LS-MD estimation procedure is essentially a repetition of the first step, but without including
$Z_{it}$ in the set of regressors, which results in some efficiency gains for $\widehat{\beta}_{\rm exo} $ compared
to the first step.

\section{Monte Carlo Simulations}
\label{sec:MC}

We consider an ${\rm AR}(1)$ model with $R=1$ factors:
\begin{align*}
   Y_{it} \, &= \, \rho^0 \, Y_{i,t-1} \, + \, \lambda^0_{i} \, f^0_{t} \, + \, e_{it} \; .
\end{align*}
We estimate the model as an interactive fixed effect model; that is, no distributional
assumptions on $\lambda^0_{i}$ and $f^0_{t}$ are made in estimation.
 The parameter of interest is $\rho^0$. The estimators we consider are the
OLS estimator (which completely ignores the presence of the factors), the least squares estimator with
interactive fixed effects (denoted FLS in this section to differentiate from OLS) defined in equation
\eqref{DefQMLE},\footnote{%
  Here we can either use $\mathbb{B}=(-1,1)$, or $\mathbb{B}=\mathbb{R}$. In the present model, we only
  have high-rank regressors; i.e., the parameter space need not be bounded to show consistency.}
and its bias-corrected version (denoted BC-FLS), defined in Theorem~\ref{corr:biascorrected}.

For the simulation, we draw the $e_{it}$ independently and identically distributed from a t-distribution with
five degrees of freedom,
the $\lambda^0_{i}$ independently distributed from ${\cal N}(1,1)$, and we generate the factors
from an ${\rm AR}(1)$ specification, namely, $f^0_{t}=\rho_f \, f^0_{t-1} + u_{t}$,
where $u_{t} \sim {\rm iid} {\cal N}(0,(1-\rho_f^2)\sigma_f^2)$, and
$\sigma_f$ is the standard deviation of $f^0_t$.
  For all simulations, we generate 1,000 initial time periods for $f^0_t$ and $Y_{it}$ that are not used for estimation.
  This approach guarantees the simulated data used for estimation are distributed according to the stationary distribution
  of the model.

This setup contains no correlation and heteroscedasticity in $e_{it}$; that is, only the bias
term $B_1$ of the FLS estimator is non-zero, but we ignore this information in the estimation; that is, 
we correct for
all three bias terms ($B_1$, $B_2$, and $B_3$, as introduced in Assumption~\ref{ass:A6})
in the bias-corrected FLS estimator.

Table~\ref{tab:T1} shows the simulation results for the bias, standard error, and root mean square error
of the three different estimators
for the case $N=100$, $\rho_f=0.5$, and $\sigma_f=0.5$, and different values of $\rho^0$ and $T$. 
The OLS estimator, the FLS estimator (computed with correct $R=1$),
                           and the corresponding bias-corrected FLS estimator with factors (BC-FLS)
                            were computed for 10,000 simulation runs.
                           The table lists the mean bias, the standard deviation (std),
                           and the square root of the mean square error (rmse) for the three estimators.
As expected, the OLS
estimator is biased because of the factor structure and its bias does not vanish (it actually increases) as $T$ increases.
The FLS estimator is also biased, but as theory predicts its bias vanishes as $T$ increases.
The bias-corrected FLS estimator performs better than the non-corrected FLS estimator, in particular, its bias vanishes faster.
Because we only correct for the first-order bias of the FLS estimator, we could not expect the bias-corrected FLS estimator to be unbiased.
However, as $T$ gets larger, more and more of the FLS estimator bias is corrected for; for example,
for $\rho^0=0.3$, we find that at $T=5$, the bias correction only corrects for
about half of the bias, whereas at $T=80$, it already corrects for about $90 \%$ of it.

Table~\ref{tab:T2} is similar to Table~\ref{tab:T1}, with the only difference being that we allow for misspecification in the
number of factors $R$, namely, the true number of factors is assumed to be $R=1$ (i.e., same DGP as for Table~\ref{tab:T1}), but we incorrectly use $R=2$ factors when calculating the FLS and BC-FLS estimator.
By comparing Table~\ref{tab:T2} with Table~\ref{tab:T1}, we find this type
of misspecification of the number of factors increases the bias and the standard deviation of both the FLS
and the BC-FLS estimator in finite samples. That increase, however, is comparatively small once both $N$ and $T$
are large. According to the results in
Moon and Weidner~\cite*{MoonWeidner2015}, we expect the limiting distribution of the correctly
specified ($R=1$) and incorrectly specified ($R=2$) FLS estimator to be identical when $N$ and $T$ grow at the same
rate. Our simulations suggest the same is true for the BC-FLS estimator.
The remaining simulation all assume correctly specified $R=1$.

An import issue is the choice of bandwidth $M$ for the bias correction.
Table~\ref{tab:T3} gives the fraction of the FLS estimator bias that is captured by the estimator for the bias
in a model with $N=100$, $T=20$, $\rho_f=0.5$, $\sigma_f=0.5$ and different values for $\rho$ and $M$.
The table shows the optimal bandwidth (in the sense that most of the bias is corrected for) depends on $\rho^0$: it is
$M=1$ for $\rho=0$, $M=2$ for $\rho=0.3$, $M=3$ and $\rho=0.6$, and $M=5$ for $\rho=0.9$.
Choosing too large or too small a bandwidth results in a smaller fraction of the bias to be corrected.
Table~\ref{tab:T4} also reports the properties of the BC-FLS estimator for different values of $\rho^0$, $T$, and $M$.
It shows the effect of the bandwidth choice on the standard deviation of the BC-FLS estimator
is relatively small at $T=40$, but is more pronounced at $T=20$.
The issue of optimal bandwidth choice is therefore an important topic for future research.
In the simulation results presented here, we tried to choose reasonable values for $M$, but made no attempt to optimize the bandwidth.

In our setup, we have $\|\lambda^0 f^{0 \prime}\| \approx \sqrt{2 N T} \sigma_f$ and $\|e\| \approx \sqrt{N} + \sqrt{T}$.\footnote{%
To be precise, we have $\|\lambda^0 f^{0 \prime}\| / (\sqrt{2 N T} \sigma_f) \, \rightarrow_p \, 1$,
and $\|e\|/(\sqrt{N} + \sqrt{T}) \, \rightarrow_p \, 1$.}
Assumptions \ref{ass:A1} and \ref{ass:A3} imply $\|\lambda^0 f^{0 \prime}\| \gg \|e\|$ asymptotically.
We can therefore only be sure our asymptotic results for the FLS estimator distribution are a good approximation of the
finite sample properties if $\|\lambda^0 f^{0 \prime}\| \gtrsim \|e\|$, that is, if
$\sqrt{2 N T} \sigma_f \, \gtrsim \, \sqrt{N} + \sqrt{T}$.
To explore this further, we present
in Table~\ref{tab:T5}  simulation results
for $N=100$, $T=20$, $\rho^0=0.6$, and different values of $\rho_f$ and $\sigma_f$.
For $\sigma_f=0$, we have  $0=\|\lambda^0 f^{0 \prime}\| \ll \|e\|$, and this case is equivalent
to $R=0$ (no factor at all). In this case, the OLS estimator estimates the true model
and is almost unbiased, and correspondingly, the FLS estimator and the bias-corrected FLS estimator perform worse than OLS
in finite samples (though we expect all three estimators are asymptotically equivalent), but the
bias-corrected FLS estimator has a lower bias and a lower variance than the non-corrected FLS estimator.
The case $\sigma_f = 0.2$ corresponds to $\|\lambda^0 f^{0 \prime}\| \approx \|e\|$, and one finds the bias and the variance
of the OLS estimator and of the FLS estimator are of comparable size. However, the bias-corrected FLS estimator already
has much smaller bias and a bit smaller variance in this case.
Finally, in the case $\sigma_f = 0.5$, we have $\|\lambda^0 f^{0 \prime}\| > \|e\|$, and we expect our asymptotic results to be
a good approximation of this situation. Indeed, one finds that for $\sigma_f=0.5$, the OLS estimator is heavily biased and
very inefficient compared to the FLS estimator, whereas the bias-corrected FLS estimator performs even better in terms of bias
and variance.

In Table~\ref{tab:T6}, we present simulation results for the size of the various tests discussed in the last section
when testing the null hypothesis $H_0: \rho=\rho^0$.
We choose a nominal size of $5\%$, $\rho_f=0.5$, $\sigma_f=0.5$, and different values for $\rho^0$, $N$, and $T$.
In all cases, the size distortions of the uncorrected Wald, LR, and LM test are rather large,
and the size distortion of these tests do not vanish as $N$ and $T$ increase: the size for
$N=100$ and $T=20$ is about the same as for $N=400$ and $T=80$, and the size for
$N=400$ and $T=20$ is about the same as for $N=1600$ and $T=80$.
By contrast, the size distortions
for the bias-corrected Wald, LR, and LM test are much smaller, and tend toward zero (i.e., the size
becomes closer to $5\%$) as $N,T$ increase, holding the ratio $N/T$ constant.
For fixed $T$, an increase in $N$ results in a larger size distortion, whereas for fixed $N$, an increase in $T$
results in a smaller size distortion (both for the non-corrected and for the bias-corrected tests).

In Table~\ref{tab:T7} and \ref{tab:T8}, we present the power and the size-corrected power when testing the
left-sided alternative $H^{\rm left}_a: \rho = \rho^0-(NT)^{-1/2}$ and the right-sided alternative
$H^{\rm right}_a: \rho = \rho^0+(NT)^{-1/2}$. The model specifications are the same as for the size
results in Table \ref{tab:T4}. Because both the FLS estimator and the bias-corrected FLS estimator for $\rho$ have a negative bias,
one finds the power for the left-sided alternative to be much smaller than the power for the right-sided
alternative. For the uncorrected tests, this effect can be extreme and the size-corrected power of these
tests for the left-sided alternative is below $2 \%$ in all cases and does not improve as
$N$ and $T$ become large, holding $N/T$ fixed. By contrast, the power for the bias-corrected tests becomes
more symmetric as $N$ and $T$ become large, and the size-corrected power for the left-sided alternative is much larger
than for the uncorrected tests, whereas the size-corrected power for the right-sided alternative is about the same.

\section{Conclusions}
\label{sec:conclusion}

This paper studies the least squares estimator for dynamic linear panel regression models
with interactive fixed effects. We provide conditions under which the estimator
is consistent, allowing for predetermined regressors and for a general
combination of ``low-rank'' and ``high-rank'' regressors.
We then show how a quadratic approximation of the profile objective function
 $L_{NT}(\beta)$ can
be used to derive the first-order asymptotic theory of the LS estimator of $\beta$ under the alternative asymptotic $N,T \rightarrow \infty$. We find the asymptotic distribution of the LS estimator
can be asymptotically biased (i) because of weak exogeneity of the regressors and (ii) because of heteroscedasticity (and correlation) of the idiosyncratic errors $e_{it}$.
Consistent estimators for the asymptotic covariance matrix
and for the asymptotic bias of the LS estimator are provided, and thus a bias-corrected LS estimator is given.
We furthermore study the asymptotic distributions of the Wald, LR, and LM test statistics
for testing a general linear hypothesis on $\beta$.
The uncorrected test statistics are not asymptotically chi-square because of the asymptotic bias of the score and of the LS estimator, but
bias-corrected test statistics that are asymptotically chi-square distributed can be constructed.
We also discussed a possible extension of the estimation procedure to the case of endogeneous regressors.
The findings of our Monte Carlo simulations show our asymptotic results
on the distribution of the (bias-corrected) LS estimator and of the (bias-corrected) test statistics
provide a good approximation of their finite sample properties. Although the bias-corrected LS estimator
has a non-zero bias in finite samples, this bias is much smaller than that of the LS estimator. Analogously,
the size distortions and power asymmetries of the bias-corrected Wald, LR, and LM test are much smaller
than for the non-bias-corrected versions.

\appendix

\section*{Appendix}

\section{Proof of Consistency (Theorem \ref{th:consistency})}
\label{app:consistency}

The following theorem is useful for the consistency proof and beyond.

\begin{lemma}
   \label{lemma:Optimization}
   Let $N$, $T$, $R$, $R_1$, and $R_2$ be positive integers such that
   $R \leq N$, $R \leq T$, and $R=R_1+R_2$.
   Let $Z$ be an $N\times T$ matrix, $\lambda$ be an $N\times R$, $f$ be a $T\times R$ matrix,
   $\widetilde \lambda$ be an $N\times R_1$ matrix, and $\widetilde f$ be a $T\times R_2$ matrix.
   Then the following six expressions (that are functions of Z only) are equivalent:
   \begin{align*}
      \min_{f,\lambda} \, & {\rm Tr}\left[ \left(Z-\lambda f'\right) \left(Z'-f \lambda'\right)\right]
        = \min_f \, {\rm Tr}(Z \, M_f \, Z')
         = \min_\lambda \, {\rm Tr}(Z' \, M_\lambda \, Z)
    \nonumber \\
         &= \min_{\tilde \lambda,\tilde f} \, {\rm Tr}(M_{\widetilde \lambda} \, Z \, M_{\widetilde f} \, Z')
         = \sum_{i=R+1}^{T} {\mu}_i(Z'Z)
         = \sum_{i=R+1}^{N} {\mu}_i(ZZ') .
   \end{align*}
\end{lemma}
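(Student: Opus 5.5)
The plan is to prove the chain of equalities by establishing a few of them directly and reducing the rest to these. First, for fixed $f$ the inner minimization over $\lambda$ in ${\rm Tr}[(Z-\lambda f')(Z-\lambda f')']$ is a least squares problem row by row: each row $Z_{i\cdot}$ is regressed on the columns of $f$. The minimizer is $\lambda = Zf(f'f)^{\dagger}$ and the residual is $Z M_f$, so the minimum equals ${\rm Tr}(Z M_f Z')$. This gives the first equality. The symmetric argument, minimizing over $f$ for fixed $\lambda$, gives $\min_\lambda {\rm Tr}(Z' M_\lambda Z)$. For the eigenvalue expression, write ${\rm Tr}(Z M_f Z') = {\rm Tr}(Z'Z) - {\rm Tr}(P_f Z'Z)$. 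Since $P_f$ is an orthogonal projector of rank at most $R$, Ky Fan's maximum principle gives ${\rm Tr}(P_f Z'Z)\le \sum_{i=1}^R \mu_i(Z'Z)$, with equality when $f$ spans the top $R$ eigenvectors. Hence the minimum is $\sum_{i=R+1}^T \mu_i(Z'Z)$. The same reasoning applied to $ZZ'$ gives $\sum_{i=R+1}^N\mu_i(ZZ')$. Alternatively, this equality follows directly because $Z'Z$ and $ZZ'$ share their nonzero eigenvalues, and both sums run over all remaining eigenvalues.

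The main work is the mixed expression $\min_{\tilde\lambda,\tilde f}{\rm Tr}(M_{\tilde\lambda} Z M_{\tilde f} Z')$. I would prove two inequalities.

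\emph{Upper bound.} Take the SVD of $Z$ and let $\tilde\lambda$ be the top $R_1$ left singular vectors and $\tilde f$ the next $R_2$ right singular vectors (indices $R_1+1,\dots,R$). Then $M_{\tilde\lambda} Z M_{\tilde f}$ retains exactly the singular components with index $>R$. Since ${\rm Tr}(M_{\tilde\lambda} Z M_{\tilde f} Z') = \|M_{\tilde\lambda} Z M_{\tilde f}\|_F^2$, its value is $\sum_{i>R}\mu_i(Z'Z)$.

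\emph{Lower bound.} For arbitrary $\tilde\lambda,\tilde f$, note ${\rm Tr}(M_{\tilde\lambda} Z M_{\tilde f} Z') = \|M_{\tilde\lambda} Z M_{\tilde f}\|_F^2$. Now decompose
\[
Z - M_{\tilde\lambda} Z M_{\tilde f} = P_{\tilde\lambda} Z + M_{\tilde\lambda} Z P_{\tilde f}.
\]
The right-hand side has rank at most $R_1+R_2=R$. Therefore $M_{\tilde\lambda} Z M_{\tilde f} = Z - \lambda f'$ for some $N\times R$ matrix $\lambda$ and $T\times R$ matrix $f$. Its squared Frobenius norm is then at least $\min_{\lambda,f}\|Z-\lambda f'\|_F^2$, which by the first part (or by Eckart--Young) equals $\sum_{i>R}\mu_i(Z'Z)$.

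Combining the two inequalities closes the chain. The dimension conditions $R\le N$, $R\le T$ ensure enough singular vectors exist for the construction. If some singular values vanish, the SVD vectors can still be chosen, so degenerate cases cause no trouble. The use of the pseudoinverse in $P_A$ and $M_A$ handles rank-deficient $f$, $\lambda$, $\tilde\lambda$, $\tilde f$ uniformly. I expect the only delicate step to be this rank argument for the lower bound, and it is short.
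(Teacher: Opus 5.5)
Your proof is correct and follows essentially the same route as the paper: you concentrate out one factor matrix to get $\min_f {\rm Tr}(Z M_f Z')$, identify it with the sum of the smallest eigenvalues, use the shared nonzero spectrum of $Z'Z$ and $ZZ'$, and sandwich the mixed expression between an SVD-based choice of $\widetilde\lambda,\widetilde f$ (upper bound) and a rank-$R$ decomposition of $Z - M_{\widetilde\lambda} Z M_{\widetilde f}$ (lower bound). The differences are cosmetic: you cite Ky Fan's maximum principle where the paper gives a self-contained trace-inequality proof of the same fact, and you split $Z - M_{\widetilde\lambda}ZM_{\widetilde f}$ as $P_{\widetilde\lambda}Z + M_{\widetilde\lambda}ZP_{\widetilde f}$ rather than the paper's $P_{\widetilde\lambda}ZM_{\widetilde f} + ZP_{\widetilde f}$.
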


In the above minimization problems,
we do not have to restrict the matrices $\lambda$, $f$, $\widetilde \lambda$, and $\widetilde f$ to be
of full rank. If, for example, $\lambda$ is not of full rank, 
the generalized inverse $(\lambda^{\prime}\lambda)^{\dagger}$
is still well defined, and
the projector $M_\lambda$ still  
satisfies $M_\lambda \lambda=0$ and  ${\rm rank}(M_\lambda) = N - \limfunc{rank}(\lambda)$.
If ${\rm rank}(Z)\geq R$, the optimal $\lambda$, $f$, $\widetilde \lambda$, and $\widetilde f$ always have full rank.

Lemma~\ref{lemma:Optimization} shows the equivalence of the three different
versions of the profile objective function in equation \eqref{LNT123}. It also
considers minimization of ${\rm Tr}(M_{\widetilde \lambda} \, Z \, M_{\widetilde f} \, Z')$ over $\widetilde \lambda$
and $\widetilde f$, which will be used in the consistency proof below. The proof of the theorem is given in
the supplementary material. The following lemma is due to Bai \cite*{Bai2009}.
\begin{lemma}
  \label{op1lemma}
  Under the assumptions  of Theorem~\ref{th:consistency} we have
  \begin{align*}
  \sup_f \left| \frac { {\rm Tr}(X_k \, M_f \, e^{\prime}) } {NT} \right| &=
  o_p(1) \; ,  &
  \sup_f \left| \frac { {\rm Tr}(\lambda^0 \, f^{0\prime} \, M_f \,
  e^{\prime}) } {NT} \right| &= o_p(1) \; ,  &
  \sup_f \left| \frac { {\rm Tr}(e \, P_f \, e^{\prime}) } {NT} \right| &=
  o_p(1) \; ,
  \end{align*}
  where the parameters $f$ are $T\times R$ matrices with $\limfunc{rank}(f)=R$.
\end{lemma}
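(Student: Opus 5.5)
The plan is to treat the three statements separately, using only the spectral-norm bound $\|e\| = o_p(\sqrt{NT})$ (Assumption~\ref{ass:A3}), the bounded second moments of $X_k$ implied by Assumption~\ref{ass:A4}(i), Assumption~\ref{ass:A1}, and Assumption~\ref{ass:A2}. The basic tool is the inequality $|{\rm Tr}(AB)| \leq \|A\| \, \|B\|_F \, {\rm rank}$-type bounds. Concretely, $|{\rm Tr}(A)| \leq {\rm rank}(A)\, \|A\|$ and $|{\rm Tr}(AB)| \leq \|A\|\,\|B\|_*$, where $\|\cdot\|_*$ is the nuclear norm, with $\|B\|_* \leq \sqrt{{\rm rank}(B)}\,\|B\|_F$. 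The point is that every term with a projector $P_f$ has rank at most $R$, so the supremum over $f$ is harmless.

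\textbf{Third claim.} Since ${\rm rank}(e P_f e') \le R$ and $\|e P_f e'\| \le \|e\|^2$, we get
\[
\sup_f |{\rm Tr}(e P_f e')|/(NT) \le R\|e\|^2/(NT) = o_p(1)
\]
by Assumption~\ref{ass:A3}.

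\textbf{Second claim.} Write ${\rm Tr}(\lambda^0 f^{0\prime} M_f e') = {\rm Tr}(\lambda^0 f^{0\prime} e') - {\rm Tr}(\lambda^0 f^{0\prime} P_f e')$. Both matrices $\lambda^0 f^{0\prime} e'$ and $\lambda^0 f^{0\prime}P_f e'$ have rank at most $R$. Their spectral norms are bounded by $\|\lambda^0\|\,\|f^0\|\,\|e\|$, and by Assumption~\ref{ass:A1} we have $\|\lambda^0\|\|f^0\| = O_p(\sqrt{NT})$. Hence each trace is bounded by $R\, O_p(\sqrt{NT})\, o_p(\sqrt{NT}) = o_p(NT)$, uniformly in $f$.

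\textbf{First claim.} Split ${\rm Tr}(X_k M_f e') = {\rm Tr}(X_k e') - {\rm Tr}(X_k P_f e')$. The first term is $o_p(NT)$ by Assumption~\ref{ass:A2} and does not depend on $f$. For the second, $X_k P_f e'$ has rank at most $R$, so
\[
|{\rm Tr}(X_k P_f e')| \le R \|X_k\|\,\|e\| \le R\|X_k\|_F \|e\| .
\]
Assumption~\ref{ass:A4}(i), which gives $\|X_k\|_F^2/(NT) = O_p(1)$, yields $\|X_k\|_F = O_p(\sqrt{NT})$. Combined with $\|e\| = o_p(\sqrt{NT})$, this makes the term $o_p(NT)$ uniformly over $f$.

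The only slightly delicate point is bookkeeping. Full rank of $f$ is not even needed, because $P_f$ is defined via the pseudoinverse and always has rank at most $R$ and norm at most $1$. I also need the existence of the probability limits in Assumptions~\ref{ass:A1} and \ref{ass:A4}(i) to convert them into $O_p$ bounds on $\|\lambda^0\|^2/N$, $\|f^0\|^2/T$, and $\|X_k\|_F^2/(NT)$. I expect no real obstacle. The crucial observation is simply that supremizing over $f$ costs nothing, because every $f$-dependent term is a trace of a rank-$R$ matrix controlled by the spectral norm of $e$.
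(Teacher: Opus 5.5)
Your proposal is correct and follows the paper's proof essentially verbatim. You remove ${\rm Tr}(X_k e')$ via Assumption~\ref{ass:A2}, then bound each remaining trace by $R\,\|e\|\,\|\Xi\|$, using that $\Xi e'$ has rank at most $R$ and $\|\Xi\|/\sqrt{NT}={\cal O}_p(1)$ by Assumptions~\ref{ass:A1}, \ref{ass:A3}, \ref{ass:A4}; the only cosmetic difference is that in the second claim you split $M_f=\mathbb{I}-P_f$, whereas the paper bounds ${\rm Tr}(\lambda^0 f^{0\prime}M_f e')$ directly, since $\lambda^0 f^{0\prime}M_f$ already has rank at most $R$.
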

\begin{proof}[\bf Proof]
By Assumption~\ref{ass:A2}, we know the first equation in
Lemma \ref{op1lemma} is satisfied when replacing $M_f$ by the identity matrix.
So we are left to show $\max_f \left| \frac 1 {NT} {\rm Tr}(\Xi
\, e^{\prime}) \right| = o_p(1)$, where $\Xi$ is either $X_k P_f$, $\lambda^0
f^{0\prime} M_f$, or $e P_f$. In all three cases, we have $\| \Xi \|/\sqrt{NT} =
\mathcal{O}_p(1)$ by Assumption~\ref{ass:A1}, \ref{ass:A3}, and \ref{ass:A4},
respectively, and we have ${\rm rank}(\Xi) \leq R$.
We therefore find\footnote{%
   Here we use $\left| {\rm Tr}\left( C\right) \right|
                      \leq \left\| C\right\| \limfunc{rank}\left( C\right)$,
   which holds for all square matrices $C$;
   see the supplementary material.
}
\begin{align*}
\sup_f \left| \frac 1 {NT} {\rm Tr}(\Xi \, P_f \, e^{\prime}) \right|
   &\leq R \, \frac{\| e \|} {\sqrt{NT}} \, \frac{\| \Xi \|} {\sqrt{NT}} = o_p(1)
\; .
\end{align*}
\end{proof}

\begin{proof}[\bf Proof of Theorem \ref{th:consistency}]
For the second version of the profile objective function in equation~\eqref{LNT123}, we write
$L_{NT}(\beta) = \min_f \, S_{NT}(\beta,f)$, where
\begin{align*}
S_{NT}(\beta,f) &= \frac{1}{NT} \; {\rm Tr}\left[ \left( \lambda^0 \,
f^{0\prime} + \sum_{k=1}^{K} (\beta^0_k-\beta_k) X_{k} + e \right) \, M_f \,
\left( \lambda^0 \, f^{0\prime} + \sum_{k=1}^{K} (\beta^0_k-\beta_k) X_{k} +
e \right)^{\prime}\right] .
\end{align*}
We have $S_{NT}(\beta^0,f^0) = \frac{1}{NT} \, {\rm Tr}
\left(e \, M_{f^0} \, e^{\prime}\right)$. Using Lemma \eqref{op1lemma}, we find
\begin{align}
S_{NT}(\beta,f) &= S_{NT}(\beta^0,f^0) + \widetilde S_{NT}(\beta,f)  \notag \\
& \qquad \qquad + \frac{2}{NT} \, {\rm Tr}\left[ \left( \lambda^0 \,
f^{0\prime} + \sum_{k=1}^{K} (\beta^0_k-\beta_k) X_{k} \right) \, M_f \,
e^{\prime}\right] + \frac{1}{NT} \, {\rm Tr} \left(e \, (P_{f^0}-P_f) \,
e^{\prime}\right)  \notag \\
& = S_{NT}(\beta^0,f^0) + \widetilde S_{NT}(\beta,f) + o_p(\|\beta-\beta^0\|) +
o_p(1) \; ,
  \label{Bound1SNT}
\end{align}
where we defined
\begin{align*}
\widetilde S_{NT}(\beta,f) &= \frac{1}{NT} \; {\rm Tr}\left[ \left(
\lambda^0 \, f^{0\prime} + \sum_{k=1}^{K} (\beta^0_k-\beta_k) X_{k} \right)
\, M_f \, \left( \lambda^0 \, f^{0\prime} + \sum_{k=1}^{K}
(\beta^0_k-\beta_k) X_{k} \right)^{\prime}\right] .
\end{align*}
Up to this point, the consistency proof is almost equivalent to the one given in Bai \cite*{Bai2009}, but
the remainder of the proof differs from Bai, because we allow for more general low-rank regressors, and because
we allow for high-rank and low-rank regressors simultaneously.
We split $\widetilde S_{NT}(\beta,f)=\widetilde S^{(1)}_{NT}(\beta,f)+\widetilde S^{(2)}_{NT}(\beta,f)$, where
\begin{align*}
  \widetilde S^{(1)}_{NT}(\beta,f)
    &= \frac{1}{NT} \; {\rm Tr}\left[ \left(
       \lambda^0 \, f^{0\prime} + \sum_{k=1}^{K} (\beta^0_k-\beta_k) X_{k} \right)
       \, M_f \, \left( \lambda^0 \, f^{0\prime} + \sum_{k=1}^{K}
       (\beta^0_k-\beta_k) X_{k} \right)^{\prime}\, M_{(\lambda^0,w)} \right]
  \nonumber \\
   &= \frac{1}{NT} \; {\rm Tr}\left[ \left( \sum_{m=K_1+1}^{K}
            (\beta^0_m-\beta_m) X_{m} \right) \, M_f \, \left( \sum_{m=K_1+1}^{K}
            (\beta^0_m-\beta_m) X_{m} \right)^{\prime}\, M_{(\lambda^0,w)} \right] ,  \notag
   \nonumber \\
  \widetilde S^{(2)}_{NT}(\beta,f)
    &=   \frac{1}{NT} \; {\rm Tr}\left[ \left(
       \lambda^0 \, f^{0\prime} + \sum_{k=1}^{K} (\beta^0_k-\beta_k) X_{k} \right)
       \, M_f \, \left( \lambda^0 \, f^{0\prime} + \sum_{k=1}^{K}
       (\beta^0_k-\beta_k) X_{k} \right)^{\prime}\, P_{(\lambda^0,w)} \right] ,
\end{align*}
and $(\lambda^0,w)$ is the $N\times(R+K_1)$ matrix that is composed out of $\lambda^0$ and the $N\times K_1$ matrix
$w$ defined in Assumption~\ref{ass:A4}.
For $\widetilde S^{(1)}_{NT}(\beta,f)$, we can apply Lemma~\ref{lemma:Optimization} with
$\widetilde f=f$ and $\widetilde \lambda=(\lambda^0,w)$ (the $R$ in the theorem is now $2R+K_1$) to find
\begin{align}
  \widetilde S^{(1)}_{NT}(\beta,f)
    &\geq  \frac 1 {NT} \, \sum_{i=2R+K_1+1}^N \, {\mu}_{i} \left[ \left( \sum_{m=K_1+1}^{K}
            (\beta^0_m-\beta_m) X_{m} \right)  \left( \sum_{m=K_1+1}^{K}
            (\beta^0_m-\beta_m) X_{m} \right)^{\prime} \right]
 \nonumber \\
    &\geq  \, b \, \left\| \beta^{\rm high} - \beta^{0,{\rm high}} \right\|^2  \; , \qquad \text{wpa1,}
    \label{boundS1NT}
\end{align}
where in the last step, we used the existence of a constant $b>0$ guaranteed by Assumption~\ref{ass:A4}(ii)(a),
and we introduced $\beta^{\rm high}=(\beta_{K_1+1},\ldots,\beta_K)'$, which refers to the $K_2\times 1$
parameter vector corresponding to the high-rank regressors.
Similarly, we define $\beta^{\rm low}=(\beta_{1},\ldots,\beta_{K_1})'$
for the $K_1\times 1$ parameter vector of low-rank regressors.

Using $P_{(\lambda^0,w)}=P_{(\lambda^0,w)}P_{(\lambda^0,w)}$ and the cyclicality of the trace, we see
$\widetilde S^{(2)}_{NT}(\beta,f)$ can be written as the trace of a positive definite matrix, and therefore
$\widetilde S^{(2)}_{NT}(\beta,f) \geq 0$. Note also that we can choose $\beta=\beta^0$ and $f=f^0$ in the minimization problem over $S_{NT}(\beta,f)$;
that is, the optimal $\beta=\widehat \beta$ and $f=\widehat f$ must satisfy
$S_{NT}(\widehat \beta,\widehat f) \leq S_{NT}(\beta^0,f^0)$. Using this result, equation \eqref{Bound1SNT}, $\widetilde S^{(2)}_{NT}(\beta,f) \geq 0$,
and the bound in \eqref{boundS1NT}, we find
\begin{align*}
    0 \, &\geq \, b \left\| \widehat \beta^{\rm high} - \beta^{0,{\rm high}} \right\|^2
                 +o_{p}\left( \left\| \widehat \beta^{\rm high} - \beta^{0,{\rm high}} \right\|  \right)
                 +o_{p}\left( \left\| \widehat \beta^{\rm low} - \beta^{0,{\rm low}} \right\|  \right)
                 +o_{p}(1)    \; .
\end{align*}
Because we assume $\widehat \beta^{\rm low}$ is bounded, the last equation implies 
$\left\| \widehat \beta^{\rm high} - \beta^{0,{\rm high}} \right\| = o_p(1)$; that is,
$\widehat \beta^{\rm high}$ is consistent. What is left to show is that
$\widehat \beta^{\rm low}$ is consistent, too.
In the supplementary material, we show Assumption~\ref{ass:A4}(ii)(b) guarantees that  
finite positive constants $a_0$, $a_1$, $a_2$, $a_3$, and $a_4$ exist such that
\begin{align*}
  \widetilde S^{(2)}_{NT}(\beta,f)
    &\geq  \, \frac{a_0  \left\| \beta^{\rm low} - \beta^{0,{\rm low}} \right\|^2 }
                { \left\| \beta^{\rm low} - \beta^{0,{\rm low}} \right\|^2
                  + a_1 \left\| \beta^{\rm low} - \beta^{0,{\rm low}} \right\|
                  + a_2 }
        \nonumber \\ & \qquad \qquad
      -  a_3 \left\| \beta^{\rm high} - \beta^{0,{\rm high}} \right\|
  - a_4 \left\| \beta^{\rm high} - \beta^{0,{\rm high}} \right\| \, \left\| \beta^{\rm low} - \beta^{0,{\rm low}} \right\|
         \; , \qquad \text{wpa1.}
\end{align*}
Using consistency of $\widehat \beta^{\rm high}$ and again boundedness of $\beta^{\rm low}$, the
previous inequality implies
 $a>0$ exists such that
$\widetilde S^{(2)}_{NT}(\widehat \beta,f)
    \geq  a  \left\| \widehat \beta^{\rm low} - \beta^{0,{\rm low}} \right\|^2 + o_p(1)$.
With the same argument as for $\widehat \beta^{\rm high}$, we therefore find
$\left\| \widehat \beta^{\rm low} - \beta^{0,{\rm low}} \right\| = o_p(1)$; that is, $\widehat \beta^{\rm low}$ is consistent.
\end{proof}

\section{Proof of Limiting Distribution (Theorem \ref{th:limdis})}
\label{app:limdis}

Theorem~\ref{th:ass_expand} is from
Moon and Weidner~\cite*{MoonWeidner2015}, and the proof can be found there.
Note Assumption~\ref{ass:A4}(i) implies $\|X_k\|={\cal O}_p(\sqrt{NT})$,
which we assume in Moon and Weidner \cite*{MoonWeidner2015}.
There, we also assume that $\|e\|={\cal O}_p(\sqrt{\max(N,T)})={\cal O}_p(\sqrt{N})$,
whereas in the current paper we assume  $\|e\|=o_p(\|N^{2/3}\|)$. It is, however,
straightforward to verify that the proof of Theorem \ref{th:ass_expand}
is also valid under this weaker assumption.

Moon and Weidner~\cite*{MoonWeidner2015} also includes
the proof of Corollary~\ref{cor:limit}.
The proof requires consistency of $\widehat \beta$, which in the current paper
is derived under weaker assumptions than in Moon and Weidner~\cite*{MoonWeidner2015}, where no
low-rank regressors are considered. Corollary~\ref{cor:limit} is therefore stated under weaker assumptions here, 
 but the proof is unchanged.
 In the supplementary material, we show 
the assumptions of Corollary \ref{cor:limit} already guarantee 
$W_{NT}$ does not become singular as $N,T \rightarrow \infty$.

For each $k=1,\ldots,K$,
we define the $N \times T$ matrices $\overline X_k$, $\widetilde X_k$, and $\mathfrak{X}_k$ as follows:
 \begin{align*}
      \overline X_k &=  \mathbb{E}\left( X_k \big| \, {\cal C} \right) ,
      &
      \widetilde X_k &= X_k - \mathbb{E}\left( X_k \big| \, {\cal C} \right) ,
      &
      \mathfrak{X}_k &= M_{\lambda^0} \, \overline X_k \, M_{f^0}  +  \widetilde X_k .
 \end{align*}
 Note the difference between $\mathfrak{X}_k$ and ${\cal X}_k=M_{\lambda^0} \, X_{k} \, M_{f^0}$,
 which was defined in Assumption~\ref{ass:A6}. In particular, conditional on ${\cal C}$, the elements
  $\mathfrak{X}_{k,it}$ of $\mathfrak{X}_k$ are contemporaneously uncorrelated with the error term $e_{it}$,
 although the same is not true for ${\cal X}_k$.

To present the proof of
Theorem \ref{th:limdis}, it is convenient to first state two technical lemmas.
\begin{lemma}
   \label{lemma:vanishing}
   Under the assumptions of Theorem~\ref{th:limdis}, we have
   \begin{align*}
     (a) &&
      \frac 1 {\sqrt{NT}} {\rm Tr} \left( P_{f^0} \, e^{\prime}\, P_{\lambda^0} \, \widetilde  X_k \right)
            &= o_p(1) \; ,
   \\
     (b) &&
      \frac 1 {\sqrt{NT}} {\rm Tr} \left(  P_{\lambda^0} \, e \,\widetilde X_k' \right)     &= o_p(1) \; ,
      \nonumber \\ (c) &&
      \frac 1 {\sqrt{NT}} {\rm Tr} \left\{ P_{f^0} \, \left[ e^{\prime} \, \widetilde  X_k
                        - \mathbb{E}\left( e^{\prime} \, \widetilde  X_k \, \big| \, {\cal C}  \right) \right] \right\} &= o_p(1) \; ,
    \nonumber \\ (d) &&
       \frac 1 {\sqrt{NT}} {\rm Tr}\left(e P_{f^0} \, e' \, M_{\lambda^0} \,   X_k \,
              f^0 \, (f^{0\prime}f^0)^{-1} \, (\lambda^{0\prime}\lambda^0)^{-1} \, \lambda^{0\prime} \right) &= o_p(1) \; ,
    \nonumber \\ (e) &&
        \frac 1 {\sqrt{NT}} {\rm Tr}\left(e^{\prime} \, P_{\lambda^0} \, e \, M_{f^0} \,   X^{\prime}_k \,
              \lambda^0 \, (\lambda^{0\prime}\lambda^0)^{-1} \, (f^{0\prime}f^0)^{-1} \, f^{0\prime} \right) &= o_p(1) \; ,
    \nonumber \\ (f) &&
         \frac 1 {\sqrt{NT}} {\rm Tr}\left(e^{\prime}M_{\lambda^0} \,   X_k \, M_{f^0} \, e^{\prime}
                \, \lambda^0 \, (\lambda^{0\prime}\lambda^0)^{-1} \, (f^{0\prime}f^0)^{-1} \, f^{0\prime} \right)
                       &= o_p(1) \; ,
    \nonumber \\ (g) &&
       \frac 1 {\sqrt{NT}} {\rm Tr}\left\{ \left[ e e' - 
        \mathbb{E} \left( e e'  \, \big| \, {\cal C} \right)    
       \right] \, M_{\lambda^0} \,   X_k \, f^0 \, (f^{0\prime}f^0)^{-1} \, (\lambda^{0\prime}\lambda^0)^{-1} \, \lambda^{0\prime} \right\} &= o_p(1) \; ,
    \nonumber \\ (h) &&
        \frac 1 {\sqrt{NT}} {\rm Tr}\left\{ \left[ e^{\prime}  e - 
           \mathbb{E} \left( e^{\prime}  e \, \big| \, {\cal C} \right)  
          \right]
               \, M_{f^0} \,   X^{\prime}_k \,
              \lambda^0 \, (\lambda^{0\prime}\lambda^0)^{-1} \, (f^{0\prime}f^0)^{-1} \, f^{0\prime} \right\} &= o_p(1) \; ,
      \nonumber \\ (i) &&
          \frac 1 {NT} \, \sum_{i=1}^N \, \sum_{t=1}^T
             \left[   e_{it}^2   \, \mathfrak{X}_{it} \, \mathfrak{X}_{it}'
             -  \mathbb{E} \left(  e_{it}^2 \, \mathfrak{X}_{it} \, \mathfrak{X}_{it}' \, \big| \, {\cal C}
                  \right)    \right]  &= o_p(1) ,
      \nonumber \\ (j) &&
          \frac 1 {NT} \, \sum_{i=1}^N \, \sum_{t=1}^T
            \,     e_{it}^2   
            \left(  \mathfrak{X}_{it} \, \mathfrak{X}_{it}'   -  {\cal X}_{it} \, {\cal X}_{it}' \right)  &= o_p(1) .
      \end{align*}
\end{lemma}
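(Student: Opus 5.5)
The common strategy for all ten statements of Lemma~\ref{lemma:vanishing} is to exploit the low-rank structure of the projectors $P_{f^0}$ and $P_{\lambda^0}$ and to compute conditional second moments given ${\cal C}$. Conditional on ${\cal C}$, the quantities $\lambda^0$, $f^0$ and $\overline X_k$ are fixed. The errors $e_{it}$ are independent across $i$ and over $t$ (Assumption~\ref{ass:A5}(ii),(iii)), and they are martingale differences with respect to the filtration generated by $(X_{is},e_{i,s-1})_{s\le t}$ (Assumption~\ref{ass:A5}(i)). Because ${\rm rank}(P_{f^0})={\rm rank}(P_{\lambda^0})=R$, each trace reduces to a finite number of bilinear forms in $e$ and $\widetilde X_k$, weighted by $\lambda^0_i$ and $f^0_t$. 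We write $P_{f^0}=f^0(f^{0\prime}f^0)^{-1}f^{0\prime}$ and use Assumption~\ref{ass:A1}, so that $(f^{0\prime}f^0/T)^{-1}$ and $(\lambda^{0\prime}\lambda^0/N)^{-1}$ are ${\cal O}_p(1)$.

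\textbf{Parts (a)--(c).} For (a), we note that
\[
{\rm Tr}(P_{f^0}e'P_{\lambda^0}\widetilde X_k)
\]
is a product of three normalized objects. The first is $\lambda^{0\prime}e f^0/\sqrt{NT}={\cal O}_p(1)$, whose conditional variance is bounded by moments. The second is $\lambda^{0\prime}\widetilde X_k f^0/\sqrt{NT}$, which is ${\cal O}_p(1)$ by Assumption~\ref{ass:A5}(iv) and cross-sectional independence. The third is a normalization matrix of order $1/\sqrt{NT}$ relative to the prefactor. Altogether (a) is ${\cal O}_p(1/\sqrt{NT})$.

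For (b), we write the trace as $\sum_r$ of $N^{-1/2}\lambda_r^{0\prime}e\widetilde X_k'\lambda^0(\cdot)/\sqrt{NT}$. We then show $\lambda^{0\prime}e\widetilde X_k'\lambda^0={\cal O}_p(N\sqrt T)$ by splitting into $i\neq j$ terms, which are mean zero and independent, and $i=j$ terms. The $i=j$ terms contain $\sum_t e_{it}\widetilde X_{k,it}$; this has conditional mean zero by sequential exogeneity and variance of order $T$. The result is $o_p(1)$.

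For (c), we expand
\[
{\rm Tr}(P_{f^0}e'\widetilde X_k)=\sum_i\sum_{t,s}[P_{f^0}]_{ts}\,e_{it}\widetilde X_{k,is}.
\]
Centered conditional on ${\cal C}$, the summands are independent across $i$. The variance is bounded using $|[P_{f^0}]_{ts}|\lesssim \|f_t^0\|\|f_s^0\|/T$ together with Assumption~\ref{ass:A5}(v). This gives variance ${\cal O}_p(N)$, so after dividing by $\sqrt{NT}$ the term is $o_p(1)$.

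\textbf{Parts (d)--(h).} For (d)--(f), we again use that $eP_{f^0}$, $P_{\lambda^0}e$ and $e'\lambda^0$ have spectral norm ${\cal O}_p(\sqrt{\max(N,T)})$ only in $R$ directions, and sharper bounds hold there: $\|e f^0\|={\cal O}_p(\sqrt{NT})$ and $\|\lambda^{0\prime}e\|={\cal O}_p(\sqrt{NT})$. Combining these with $\|e\|={\cal O}_p(\sqrt N)$ (Remark 4) and $\|X_k\|={\cal O}_p(\sqrt{NT})$, together with $|{\rm Tr}(C)|\le{\rm rank}(C)\|C\|$, gives a naive bound of ${\cal O}_p(1)$ rather than $o_p(1)$. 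We will therefore need to expand $X_k=\overline X_k+\widetilde X_k$ and use the mean-zero structure, computing second moments directly. For example, for (d) we consider $f^{0\prime}e'M_{\lambda^0}X_kf^0$ and show it is ${\cal O}_p(T\sqrt{N}\cdot\sqrt T)$ divided appropriately.

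Parts (g) and (h) are centered quadratic forms in $e$. We compute their conditional variance, using independence and bounded fourth moments, and with the weight matrices having bounded Frobenius-type norms of order ${\cal O}_p(1/\sqrt{NT}\cdot\ldots)$. Care is needed because $X_k$ depends on $e$ through predeterminedness, so we split off $\widetilde X_k$ and use the martingale structure term by term.

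\textbf{Parts (i), (j), and the main obstacle.} For (i), we apply a conditional weak law of large numbers. Conditional independence across $i$ and the martingale-difference/uncorrelatedness over $t$ of $e_{it}$, combined with the weak serial dependence of $\mathfrak X_{it}$ (Assumption~\ref{ass:A5}(iv),(v)) and the bounded 8th moments, bound the conditional variance by $o(1)$. For (j), we note that $\mathfrak X_k-{\cal X}_k=\widetilde X_k-M_{\lambda^0}\widetilde X_kM_{f^0}=P_{\lambda^0}\widetilde X_k+\widetilde X_kP_{f^0}-P_{\lambda^0}\widetilde X_kP_{f^0}$. This is a low-rank matrix whose Frobenius norm is $o_p(\sqrt{NT})$: for instance, $\|P_{\lambda^0}\widetilde X_k\|_F^2\lesssim N^{-1}\|\lambda^{0\prime}\widetilde X_k\|^2={\cal O}_p(T)$. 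Cauchy--Schwarz with $\max e_{it}^2$ weights, or with fourth moments, then gives (j).

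I expect the main obstacle to be parts (d)--(f) with predetermined regressors. There the dependence of $X_{k,is}$ on past $e_{it}$ means that cross terms such as $\sum e_{it}e_{jt}\widetilde X_{k,js}$ are not automatically mean zero. Bounding their conditional second moments requires careful bookkeeping of index coincidences using Assumption~\ref{ass:A5}(i),(v).
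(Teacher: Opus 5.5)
Your arguments for (a)--(c), (i) and (j) are sound and essentially the paper's. You reduce through the rank-$R$ projectors to $R\times R$ bilinear forms and bound conditional second moments. Two small points. In (c), the conditional variance is really ${\cal O}_p(N\,T^{4/(4+\epsilon)})$ rather than ${\cal O}_p(N)$, because of the factor $(\max_t\|f^0_t\|)^4$; this is where the $4+\epsilon$ moment on $f^0_t$ enters, and the bound is still $o_p(NT)$. In (j), your Cauchy--Schwarz step with $\|\mathfrak{X}_k-{\cal X}_k\|_F=o_p(\sqrt{NT})$ is equivalent to the paper's use of $|{\rm Tr}(C)|\le{\rm rank}(C)\|C\|$.

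The genuine gap is in (d)--(h), which you leave as a plan of direct second-moment calculations. Your ``naive'' bound comes out ${\cal O}_p(1)$ only because you use one-sided projections: $\|\lambda^{0\prime}e\|={\cal O}_p(\sqrt{NT})$ gives only $\|P_{\lambda^0}e\|={\cal O}_p(\sqrt T)$, which is no better than $\|e\|$. The missing idea is the doubly projected matrix. Cyclicity of the trace, together with $\lambda^{0\prime}=\lambda^{0\prime}P_{\lambda^0}$ and $f^{0\prime}=f^{0\prime}P_{f^0}$, exposes $P_{\lambda^0}eP_{f^0}$. Its norm satisfies $\|P_{\lambda^0}eP_{f^0}\|={\cal O}_p(1)$, directly from the $R\times R$ bound $\lambda^{0\prime}ef^0={\cal O}_p(\sqrt{NT})$ that you already use in (a). For (d), the trace is unchanged if $e$ is replaced by $P_{\lambda^0}e$ in its first factor, so it is at most $\frac{R}{\sqrt{NT}}\|P_{\lambda^0}eP_{f^0}\|\,\|e\|\,\|X_k\|\,\|f^0(f^{0\prime}f^0)^{-1}(\lambda^{0\prime}\lambda^0)^{-1}\lambda^{0\prime}\|={\cal O}_p(T^{-1/2})$. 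Part (e) is symmetric. No splitting of $X_k$ and no mean-zero structure is needed. Your own example for (d) already contains the factor $\lambda^{0\prime}ef^0$, so crude norm bounds on $f^{0\prime}e'M_{\lambda^0}X_kf^0$ would have sufficed.

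For (f), write $M_{f^0}=\mathbb{I}-P_{f^0}$. The $P_{f^0}$ piece is again controlled by $\|P_{\lambda^0}eP_{f^0}\|$. The identity piece needs $\|P_{\lambda^0}eX_k'\|=o_p(N^{3/2})$; the naive bound gives exactly ${\cal O}_p(N^{3/2})$, so this is where one second-moment computation is needed. It comes from $\mathbb{E}\|\lambda^{0\prime}eX_k'\|_F^2={\cal O}(N^2T)$. Here $\mathbb{E}_{\cal C}(e_{it}e_{ls}X_{k,jt}X_{k,js})$ vanishes unless $i=l$ and $t=s$, by conditional cross-sectional independence and the martingale-difference property. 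Only contemporaneous products appear, so the lead--lag cross terms you worry about never arise.

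For (g) and (h), split $X_k=\overline X_k+\widetilde X_k$ as you propose. For the $\widetilde X_k$ piece, write $\widetilde X_kf^0=\widetilde X_kP_{f^0}f^0$ and bound it with $\|ee'-\mathbb{E}_{\cal C}(ee')\|={\cal O}_p(N)$ and $\|\widetilde X_kP_{f^0}\|=o_p(\sqrt{NT})$, the low-rank bound you already use in (j). No martingale term-by-term calculation is needed. The $\overline X_k$ piece is a centered quadratic form in $e$ with a ${\cal C}$-measurable weight $B_k$ satisfying $\mathbb{E}_{\cal C}\|B_k\|_F^2={\cal O}(1)$. Its conditional second moment is at most $c_0T\,\mathbb{E}_{\cal C}\|B_k\|_F^2$, which is ${\cal O}(1/N)$ after normalization.

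The route you actually sketch computes conditional variances of cubic forms in $(e,e,X_k)$ with predetermined $X_k$. That requires sixth-order mixed moments, and their summability over leads and lags is not supplied by Assumption~\ref{ass:A5}(iv),(v). As written, it does not yield a proof.
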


\begin{lemma}
   \label{lemma:denCLT}
   Under the assumptions of Theorem~\ref{th:limdis}, we have
   \begin{align*}
      \frac 1 {\sqrt{NT}} \sum_{i=1}^N \sum_{t=1}^T e_{it} \mathfrak{X}_{it}  \, \limfunc{\rightarrow}_d \,
               {\cal N}\left( 0,  \Omega  \right)   .
   \end{align*}
\end{lemma}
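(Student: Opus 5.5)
This is a martingale central limit argument, applied conditionally on ${\cal C}$. Write $\xi_{NT} = (NT)^{-1/2}\sum_{i,t} e_{it}\mathfrak{X}_{it}$. By the Cramér--Wold device it suffices to treat $a'\xi_{NT}$ for a fixed $K$-vector $a$. I would order the pairs $(i,t)$ lexicographically: first $i=1,\dots,N$, and within each $i$, $t=1,\dots,T$. Define the filtration ${\cal F}_{i,t} = {\cal C}\vee\sigma\big(\{(X_{js},e_{js}): j<i\}\cup\{(X_{is},e_{i,s-1}): s\le t\}\big)$.

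The main point to verify is that $e_{it}\mathfrak{X}_{it}$ is a martingale difference with respect to this filtration. Recall $\mathfrak{X}_{k,it} = [M_{\lambda^0}\overline X_k M_{f^0}]_{it} + \widetilde X_{k,it}$. The first term is ${\cal C}$-measurable because $\overline X_k$, $\lambda^0$ and $f^0$ are ${\cal C}$-measurable. The second term is $X_{k,it}$ minus something ${\cal C}$-measurable, so it is ${\cal F}_{i,t}$-measurable. Hence $\mathfrak{X}_{it}\in{\cal F}_{i,t}$. Moreover $e_{it}$ is measurable at the next step of the ordering. Assumption~\ref{ass:A5}(i) gives $\mathbb{E}[e_{it}\mid{\cal C}\vee\sigma(\{(X_{is},e_{i,s-1}),s\le t\})]=0$, and cross-sectional conditional independence (Assumption~\ref{ass:A5}(iii)) lets us add the variables of the units $j<i$ to the conditioning set without changing this. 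So $\mathbb{E}[e_{it}a'\mathfrak{X}_{it}\mid{\cal F}_{i,t}]=0$.

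This explains why we work with $\mathfrak{X}$ rather than ${\cal X}$. The entries of ${\cal X}_{it}$ involve future $X_{is}$ through $M_{f^0}$ and $M_{\lambda^0}$, whereas $\mathfrak{X}$ contains only the contemporaneous stochastic part $\widetilde X_{it}$.

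I would then apply a martingale CLT, for example Hall and Heyde's Corollary 3.1 in its version conditional on ${\cal C}$, which yields a stable/mixing limit. Two conditions need checking.

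\emph{Conditional variance.} We need $(NT)^{-1}\sum_{i,t}\mathbb{E}[e_{it}^2(a'\mathfrak{X}_{it})^2\mid{\cal F}_{i,t}]\to_p a'\Omega a$. I would first replace the ${\cal F}_{i,t}$-conditional expectation by the realized square $(NT)^{-1}\sum e_{it}^2(a'\mathfrak{X}_{it})^2$, using the standard sufficient condition stated with realized squares (Hall--Heyde (3.21)). Lemma~\ref{lemma:vanishing}(j) then replaces $\mathfrak{X}$ by ${\cal X}$ up to $o_p(1)$. By Assumption~\ref{ass:A6} the resulting sum converges in probability to $a'\Omega a$. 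If the conditional-variance form is needed instead, the difference between $\mathbb{E}[\cdot\mid{\cal F}_{i,t}]$ and the realized squares is itself a sum of martingale differences. Its second moment is $O((NT)^{-1})$ by the eighth-moment bounds in Assumption~\ref{ass:A5}(vi), so this difference is $o_p(1)$ as well.

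\emph{Lindeberg condition.} I would verify a Lyapunov condition instead: $(NT)^{-2}\sum_{i,t}\mathbb{E}[e_{it}^4(a'\mathfrak{X}_{it})^4\mid{\cal C}]\to_p 0$. By Cauchy--Schwarz this reduces to uniform bounds on $\mathbb{E}[e_{it}^8\mid{\cal C}]$ and $\mathbb{E}[\|\mathfrak{X}_{it}\|^8\mid{\cal C}]$. The main obstacle is bounding the ${\cal C}$-measurable part $[M_{\lambda^0}\overline X_k M_{f^0}]_{it}$ uniformly, or on average over $(i,t)$. I would expand $M_{\lambda^0}=\mathbb{I}-P_{\lambda^0}$ and $M_{f^0}=\mathbb{I}-P_{f^0}$. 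Each projection term equals $\lambda_i^{0\prime}(\lambda^{0\prime}\lambda^0/N)^{-1}N^{-1}\sum_j\lambda^0_j\overline X_{k,jt}$, or the analogous expression in $f^0$. By Assumption~\ref{ass:A1} together with Hölder's inequality and the moment bounds in Assumption~\ref{ass:A5}(vi), these averages have bounded moments. This makes the Lyapunov sum $O_p(1/(NT))\cdot NT\cdot(NT)^{-1}$, which tends to zero.

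Combining the two conditions, the CLT gives $a'\xi_{NT}\to_d{\cal N}(0,a'\Omega a)$. Cramér--Wold then yields the claim.
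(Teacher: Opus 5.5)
Your proposal is correct and follows essentially the same route as the paper. The paper's proof uses Cram\'er--Wold, the lexicographic ordering $m=T(i-1)+t$ that makes $e_{it}\mathfrak{X}_{it}'a$ a martingale difference array under Assumption~\ref{ass:A5}(i),(iii), and a martingale CLT (White's Corollary~5.26/McLeish instead of Hall--Heyde); it verifies the variance condition via Lemma~\ref{lemma:vanishing}(j) and Assumption~\ref{ass:A6}, and the Lindeberg-type condition via eighth-moment bounds. Your finer filtration, your direct realized-squares treatment of the variance (which avoids the paper's detour through $\mathbb{E}_{\cal C}$ and Lemma~\ref{lemma:vanishing}(i)), and your explicit bound on the ${\cal C}$-measurable projection part of $\mathfrak{X}_{it}$ are refinements rather than a different argument; note also that because $a'\Omega a$ is nonrandom you do not need Hall--Heyde's nesting condition, which need not hold when ${\cal C}={\cal C}_{NT}$ changes with $N,T$.
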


The proofs of Lemma~\ref{lemma:vanishing} and Lemma~\ref{lemma:denCLT} are provided in the
supplementary material.
We briefly want to discuss why 
the asymptotic variance-covariance matrix in Lemma~\ref{lemma:denCLT}
turns out to be $\Omega$.
Note that because $e_{it} \mathfrak{X}_{it}$ is mean zero and
uncorrelated across both $i$ and~$t$, conditional on ${\cal C}$,
we have
\begin{align}
      {\rm Var}\left(
        \frac 1 {\sqrt{NT}} \sum_{i=1}^N \sum_{t=1}^T e_{it} \mathfrak{X}_{it} \, \bigg| \, {\cal C} \right)
        &=
        \frac 1 {NT} \, \sum_{i=1}^N \, \sum_{t=1}^T
            \, \mathbb{E} \left(  e_{it}^2 \, \mathfrak{X}_{it} \, \mathfrak{X}_{it}'
              \, \big| \, {\cal C}   \right)
  \nonumber \\
       &=
        \frac 1 {NT} \, \sum_{i=1}^N \, \sum_{t=1}^T
            \,  e_{it}^2    \, \mathfrak{X}_{it} \, \mathfrak{X}_{it}'       + o_p(1)
  \nonumber  \\
         &=        \frac 1 {NT} \, \sum_{i=1}^N \, \sum_{t=1}^T
            \,   e_{it}^2  \, {\cal X}_{it} \, {\cal X}_{it}'    + o_p(1)
  \nonumber  \\
         &= \Omega + o_p(1),
     \label{VarEqOmega}
\end{align}
where we also used part (i)  of Lemma~\ref{lemma:vanishing} 
for the second equality
and part (j)  of Lemma~\ref{lemma:vanishing} for the third equality,
and the  definition of $\Omega$ in Assumptions~\ref{ass:A6}
in the last step.  

Using those lemmas, we can now prove
the theorem on the limiting distribution of $\widehat \beta$ in the main text.

\begin{proof}[\bf Proof of Theorem \ref{th:limdis}]
    Assumption~\ref{ass:A5} implies $\|e\|={\cal O}_p(N^{1/2})$ as $N$ and $T$ grow at the same rate,
    as discussed in section~\ref{app:error} of the supplementary material; that is, Assumption~\ref{ass:A3}$^*$ is satisfied. 
    We can therefore apply Corollary~\ref{cor:limit} to calculate the limiting distribution of $\widehat \beta$.
   Note that
$M_{\lambda^0} X_k M_{f^0} = \mathfrak{X}_k -  \widetilde X_k \, P_{f^0} - P_{\lambda^0} \, \widetilde X_k + P_{\lambda^0} \, \widetilde X_k \, P_{f^0}$.
   Using Lemmas \ref{lemma:vanishing} and \ref{lemma:denCLT} and Assumption~\ref{ass:A6}, we find
  \begin{align*}
    \frac 1 {\sqrt{NT}} C^{(1)}\left(\lambda^0,\, f^0,\, X_{k},\, e \right) =&
       \frac 1 {\sqrt{NT}} {\rm Tr} \left( e^{\prime}\, M_{\lambda^0} \, X_k
       \,  M_{f^0} \right)
     \nonumber \\
       =& \frac 1 {\sqrt{NT}} {\rm Tr} \left( e^{\prime} \mathfrak{X}_k \right)
          - \frac 1 {\sqrt{NT}} {\rm Tr} \left[ P_{f^0} \, \mathbb{E}\left( e^{\prime} \, \widetilde  X_k  \, \big| \, {\cal C} \right) \right]
        \nonumber \\ & \qquad
         - \frac 1 {\sqrt{NT}} {\rm Tr} \left(  e^{\prime}\, P_{\lambda^0} \, \widetilde X_k \right)
          + \frac 1 {\sqrt{NT}} {\rm Tr} \left( P_{f^0} \, e^{\prime}\, P_{\lambda^0} \, \widetilde  X_k \right)
        \nonumber \\ & \qquad \qquad
        - \frac 1 {\sqrt{NT}} {\rm Tr} \left\{ P_{f^0} \, \left[ e^{\prime} \, \widetilde  X_k
                         - \mathbb{E}\left( e^{\prime} \, \widetilde  X_k \, \big| \, {\cal C} \right) \right] \right\}
     \nonumber \\
       =& \frac 1 {\sqrt{NT}} {\rm Tr} \left( e^{\prime} \mathfrak{X}_k \right)
          - \frac 1 {\sqrt{NT}}
               {\rm Tr} \left[ P_{f^0} \, \mathbb{E}\left( e^{\prime} \,  X_k \, \big| \, {\cal C} \right) \right] + o_p(1) \; .
     \nonumber \\
       \limfunc{\rightarrow}_d  & \; {\cal N}\left( - \kappa B_1 , \, \Omega \right)  \; ,
  \end{align*}
  where we also used that
  $ \mathbb{E}\left( e^{\prime} \, \widetilde  X_k \, \big| \, {\cal C} \right) =  \mathbb{E}\left( e^{\prime} \, X_k \, \big| \, {\cal C} \right)$.
  Using Lemma \ref{lemma:vanishing},
  we also find   
  \begin{align*}
  \frac 1 {\sqrt{NT}} C^{(2)}\left(\lambda^0,\, f^0,\, X_{k},\, e \right) =& - \, \frac 1 {\sqrt{NT}} \bigg[
       {\rm Tr}\left(e M_{f^0} \, e' \, M_{\lambda^0} \, X_k \,
              f^0 \, (f^{0\prime}f^0)^{-1} \, (\lambda^{0\prime}\lambda^0)^{-1} \, \lambda^{0\prime} \right)
    \nonumber \\ & \qquad \quad
       +{\rm Tr}\left(e^{\prime}M_{\lambda^0} \, e \, M_{f^0} \, X^{\prime}_k \,
              \lambda^0 \, (\lambda^{0\prime}\lambda^0)^{-1} \, (f^{0\prime}f^0)^{-1} \, f^{0\prime} \right)
    \nonumber \\ & \qquad \quad
       +{\rm Tr}\left(e^{\prime}M_{\lambda^0} \, X_k \, M_{f^0} \, e^{\prime}
                \, \lambda^0 \, (\lambda^{0\prime}\lambda^0)^{-1} \, (f^{0\prime}f^0)^{-1} \, f^{0\prime} \right)
                        \bigg]
     \nonumber \\
       =&    \frac 1 {\sqrt{NT}} {\rm Tr}\left(e P_{f^0} \, e' \, M_{\lambda^0} \,   X_k \,
              f^0 \, (f^{0\prime}f^0)^{-1} \, (\lambda^{0\prime}\lambda^0)^{-1} \, \lambda^{0\prime} \right)
    \nonumber \\
        & \quad - \frac 1 {\sqrt{NT}} {\rm Tr}\left\{ \left[ e e' - \mathbb{E} \left( e e' | {\cal C} \right) \right]  \, M_{\lambda^0} \,   X_k \,
              f^0 \, (f^{0\prime}f^0)^{-1} \, (\lambda^{0\prime}\lambda^0)^{-1} \, \lambda^{0\prime} \right\}
    \nonumber \\
        & \quad - \frac 1 {\sqrt{NT}} {\rm Tr}\left[ \mathbb{E} \left( e e' | {\cal C} \right) \, M_{\lambda^0} \,   X_k \,
              f^0 \, (f^{0\prime}f^0)^{-1} \, (\lambda^{0\prime}\lambda^0)^{-1} \, \lambda^{0\prime} \right]
    \nonumber \\
        & \quad + \frac 1 {\sqrt{NT}} {\rm Tr}\left(e^{\prime}P_{\lambda^0} \, e \, M_{f^0} \,  X^{\prime}_k \,
              \lambda^0 \, (\lambda^{0\prime}\lambda^0)^{-1} \, (f^{0\prime}f^0)^{-1} \, f^{0\prime} \right)
    \nonumber \\
        & \quad - \frac 1 {\sqrt{NT}} {\rm Tr}\left\{ \left[ e^{\prime}  e - \mathbb{E} \left( e^{\prime}  e | {\cal C} \right) \right]
               \, M_{f^0} \,  X^{\prime}_k \,
              \lambda^0 \, (\lambda^{0\prime}\lambda^0)^{-1} \, (f^{0\prime}f^0)^{-1} \, f^{0\prime} \right\}
    \nonumber \\
        & \quad - \frac 1 {\sqrt{NT}} {\rm Tr}\left[ \mathbb{E} \left( e^{\prime}  e | {\cal C} \right) \, M_{f^0} \,  X^{\prime}_k \,
              \lambda^0 \, (\lambda^{0\prime}\lambda^0)^{-1} \, (f^{0\prime}f^0)^{-1} \, f^{0\prime} \right]
    \nonumber \\
         & \quad + \frac 1 {\sqrt{NT}} {\rm Tr}\left(e^{\prime}M_{\lambda^0} \,  X_k \, M_{f^0} \, e^{\prime}
                \, \lambda^0 \, (\lambda^{0\prime}\lambda^0)^{-1} \, (f^{0\prime}f^0)^{-1} \, f^{0\prime} \right)
    \nonumber \\
       =&  - \frac 1 {\sqrt{NT}} {\rm Tr}\left[ \mathbb{E} \left( e e' | {\cal C} \right) \, M_{\lambda^0} \,  X_k \,
              f^0 \, (f^{0\prime}f^0)^{-1} \, (\lambda^{0\prime}\lambda^0)^{-1} \, \lambda^{0\prime} \right]
     \nonumber \\
      & \quad - \frac 1 {\sqrt{NT}} {\rm Tr}\left[ \mathbb{E} \left( e^{\prime}  e | {\cal C} \right) \, M_{f^0} \,  X^{\prime}_k \,
              \lambda^0 \, (\lambda^{0\prime}\lambda^0)^{-1} \, (f^{0\prime}f^0)^{-1} \, f^{0\prime} \right]
        + o_p(1) \; ,
    \nonumber \\
       =&  - \kappa^{-1} B_2 - \kappa B_3  + o_p(1) \; .
   \end{align*}
   Combining these results, we obtain
   \begin{align*}
        \sqrt{NT} \left( \widehat \beta - \beta^0 \right)  &= W_{NT}^{-1}
                   \frac{1} {\sqrt{NT}} C_{NT}  
            \nonumber \\
          \, &\limfunc{\rightarrow}_d
              \, {\cal N} \left( - \, W^{-1} \,
                   \left( \kappa B_1 + \kappa^{-1} B_2 +\kappa B_3 \right) , \; W^{-1} \, \Omega \, W^{-1} \right) \; ,
   \end{align*}
   which is what we wanted to show.
\end{proof}

\theendnotes

\clearpage
\setcounter{section}{0}
\setcounter{equation}{0}
\setcounter{theorem}{0}
\setcounter{assumption}{0}
\setcounter{example}{0}
\setcounter{definition}{0}
\setcounter{table}{0}
\setcounter{figure}{0}
\renewcommand{\thesection}{S.\arabic{section}}
\renewcommand{\thetable}{S.\arabic{table}}
\renewcommand{\thefigure}{S.\arabic{figure}}
\renewcommand{\theequation}{S.\arabic{section}.\arabic{equation}}
\begin{center}
{\Large\bfseries Supplementary Material}
\end{center}
\bigskip
\section{Proof of Identification (Theorem~\ref{th:id})}
\label{app:identification}

\begin{proof}[\bf Proof of Theorem~\ref{th:id}]
    Let  $Q(\beta,\lambda,f) \equiv  \mathbb{E}\left(\left\| Y \, - \, \beta \cdot X \, - \, \lambda \, f'
    \right\|^2_F \Big|  \lambda^0, f^0, w \right)$, where $\beta \in \mathbb{R}^K$,
    $\lambda \in \mathbb{R}^{N\times R}$ and
     $f \in  \mathbb{R}^{T\times R}$.
    We have
   \begin{align*}
        & Q(\beta,\lambda,f)
      \nonumber \\
         &= \mathbb{E} \left\{ \Tr \left[ \left( Y \, - \, \beta \cdot X \, - \, \lambda \, f' \right)'
                      \left( Y \, - \, \beta \cdot X \, - \, \lambda \, f' \right)
                      \right] \Big|  \lambda^0, f^0, w \right\}
      \nonumber \\
         &=  \mathbb{E} \left\{ \Tr \left[
          \left(  \lambda^0 f^{0 \prime} -  \lambda f' -  (\beta-\beta^0) \cdot X  + e \right)'
           \left(  \lambda^0 f^{0 \prime} -  \lambda f' -  (\beta-\beta^0) \cdot X  + e \right)
           \right] \Big|  \lambda^0, f^0, w \right\}
       \nonumber \\
          &=   \mathbb{E} \left[ \Tr \left( e' e \right) \Big|  \lambda^0, f^0, w \right]
       \nonumber \\
        & \qquad +
          \underbrace{ \mathbb{E} \left\{ \Tr \left[
          \left(  \lambda^0 f^{0 \prime} -  \lambda f' -  (\beta-\beta^0) \cdot X  \right)'
           \left(  \lambda^0 f^{0 \prime} -  \lambda f' -  (\beta-\beta^0) \cdot X  \right)
            \right] \Big|  \lambda^0, f^0, w \right\}
          }_{ \equiv Q^*(\beta,\lambda,f) }
              .
    \end{align*}
    In the last step we used Assumption~\ref{ass:id}$(ii)$.
    Because $ \mathbb{E} \left[ \Tr \left( e' e \right) \Big|  \lambda^0, f^0, w \right]$
    is independent of $\beta,\lambda,f$, we find
   minimizing $Q(\beta,\lambda,f)$ is equivalent to
   minimizing $Q^*(\beta,\lambda,f)$. We decompose $Q^*(\beta,\lambda,f)$
   as follows
   \begin{align*}
       & Q^*(\beta,\lambda,f)
     \nonumber \\
       &=\mathbb{E} \left\{ \Tr \left[
          \left(  \lambda^0 f^{0 \prime} -  \lambda f' -  (\beta-\beta^0) \cdot X  \right)'
           \left(  \lambda^0 f^{0 \prime} -  \lambda f' -  (\beta-\beta^0) \cdot X  \right)
            \right] \Big|  \lambda^0, f^0, w \right\}
      \nonumber \\
        &=      \mathbb{E} \left\{ \Tr \left[
          \left(  \lambda^0 f^{0 \prime} -  \lambda f' -  (\beta-\beta^0) \cdot X  \right)'
             M_{(\lambda,\lambda^0,w)}
           \left(  \lambda^0 f^{0 \prime} -  \lambda f' -  (\beta-\beta^0) \cdot X  \right)
            \right] \Big|  \lambda^0, f^0, w \right\}
        \nonumber \\
          & \quad +
              \mathbb{E} \left\{ \Tr \left[
          \left(  \lambda^0 f^{0 \prime} -  \lambda f' -  (\beta-\beta^0) \cdot X  \right)'
             P_{(\lambda,\lambda^0,w)}
           \left(  \lambda^0 f^{0 \prime} -  \lambda f' -  (\beta-\beta^0) \cdot X  \right)
            \right] \Big|  \lambda^0, f^0, w \right\}
      \nonumber \\
        &=   \underbrace{   \mathbb{E} \left\{ \Tr \left[
          \left(  (\beta^{\rm high}-\beta^{0, {\rm high}}) \cdot X_{\rm high}  \right)'
             M_{(\lambda,\lambda^0,w)}
           \left(    (\beta^{\rm high}-\beta^{0, {\rm high}}) \cdot X_{\rm high}  \right)
            \right] \Big|  \lambda^0, f^0, w \right\}
            }_{\equiv Q^{\rm high}(\beta^{\rm high},\lambda) }
        \nonumber \\
          & \quad +
             \underbrace{     \mathbb{E} \left\{ \Tr \left[
          \left(  \lambda^0 f^{0 \prime} -  \lambda f' -  (\beta-\beta^0) \cdot X  \right)'
             P_{(\lambda,\lambda^0,w)}
           \left(  \lambda^0 f^{0 \prime} -  \lambda f' -  (\beta-\beta^0) \cdot X  \right)
            \right] \Big|  \lambda^0, f^0, w \right\}
             }_{\equiv Q^{\rm low}(\beta,\lambda,f) }    ,
   \end{align*}
   where $ (\beta^{\rm high}-\beta^{0, {\rm high}}) \cdot X_{\rm high} =
   \sum_{m=K_1+1}^K (\beta_m-\beta^0_m) X_m$.
   A lower bound
   on $ Q^{\rm high}(\beta^{\rm high},\lambda)$ is given by
   \begin{align}
        & Q^{\rm high}(\beta^{\rm high},\lambda)
     \nonumber \\
        &\geq
        \min_{\widetilde \lambda \in \mathbb{R}^{N \times (R+R+{\rm rank}(w))}}
        \mathbb{E} \left\{ \Tr \left[
          \left(  (\beta^{\rm high}-\beta^{0, {\rm high}}) \cdot X_{\rm high}  \right)'
             M_{(\widetilde \lambda,\lambda,w)}
           \left(    (\beta^{\rm high}-\beta^{0, {\rm high}}) \cdot X_{\rm high}  \right)
            \right] \Big|  \lambda^0, f^0, w \right\}
     \nonumber \\
        &=
        \sum_{r=R+R+{\rm rank}(w)}^{\min(N,T)}
       \mu_r\left\{  \mathbb{E}\left[ \left( (\beta^{\rm high}-\beta^{0, {\rm high}}) \cdot X_{\rm high}  \right)
             \left(  (\beta^{\rm high}-\beta^{0, {\rm high}}) \cdot X_{\rm high}  \right)'
               \Big|  \lambda^0, f^0, w \right] \right\}.
          \label{LowerBoundQhigh}
   \end{align}

   Because $Q^*(\beta,\lambda,f)$,
   $Q^{\rm high}(\beta^{\rm high},\lambda)$,
   and
   $Q^{\rm low}(\beta,\lambda,f)$,
   are expectations
   of traces of positive semi-definite matrices we have  $Q^*(\beta,\lambda,f) \geq 0$,
   $Q^{\rm high}(\beta^{\rm high},\lambda) \geq 0$,
   and $Q^{\rm low}(\beta,\lambda,f) \geq 0$
   for all $\beta$, $\lambda$, $f$.
  Let $\bar \beta$, $\bar \lambda$ and $\bar f$ be the parameter values
   that minimize $Q(\beta,\lambda,f)$, and thus also $Q^*(\beta,\lambda,f)$.
  Because
    $Q^*(\beta^0,\lambda^0,f^0)=0$
   we have
    $Q^*(\bar \beta,\bar \lambda,\bar f) = \min_{\beta,\lambda,f} Q^*(\beta,\lambda,f) = 0$.
   This implies
   $Q^{\rm high}(\bar \beta^{\rm high},\bar \lambda) = 0$
   and
   $Q^{\rm low}(\bar \beta,\bar \lambda,\bar f) = 0$.
   Assumption~\ref{ass:id}$(v)$,
   the lower bound \eqref{LowerBoundQhigh},
   and $Q^{\rm high}(\bar \beta^{\rm high},\bar \lambda) = 0$
   imply $\bar \beta^{\rm high} =  \beta^{0, {\rm high}}$.
   Using this, we find
   \begin{align}
       & Q^{\rm low}(\bar \beta,\bar \lambda,\bar f)
     \nonumber \\
       &= \mathbb{E} \left\{ \Tr \left[
          \left(  \lambda^0 f^{0 \prime} -  \bar \lambda \bar f' -
           (\bar \beta^{\rm low}-\beta^{0, {\rm low}}) \cdot X_{\rm low}  \right)'
           \left(  \lambda^0 f^{0 \prime} -  \bar \lambda \bar f'
              -  (\bar \beta^{\rm low}-\beta^{0,{\rm low}}) \cdot X_{\rm low}  \right)
            \right] \Big|  \lambda^0, f^0, w \right\}   ,
     \nonumber \\
       &\geq \min_f \mathbb{E} \left\{ \Tr \left[
          \left(  \lambda^0 f^{0 \prime} -  \bar \lambda  f' -
           (\bar \beta^{\rm low}-\beta^{0, {\rm low}}) \cdot X_{\rm low}  \right)'
           \left(  \lambda^0 f^{0 \prime} -  \bar \lambda  f'
              -  (\bar \beta^{\rm low}-\beta^{0,{\rm low}}) \cdot X_{\rm low}  \right)
            \right] \Big|  \lambda^0, f^0, w \right\}
     \nonumber \\
      &=  \mathbb{E} \left\{ \Tr \left[
          \left(  \lambda^0 f^{0 \prime} -
           (\bar \beta^{\rm low}-\beta^{0, {\rm low}}) \cdot X_{\rm low}  \right)'
           M_{\bar \lambda}
           \left(  \lambda^0 f^{0 \prime}
              -  (\bar \beta^{\rm low}-\beta^{0,{\rm low}}) \cdot X_{\rm low}  \right)
            \right] \Big|  \lambda^0, f^0, w \right\} ,
        \label{QlowLowBound}
   \end{align}
   where
   $  (\bar \beta^{\rm low}-\beta^{0,{\rm low}}) \cdot X_{\rm low}
   =  \sum_{l=1}^{K_1} (\bar \beta_l-\beta^0_l) X_l$.
  Because
    $Q^{\rm low}(\bar \beta,\bar \lambda,\bar f) = 0$
    and the last expression in \eqref{QlowLowBound} is non-negative
    we must have
   \begin{align*}
        \mathbb{E} \left\{ \Tr \left[
          \left(  \lambda^0 f^{0 \prime} -
           (\bar \beta^{\rm low}-\beta^{0, {\rm low}}) \cdot X_{\rm low}  \right)'
           M_{\bar \lambda}
           \left(  \lambda^0 f^{0 \prime}
              -  (\bar \beta^{\rm low}-\beta^{0,{\rm low}}) \cdot X_{\rm low}  \right)
            \right] \Big|  \lambda^0, f^0, w \right\} &= 0.
   \end{align*}
   Using $M_{\bar \lambda} =M_{\bar \lambda} M_{\bar \lambda}$
   and  the cyclicality of the trace we
   obtain from the last equality:
   \begin{align*}
       \Tr \bigg\{
         M_{\bar \lambda}
          A M_{\bar \lambda}
                       \bigg\}  = 0,
   \end{align*}
   where
   $A =\mathbb{E} \left[
           \left(  \lambda^0 f^{0 \prime}
              -  (\bar \beta^{\rm low}-\beta^{0,{\rm low}}) \cdot X_{\rm low}  \right)
\left(  \lambda^0 f^{0 \prime} -
           (\bar \beta^{\rm low}-\beta^{0, {\rm low}}) \cdot X_{\rm low}  \right)' \Big|  \lambda^0, f^0, w \right]$.   
   The trace of a positive semi-definite matrix is only equal to zero
   if the matrix itself is equal to zero, so we find
   \begin{align*}
       M_{\bar \lambda}
          A M_{\bar \lambda}
           &= 0,
   \end{align*}
   This together with the fact that $A$ itself
   is positive semi definite implies
   (note that $A$ positive semi-definite implies $A=CC'$
   for some matrix $C$, and $M_{\bar \lambda}
          A M_{\bar \lambda}
           = 0$ then implies $M_{\bar \lambda} C = 0$,
           i.e., $C = P_{\bar \lambda} C$)
   \begin{align*}
      A &=  
          P_{\bar \lambda}
          A P_{\bar \lambda}  ,
   \end{align*}
   and therefore
   ${\rm rank}(A) \leq {\rm rank}( P_{\bar \lambda} )  \leq R$. 
   We have thus shown
    \begin{align*}
       {\rm rank} \left\{ \mathbb{E} \left[
          \left(  \lambda^0 f^{0 \prime} -
           (\bar \beta^{\rm low}-\beta^{0, {\rm low}}) \cdot X_{\rm low}  \right)
           \left(  \lambda^0 f^{0 \prime}
              -  (\bar \beta^{\rm low}-\beta^{0,{\rm low}}) \cdot X_{\rm low}  \right)'
            \Big|  \lambda^0, f^0, w \right]
          \right\} \leq R.
   \end{align*}
    We furthermore find
    \begin{align*}
        R &\geq {\rm rank} \left\{ \mathbb{E} \left[
          \left(  \lambda^0 f^{0 \prime} -
           (\bar \beta^{\rm low}-\beta^{0, {\rm low}}) \cdot X_{\rm low}  \right)
           \left(  \lambda^0 f^{0 \prime}
              -  (\bar \beta^{\rm low}-\beta^{0,{\rm low}}) \cdot X_{\rm low}  \right)'
            \Big|  \lambda^0, f^0, w \right] \right\}
       \nonumber \\
      &  \geq      {\rm rank} \left\{ M_{w} \mathbb{E} \left[
          \left(  \lambda^0 f^{0 \prime} -
           (\bar \beta^{\rm low}-\beta^{0, {\rm low}}) \cdot X_{\rm low}  \right)
           P_{f^0}
           \left(  \lambda^0 f^{0 \prime}
              -  (\bar \beta^{\rm low}-\beta^{0,{\rm low}}) \cdot X_{\rm low}  \right)' M_w
            \Big|  \lambda^0, f^0, w \right] \right\}
   \nonumber \\ & \quad
       +  {\rm rank} \left\{ P_{w} \mathbb{E} \left[
          \left(  \lambda^0 f^{0 \prime} -
           (\bar \beta^{\rm low}-\beta^{0, {\rm low}}) \cdot X_{\rm low}  \right)
           M_{f^0}
           \left(  \lambda^0 f^{0 \prime}
              -  (\bar \beta^{\rm low}-\beta^{0,{\rm low}}) \cdot X_{\rm low}  \right)' P_w
            \Big|  \lambda^0, f^0, w \right] \right\}
       \nonumber \\
      &  \geq      {\rm rank} \left[
      M_{w}   \lambda^0 f^{0 \prime} f^0 \lambda^{0 \prime} M_w
          \right]
   \nonumber \\ & \quad
       +  {\rm rank} \left\{   \mathbb{E} \left[
          \left(
           (\bar \beta^{\rm low}-\beta^{0, {\rm low}}) \cdot X_{\rm low}  \right)
           M_{f^0}
           \left(
              (\bar \beta^{\rm low}-\beta^{0,{\rm low}}) \cdot X_{\rm low}  \right)'
            \Big|  \lambda^0, f^0, w \right] \right\} .
   \end{align*}
    Assumption~\ref{ass:id}$(iv)$ guarantees
   ${\rm rank} \left( M_{w}   \lambda^0 f^{0 \prime} f^0 \lambda^{0 \prime} M_w \right)
    = {\rm rank} \left(  \lambda^0 f^{0 \prime} f^0 \lambda^{0 \prime} \right)
    = R$, that is, we must have
    \begin{align*}
         \mathbb{E} \left[
          \left(
           (\bar \beta^{\rm low}-\beta^{0, {\rm low}}) \cdot X_{\rm low}  \right)
           M_{f^0}
           \left(
              (\bar \beta^{\rm low}-\beta^{0,{\rm low}}) \cdot X_{\rm low}  \right)'
            \Big|  \lambda^0, f^0, w \right] =0.
    \end{align*}
    According to Assumption~\ref{ass:id}$(iii)$ this implies
    $\bar \beta^{\rm low} =\beta^{0,{\rm low}}$, i.e., we have
    $\bar \beta = \beta^0$.
    This also implies
    $Q^*(\bar \beta,\bar \lambda,\bar f) = \|  \lambda^0 f^{0 \prime} -  \bar \lambda \bar f'   \|_F^2
    =0$,
    and therefore $\bar \lambda \bar f' = \lambda^0 f^{0 \prime}$.
\end{proof}

\section{Examples of Error Distributions}
\label{app:error}

The following Lemma provides examples of error distributions that satisfy $\|e\|={\cal O}_p(\sqrt{\max(N,T)})$
as $N,T \rightarrow \infty$.
Example (i) is particularly relevant for us, because those
assumptions on $e_{it}$ are imposed in Assumption~\ref{ass:A5} in the main text, i.e., under 
those main text assumptions we indeed have $\|e\|={\cal O}_p(\sqrt{\max(N,T)})$.

\begin{lemma}
    \label{lemma:Enorm}
For each of the following distributional assumptions on the errors $e_{it}$, $i=1,\ldots,N$, $t=1,\ldots,T$,
we have $\|e\|={\cal O}_p(\sqrt{\max(N,T)})$.
   \begin{itemize}
      \item[(i)] The $e_{it}$ are independent across $i$ and $t$, conditional on ${\cal C}$, and
                 satisfy
                 $\mathbb{E}(e_{it} | {\cal C}) = 0$, and $\mathbb{E}(e_{it}^4 | {\cal C})$ is bounded uniformly by a non-random constant,
                 uniformly
                 over $i,t$ and $N,T$.
  Here ${\cal C}$ can be any conditioning sigma-field, including
  the empty one (corresponding to unconditional expectations).               
                                  
      \item[(ii)] The $e_{it}$ follow different ${\rm MA}(\infty)$ processes for each $i$, namely
                  \begin{align}
                     e_{it} &= \sum_{\tau=0}^\infty \, \psi_{i\tau} \, u_{i,t-\tau} \; , \qquad \text{for }
                     i=1\ldots N, \; t=1\ldots T \; ,  \label{errorMA}
                  \end{align}
                  where the $u_{it}$, $i=1\ldots N$, $t=-\infty \ldots T$ are independent random
                  variables with $\mathbb{E} u_{it} =0$ and $\mathbb{E} u_{it}^4$ uniformly bounded
                  across $i,t$ and $N,T$. The coefficients $\psi_{i\tau}$ satisfy
                  \begin{align}
                     \sum_{\tau=0}^\infty \, \tau \, \max_{i=1\ldots N} \, \psi_{i\tau}^2  \,
                            &< \, B \; , & \sum_{\tau=0}^\infty \, \max_{i=1\ldots N}  \left|
                      \psi_{i\tau} \right|  \, &< \, B \; ,
                      \label{MArestrictionsPSI}
                  \end{align}
                  for a finite constant $B$ which is independent of $N$ and $T$.
      \item[(iii)] The error matrix $e$ is generated as
         $e=\sigma^{1/2} \, u \, \Sigma^{1/2}$, where
                   $u$ is an $N \times T$ matrix with
                   independently distributed entries $u_{it}$ and $\mathbb{E} u_{it}=0$,
                   $\mathbb{E} u_{it}^2=1$, and $\mathbb{E} u_{it}^4$ is bounded uniformly across $i,t$
                   and $N,T$. Here $\sigma$ is the $N\times N$ cross-sectional covariance matrix,
                   and $\Sigma$ is the $T\times T$ time-serial covariance matrix, and they satisfy
                   \begin{align}
                      \max_{j=1\ldots N} \, \sum_{i=1}^N \, \left| \sigma_{ij} \right| \, &< \, B
                       \; , & \max_{\tau=1\ldots T} \, \sum_{t=1}^T \, \left| \Sigma_{t\tau}
                      \right| \, &< \, B \; ,
                   \end{align}
                   for some finite constant $B$ which is independent of $N$ and $T$.
                   In this example we have $\mathbb{E} e_{it} e_{j\tau} = \sigma_{ij} \Sigma_{t\tau}$.
\end{itemize}
\end{lemma}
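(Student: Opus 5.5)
Case (i) is handled by invoking a known spectral-norm bound for random matrices with independent entries. Case (ii) reduces to a lag-by-lag decomposition and case (iii) to a factorization; both then reuse case (i).

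\emph{Case (i).} Conditional on ${\cal C}$, the entries of $e$ are independent with mean zero. I would apply Latala's \cite*{Latala2006} bound: for any matrix with independent mean-zero entries,
\begin{align*}
\mathbb{E}\|e\| \leq c \Big[ \max_i \Big(\sum_t \mathbb{E} e_{it}^2\Big)^{1/2} + \max_t \Big(\sum_i \mathbb{E} e_{it}^2\Big)^{1/2} + \Big(\sum_{i,t} \mathbb{E} e_{it}^4\Big)^{1/4} \Big],
\end{align*}
with $c$ a universal constant. I use it with all expectations conditional on ${\cal C}$. Uniformly bounded conditional fourth moments give the following orders for the three terms:
\begin{itemize}
\item[] the first term is ${\cal O}(\sqrt T)$;
\item[] the second term is ${\cal O}(\sqrt N)$;
\item[] the third term is ${\cal O}((NT)^{1/4}) \leq {\cal O}(\sqrt{\max(N,T)})$.
\end{itemize}
Hence $\mathbb{E}(\|e\| \,|\, {\cal C}) \leq C\sqrt{\max(N,T)}$ with a non-random constant $C$. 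Markov's inequality conditional on ${\cal C}$, followed by taking unconditional expectations of the probabilities, yields $\|e\| = {\cal O}_p(\sqrt{\max(N,T)})$.

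\emph{Case (ii).} Write $e = \sum_{\tau \ge 0} e^{(\tau)}$, where $e^{(\tau)}_{it} = \psi_{i\tau} u_{i,t-\tau}$. Each $e^{(\tau)}$ has independent mean-zero entries with fourth moments bounded by $\max_i \psi_{i\tau}^4$ times a constant. The triangle inequality and Latala's bound then give
\begin{align*}
\mathbb{E}\|e\| \leq c\sum_\tau \max_i |\psi_{i\tau}| \big(\sqrt T + \sqrt N + (NT)^{1/4}\big) \leq cB\big(\sqrt T + \sqrt N + (NT)^{1/4}\big),
\end{align*}
using the second summability condition in \eqref{MArestrictionsPSI}. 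This bound does not use the condition $\sum_\tau \tau \max_i \psi_{i\tau}^2 < B$, which is fine since only a sufficient argument is needed. The main subtlety is making sure the infinite sum converges in spectral norm. That follows because the sum of norms has finite expectation, so monotone convergence applies.

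\emph{Case (iii).} Use submultiplicativity: $\|e\| \le \|\sigma^{1/2}\|\,\|u\|\,\|\Sigma^{1/2}\|$. The factor $\|u\|$ is ${\cal O}_p(\sqrt{\max(N,T)})$ by case (i). For the other two factors, $\|\sigma^{1/2}\| = \|\sigma\|^{1/2}$ and, for symmetric $\sigma$, $\|\sigma\| \le \max_j \sum_i |\sigma_{ij}| < B$; the same bound holds for $\Sigma$. The bound for symmetric matrices is the standard Schur/Gershgorin estimate, $\|\sigma\| \le \sqrt{\|\sigma\|_1 \|\sigma\|_\infty}$. Together these give $\|e\| = {\cal O}_p(\sqrt{\max(N,T)})$.

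The main obstacle is simply having Latala's inequality available in its conditional form. Everything else is norm bookkeeping.
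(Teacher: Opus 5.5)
Your cases (i) and (iii) coincide with the paper's proof. In (i) you apply Latala's bound conditionally on ${\cal C}$. In (iii) you use $\|e\|\le\|\sigma\|^{1/2}\,\|u\|\,\|\Sigma\|^{1/2}$ with $\|\sigma\|\le\sqrt{\|\sigma\|_1\|\sigma\|_\infty}=\|\sigma\|_1$.

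Case (ii) is correct but takes a different route. The paper truncates at lag $T$, writing $e=\sum_{j=0}^{T}{\rm diag}(\psi_j)\,U_{-j}+r_{NT}$, where $U_{-j}$ is the $N\times T$ matrix of $u_{i,t-j}$. Every $U_{-j}$ with $j\le T$ is a submatrix of the single $N\times 2T$ matrix $V$ of all $u_{it}$, $1-T\le t\le T$. Hence $\|U_{-j}\|\le\|V\|={\cal O}_p(\sqrt{\max(N,T)})$ by case (i). The tail is controlled via $\mathbb{E}\|r_{NT}\|_F^2\le\sigma_u^2\,N\sum_{j>T}j\max_i\psi_{ij}^2$, which is where the condition $\sum_\tau\tau\max_i\psi_{i\tau}^2<B$ enters. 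The truncation is forced in that approach because no finite matrix contains all the $U_{-j}$.

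You avoid truncation by working in expectation. Latala bounds $\mathbb{E}\|{\rm diag}(\psi_\tau)U_{-\tau}\|$ by $c\max_i|\psi_{i\tau}|(\sqrt T+\sqrt N+(NT)^{1/4})$ uniformly in $\tau$. Tonelli plus the triangle inequality then sum all lags at once. The resulting a.s.\ absolute convergence also identifies the limit of the matrix series with $e$ entrywise.

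Your argument is shorter and shows that the first summability condition is superfluous. In exchange, the paper's route uses case (i) only as a black-box ${\cal O}_p$ statement for one matrix, plus an elementary second-moment bound on the tail. Yours needs the expectation form of the spectral-norm bound at every lag, which Latala supplies here.
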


\begin{proof}[\bf Proof of Lemma~\ref{lemma:Enorm}, Example (i)]
   Latala \cite*{Latala2006} showed that for a $N\times T$ matrix $e$ with independent entries,
   conditional on ${\cal C}$, we have
   \begin{align*}
      \mathbb{E}\left(\| e \| \, \big| {\cal C} \right)  \, \leq \, c \left\{ \max_i  \left[ \sum_t 
      \mathbb{E}\left( e_{it}^2  \, \big| {\cal C} \right) \right]^{1/2}
                                            +\max_j \left[ \sum_i \mathbb{E} \left( e_{it}^2  \, \big| {\cal C} \right) \right]^{1/2}
                                            + \left[ \sum_{i,t} \mathbb{E} \left( e_{it}^4 \, \big| {\cal C} \right) \right]^{1/4} 
                                            \right\} \; ,
   \end{align*}
   where $c$ is some universal constant.
   Because we assumed uniformly bounded $4$th conditional moments for $e_{it}$ we thus have
   $\|e\| = {\cal O}_P(\sqrt{T})+{\cal O}_P(\sqrt{N})+{\cal O}_P((TN)^{1/4}) = {\cal O}_p(\sqrt{\max(N,T)})$.
\end{proof}

\begin{proof}[\bf Example (ii)]

Let $\psi_j = (\psi_{1j}, \ldots , \psi_{Nj})$ be an $N \times 1$ vector for each $j \geq 0$.
Let $U_{-j}$ be an $N\times T$ sub-matrix of $(u_{it})$ consisting of $u_{it}$, $i=1\ldots N$, $t=1-j,\ldots,T-j$.
We can then write equation \eqref{errorMA} in matrix notation as
\begin{align*}
   e &= \sum_{j=0}^\infty \, \limfunc{diag}(\psi_j) \, U_{-j}
   \nonumber \\
     &= \sum_{j=0}^T \, \limfunc{diag}(\psi_j) \, U_{-j} + r_{NT} ,
\end{align*}
where we cut the sum at $T$, which results in the remainder
$r_{NT}= \sum_{j=T+1}^\infty \, \limfunc{diag}(\psi_j) \, U_{-j}$. When approximating
an ${\rm MA}(\infty)$ by a finite ${\rm MA}(T)$ process we have for the remainder
\begin{align*}
   \mathbb{E} \left(\| r_{NT} \|_{F}\right)^2 \,
     = \sum_{i=1}^N \, \sum_{t=1}^T \, \mathbb{E}  \left( r_{NT} \right)_{ij}^2 \,
      &\leq \, \sigma_u^2 \, \sum_{i=1}^N \, \sum_{t=1}^T \, \sum_{j=T+1}^\infty \, \psi_{ij}^2
      \nonumber \\
               &\leq  \sigma_u^2 \,  N \, T \, \sum_{j=T+1}^\infty \, \max_i\left( \psi_{ij}^2 \right)
      \nonumber \\
               &\leq  \sigma_u^2 \,  N \, \sum_{j=T+1}^\infty \, j \, \max_i\left( \psi_{ij}^2 \right) \;,
\end{align*}
where $\sigma_u^2$ is the variance of $u_{it}$.
Therefore, for $T \rightarrow \infty$ we have
\begin{align*}
   \mathbb{E} \left( \frac{ \left(\| r_{NT} \|_{F}\right)^2 } N \right) \, \longrightarrow \, 0 \; ,
\end{align*}
which implies $\left(\| r_{NT} \|_{F}\right)^2 = {\cal O}_p(N)$, and therefore
$\| r_{NT} \| \leq \| r_{NT} \|_{F} = {\cal O}_p(\sqrt{N})$.

Let $V$ be the $N\times 2T$ matrix consisting of $u_{it}$, $i=1\ldots N$, $t=1-T,\ldots,T$.
For $j=0\ldots T$ the matrices $U_{-j}$ are sub-matrices of $V$, and therefore $\| U_{-j} \| \leq \|V\|$.
From example (i) we know $\| V \| = {\cal O}_p(\sqrt{\max(N,2T)})$.
Furthermore, we know $\| \limfunc{diag}(\psi_j) \| \leq \max_i\left( \left| \psi_{ij} \right| \right)$.

Combining these results we find
\begin{align*}
    \left\| e  \right\|
       &\leq \, \sum_{j=0}^T \, \| \limfunc{diag}(\psi_j) \| \, \|U_{-j}\| + \|r_{NT}\|
     \nonumber \\
       &\leq \, \sum_{j=0}^T \, \max_i\left( \left| \psi_{ij} \right| \right) \|V\| + o_p(\sqrt{N})
     \nonumber \\
       &\leq \, \left[ \sum_{j=0}^\infty \, \max_i\left( \left| \psi_{ij} \right| \right) \right]
                      {\cal O}_p(\sqrt{\max(N,2T)})
                    + o_p(\sqrt{N})
     \nonumber \\
       &\leq \, {\cal O}_p(\sqrt{\max(N,T)}),
\end{align*}
as required for the proof.
\end{proof}

\begin{proof}[\bf Example (iii)]
  Because $\sigma$ and $\Sigma$ are positive definite, there exits a symmetric $N\times N$ matrix
  $\phi$ and a symmetric $T\times T$ matrix $\psi$ such that $\sigma=\phi^2$ and
  $\Sigma=\psi^2$. The error term can then be generated as $e= \phi u \psi$, where
  $u$ is an $N\times T$ matrix with iid entries $u_{it}$ such that
  $\mathbb{E}(u_{it})=0$ and $\mathbb{E}(u_{it}^4)<\infty$. Given this definition
  of $e$ we immediately have
  $\mathbb{E} e_{it} = 0$ and $\mathbb{E} e_{it} e_{j\tau} = \sigma_{ij} \Sigma_{t\tau}$.
  What is left to show is $\| e \|={\cal O}_p(\sqrt{\max(N,T)})$.
  From example (i) we know $\| u \|={\cal O}_p(\sqrt{\max(N,T)})$.
  Using the inequality
  $\| \sigma \| \leq \sqrt{ \| \sigma \|_1 \, \| \sigma \|_\infty} = \| \sigma \|_1$,
  where $\| \sigma \|_1 = \| \sigma \|_\infty$ because $\sigma$ is symmetric we find
  \begin{align*}
     \| \sigma \| \leq
     \| \sigma \|_1 \, \equiv \, \max_{j=1\ldots N} \, \sum_{i=1}^N \, \left| \sigma_{ij} \right| \, &< \, L \; , &
  \end{align*}
  and analogously $\| \Sigma \| < L$. Because $\| \sigma\| = \| \phi\|^2$ and $\|\Sigma\| = \| \psi \|^2$,
  we thus find $\|e \| \leq \| \phi \| \|u\| \|\psi \| \leq L {\cal O}_p(\sqrt{\max(N,T)})$,
  i.e., $\| e \|={\cal O}_p(\sqrt{\max(N,T)})$.
\end{proof}

\section{Comments on Assumption \ref{ass:A4} on the regressors}

Consistency of the LS estimator $\widehat \beta$ requires the regressors not only satisfy the standard non-collinearity
condition in assumption \ref{ass:A4}(i), but also the additional conditions on high- and low-rank regressors
in assumption \ref{ass:A4}(ii). Bai \cite*{Bai2009} considers the special cases of only high-rank and only low-rank
regressors. As low-rank regressors he considers only cross-sectional invariant and time-invariant regressors,
and he shows that if only these two types of regressors are present, one can show consistency
under the assumption $\limfunc{plim}_{N,T\rightarrow \infty} W_{NT} > 0$ on the regressors
(instead of assumption \ref{ass:A4}), where $W_{NT}$ is the $K\times K$ matrix
defined by $W_{NT,k_1 k_2} = (NT)^{-1} \, {\rm Tr}(M_{f^0} \, X^{\prime}_{k_1} \, M_{\lambda^0} \, X_{k_2})$.
This matrix appears as the approximate Hessian in the profile objective expansion in theorem \ref{th:ass_expand},
i.e., the condition $\limfunc{plim}_{N,T\rightarrow \infty} W_{NT} > 0$ is very natural in the context
of the interactive fixed effect models, and one may wonder whether also for the general case one can replace
assumption \ref{ass:A4} with this weaker condition and still obtain consistency of the LS estimator.
Unfortunately, this is not the case, and below we present two simple counter
examples that show this.

\begin{itemize}
   \item[(i)] Let there only be one factor ($R=1$) $f^0_t$ with corresponding factor loadings $\lambda^0_i$.
              Let there only be one regressor ($K=1$) of the form $X_{it}=w_i v_t + \lambda^0_i f^0_t$.
              Assume the $N\times 1$ vector $w=(w_1,\ldots,w_N)'$, and the $T\times 1$ vector
              $v=(v_1,\ldots,v_N)'$ are such that the $N\times 2$ matrix $\Lambda=(\lambda^0,w)$ and
              and the $T\times 2$ matrix $F=(f^0,v)$ satisfy
              $\limfunc{plim}_{N,T \rightarrow \infty}\left(\Lambda^{\prime} \Lambda/N\right) > 0$, and
              $\limfunc{plim}_{N,T \rightarrow \infty} \left( F^{\prime} F / T \right) > 0$.
              In this case, we have
            $W_{NT}=(NT)^{-1} \, {\rm Tr}(M_{f^0} \, v w' \, M_{\lambda^0} \, w v')$,
              and therefore $\limfunc{plim}_{N,T\rightarrow \infty} W_{NT}
               =\limfunc{plim}_{N,T\rightarrow \infty} (NT)^{-1} \, {\rm Tr}(M_{f^0} \, v w' \, M_{\lambda^0} \, w v')
                          > 0$. However, $\beta$ is not identified
              because $\beta^0 X + \lambda^0 f^{0\prime} = (\beta^0+1) X - w v'$, i.e., it is not possible
              to distinguish $(\beta,\lambda,f)=(\beta^0,\lambda^0,f^0)$ and
              $(\beta,\lambda,f)=(\beta^0+1,-w,v)$. This implies that the LS estimator is not consistent
              (both $\beta^0$ and $\beta^0+1$ could be the true parameter, but the LS estimator cannot be consistent for both).
   \item[(ii)] Let there only be one factor ($R=1$) $f^0_t$ with corresponding factor loadings $\lambda^0_i$.
               Let the $N\times 1$ vectors $\lambda^0$, $w_1$ and $w_2$ be such that
               $\Lambda=(\lambda^0,w_1,w_2)$ satisfies
               $\limfunc{plim}_{N,T \rightarrow \infty}\left(\Lambda^{\prime} \Lambda/N\right) > 0$.
               Let the $T\times 1$ vectors $f^0$, $v_1$ and $v_2$ be such that
               $F=(f^0,v_1,v_2)$ satisfies
               $\limfunc{plim}_{N,T \rightarrow \infty} \left( F^{\prime} F / T \right) > 0$.
               Let there be four regressors ($K=4$) defined by
               $X_1=w_1 v_1'$, $X_2=w_2 v_2'$,
               $X_3=(w_1+\lambda^0)(v_2+f^0)'$, $X_4=(w_2+\lambda^0)(v_1+f^0)'$.
               In this case, one can easily check $\limfunc{plim}_{N,T\rightarrow \infty} W_{NT} > 0$.
               However, again $\beta_k$ is not identified, because
               $\sum_{k=1}^4 \beta^0_k X_k + \lambda^0 f^{0\prime} =
                \sum_{k=1}^4 (\beta^0_k+1) X_k - (\lambda^0+w_1+w_2) (f^{0\prime}+v_1+v_2)'$,
               i.e., we cannot distinguish between the true parameters
               and $(\beta,\lambda,f)=(\beta^0+1,\,-\lambda^0-w_1-w_2,\,f^{0\prime}+v_1+v_2)$.
               Again, as a consequence the LS estimator is not consistent in this case.
\end{itemize}

In example (ii), there are only low-rank regressors with ${\rm rank}(X_l)=1$.
One can easily check assumption \ref{ass:A4} is not satisfied for this example.
In example (i) the regressor is a low-rank regressor with ${\rm rank}(X)=2$.
In our present version of assumption \ref{ass:A4} we only consider low-rank regressors with ${\rm rank}(X)=1$,
but (as already noted in a footnote in the main paper) it is straightforward to extend the assumption and the
consistency proof to low-rank regressors with rank larger than one. Independent of whether we extend the assumption
or not, the regressor $X$ of example (i) fails to satisfy assumption \ref{ass:A4}.
This justifies our formulation of assumption \ref{ass:A4}, because it shows in general
the assumption cannot be replaced by the
weaker condition $\limfunc{plim}_{N,T\rightarrow \infty} W_{NT} > 0$.

\section{Some Matrix Algebra (including Proof of Lemma~\ref{lemma:Optimization})}
\label{app:matrix}

The following statements are true for real matrices (throughout the whole paper and supplementary material
we never use complex numbers anywhere).
Let $A$ be an arbitrary $n\times m$ matrix.
In addition to the operator (or spectral) norm $\|A\|$ and to the Frobenius (or Hilbert-Schmidt) norm $\|A\|_{F}$,
it is also convenient to define the $1$-norm, the $\infty$-norm, and the $\max$-norm by
\begin{align*}
   \| A \|_1 \, &= \, \max_{j=1\ldots m} \, \sum_{i=1}^n \, \left| A_{ij} \right| \; , &
   \| A \|_\infty \, &= \, \max_{i=1\ldots n} \, \sum_{j=1}^m \, \left| A_{ij} \right| \; , &
   \| A \|_{\max} \, &= \, \max_{i=1\ldots n} \, \max_{j=1 \ldots m} \, \left| A_{ij} \right| \; . &
\end{align*}

\begin{lemma}[Some useful inequalities]
   \label{lemma:inequalities}
   Let A be an $n\times m$ matrix, $B$ be an $m\times p$ matrix, and $C$ and $D$ be $n\times n$ matrices.
   Then we have:
   \begin{align*}
       \text{(i)}& \qquad
             \left\| A\right\| \, \leq \, \left\| A\right\|_{F}
             \, \leq \, \left\| A\right\| \, \limfunc{rank}\left( A\right)^{1/2} \; ,
      \nonumber \\
       \text{(ii)}& \qquad
             \left\| AB \right\| \, \leq \, \left\| A\right\| \left\|B\right\|     \; ,
      \nonumber \\
       \text{(iii)}& \qquad
             \left\| AB \right\|_{F} \, \leq \, \left\| A\right\|_{F} \left\|B\right\|
                                          \, \leq \, \left\| A\right\|_{F} \left\|B\right\|_{F} \; ,
      \nonumber \\
       \text{(iv)}& \qquad
             |{\rm Tr}(AB)| \, \leq \, \left\| A\right\|_{F} \left\|B\right\|_{F} \; ,
                \qquad \text{for $n=p$,}
      \nonumber \\
       \text{(v)}& \qquad
                     \left| {\rm Tr}\left( C\right) \right|
                      \leq \left\| C\right\| \limfunc{rank}\left( C\right) \; ,
      \nonumber \\
       \text{(vi)}& \qquad
                     \left\| C\right\| \leq {\rm Tr}\left( C\right) \; ,
                \qquad \text{for $C$ symmetric and $C\geq0$,}
      \nonumber \\
       \text{(vii)}& \qquad
            \|A\|^2 \, \leq \, \|A\|_1 \, \|A\|_{\infty} \; ,
      \nonumber \\
       \text{(viii)}& \qquad
             \|A\|_{\max} \, \leq \, \|A\| \, \leq \, \sqrt{nm} \, \|A\|_{\max} \; ,
      \nonumber \\
       \text{(ix)}& \qquad
            \|A' C A \| \leq \|A' D A \| \; ,
                \qquad  \text{for $C$ symmetric and $C\leq D$.}
      \nonumber \\
      & \text{For $C$, $D$ symmetric, and $i=1,\ldots,n$ we have:}
      \nonumber \\
       \text{(x)} & \qquad   %
         {\mu}_i(C) + {\mu}_n(D) \, \leq \, {\mu}_i(C+D) \, \leq \,
        {\mu}_i(C) + {\mu}_1(D) \; ,
      \nonumber \\
       \text{(xi)} & \qquad
             {\mu}_i(C) \leq \, {\mu}_i(C+D) \; ,
                \qquad  \text{for $D\geq0$,}
      \nonumber \\
       \text{(xii)} & \qquad
             {\mu}_i(C) - \|D\| \, \leq \, {\mu}_i(C+D) \, \leq \,  {\mu}_i(C) + \|D\| \; .
   \end{align*}
\end{lemma}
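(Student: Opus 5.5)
Each item is a standard fact; I would prove them one at a time, using the singular value decomposition $A=\sum_j s_j u_j v_j'$ and the Courant--Fischer min-max characterization of eigenvalues as the two basic tools.

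\textbf{Norm comparisons (i)--(vi).}
\begin{itemize}
\item For (i), note that $\|A\|_F^2=\sum_j s_j^2$ while $\|A\|^2=s_1^2$. There are ${\rm rank}(A)$ nonzero singular values, so $s_1^2\le\sum_j s_j^2\le {\rm rank}(A)\, s_1^2$.
\item (ii) is submultiplicativity, immediate from the definition $\|ABx\|\le\|A\|\,\|Bx\|$.
\item For (iii), write $\|AB\|_F^2=\sum_i \|B'a_i\|^2$, where $a_i$ are the rows of $A$. Then $\|B'a_i\|\le\|B\|\,\|a_i\|$. The second inequality follows from (i).
\item (iv) is Cauchy--Schwarz for the Frobenius inner product, ${\rm Tr}(AB)=\langle A',B\rangle_F$.
\item For (v), write $C=\sum_{j\le {\rm rank}(C)} s_j u_j v_j'$. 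Then $|{\rm Tr}\,C|\le\sum_j s_j|v_j'u_j|\le {\rm rank}(C)\,\|C\|$.
\item (vi) holds because all eigenvalues are nonnegative, so the largest is at most their sum.
\end{itemize}

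\textbf{Entrywise and ordering bounds (vii)--(ix).}
\begin{itemize}
\item For (vii), use $\|A\|^2=\mu_1(A'A)$ and bound $\mu_1$ by the maximal absolute row sum of $A'A$. That row sum is at most $\|A'A\|_\infty\le\|A'\|_\infty\|A\|_\infty=\|A\|_1\|A\|_\infty$.
\item For (viii), $|A_{ij}|=|e_i'Ae_j|\le\|A\|$ gives the lower bound. The upper bound follows from $\|A\|\le\|A\|_F\le\sqrt{nm}\,\|A\|_{\max}$.
\item For (ix), first note $A'CA\le A'DA$. Reading the norm of a symmetric matrix as $\max|x'\cdot x|$, the upper side is immediate. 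For the lower side one needs $A'CA\ge -A'DA$. This is only guaranteed when $C\ge -D$, e.g.\ $C\ge0$, which is the case in which the lemma is applied. So I would state the proof under $0\le C\le D$, where $\|A'CA\|=\mu_1(A'CA)\le\mu_1(A'DA)$ by Courant--Fischer.
\end{itemize}
I expect (ix) to be the only delicate point: as written it needs $C\ge0$, and the proof should make that implicit condition explicit.

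\textbf{Eigenvalue perturbation (x)--(xii).} Write
\[
\mu_i(C+D)=\max_{\dim S=i}\ \min_{x\in S,\ \|x\|=1} x'(C+D)x .
\]
Since $\mu_n(D)\le x'Dx\le\mu_1(D)$ for unit $x$, (x) follows. Then (xi) is (x) with $\mu_n(D)\ge0$, and (xii) is (x) with $|\mu_1(D)|,|\mu_n(D)|\le\|D\|$.
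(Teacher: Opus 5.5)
Your proposal is correct and follows essentially the paper's route: singular values for (i) and (v), Cauchy--Schwarz for (iv), an eigenvalue/trace comparison for (vi), and Weyl's inequality for (x)--(xii). The differences are that you give short self-contained arguments where the paper cites Bhatia or Golub--van Loan, and that you actually prove (viii), whereas the paragraph labelled (viii) in the paper's proof is a second argument for (ix).

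Your caveat on (ix) is justified rather than merely cautious. The paper asserts that, since $A'CA$ is symmetric, there is a unit vector $e$ with $\|A'CA\|=e'A'CAe$. Such an $e$ exists only when $\|A'CA\|=\mu_1(A'CA)$. Indeed $A=\mathbb{I}$, $C=-\mathbb{I}$, $D=0$ gives $\|A'CA\|=1>0=\|A'DA\|$, so (ix) is false as stated and needs $C\geq 0$ (or $-D\leq C\leq D$). That hypothesis does hold in the only application of (ix), the proof of Lemma~\ref{lemma:lowrankprop}, where $C=B_0\,v'v$ and $C=B_0\,w'w$.
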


\begin{proof}[\bf Proof]%
  Here we use notation $s_i(A)$ for the $i$th largest singular value of a matrix $A$.
\\
   (i) We have $\|A\|=s_1(A)$, and $\|A\|_F^2=\sum_{i=1}^{{\rm rank}(A)} (s_i(A))^2$.
       The inequalities follow directly from this representation.
   (ii) This inequality is true for all unitarily invariant norms, see, e.g., Bhatia \cite*{Bhatia97}.
   (iii) can be shown as follows
   \begin{align*}
       \left\| AB \right\|_{F}^2 &= {\rm Tr}(ABB'A')
               \nonumber \\
                        &= {\rm Tr}[\|B\|^2 \, AA'
                                    - A(\|B\|^2\mathbb{I}-BB')A']
               \nonumber \\
                        &\leq \|B\|^2 {\rm Tr}(AA') = \|B\|^2 \, \left\| A \right\|_{F}^2 \; ,
   \end{align*}
   where we used $A(\|B\|^2\mathbb{I}-BB')A'$ is positive definite.
   Relation (iv) is just the Cauchy Schwarz inequality.
   To show (v) we decompose $C=UDO'$ (singular value decomposition), where
   $U$ and $O$ are $n\times {\rm rank}(C)$ that satisfy $U'U=O'O=\mathbb{I}$
   and $D$ is a ${\rm rank}(C) \times {\rm rank}(C)$ diagonal matrix with entries
   $s_i(C)$. We then have $\|O\|=\|U\|=1$ and $\|D\|=\|C\|$ and therefore
   \begin{align*}
      |{\rm Tr}(C)| &= |{\rm Tr}(UDO')| = |{\rm Tr}(DO'U)|
               \nonumber \\
               &= \left|\sum_{i=1}^{{\rm rank}(C)} \, \eta_i' DO'U \eta_i \right|
               \nonumber \\
               &\leq \sum_{i=1}^{{\rm rank}(C)} \|D\| \|O'\| \|U\| = {\rm rank}(C) \|C\| \; .
   \end{align*}
   For (vi) let $e_1$ be a vector that satisfies $\|e_1\|=1$ and $\left\| C\right\| = e_1' C e_1$.
   Because $C$ is symmetric such an $e_1$ has to exist.
   Now choose $e_i$, $i=2,\ldots,n$, such that $e_i$, $i=1,\ldots,n$, becomes a orthonormal basis
   of the vector space of $n\times 1$ vectors. Because $C$ is positive semi definite we then have
   ${\rm Tr}\left( C\right) = \sum_{i} e_i' C e_i \geq e_1 C e_1 = \|C\|$, which is what we wanted to show.
   For (vii) we refer to Golub and van Loan \cite*{golubvanloan1996}, p.15.
   For (viii) let $e$ be the vector that satisfies $\|e\|=1$ and
   $\|A' C A\|=e' A' C A e$. Because $A' C A$ is symmetric such an $e$ has to exist.
   Because $C\leq D$ we then have $\|C\|= (e' A') C (A e) \leq (e' A') D (A e) \leq \|A'DA\|$.
   This is what we wanted to show.
   For inequality (ix) let $e_1$ be a vector that satisfied $\|e_1\|=1$ and $\left\| A'C A\right\| = e_1' A' C A e_1$.
   Then we have $\left\| A'C A\right\| = e_1' A' D A e_1 - e_1' A' (D-C) A e_1 \leq e_1' A' D A e_1 \leq \|A'DA\|$.
   Statement (x) is a special case of Weyl's inequality, see, e.g., Bhatia \cite*{Bhatia97}.
   The inequalities (xi) and (xii) follow directly from (ix) because ${\mu}_n(D)\geq 0$ for $D\geq 0$,
   and because $-\|D\| \leq {\mu}_i(D)\leq \|D\|$ for $i=1,\ldots,n$.
\end{proof}

\begin{definition}
   \label{def:angle}
   Let $A$ be an $n\times r_1$ matrix and $B$ be an $n \times r_2$ matrix
   with ${\rm rank}(A)=r_1$ and ${\rm rank}(B)=r_2$. The smallest principal angle
   $\theta_{A,B} \in [0,\pi/2]$ between the linear subspaces
   ${\rm span}(A)=\{A a | \, a \in \mathbb{R}^{r_1} \}$
   and ${\rm span}(B)=\{B b | \, b \in \mathbb{B}^{r_2} \}$ of $\mathbb{R}^n$ is defined by
   \begin{align*}
      \cos(\theta_{A,B}) &= \max_{0 \neq a \in \mathbb{R}^{r_1}} \max_{0\neq b \in \mathbb{R}^{r_2}}
                                  \frac{a' A' B b} {\|A a\| \|B b\|} \, .
   \end{align*}
\end{definition}

\begin{lemma}
   \label{lemma:angle}
   Let $A$ be an $n\times r_1$ matrix and $B$ be an $n \times r_2$ matrix
   with ${\rm rank}(A)=r_1$ and ${\rm rank}(B)=r_2$. Then we have the following alternative
   characterizations of the smallest principal angle between ${\rm span}(A)$ and ${\rm span}(B)$
   \begin{align*}
      \sin(\theta_{A,B}) &= \min_{0 \neq a \in \mathbb{R}^{r_1}} \, \frac{\| M_B \, A \, a \|} {\|A \, a\|}
         \nonumber \\
                        &= \min_{0 \neq b \in \mathbb{R}^{r_2}} \, \frac{\| M_A \, B \, b \|} {\|B \, b\|} \; .
   \end{align*}
\end{lemma}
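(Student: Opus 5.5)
The plan is to reduce both characterizations to one elementary fact: for a fixed unit vector $x$, the quantity $\max_{0\neq b}\, x' B b/\|Bb\|$ equals $\|P_B x\|$. Once this is established, Pythagoras turns the cosine into the sine.

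\emph{Step 1 (inner maximization).} Fix $0\neq a\in\mathbb{R}^{r_1}$ and set $x = Aa/\|Aa\|$, which is a unit vector because ${\rm rank}(A)=r_1$. For any $b\neq 0$ we have $Bb\neq 0$, since ${\rm rank}(B)=r_2$. Because $P_B Bb = Bb$, we can write $x'Bb = (P_B x)' Bb$. Cauchy--Schwarz then gives $x'Bb \leq \|P_B x\|\,\|Bb\|$.

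If $P_B x \neq 0$, equality is attained by choosing $b$ with $Bb = P_B x$, which is possible since $P_B x \in {\rm span}(B)$. If $P_B x = 0$, then $x'Bb = 0$ for every $b$, so the maximum is $0 = \|P_B x\|$. In both cases
\begin{align*}
\max_{0\neq b\in\mathbb{R}^{r_2}} \frac{a'A'Bb}{\|Aa\|\,\|Bb\|} = \frac{\|P_B A a\|}{\|Aa\|}.
\end{align*}
In particular the maximum is nonnegative, which is consistent with $\theta_{A,B}\in[0,\pi/2]$.

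\emph{Step 2 (outer optimization and Pythagoras).} Combining Step 1 with Definition~\ref{def:angle} yields $\cos\theta_{A,B} = \max_{a\neq 0} \|P_B A a\|/\|Aa\|$. The objective depends only on $Aa/\|Aa\|$, which ranges over the unit sphere of ${\rm span}(A)$. That sphere is compact and the objective is continuous on it, so both the max and the min below are attained.

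Since $P_B$ and $M_B$ are complementary orthogonal projectors, $\|M_B x\|^2 = 1 - \|P_B x\|^2$ for every unit vector $x$. The map $u\mapsto\sqrt{1-u^2}$ is decreasing on $[0,1]$. Maximizing $\|P_B x\|$ is therefore equivalent to minimizing $\|M_B x\|$, and
\begin{align*}
\min_{a\neq 0}\frac{\|M_B A a\|}{\|Aa\|} = \sqrt{1-\cos^2\theta_{A,B}} = \sin\theta_{A,B},
\end{align*}
where the last equality uses $\theta_{A,B}\in[0,\pi/2]$, so that $\sin\theta_{A,B}\geq 0$.

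\emph{Step 3 (symmetry).} The defining expression $a'A'Bb = b'B'Aa$ is symmetric in the pairs $(A,a)$ and $(B,b)$, so $\theta_{A,B}=\theta_{B,A}$. Applying Steps 1--2 with the roles of $A$ and $B$ exchanged gives the second characterization, with $M_A$ and a minimization over $b$.

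The only delicate point is Step 1: the degenerate case $P_B x = 0$, and checking that the maximum is genuinely attained rather than only a supremum. Both are handled by the case split in Step 1 and the compactness argument in Step 2. The rest is routine.
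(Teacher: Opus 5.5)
Your proof is correct. Its outer structure matches the paper's: both use $\|M_B A a\|^2 + \|P_B A a\|^2 = \|Aa\|^2$ to turn the sine statement into a statement about $\|P_B A a\|/\|Aa\|$.

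The two proofs differ at the key step. The paper does not prove the cosine characterization; it cites Theorem~8 of Galantai and Hegedus (2006). You derive it directly from the definition: using $x'Bb=(P_Bx)'Bb$, Cauchy--Schwarz, and the choice $Bb=P_Bx$, you show that the inner maximum over $b$ equals $\|P_B x\|$. This makes the lemma self-contained.

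Your derivation also shows why the outer optimization has to be a \emph{maximum}, $\cos\theta_{A,B}=\max_{a\neq 0}\|P_B A a\|/\|Aa\|$. The paper's displayed reduction writes $\min$ at this point, and $\|Ab\|$ where $\|Bb\|$ is meant; both are evidently typos, and your argument gets them right.

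For the second form, you use $\theta_{A,B}=\theta_{B,A}$, which follows from $a'A'Bb=b'B'Aa$. The paper instead relies on the cited theorem to supply both forms. The only thing the citation buys the paper is brevity.
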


\begin{proof}[\bf Proof]%
   Because $\| M_B \, A \, a \|^2 + \| P_B \, A \, a \|^2 = \|A\,a\|^2$ and
   $\sin(\theta_{A,B})^2 + \cos(\theta_{A,B})^2 = 1$, we find proving the theorem is equivalent to proving
   \begin{align*}
      \cos(\theta_{A,B}) &= \min_{0 \neq a \in \mathbb{R}^{r_1}} \, \frac{\| P_B \, A \, a \|} {\|A \, a\|}
                          = \min_{0 \neq b \in \mathbb{R}^{r_2}} \, \frac{\| P_A \, B \, b \|} {\|A \, b\|} \; .
   \end{align*}
   This last statement is theorem 8 in Galantai and Hegedus \cite*{GalantaiHegedus2006}, and the proof can be found there.
\end{proof}

\begin{proof}[\bf Proof of Lemma~\ref{lemma:Optimization}]
    Let
   \begin{align*}
      S_1(Z) &= \min_{f,\lambda} {\rm Tr}\left[ \left(Z-\lambda f'\right) \left(Z'-f \lambda'\right)\right] \; ,
 \nonumber \\
      S_2(Z) &= \min_f {\rm Tr}(Z \, M_f \, Z') \; ,
 \nonumber \\
      S_3(Z) &= \min_\lambda {\rm Tr}(Z' \, M_\lambda \, Z) \; ,
 \nonumber \\
      S_4(Z) &= \min_{\tilde \lambda,\tilde f} {\rm Tr}(M_{\widetilde \lambda} \, Z \, M_{\widetilde f} \, Z') \; ,
 \nonumber \\
      S_5(Z) &= \sum_{i=R+1}^{T} {\mu}_i(Z'Z) \; ,
 \nonumber \\
      S_6(Z) &= \sum_{i=R+1}^{N} {\mu}_i(ZZ') \; .
   \end{align*}
   The theorem claims
   \begin{align*}
      S_1(Z) \, &= \, S_2(Z) \, = \, S_3(Z) \, = \, S_4(Z) \, = \, S_5(Z) \, = \, S_6(Z) \; .
   \end{align*}
   We find:
   \begin{itemize}
      \item[(i)] The non-zero eigenvalues of $Z'Z$ and $ZZ'$ are identical, so in the sums in $S_5(Z)$ and
                 in $S_6(Z)$ we are summing over identical values, which shows $S_5(Z)=S_6(Z)$.
      \item[(ii)] Starting with $S_1(Z)$ and minimizing with respect to $f$ we obtain the first-order condition
                  \begin{align*}
                      \lambda^{\prime}\, Z &= \lambda^{\prime}\, \lambda \, f^{\prime} \; .
                  \end{align*}
                  Putting this into the objective function we can integrate out $f$, namely
                 \begin{align*}
{\rm Tr}\left[ \left( Z - \lambda f^{\prime }\right) ^{\prime
}\left( Z - \lambda f^{\prime }\right) \right] &= {\rm Tr}\left(
Z^{\prime} Z - Z^{\prime} \lambda f^{\prime}\right)
\notag \\
&= {\rm Tr}\left( Z^{\prime} Z - Z^{\prime} \lambda
(\lambda^{\prime}\lambda)^{-1} (\lambda^{\prime}\lambda) f^{\prime}\right)
\notag \\
&= {\rm Tr}\left( Z^{\prime} Z - Z^{\prime} \lambda
(\lambda^{\prime}\lambda)^{-1} (\lambda^{\prime}\lambda) \lambda^{\prime}\,
Z \right)  \notag \\
&= {\rm Tr}\left( Z^{\prime} \, M_\lambda \, Z \right) \; .
                 \end{align*}
                 This shows $S_1(Z)=S_3(Z)$.
                 Analogously, we can integrate out $\lambda$
                 to obtain $S_1(Z)=S_2(Z)$.
       \item[(iii)] Let $M_{\widehat \lambda}$ be the projector on the $N-R$ eigenspaces
                   corresponding to the $N-R$ smallest eigenvalues\footnote{
If an eigenvalue has multiplicity $m$, we count it $m$ times when finding
the $N-R$ smallest eigenvalues. In this terminology we always have exactly $N$
eigenvalues of $ZZ^{\prime}$, but some may appear multiple
times.} of $ZZ^{\prime}$, let $P_{\widehat \lambda} = \mathbb{I}_N
- M_{\widehat \lambda}$, and let $\omega_{R}$ be the $R$'th largest eigenvalue
of $ZZ^{\prime}$. We then know the matrix $P_{\widehat
\lambda} [Z Z^{\prime}-\omega_R \mathbb{I}_N ] P_{\widehat
\lambda} -M_{\widehat \lambda} [Z Z^{\prime}-\omega_R \mathbb{I}_N
] M_{\widehat \lambda}$ is positive semi-definite. Thus, for an arbitrary $%
N\times R$ matrix $\lambda$ with corresponding projector $M_\lambda$ we have
\begin{align*}
0 &\leq {\rm Tr} \left\{ \left( P_{\widehat \lambda} [Z
Z^{\prime}-\omega_R \mathbb{I}_N ] P_{\widehat \lambda} -M_{\widehat \lambda}
[Z Z^{\prime}-\omega_R \mathbb{I}_N ] M_{\widehat \lambda} \right)
\left( M_{\lambda} - M_{\widehat \lambda} \right)^2 \right\}  \notag \\
&= {\rm Tr} \left\{ \left( P_{\widehat \lambda} [Z
Z^{\prime}-\omega_R \mathbb{I}_N ] P_{\widehat \lambda} + M_{\widehat
\lambda} [Z Z^{\prime}-\omega_R \mathbb{I}_N ] M_{\widehat
\lambda} \right) \left( M_{\lambda} - M_{\widehat \lambda} \right) \right\}
\notag \\
&= {\rm Tr} \left[ Z^{\prime} \, M_\lambda \, Z \right] -%
{\rm Tr} \left[ Z^{\prime} \, M_{\widehat \lambda} \, Z \right]
+ \omega_R \, \left[ \limfunc{rank}(M_\lambda) - \limfunc{rank}(M_{\widehat
\lambda}) \right] \; ,
\end{align*}
and because $\limfunc{rank}(M_{\widehat \lambda}) = N-R$ and $\limfunc{rank}%
(M_\lambda) \leq N-R$ we have
\begin{align*}
{\rm Tr} \left[ Z^{\prime} \, M_{\widehat \lambda} \, Z \right]
&\leq {\rm Tr} \left[ Z^{\prime} \, M_\lambda \, Z \right]
\; .
\end{align*}
This shows $M_{\widehat \lambda}$ is the optimal choice in the minimization problem
of $S_3(Z)$, i.e., the optimal $\lambda=\widehat \lambda$ is chosen such
that the span of the $N$-dimensional vectors $\widehat \lambda_r$ ($r=1\ldots R$)
equals to the span of the $R$ eigenvectors that correspond to the $R$
largest eigenvalues of $ZZ^{\prime}$. This shows $S_3(Z)=S_6(Z)$.
Analogously one can show $S_2(Z)=S_5(Z)$.
      \item[(iv)]
         In the minimization problem in $S_4(Z)$ we can choose
         $\widetilde \lambda$ such that the span of the $N$-dimensional vectors $\widetilde \lambda_r$ ($r=1\ldots R_1$) is
         equal to the span of the $R_1$ eigenvectors that correspond to the $R_1$
         largest eigenvalues of $ZZ^{\prime}$. In addition, we can choose
         $\widetilde f$ such that the span of the $T$-dimensional vectors $\widetilde f_r$ ($r=1\ldots R_2$) is
         equal to the span of the $R_2$ eigenvectors that correspond to the $(R_1+1)$-largest up to the
         $R$-largest eigenvalue of $Z^{\prime}Z$. With this choice of $\widetilde \lambda$ and $\widetilde f$
         we actually project out all the $R$ largest eigenvalues of $Z'Z$ and $ZZ'$. This shows that
         $S_4(Z) \leq S_5(Z)$. (This result is actually best understood by using the singular value decomposition
         of $Z$.)

         We can write $M_{\widetilde \lambda} \, Z \, M_{\widetilde f}=Z-\widetilde Z$, where
         \begin{align*}
            \widetilde Z &= P_{\widetilde \lambda} \, Z \, M_{\widetilde f} + Z \, P_{\widetilde f} \; .
         \end{align*}
         Because ${\rm rank}(Z)\leq {\rm rank}(P_{\widetilde \lambda} \, Z \, M_{\widetilde f})
                                 +{\rm rank}(Z \, P_{\widetilde f}) = R_1 + R_2 = R$, we can
         always write $\widetilde Z=\lambda f'$ for some appropriate $N\times R$ and $T\times R$ matrices
         $\lambda$ and $f$. This shows that
         \begin{align*}
            S_4(Z) &=    \min_{\bar \lambda,\bar f} {\rm Tr}(M_{\widetilde \lambda} \, Z \, M_{\widetilde f} \, Z')
               \nonumber \\
                   &\geq \min_{\{\widetilde Z \;:\; {\rm rank}(\widetilde Z)\leq R\}} {\rm Tr}((Z-\widetilde Z)(Z-\widetilde Z)')
               \nonumber \\
                   &= \min_{f,\lambda} {\rm Tr}\left[ \left(Z-\lambda f'\right) \left(Z'-f \lambda'\right)\right]
                    = S_1(Z) \; .
         \end{align*}
         Thus we have shown here $S_1(Z) \leq S_4(Z) \leq S_5(Z)$, and this holds with equality
         because $S_1(Z)=S_5(Z)$ was already shown above.
   \end{itemize}
\end{proof}

\section{Supplement to the Consistency Proof (Appendix \ref{app:consistency})}

\begin{lemma}
   \label{lemma:wv}
   Under assumptions \ref{ass:A1} and \ref{ass:A4} there exists a constant $B_0>0$ such that
   for the matrices $w$ and $v$ introduced in assumption \ref{ass:A4} we have
   \begin{align*}
      w' \, M_{\lambda^0} \, w  \, - \, B_0 \, w' \, w &\geq 0 \; , \qquad \text{wpa1,}
    \nonumber \\
      v' \, M_{f^0} \, v  \, - \, B_0 \, v' \, v &\geq 0  \; , \qquad \text{wpa1.}
   \end{align*}
\end{lemma}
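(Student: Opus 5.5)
The plan is to derive both inequalities from Assumption~\ref{ass:A4}(ii)(b), using the symmetric roles of $(\lambda^0,w)$ and $(f^0,v)$. It is convenient to restate the claim in terms of principal angles (Definition~\ref{def:angle} and Lemma~\ref{lemma:angle}). The statement $w' M_{\lambda^0} w \geq B_0\, w'w$ says exactly that $\|M_{\lambda^0} w a\|^2 \geq B_0 \|w a\|^2$ for all $a$. Assume for the moment that $w$ has full column rank $K_1$; otherwise restrict to its column span, and for $a$ with $wa=0$ the inequality is trivial. Then the claim reads $\sin^2(\theta_{w,\lambda^0}) \geq B_0$.

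By Lemma~\ref{lemma:angle}, the sine of the smallest principal angle is symmetric in its two arguments:
\begin{align*}
\sin(\theta_{w,\lambda^0}) = \min_{a\neq 0} \frac{\|M_{\lambda^0} w a\|}{\|wa\|} = \min_{b\neq 0}\frac{\|M_w \lambda^0 b\|}{\|\lambda^0 b\|}.
\end{align*}
This reduces the problem to bounding the right-hand side from below, and that bound comes from the assumption. Assumption~\ref{ass:A4}(ii)(b) gives $b'\lambda^{0\prime} M_w \lambda^0 b > B N\|b\|^2$ wpa1. Assumption~\ref{ass:A1}(i) gives $\lambda^{0\prime}\lambda^0/N \to_p \Lambda>0$, so $\|\lambda^0 b\|^2 \leq 2\mu_1(\Lambda) N \|b\|^2$ wpa1. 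Combining the two,
\begin{align*}
\frac{\|M_w\lambda^0 b\|^2}{\|\lambda^0 b\|^2} \geq \frac{B}{2\mu_1(\Lambda)},
\end{align*}
uniformly in $b$, wpa1. Hence $\sin^2(\theta_{w,\lambda^0})\geq B/(2\mu_1(\Lambda))$, which yields the first inequality. The second inequality follows in exactly the same way from the $f^0$ part of Assumption~\ref{ass:A4}(ii)(b) together with Assumption~\ref{ass:A1}(ii). Taking $B_0$ as the minimum of the two constants completes the argument.

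The main obstacle is the possibility that $w$ or $v$ is rank deficient, which Lemma~\ref{lemma:angle} does not directly cover because it assumes full rank. I would handle this by replacing $w$ with a matrix $\bar w$ whose columns form a basis of ${\rm span}(w)$. Writing $w=\bar w G$, we have $w'M_{\lambda^0}w - B_0 w'w = G'(\bar w' M_{\lambda^0}\bar w - B_0 \bar w'\bar w)G$, so it suffices to prove the inequality for $\bar w$. Moreover $M_{\bar w}=M_w$, so the assumption is unchanged by this replacement. If $w=0$, the claim is trivial.
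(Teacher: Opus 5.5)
Your proposal is correct and follows essentially the same route as the paper. Both arguments write $w$ through a full-column-rank basis of ${\rm span}(w)$ and combine the lower bound on $\lambda^{0\prime}M_w\lambda^0/N$ from Assumption~\ref{ass:A4}(ii)(b) with the upper bound on $\lambda^{0\prime}\lambda^0/N$ from Assumption~\ref{ass:A1}(i), then use the symmetry of the smallest principal angle (Lemma~\ref{lemma:angle}) to turn this into the bound for $w'M_{\lambda^0}w$ relative to $w'w$; the only cosmetic difference is your constant $B/(2\mu_1(\Lambda))$ in place of the paper's $B/B_1$.
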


\begin{proof}[\bf Proof]
   We can decompose $w=\widetilde w \, \bar w$, where
   $\widetilde w$ is an $N \times {\rm rank}(w)$ matrix and $\bar w$ is a ${\rm rank}(w) \times K_1$ matrix.
   Note $\widetilde w$ has full rank, and $M_w=M_{\widetilde w}$.

   By assumption \ref{ass:A1}(i) we know $\lambda^{0\prime}\lambda^0/N$ has a probability limit,
   i.e., there exists some $B_1>0$ such that $\lambda^{0\prime}\lambda^0/N < B_1 \mathbb{I}_R$ wpa1.
   Using this and assumption \ref{ass:A4} we find for any $R\times 1$ vector $a\neq0$ we have
   \begin{align*}
      \frac{\|M_{v} \, \lambda^0 \, a\|^2}
          {\|\lambda^0 \, a\|^2} \, =   \,
      \frac{a' \, \lambda^{0\prime} \, M_{v} \, \lambda^0 \, a }
          {a' \, \lambda^{0\prime} \, \lambda^0 \, a} &> \frac{B}{B_1} \;,  \qquad \text{wpa1.}
   \end{align*}
   Applying Lemma~\ref{lemma:angle} we find
   \begin{align*}
      \min_{0\neq b \in \mathbb{R}^{{\rm rank}(w)}} \, \frac{b' \, \widetilde w' \, M_{\lambda^0} \, \widetilde w \, b }
          {b' \, \widetilde w' \, \widetilde w \, b} \, =   \,
      \min_{0\neq a \in \mathbb{R}^R} \, \frac{a' \, \lambda^{0\prime} \, M_{w} \, \lambda^0 \, a }
          {a' \, \lambda^{0\prime} \, \lambda^0 \, a} &> \frac{B}{B_1} \; ,  \; \qquad \text{wpa1.}
   \end{align*}
   Therefore we find for every ${\rm rank}(w) \times 1$ vector $b$ that
$b' \left( \widetilde w' \, M_{\lambda^0} \,  \widetilde w - (B/B_1) \widetilde w' \widetilde w \,\right) b > 0$,
wpa1.
   Thus
$\widetilde w' \, M_{\lambda^0} \, \widetilde w - (B/B_1) \, \widetilde w' \, \widetilde w  >  0$, wpa1.
Multiplying from the left with $\bar w'$ and from the right with $\bar w$ we obtain
$w' \, M_{\lambda^0} \, w - (B/B_1) \, w' \, w  \geq  0$, wpa1.
This is what we wanted to show.
Analogously we can show the statement for $v$.
\end{proof}

As a consequence of the this lemma we obtain some properties of the low-rank regressors summarized in the following
lemma.
\begin{lemma}
   \label{lemma:lowrankprop}
   Let the assumptions \ref{ass:A1} and \ref{ass:A4} be satisfied and let
   $X_{{\rm low},\alpha}=\sum_{l=1}^{K_1} \alpha_l X_l$ be a linear combination of the low-rank regressors.
   Then there exists some constant $B>0$ such that
   \begin{align*}
         \min_{\{\alpha \in \mathbb{R}^{K_1}, \|\alpha\|=1\}}
         \frac{\left\|X_{{\rm low},\alpha} \, M_{f^0} \, X_{{\rm low},\alpha}'\right\|}{NT}
         &> B \; , \qquad \text{wpa1,}
      \nonumber \\
         \min_{\{\alpha \in \mathbb{R}^{K_1}, \|\alpha\|=1\}}
  \frac{\left\|M_{\lambda^0} \, X_{{\rm low},\alpha} \, M_{f^0} \, X_{{\rm low},\alpha}' \, M_{\lambda^0} \right\|}{NT}
         &> B \; , \qquad \text{wpa1.}
   \end{align*}
\end{lemma}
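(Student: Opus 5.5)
The plan is to reduce both bounds to lower bounds on a quadratic form in $\alpha$ and then use Lemma~\ref{lemma:wv} together with Assumption~\ref{ass:A4}(ii)(b). Write $X_{{\rm low},\alpha} = w \, D_\alpha \, v'$, where $D_\alpha = {\rm diag}(\alpha_1,\ldots,\alpha_{K_1})$. Then
\[
X_{{\rm low},\alpha} M_{f^0} X_{{\rm low},\alpha}' = w D_\alpha (v' M_{f^0} v) D_\alpha w' .
\]
By Lemma~\ref{lemma:inequalities}(vi) the spectral norm of a psd matrix is at least its trace divided by its rank. The rank here is at most $K_1$, so it suffices to bound the trace from below by a constant times $NT$:
\[
{\rm Tr} = {\rm Tr}\big[ (v' M_{f^0} v)\, D_\alpha (w'w) D_\alpha \big].
\]
For the second bound we do the same with $w'w$ replaced by $w' M_{\lambda^0} w$.

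Next we apply Lemma~\ref{lemma:wv}. It gives $v'M_{f^0}v \geq B_0 v'v$ and $w'M_{\lambda^0}w \geq B_0 w'w$ wpa1. Since the trace of a product of two psd matrices is monotone in each factor, the second trace is at least
\[
B_0^2 \, {\rm Tr}\big[ (v'v) D_\alpha (w'w) D_\alpha \big] = B_0^2 \, \| w D_\alpha v' \|_F^2 = B_0^2 \, \| X_{{\rm low},\alpha} \|_F^2 .
\]
The same bound with $B_0$ in place of $B_0^2$ holds for the first trace. Both claims therefore reduce to
\[
\min_{\|\alpha\|=1} \frac{1}{NT} \|X_{{\rm low},\alpha}\|_F^2 > c \quad \text{wpa1}
\]
for some $c>0$.

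This last step follows from Assumption~\ref{ass:A4}(i). We have $\frac{1}{NT}\|X_{{\rm low},\alpha}\|_F^2 = \alpha' \big[ (NT)^{-1} \sum_{i,t} X^{\rm low}_{it} X^{{\rm low}\prime}_{it} \big] \alpha$, and the bracketed matrix is the low-rank principal $K_1\times K_1$ block of $(NT)^{-1}\sum_{i,t} X_{it}X_{it}'$. Its probability limit is positive definite because the full matrix's limit is. So its smallest eigenvalue exceeds half the limiting smallest eigenvalue wpa1, uniformly over $\alpha$ on the unit sphere.

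The main obstacle is the justification of the trace monotonicity. We need ${\rm Tr}(AC) \geq {\rm Tr}(BC)$ whenever $A \geq B$ and $C \geq 0$; this holds because ${\rm Tr}((A-B)C) = {\rm Tr}(C^{1/2}(A-B)C^{1/2}) \geq 0$. We apply it twice, once with $C = D_\alpha (w'w) D_\alpha$ and once with $C = D_\alpha (v' M_{f^0} v) D_\alpha$. We also need care that Lemma~\ref{lemma:wv} holds even when $w$ or $v$ is rank deficient. That lemma is already stated in that generality, so the constants obtained are uniform in $\alpha$.
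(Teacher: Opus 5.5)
Your proof is correct and is essentially the paper's argument: factor $X_{{\rm low},\alpha}=w\,{\rm diag}(\alpha)\,v'$, apply Lemma~\ref{lemma:wv} to $v'M_{f^0}v$ and $w'M_{\lambda^0}w$, compare spectral norm and trace using ${\rm rank}\le K_1$, and conclude from the low-rank block of Assumption~\ref{ass:A4}(i). You streamline it by working with traces throughout instead of alternating norm and trace via Lemma~\ref{lemma:inequalities}(ix), which improves the constant from $B_0^2/K_1^2$ to $B_0^2/K_1$; your one slip is the citation, since $\|C\|\geq {\rm Tr}(C)/{\rm rank}(C)$ is Lemma~\ref{lemma:inequalities}(v), whereas (vi) states the reverse inequality $\|C\|\leq{\rm Tr}(C)$.
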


\begin{proof}[\bf Proof]
   Note
   $\left\|M_{\lambda^0} \, X_{{\rm low},\alpha} \, M_{f^0} \, X_{{\rm low},\alpha}' \, M_{\lambda^0} \right\|
     \leq \left\|X_{{\rm low},\alpha} \, M_{f^0} \, X_{{\rm low},\alpha}'\right\|$,
   because $\|M_{\lambda^0}\|=1$, i.e., if we can show the second inequality of the lemma we have also shown
   the first inequality.

   We can write $X_{{\rm low},\alpha} = w \, {\rm diag}(\alpha') \, v'$.
   Using Lemma~\ref{lemma:wv} and part (v), (vi) and (ix) of Lemma~\ref{lemma:inequalities} we find
   \begin{align*}
      \left\|M_{\lambda^0} \, X_{{\rm low},\alpha} \, M_{f^0} \, X_{{\rm low},\alpha}' \, M_{\lambda^0} \right\|
      &=  \left\|M_{\lambda^0} \, w \, {\rm diag}(\alpha')
          \, v' \, M_{f^0} \, v \, {\rm diag}(\alpha') \, w'  M_{\lambda^0} \right\|
      \nonumber \\  &
      \geq  B_0 \, \left\|M_{\lambda^0} \, w \, {\rm diag}(\alpha')
          \, v' \, \, v \, {\rm diag}(\alpha') \, w'  M_{\lambda^0} \right\|
      \nonumber \\  &
      \geq  \frac{B_0}{K_1} \,  {\rm Tr} \left[ M_{\lambda^0} \, w \, {\rm diag}(\alpha')
          \, v' \, \, v \, {\rm diag}(\alpha') \, w'  M_{\lambda^0} \right]
      \nonumber \\  &
      =  \frac{B_0}{K_1} \,  {\rm Tr} \left[ v \, {\rm diag}(\alpha') \, w'  M_{\lambda^0}
                w \, {\rm diag}(\alpha') \, v' \right]
      \nonumber \\  &
      \geq  \frac{B_0}{K_1} \, \left\| v \, {\rm diag}(\alpha') \, w'  M_{\lambda^0}
                w \, {\rm diag}(\alpha') \, v' \right\|
      \nonumber \\  &
      \geq  \frac{B_0^2}{K_1} \, \left\| v \, {\rm diag}(\alpha') \, w' w \, {\rm diag}(\alpha') \, v' \right\|
      \nonumber \\  &
      \geq  \frac{B_0^2}{K_1^2} \, {\rm Tr}\left[ v \, {\rm diag}(\alpha') \, w' w \, {\rm diag}(\alpha') \, v' \right]
      \nonumber \\  &
      = \frac{B_0^2}{K_1^2} {\rm Tr}\left[ X_{{\rm low},\alpha} X'_{{\rm low},\alpha} \right] \; .
   \end{align*}
   Thus we have
   $ \left\|M_{\lambda^0} \, X_{{\rm low},\alpha} \, M_{f^0} \, X_{{\rm low},\alpha}' \, M_{\lambda^0} \right\|
      /(NT) \geq (B_0/K_1)^2 \, \alpha' \, W^{\rm low}_{NT} \, \alpha$ ,
   where the $K_1 \times K_1$ matrix $W^{\rm low}_{NT}$ is defined by
     $W^{\rm low}_{NT,l_1 l_2} = (NT)^{-1} {\rm Tr}\left( X_{l_1} X'_{l_2} \right)$, i.e., it is a submatrix of $W_{NT}$.
   Because $W_{NT}$ and thus $W^{\rm low}_{NT}$ converges to a positive definite matrix the lemma is proven
   by the inequality above.
\end{proof}

Using the above lemmas we can now prove the lower bound on $\widetilde S^{(2)}_{NT}(\beta,f)$ that was used
in the consistency proof. Remember
\begin{align*}
  \widetilde S^{(2)}_{NT}(\beta,f)
    &=   \frac{1}{NT} \; {\rm Tr}\left[ \left(
       \lambda^0 \, f^{0\prime} + \sum_{k=1}^{K} (\beta^0_k-\beta_k) X_{k} \right)
       \, M_f \, \left( \lambda^0 \, f^{0\prime} + \sum_{k=1}^{K}
       (\beta^0_k-\beta_k) X_{k} \right)^{\prime}\, P_{(\lambda^0,w)} \right] \; .
\end{align*}
We want to show under the assumptions of theorem \ref{th:consistency} there exist
finite positive constants $a_0$, $a_1$, $a_2$, $a_3$ and $a_4$ such that
\begin{align*}
  \widetilde S^{(2)}_{NT}(\beta,f)
    &\geq  \, \frac{a_0  \left\| \beta^{\rm low} - \beta^{0,{\rm low}} \right\|^2 }
                { \left\| \beta^{\rm low} - \beta^{0,{\rm low}} \right\|^2
                  + a_1 \left\| \beta^{\rm low} - \beta^{0,{\rm low}} \right\|
                  + a_2 }
        \nonumber \\ & \qquad \qquad
      -  a_3 \left\| \beta^{\rm high} - \beta^{0, {\rm high}} \right\|
  - a_4 \left\| \beta^{\rm high} - \beta^{0, {\rm high}} \right\| \, \left\| \beta^{\rm low} - \beta^{0,{\rm low}} \right\|
         \; , \qquad \text{wpa1.}
\end{align*}

\begin{proof}[\bf Proof of the lower bound on $\widetilde S^{(2)}_{NT}(\beta,f)$.]
Applying Lemma~\ref{lemma:Optimization} and part (xi) of Lemma~\ref{lemma:inequalities} we find
\begin{align*}
  \widetilde S^{(2)}_{NT}(\beta,f)
     &\geq \frac{1}{NT} \; {\mu}_{R+1} \left[
       \left( \lambda^0 \, f^{0\prime} + \sum_{k=1}^{K} (\beta^0_k-\beta_k) X_{k} \right)^{\prime}
        \, P_{(\lambda^0,w)} \,
        \left( \lambda^0 \, f^{0\prime} + \sum_{k=1}^{K} (\beta^0_k-\beta_k) X_{k} \right) \right]
     \nonumber \\
     &= \frac{1}{NT} \; {\mu}_{R+1} \Bigg[
       \left( \lambda^0 \, f^{0\prime} + \sum_{l=1}^{K_1} (\beta^0_l-\beta_l) w_l \, v_l' \right)^{\prime}
       \left( \lambda^0 \, f^{0\prime} + \sum_{l=1}^{K_1} (\beta^0_l-\beta_l) w_l \, v_l' \right)
     \nonumber \\
       &\qquad\qquad\qquad\qquad+ \left( \lambda^0 \, f^{0\prime} + \sum_{l=1}^{K_1} (\beta^0_l-\beta_l) w_l \, v_l' \right)^{\prime}
       P_{(\lambda^0,w)} \sum_{m=K_1}^{K} (\beta^0_m-\beta_m) X_m
     \nonumber \\
       &\qquad\qquad\qquad\qquad+ \sum_{m=K_1}^{K} (\beta^0_m-\beta_m) X_m' P_{(\lambda^0,w)}
         \left( \lambda^0 \, f^{0\prime} + \sum_{l=1}^{K_1} (\beta^0_l-\beta_l) w_l \, v_l' \right)
     \nonumber \\
       &\qquad\qquad\qquad\qquad+ \sum_{m=K_1}^{K} (\beta^0_m-\beta_m) X_m' P_{(\lambda^0,w)}
         \sum_{m=K_1}^{K} (\beta^0_m-\beta_m) X_m
     \Bigg]
     \nonumber \\
     &\geq \frac{1}{NT} \; {\mu}_{R+1} \Bigg[
       \left( \lambda^0 \, f^{0\prime} + \sum_{l=1}^{K_1} (\beta^0_l-\beta_l) w_l \, v_l' \right)^{\prime}
       \left( \lambda^0 \, f^{0\prime} + \sum_{l=1}^{K_1} (\beta^0_l-\beta_l) w_l \, v_l' \right)
     \nonumber \\
       &\qquad\qquad\qquad\qquad+ \left( \lambda^0 \, f^{0\prime} + \sum_{l=1}^{K_1} (\beta^0_l-\beta_l) w_l \, v_l' \right)^{\prime}
       P_{(\lambda^0,w)} \sum_{m=K_1}^{K} (\beta^0_m-\beta_m) X_m
     \nonumber \\
       &\qquad\qquad\qquad\qquad+ \sum_{m=K_1}^{K} (\beta^0_m-\beta_m) X_m' P_{(\lambda^0,w)}
         \left( \lambda^0 \, f^{0\prime} + \sum_{l=1}^{K_1} (\beta^0_l-\beta_l) w_l \, v_l' \right)
     \Bigg]
     \nonumber \\
     &\geq \frac{1}{NT} \; {\mu}_{R+1} \left[
       \left( \lambda^0 \, f^{0\prime} + \sum_{l=1}^{K_1} (\beta^0_l-\beta_l) w_l \, v_l' \right)^{\prime}
       \left( \lambda^0 \, f^{0\prime} + \sum_{l=1}^{K_1} (\beta^0_l-\beta_l) w_l \, v_l' \right) \right]
     \nonumber \\
          &\quad
          - a_3 \left\| \beta^{\rm high} - \beta^{0, {\rm high}} \right\|
          - a_4 \left\| \beta^{\rm high} - \beta^{0, {\rm high}} \right\|
                \left\| \beta^{\rm low} - \beta^{0,{\rm low}} \right\| \; , \qquad \text{wpa1,}
\end{align*}
where $a_3>0$ and $a_4>0$ are appropriate constants.
For the last step we used part (xii) of Lemma~\ref{lemma:inequalities} and the fact that
\begin{align*}
   & \frac 1 {NT} \left\| \sum_{m=K_1}^{K} (\beta^0_m-\beta_m) X_m' P_{(\lambda^0,w)}
         \left( \lambda^0 \, f^{0\prime} + \sum_{l=1}^{K_1} (\beta^0_l-\beta_l) w_l \, v_l' \right) \right\|
   \nonumber \\
    & \qquad \leq K  \, \left\| \beta^{\rm high} - \beta^{0, {\rm high}} \right\|
          \max_m \left\| \frac {X_m}{\sqrt{NT}}  \right\|
         \left( \left\| \frac {\lambda^0 \, f^{0\prime}}{\sqrt{NT}}  \right\|
                + K \, \left\| \beta^{\rm low} - \beta^{0,{\rm low}} \right\| \,
                      \max_l \left\| \frac {w_l v_l'}{\sqrt{NT}} \right\| \right) \; .
\end{align*}
Our assumptions guarantee the operator norms of $\lambda^0 \, f^{0\prime}/{\sqrt{NT}}$
and ${X_m}/{\sqrt{NT}}$ are bounded from above as $N,T \rightarrow \infty$, which results in
finite constants $a_3$ and $a_4$.

We write the above result as
$\widetilde S^{(2)}_{NT}(\beta,f)
 \geq {\mu}_{R+1}(A'A)/(NT) + \text{terms containing $\beta^{\rm high}$}$,
where we defined
$A=\lambda^0 \, f^{0\prime} + \sum_{l=1}^{K_1} (\beta^0_l-\beta_l) \, w_l \, v_l'$.
We also write $A = A_1 + A_2 + A_3$,
with $A_1 = M_w \, A \, P_{f^0} =  M_{w} \, \lambda^0 \, f^{0\prime}$,
$A_2 = P_w \, A \, M_{f^0} = \sum_{l=1}^{K_1} (\beta^0_l-\beta_l) \, w_l \, v_l' \, M_{f^0}$,
$A_3 = P_w \, A \, P_{f^0} = P_w \, \lambda^0 \, f^{0\prime}
       + \sum_{l=1}^{K_1} (\beta^0_l-\beta_l) \, w_l \, v_l' \, P_f$.
We then find $A'A=A_1' A_1  + (A'_2 + A_3') (A_2 + A_3)$ and
\begin{align*}
   A' A \, &\geq \, A' A  - (a^{1/2} A'_3 + a^{-1/2} A_2') (a^{1/2} A_3 + a^{-1/2} A_2)
      \nonumber \\
           &=  \left[ A'_1 A_1 - (a-1) \, A'_3 A_3 \right] \, + \, (1-a^{-1}) A'_2 A_2 \; ,
\end{align*}
where $\geq$ for matrices refers to the difference being positive definite, and $a$ is a
positive number. We choose
$a = 1 + {\mu}_{R}(A'_1 A_1) / (2 \, \|A_3\|^2)$.
The reason for this choice becomes clear below.

Note $\left[A'_1 A_1 - (a-1) \, A'_3 A_3 \right]$ has at most rank $R$
(asymptotically it has exactly rank $R$).
The non-zero eigenvalues of $A'A$ are therefore given by the (at most) $R$ non-zero eigenvalues of
$\left[ A'_1 A_1 - (a-1) \, A'_3 A_3 \right]$
and the non-zero eigenvalues of $(1-a^{-1}) A'_2 A_2$,
the largest one of the latter being given given by the operator norm
$(1-a^{-1}) \|A_2 \|^2$. We therefore find
\begin{align*}
  \frac{1}{NT} \; {\mu}_{R+1} \left( A'A \right)
        &\geq \frac{1}{NT} \; {\mu}_{R+1}
           \left[ \left(  A'_1 A_1 - (a-1) \, A'_3 A_3 \right) \, + \,  (1-a^{-1}) A'_2 A_2 \right]
       \nonumber \\
        &\geq \frac{1}{NT} \,
          \min\left\{ (1-a^{-1}) \|A_2\|^2 \; , \; \; {\mu}_{R}\left[A'_1 A_1 - (a-1) \, A'_3 A_3 \right]  \right\} \; .
\end{align*}
Using Lemma~\ref{lemma:inequalities}(xii) and our particular choice of $a$ we find
\begin{align*}
   {\mu}_{R} \, \left[ A'_1 A_1 - (a-1) \, A'_3 A_3 \right]
      &\geq \, {\mu}_{R}(A'_1 A_1) -  \left\| (a-1) A'_3 A_3 \right\|
   \nonumber \\
      &= \, \frac{1}{2} \, {\mu}_{R}(A'_1 A_1) \; .
\end{align*}
Therefore
\begin{align*}
   \frac 1 {NT} \, {\mu}_{R+1}(A'A)
        &\geq \frac{1}{2 \,NT} \, {\mu}_{R}(A'_1 A_1) \,
          \min\left\{ 1 \; , \; \; \frac{2 \, \|A_2\|^2}
                           {2 \, \|A_3\|^2 + {\mu}_{R}(A'_1 A_1)} \right\}
   \nonumber \\
        &\geq \frac{1}{NT} \, \frac{\|A_2\|^2 \, {\mu}_{R}(A'_1 A_1)}
                           {2 \, \|A\|^2 + {\mu}_{R}(A'_1 A_1)} \; ,
\end{align*}
where we used $\|A\|\geq\|A_3\|$ and $\|A\|\geq\|A_2\|$.

Our assumptions guarantee there exist positive constants $c_0$, $c_1$, $c_2$, and $c_3$ such that
\begin{align*}
   \frac {\|A\|} {\sqrt{NT}}
        &\leq  \frac {\|\lambda^0 \, f^{0\prime}\|} {\sqrt{NT}}
              + \sum_{l=1}^{K_1} |\beta^0_l-\beta_l| \frac {\| w_l \, v_l' \|} {\sqrt{NT}}
        \leq  c_0 + c_1 \left\| \beta^{\rm low} - \beta^{0,{\rm low}} \right\| \, , \quad \text{wpa1} \; ,
  \nonumber \\
   \frac {{\mu}_{R}(A'_1 A_1)} {NT}
        &= \frac{{\mu}_{R}\left( f^0 \, \lambda^{0\prime} \, M_{w} \, \lambda^0 \, f^{0\prime} \right)}
                {NT}
           \geq c_2 \, , \quad \text{wpa1} \; ,
  \nonumber \\
   \frac{\|A_2\|^2}{NT} &= {\mu}_{1}
                     \left[ \sum_{l_1=1}^{K_1} (\beta^0_{l_1}-\beta_{l_1}) \, w_{l_1} \, v_{l_1}' \, M_{f^0} \,
                            \sum_{l_2=1}^{K_1} (\beta^0_{l_2}-\beta_{l_2}) \, v_{l_2} \, w_{l_2}' \right]
                 \nonumber \\
                      &\geq c_3 \left\| \beta^{\rm low} - \beta^{0,{\rm low}} \right\|^2 \, , \quad \text{wpa1} \; ,
\end{align*}
were for the last inequality we used Lemma~\ref{lemma:lowrankprop}.

We thus have
\begin{align*}
  \frac{1}{NT} \; {\mu}_{R+1} \left( A'A \right)
        &\geq \frac{c_3 \left\| \beta^{\rm low} - \beta^{0,{\rm low}} \right\|^2}
                   {1 + \frac{2}  {c_2} \left(c_0 + c_1 \left\| \beta^{\rm low} - \beta^{0,{\rm low}} \right\|\right)^2
                            } \, , \quad \text{wpa1} \; .
\end{align*}
Defining $a_0=\frac{c_2 c_3}{2 c_1^2}$, $a_1=\frac{2 c_0}{c_1}$ and $a_2=\frac{c_2}{2 c_1^2}$ we thus obtain
\begin{align*}
  \frac{1}{NT} \; {\mu}_{R+1} \left( A'A \right)
        &\geq \frac{a_0  \left\| \beta^{\rm low} - \beta^{0,{\rm low}} \right\|^2 }
                { \left\| \beta^{\rm low} - \beta^{0,{\rm low}} \right\|^2
                  + a_1 \left\| \beta^{\rm low} - \beta^{0,{\rm low}} \right\|
                  + a_2 } \, , \quad \text{wpa1} \; ,
\end{align*}
i.e., we have shown the desired bound on $\widetilde S^{(2)}_{NT}(\beta,f)$.
\end{proof}

\section{Regarding the Proof of Corollary \ref{cor:limit}}

As discussed in the main text, the proof of Corollary \ref{cor:limit}
is provided in Moon and Weidner~\cite*{MoonWeidner2015}.
All that is left to show here
is the matrix $W_{NT}=W_{NT}(\lambda^0,\, f^0,\, X_{k})$ does not become singular as $N,T \rightarrow \infty$
under our assumptions.

\begin{proof}[\bf Proof]
  Remember
  \begin{align*}
      W_{NT} &= \frac 1 {NT} {\rm Tr}(M_{f^0}
      \, X^{\prime}_{k_1} \, M_{\lambda^0} \, X_{k_2}) \; .
  \end{align*}
  The smallest eigenvalue of the symmetric matrix $W(\lambda^0,\, f^0,\, X_{k})$ is given by
  \begin{align*}
     {\mu}_K \left( W_{NT} \right)
          &= \min_{\{a \in \mathbb{R}^K, \; a \neq 0\}}
               \frac{a' \,  W_{NT} \, a}  {\|a\|^2}
       \nonumber \\
          &= \min_{\{a \in \mathbb{R}^K, \; a \neq 0\}}
            \frac 1 {NT \, \|a\|^2}
            {\rm Tr}\left[ M_{f^0} \, \left( \sum_{k_1=1}^K \,  a_{k_1} \, X^{\prime}_{k_1} \right)
               \, M_{\lambda^0} \, \left(\sum_{k_2=1}^K \, a_{k_2} \, X_{k_2} \right) \right]
       \nonumber \\
          &= \min_{\begin{minipage}{2.8cm}\begin{center}\scriptsize
                   $\{\alpha \in \mathbb{R}^{K_1}, \; \varphi \in \mathbb{R}^{K_2}$\\
                   $\alpha \neq 0, \; \varphi\neq 0\}$\end{center}\end{minipage}}
            \frac {
            {\rm Tr}\left[ M_{f^0} \, \left( X'_{{\rm low},\varphi} + X'_{{\rm high},\alpha} \right)
               \, M_{\lambda^0} \,  \left( X_{{\rm low},\varphi} + X_{{\rm high},\alpha} \right) \right] }
               {NT \, \left( \|\alpha\|^2 + \|\varphi\|^2 \right)} \; ,
  \end{align*}
  where we decomposed $a=(\varphi',\alpha')'$, with $\varphi$ and $\alpha$ being vectors of length $K_1$ and $K_2$,
  respectively, and we defined linear combinations
  of high- and low-rank regressors\footnote{As in assumption \ref{ass:A4} the components of $\alpha$
  are denoted $\alpha_{K_1+1},\ldots,\alpha_{K}$ to simplify notation.}
  \begin{align*}
     X_{{\rm low},\varphi} &=  \sum_{l=1}^{K_1} \, \varphi_{l} \, X_{l}  \; , &
     X_{{\rm high},\alpha} &=  \sum_{m=K_1+1}^{K} \, \alpha_{m} \, X_{m}  \; .
  \end{align*}
  We have $M_{\lambda^0} =  M_{(\lambda^0,w)} + P_{(M_{\lambda^0} w)}$, where $w$ is the
  $N \times K_1$ matrix defined in assumption \ref{ass:A4}, i.e., $(\lambda^0,w)$
  is an $N \times (R+K_1)$ matrix, whereas $M_{\lambda^0} w$ is also an $N \times K_1$ matrix.
  Using this we obtain
  \begin{align}
      & {\mu}_K \left( W_{NT} \right)
      \nonumber \\
       & \quad = \min_{\begin{minipage}{2.8cm}\begin{center}\scriptsize
                   $\{\varphi \in \mathbb{R}^{K_1}, \; \alpha \in \mathbb{R}^{K_2}$\\
                   $\varphi \neq 0, \; \alpha\neq 0\}$\end{center}\end{minipage}}
            \frac 1 {NT \, \left( \|\varphi\|^2 + \|\alpha\|^2 \right)}
            \bigg\{
            {\rm Tr}\left[ M_{f^0} \, \left( X'_{{\rm low},\varphi} + X'_{{\rm high},\alpha} \right)
               \, M_{(\lambda^0,w)} \,  \left( X_{{\rm low},\varphi} + X_{{\rm high},\alpha} \right) \right]
         \nonumber \\ & \qquad \qquad \qquad \qquad \qquad \qquad \qquad\qquad
           + {\rm Tr}\left[ M_{f^0} \, \left( X'_{{\rm low},\varphi} + X'_{{\rm high},\alpha} \right)
       \, P_{(M_{\lambda^0} w)} \,  \left( X_{{\rm low},\varphi} + X_{{\rm high},\alpha} \right) \right] \bigg\}
      \nonumber \\
       & \quad = \min_{\begin{minipage}{2.8cm}\begin{center}\scriptsize
                   $\{\varphi \in \mathbb{R}^{K_1}, \; \alpha \in \mathbb{R}^{K_2}$\\
                   $\varphi \neq 0, \; \alpha\neq 0\}$\end{center}\end{minipage}}
            \frac 1 {NT \, \left( \|\varphi\|^2 + \|\alpha\|^2 \right)}
            \bigg\{
            {\rm Tr}\left[ M_{f^0} \, X'_{{\rm high},\alpha} \, M_{(\lambda^0,w)} \,  X_{{\rm high},\alpha} \right]
         \nonumber \\ & \qquad \qquad \qquad \qquad \qquad \qquad \qquad \qquad
           + {\rm Tr}\left[ M_{f^0} \, \left( X'_{{\rm low},\varphi} + X'_{{\rm high},\alpha} \right)
       \, P_{(M_{\lambda^0} w)} \,  \left( X_{{\rm low},\varphi} + X_{{\rm high},\alpha} \right) \right] \bigg\} \, .
       \label{eq:boundEK1}
  \end{align}
  We note there exists finite positive constants $c_1$, $c_2$, and $c_3$ such that
  \begin{align}
     \frac 1 {NT} {\rm Tr}\left[ M_{f^0} \, X'_{{\rm high},\alpha} \, M_{(\lambda^0,w)} \,  X_{{\rm high},\alpha} \right]
                &\geq  \, c_1 \| \alpha \|^2 \; , \quad \text{wpa1,}
   \nonumber \\
     \frac 1 {NT} {\rm Tr}\left[ M_{f^0} \, \left( X'_{{\rm low},\varphi} + X'_{{\rm high},\alpha} \right)
    \, P_{(M_{\lambda^0} w)} \,  \left( X_{{\rm low},\varphi} + X_{{\rm high},\alpha} \right) \right] &\geq 0 \; ,
   \nonumber \\
     \frac 1 {NT} {\rm Tr}\left[ M_{f^0} \,  X'_{{\rm low},\varphi}
    \, P_{(M_{\lambda^0} w)} \,  X_{{\rm low},\varphi}  \right] &\geq \, c_2 \,  \| \varphi \|^2 \; , \quad \text{wpa1,}
   \nonumber \\
     \frac 1 {NT} {\rm Tr}\left[ M_{f^0} \, X'_{{\rm low},\varphi}
    \, P_{(M_{\lambda^0} w)} \,   X_{{\rm high},\alpha}  \right] &\geq - \frac {c_3} 2 \, \| \varphi \| \| \alpha \| \; ,
        \quad \text{wpa1,}
   \nonumber \\
     \frac 1 {NT} {\rm Tr}\left[ M_{f^0} \,  X'_{{\rm high},\alpha}
    \, P_{(M_{\lambda^0} w)} \,   X_{{\rm high},\alpha} \right] &\geq 0 \; ,
    \label{inequ_highlow}
  \end{align}
  and we want to justify these inequalities now.
   The second and the last equation in \eqref{inequ_highlow} are true because,
  e.g., ${\rm Tr}\left[ M_{f^0} \,  X'_{{\rm high},\alpha} \, P_{(M_{\lambda^0} w)} \,   X_{{\rm high},\alpha} \right]
={\rm Tr}\left[ M_{f^0} \,  X'_{{\rm high},\alpha} \, P_{(M_{\lambda^0} w)} \,   X_{{\rm high},\alpha} \, M_{f^0} \right]$,
  and the trace of a symmetric positive semi-definite matrix is non-negative.
  The first inequality in \eqref{inequ_highlow} is true because ${\rm rank}(f^0)+{\rm rank}(\lambda^0,w)=2R+K_1$
  and using Lemma~\ref{lemma:Optimization} and assumption \ref{ass:A4} we have
  \begin{align*}
     \frac 1 {NT\|\alpha\|^2}
     {\rm Tr}\left[ M_{f^0} \, X'_{{\rm high},\alpha} \, M_{(\lambda^0,w)} \,  X_{{\rm high},\alpha} \right]
         \geq
     \frac 1 {NT\|\alpha\|^2}
     {\mu}_{2R+K_1+1}\left[  X_{{\rm high},\alpha} \, X'_{{\rm high},\alpha} \right]
         &> b \; , \quad \text{wpa1},
  \end{align*}
  i.e., we can set $c_1=b$.
  The third inequality in \eqref{inequ_highlow} is true because according Lemma~\ref{lemma:inequalities}(v) we have
  \begin{align*}
     \frac 1 {NT} {\rm Tr}\left[ M_{f^0} \, X'_{{\rm low},\varphi}
    \, P_{(M_{\lambda^0} w)} \,   X_{{\rm high},\alpha}  \right]
       &\geq - \, \frac {K_1} {NT} \, \left\| X_{{\rm low},\varphi} \right\| \left\| X_{{\rm high},\alpha} \right\|
   \nonumber \\
       &\geq - \, \frac {K_1} {NT} \, \left\| X_{{\rm low},\varphi} \right\|_F \left\| X_{{\rm high},\alpha} \right\|_F
   \nonumber \\
       &\geq - \, K_1 \, K_1 \, K_2 \, \|\varphi\| \, \|\alpha\| \,
              \, \max_{k_1=1\ldots K_1} \left\| \frac{X_{k_1}} {\sqrt{NT}} \right\|_F
              \, \max_{k_2=K_1+1\ldots K} \left\| \frac{X_{k_2}} {\sqrt{NT}} \right\|_F
   \nonumber \\
       &\geq - \frac {c_3} 2 \, \|\varphi\| \, \|\alpha\| \; ,
  \end{align*}
  where we used that assumption \ref{ass:A4} implies $\left\| X_{k} / \sqrt{NT} \right\|_F < C$
  holds wpa1 for some constant $C$ as, and we set $c_3 = K_1 \, K_1 \, K_2 \, C^2$.
  Finally, we have to argue that the third inequality in \eqref{inequ_highlow} holds.
  Note $X'_{{\rm low},\varphi} \, P_{(M_{\lambda^0} w)} \,  X_{{\rm low},\varphi}
  = X'_{{\rm low},\varphi} \, M_{\lambda^0} \,  X_{{\rm low},\varphi}$, i.e., we need to show
  \begin{align*}
      \frac 1 {NT} {\rm Tr}\left[ M_{f^0} \,  X'_{{\rm low},\varphi}
      \, M_{\lambda^0} \,  X_{{\rm low},\varphi}  \right] &\geq \, c_2 \,  \| \varphi \|^2 \; .
  \end{align*}
  Using part (vi) of Lemma~\ref{lemma:inequalities} we find
  \begin{align*}
     \frac 1 {NT} {\rm Tr}\left[ M_{f^0} \,  X'_{{\rm low},\varphi}
      \, M_{\lambda^0} \,  X_{{\rm low},\varphi}  \right]
   &= \frac 1 {NT}
   {\rm Tr}\left[ M_{\lambda^0} \,  X_{{\rm low},\varphi} \, M_{f^0} \,  X'_{{\rm low},\varphi} \, M_{\lambda^0} \right]
   \nonumber \\
   &\geq \frac 1 {NT}
         \left\| M_{\lambda^0} \,  X_{{\rm low},\varphi} \, M_{f^0} \,  X'_{{\rm low},\varphi} \, M_{\lambda^0} \right\| \; ,
  \end{align*}
  and according to Lemma~\ref{lemma:lowrankprop} this expression is bounded by some positive constant times
  $\| \varphi \|^2$ (in the lemma we have $\| \varphi \|=1$, but all expressions are homogeneous in $\|\varphi\|$).

  Using the inequalities \eqref{inequ_highlow} in equation \eqref{eq:boundEK1} we obtain
  \begin{align*}
     {\mu}_K \left( W_{NT} \right)
                  &\geq \min_{\begin{minipage}{2.8cm}\begin{center}\scriptsize
                   $\{\varphi \in \mathbb{R}^{K_1}, \; \alpha \in \mathbb{R}^{K_2}$\\
                   $\varphi \neq 0, \; \alpha\neq 0\}$\end{center}\end{minipage}}
            \frac 1 {\|\varphi\|^2 + \|\alpha\|^2}
               \left\{ c_1 \| \alpha \|^2 + \max\left[ 0, \, c_2 \| \varphi \|^2 - c_3 \|\varphi\| \|\alpha\| \right] \right\}
      \nonumber \\
      & \geq \min\left( \frac {c_2} 2 , \,  \frac{c_1 c_2^2} {c_2^2+c_3^2} \right)   \, , \quad \text{wpa1}.
  \end{align*}
  Thus, the smallest eigenvalue of $W_{NT}$ is bounded from below by a positive constant as $N,T \rightarrow \infty$,
  i.e., $W_{NT}$ is non-degenerate and invertible.
\end{proof}

\section{Proof of Examples for Assumption~\ref{ass:A5}}

\begin{proof}[\bf Proof of Example 1.]

We want to show the conditions of Assumption~\ref{ass:A5} are satisfied.
Conditions (i)-(iii) are satisfied by the assumptions of the example.

For condition (iv), notice ${\rm Cov} \left( X_{it}, X_{is} | \mathcal{C} \right) = \mathbb{E} \left( U_{it} U_{is} \right)$. Because $|\beta^0| < 1$ and $\sup_{it} \mathbb{E}(e_{it}^2) < \infty$, it follows

\begin{eqnarray*}
    \frac{1}{NT} \sum_{i=1}^{N} \sum_{t,s=1}^T \left| {\rm Cov} \left( X_{it}, X_{is} | \mathcal{C} \right) \right| &=&
     \frac{1}{NT} \sum_{i=1}^{N} \sum_{t,s=1}^T \left| \mathbb{E} \left( U_{it} U_{is} \right) \right| \\
    &=&  \frac{1}{NT} \sum_{i=1}^{N} \sum_{t,s=1}^T \sum_{p,q = 0}^{\infty}  \left| (\beta^0)^{p+q} \mathbb{E} \left( e_{it-p} e_{is-q} \right) \right|
    < \infty .
\end{eqnarray*}

For condition (v), notice by the independence between the sigma field $%
\mathcal{C}$ and the error terms $\left\{ e_{it}\right\} $ that we have for
some finite constant $M,$
\begin{eqnarray*}
    && \frac{1}{NT^{2}} \sum_{i=1}^{N} \sum_{t,s,u,v=1}^{T} \left\vert {\rm Cov}\left(
e_{it}\widetilde{X}_{is},e_{iu}\widetilde{X}_{iv}|\mathcal{C}\right) \right\vert  \\
   &=& \frac{1}{NT^{2}}\sum_{i=1}^{N}\sum_{t,s,u,v=1}^{T}\left\vert {\rm Cov}\left(
e_{it}U_{is},e_{iu}U_{iv}\right) \right\vert  \\
   &=& \frac{1}{NT^{2}}\sum_{i=1}^{N}\sum_{t,s,u,v=1}^{T}\sum_{p,q=0}^{\infty
}\left\vert \left( \beta ^{0}\right) ^{p+q}\mathbb{E}\left(
e_{it}e_{is-p}e_{iu}e_{iv-q}\right) -\left( \beta ^{0}\right) ^{p}\mathbb{E}%
\left( e_{it}e_{is-p}\right) \left( \beta ^{0}\right) ^{q}\mathbb{E}\left(
e_{iu}e_{iv-q}\right) \right\vert  \\
  &\leq & \frac{M}{T^{2}}\sum_{t,s,u,v=1}^{T}\sum_{p,q=0}^{\infty }\left\vert
\beta ^{0}\right\vert ^{p+q}\left[ \mathbb{I}\left\{ t=u\right\} \mathbb{I}%
\left\{ s-p=v-q\right\} +\mathbb{I}\left\{ t=v-q\right\} \mathbb{I}\left\{
s-p=u\right\} \right]  \\
  &=& \frac{M}{T^{2}}\sum_{t,u,s,v=1}^{T}\sum_{k=-\infty }^{s}\sum_{l=-\infty
}^{v}\left\vert \beta ^{0}\right\vert ^{s-k+v-l}\mathbb{I}\left\{
t=u\right\} \mathbb{I}\left\{ k=l\right\} +M\left( \frac{1}{T}\sum
_{\substack{ s,u=1 \\ s-u\geq 0}}^{T}\left\vert \beta ^{0}\right\vert
^{s-u}\right) \left( \frac{1}{T}\sum_{\substack{ v,t=1 \\ v-t\geq 0}}%
^{T}\left\vert \beta ^{0}\right\vert ^{v-t}\right)  \\
  &=& \frac{M}{T}\sum_{s,v=1}^{T}\sum_{k=-\infty }^{\min \left\{ s,v\right\}
}\left\vert \beta ^{0}\right\vert ^{s+v-2k}+M\left( \frac{1}{T}\sum
_{\substack{ s,u=1 \\ s-u\geq 0}}^{T}\left\vert \beta ^{0}\right\vert
^{s-u}\right) \left( \frac{1}{T}\sum_{\substack{ v,t=1 \\ v-t\geq 0}}%
^{T}\left\vert \beta ^{0}\right\vert ^{v-t}\right) .
\end{eqnarray*}%
Notice
\begin{eqnarray*}
&&\frac{1}{T}\sum_{s,v=1}^{T}\sum_{k=-\infty }^{\min \left\{ s,v\right\}
}\left\vert \beta ^{0}\right\vert ^{s+v-2k} \\
&=&\frac{2}{T}\sum_{s=2}^{T}\sum_{v=1}^{s}\sum_{k=-\infty }^{v}\left\vert
\beta ^{0}\right\vert ^{s-v+2(v-k)}+\frac{2}{T}\sum_{s=1}^{T}\sum_{k=-\infty
}^{s}\left\vert \beta ^{0}\right\vert ^{2(s-k)} \\
&=&\frac{2}{T}\sum_{s=2}^{T}\sum_{v=1}^{s}\left\vert \beta ^{0}\right\vert
^{s-v}\sum_{l=0}^{\infty }\left\vert \beta ^{0}\right\vert ^{2l}+\frac{2}{T}%
\sum_{s=1}^{T}\sum_{l=0}^{\infty }\left\vert \beta ^{0}\right\vert ^{2l} \\
&=&\frac{2}{1-\left\vert \beta ^{0}\right\vert ^{2}}\frac{1}{T}%
\sum_{s=2}^{T}\sum_{v=1}^{s}\left\vert \beta ^{0}\right\vert ^{s-v}+\frac{2}{%
1-\left\vert \beta ^{0}\right\vert ^{2}} \\
&=&\left( \frac{2}{1-\left\vert \beta ^{0}\right\vert ^{2}}\right)
\sum_{l=1}^{T-1}\left\vert \beta ^{0}\right\vert ^{l}\left( 1-\frac{l}{T}%
\right) +\frac{2}{1-\left\vert \beta ^{0}\right\vert ^{2}} \\
&=&O\left( 1\right) ,
\end{eqnarray*}%
and
\begin{equation*}
\frac{1}{T}\sum_{\substack{ s,u=1 \\ s-u\geq 0}}^{T}\left\vert \beta
^{0}\right\vert ^{s-u}=\frac{1}{T}\sum_{s=1}^{T}\sum_{u=1}^{s}\left\vert
\beta ^{0}\right\vert ^{s-u}=\sum_{l=0}^{T-1}\left\vert \beta
^{0}\right\vert ^{l}\left( 1-\frac{l}{T}\right) =O\left( 1\right) .
\end{equation*}%
Therefore, we have the desired result
\begin{equation*}
\frac{1}{NT^{2}}\sum_{i=1}^{N}\sum_{t,s,u,v=1}^{T}\left\vert {\rm Cov}\left( e_{it}%
\widetilde{X}_{is},e_{iu}\widetilde{X}_{iv}|\mathcal{C}\right) \right\vert
={\cal O}_{p}\left( 1\right) .
\end{equation*}
\end{proof}

\textsc{Preliminaries for Proof of Example 2}

\begin{itemize}
\item Although we observe $X_{it}$ for $1\leq t\leq T,$ here we treat $%
Z_{it}=\left( e_{it},X_{it}\right) $ as having an infinite past and future.
Define
\begin{equation*}
\mathcal{G}_{\tau }^{t}\left( i\right) =
{\cal C} \vee \sigma \left( \left\{ X_{is}:\tau
\leq s\leq t\right\}  \right) \text{ and }\mathcal{H}_{\tau
}^{t}\left( i\right) =
{\cal C} \vee  \sigma \left( \left\{ Z_{it}:\tau \leq s\leq t\right\}\right) .
\end{equation*}%
Then, by definition, we have $\mathcal{G}_{\tau }^{t}\left( i\right) ,%
\mathcal{H}_{\tau }^{t}\left( i\right) \subset \mathcal{F}_{\tau }^{t}\left(
i\right) $ for all $\tau ,t,i.$ By Assumption (iv) of Example 2, the time
series of $\left\{ X_{it}:-\infty <t<\infty \right\} $ and $\left\{
Z_{it}:-\infty <t<\infty \right\} $ are conditional $\alpha$-mixing
conditioning on $\mathcal{C}$ uniformly in $i.$

\item Mixing inequality: The following inequality is a conditional version
of the $\alpha$-mixing inequality of Hall and Heyde~\cite*{HallHeyde1980}, p.~278. Suppose
 $X_{it}$ is a $\mathcal{F}_{t}$-measurable  random
variable with $\mathbb{E}\left( \left\vert X_{it}\right\vert ^{\max \left\{
p,q\right\} }|\mathcal{C}\right) <\infty ,$ where $p,q>1$ with $1/p+1/q<1.$
Denote $\left\Vert X_{it}\right\Vert _{\mathcal{C},p}=\left( \mathbb{E}%
\left( \left\vert X_{it}\right\vert ^{p}|\mathcal{C}\right) \right)
^{1/p}. $ Then, for each $i,$ we have
\begin{equation}
\left\vert {\rm Cov}\left( X_{it},X_{it+m}|\mathcal{C}\right) \right\vert \leq
8\left\Vert X_{it}\right\Vert _{\mathcal{C},p}\left\Vert
X_{it+m}\right\Vert _{\mathcal{C},q}\alpha _{m}^{1-\frac{1}{p}-\frac{1}{q}%
}\left( i\right) .  \label{eq:mixing inequality}
\end{equation}
\end{itemize}

\begin{proof}[\bf Proof of Example 2.]
Again, we want to show the conditions of Assumption~\ref{ass:A5} are satisfied.
Conditions (i)-(iii) are satisfied by the assumptions of the example.

For condition (iv), we apply the mixing inequality $\left( \ref{eq:mixing
inequality}\right) $ with $p=q>4$. Then, we have%
\begin{eqnarray*}
&&\frac{1}{NT}\sum_{i=1}^{N}\sum_{t,s=1}^{T}\left\vert {\rm Cov}\left(
X_{it},X_{is}|\mathcal{C}\right) \right\vert  \\
&\leq &\frac{2}{NT}\sum_{i=1}^{N}\sum_{t=1}^{T}\sum_{m=0}^{T-t}\left\vert
{\rm Cov}\left( X_{it},X_{it+m}|\mathcal{C}\right) \right\vert =\frac{2}{NT}%
\sum_{i=1}^{N}\sum_{m=0}^{T-1}\sum_{t=1}^{T-m}\left\vert {\rm Cov}\left(
X_{it},X_{it+m}|\mathcal{C}\right) \right\vert  \\
&=&\frac{16}{NT}\sum_{i=1}^{N}\sum_{m=0}^{T-1}\sum_{t=1}^{T-m}\left\Vert
X_{it}\right\Vert _{\mathcal{C},p}\left\Vert X_{it+m}\right\Vert _{\mathcal{C%
},p}\alpha _{m}\left( i\right) ^{\frac{p-2}{P}} \\
&\leq &16\left( \sup_{i,t}\left\Vert X_{it}\right\Vert _{\mathcal{C}%
,p}^{2}\right) \sum_{m=0}^{\infty }\alpha _{m}^{\frac{p-2}{P}} \\
&\leq &{\cal O}_{p}\left( 1\right),
\end{eqnarray*}%
where the last line holds because $\sup_{i,t}\left\Vert X_{it}\right\Vert _{\mathcal{C}%
,p}^{2}={\cal O}_{p}\left( 1\right) $ for some $p>4$ as assumed in the example (2), and $\sum_{m=0}^{\infty
}\alpha _{m}^{\frac{p-2}{P}}= \sum_{m=0}^{\infty} m^{-\zeta\frac{p-2}{P}} = {\cal O}\left( 1\right)$ because of $\zeta > 3\frac{4p}{4p-1}$ and $p>4$.

For condition (v), we need to show
\begin{equation*}
\frac{1}{NT^{2}}\sum_{i=1}^{N}\sum_{t,s,u,v=1}^{T}\left\vert {\rm Cov}\left( e_{it}%
\widetilde{X}_{is},e_{iu}\widetilde{X}_{iv}|\mathcal{C}\right) \right\vert
={\cal O}_{p}\left( 1\right) .
\end{equation*}%
Notice
\begin{eqnarray*}
&&\frac{1}{NT^{2}}\sum_{i=1}^{N}\sum_{t,s,u,v=1}^{T}\left\vert {\rm Cov}\left(
e_{it}\widetilde{X}_{is},e_{iu}\widetilde{X}_{iv}|\mathcal{C}\right) \right\vert  \\
&=&\frac{1}{NT^{2}}\sum_{i=1}^{N}\sum_{t,s,u,v=1}^{T}\left\vert \mathbb{E}%
\left( e_{it}\widetilde{X}_{is}e_{iu}\widetilde{X}_{iv}|\mathcal{C}\right) -\mathbb{E%
}\left( e_{it}\widetilde{X}_{is}|\mathcal{C}\right) \mathbb{E}\left( e_{iu}%
\widetilde{X}_{iv}|\mathcal{C}\right) \right\vert  \\
&\leq &\frac{1}{NT^{2}}\sum_{i=1}^{N}\sum_{t,s,u,v=1}^{T}\left\vert \mathbb{E%
}\left( e_{it}\widetilde{X}_{is}e_{iu}\widetilde{X}_{iv}|\mathcal{C}\right)
\right\vert +\frac{1}{N}\sum_{i=1}^{N}\left( \frac{1}{T}\sum_{t,s=1}^{T}%
\mathbb{E}\left( e_{it}\widetilde{X}_{is}|\mathcal{C}\right) \right) ^{2} \\
&=&I+II,\text{ say.}
\end{eqnarray*}%
First, for term $I,$ there are a finite number of different orderings among
the indices $t,s,u,v.$ We consider the case $t\leq s\leq u\leq v$ and
establish the desired result. The other cases can be shown analogously. Note
\begin{eqnarray*}
&&\frac{1}{NT^{2}}\sum_{i=1}^{N}\sum_{t=1}^{T}\sum_{k=0}^{T-t}%
\sum_{l=0}^{T-k}\sum_{m=0}^{T-l}\left\vert \mathbb{E}\left( e_{it}\widetilde{X}%
_{it+k}e_{it+k+l}\widetilde{X}_{it+k+l+m}|\mathcal{C}\right) \right\vert  \\
&\leq &\frac{1}{N}\sum_{i=1}^{N}\frac{1}{T^{2}}\sum_{t=1}^{T}\sum_{\substack{
0\leq l,m\leq k \\ 0\leq k+l+m\leq T-t}}\left\vert \mathbb{E}\left(
e_{it}\left( \widetilde{X}_{it+k}e_{it+k+l}\widetilde{X}_{it+k+l+m}\right) |\mathcal{%
C}\right) \right\vert  \\
&&+\frac{1}{N}\sum_{i=1}^{N}\frac{1}{T^{2}}\sum_{t=1}^{T}\sum_{\substack{ %
0\leq k,m\leq l \\ 0\leq k+l+m\leq T-t}}\bigg\vert \mathbb{E}\left[ \left(
e_{it}\widetilde{X}_{it+k}\right) \left( e_{it+k+l}\widetilde{X}_{it+k+l+m}\right) |%
\mathcal{C}\right] 
\\
&& \qquad \qquad \qquad \qquad  \qquad \qquad  \qquad \qquad 
 -\mathbb{E}\left( e_{it}\widetilde{X}_{it+k}|\mathcal{C}%
\right) \mathbb{E}\left( e_{it+k+l}\widetilde{X}_{it+k+l+m}|\mathcal{C}\right)
\bigg\vert  \\
&&+\frac{1}{N}\sum_{i=1}^{N}\frac{1}{T^{2}}\sum_{t=1}^{T}\sum_{\substack{ %
0\leq k,m\leq l \\ 0\leq k+l+m\leq T-t}}\mathbb{E}\left( e_{it}\widetilde{X}%
_{it+k}|\mathcal{C}\right) \mathbb{E}\left( e_{it+k+l}\widetilde{X}_{it+k+l+m}|%
\mathcal{C}\right)  \\
&&+\frac{1}{N}\sum_{i=1}^{N}\frac{1}{T^{2}}\sum_{t=1}^{T}\sum_{\substack{ %
0\leq p,l\leq m \\ 0\leq k+l+m\leq T-t}}\left\vert \mathbb{E}\left[ \left(
e_{it}\widetilde{X}_{it+k}e_{it+k+l}\right) \widetilde{X}_{it+k+l+m}|\mathcal{C}%
\right] \right\vert  \\
&=&I_{1}+I_{2}+I_{3}+I_{4},\text{ say.}
\end{eqnarray*}%
By applying the mixing inequality $\left( \ref{eq:mixing inequality}\right) $
to $\left\vert \mathbb{E}\left( e_{it}\left( \widetilde{X}_{it+k}e_{it+k+l}%
\widetilde{X}_{it+k+l+m}\right) |\mathcal{C}\right) \right\vert $ with $e_{it}$
and $\widetilde{X}_{it+k}e_{it+k+l}\widetilde{X}_{it+k+l+m},$ we have%
\begin{eqnarray*}
&&\left\vert \mathbb{E}\left( e_{it}\left( \widetilde{X}_{it+k}e_{it+k+l}\widetilde{X%
}_{it+k+l+m}\right) |\mathcal{C}\right) \right\vert  \\
&\leq &8\left\Vert e_{it}\right\Vert _{\mathcal{C},p}\left\Vert \widetilde{X}%
_{it+k}e_{it+k+l}\widetilde{X}_{it+k+l+m}\right\Vert _{\mathcal{C},q}\alpha
_{k}^{1-\frac{1}{p}-\frac{1}{q}}\left( i\right)  \\
&\leq &8\left\Vert e_{it}\right\Vert _{\mathcal{C},p}\left\Vert \widetilde{X}%
_{it+k}\right\Vert _{\mathcal{C},3q}\left\Vert e_{it+k+l}\right\Vert _{%
\mathcal{C},3q}\left\Vert \widetilde{X}_{it+k+l+m}\right\Vert _{\mathcal{C}%
,3q}\alpha _{k}^{1-\frac{1}{p}-\frac{1}{q}}\left( i\right) ,
\end{eqnarray*}%
where the last inequality follows by the generalized Holder's inequality.
Choose $p=3q>4.$ Then,
\begin{eqnarray*}
I_{1} &\leq &\frac{8}{N}\sum_{i=1}^{N}\frac{1}{T^{2}}\sum_{t=1}^{T}\sum
_{\substack{ 0\leq l,m\leq k \\ 0\leq k+l+m\leq T-t}}\left\Vert
e_{it}\right\Vert _{\mathcal{C},p}\left\Vert \widetilde{X}_{it+k}\right\Vert _{%
\mathcal{C},p}\left\Vert e_{it+k+l}\right\Vert _{\mathcal{C},p}\left\Vert
\widetilde{X}_{it+k+l+m}\right\Vert _{\mathcal{C},p}\alpha _{k}^{1-\frac{1}{4p}%
}\left( i\right)  \\
&\leq &8\left( \sup_{i,t}\left\Vert e_{it}\right\Vert _{\mathcal{C}%
,p}^{2}\right) \left( \sup_{i,t}\left\Vert \widetilde{X}_{it+k}\right\Vert _{%
\mathcal{C},p}^{2}\right) \frac{1}{T^{2}}\sum_{t=1}^{T}\sum_{\substack{ %
0\leq l,m\leq k \\ 0\leq k+l+m\leq T-t}}\alpha _{k}^{1-\frac{1}{4p}} \\
&\leq &8\left( \sup_{i,t}\left\Vert e_{it}\right\Vert _{\mathcal{C}%
,p}^{2}\right) \left( \sup_{i,t}\left\Vert \widetilde{X}_{it+k}\right\Vert _{%
\mathcal{C},p}^{2}\right) \sum_{k=0}^{\infty }k^{2}\alpha _{k}^{1-\frac{1}{4p%
}} \\
&\leq &{\cal O}_{p}\left( 1\right),
\end{eqnarray*}%
where the last line holds because we assume in example (2) that 
$\left(\sup_{i,t}\left\Vert e_{it}\right\Vert _{\mathcal{C},p}^{2}\right) \left(
\sup_{i,t}\left\Vert \widetilde{X}_{it+k}\right\Vert _{\mathcal{C},p}^{2}\right)
={\cal O}_{p}\left( 1\right) $ for some $p>4,$, and $\sum_{m=0}^{\infty }m^{2}\alpha _{m}^{1-\frac{1%
}{4p}}= \sum_{m=0}^{\infty }m^{2-\zeta\frac{4p-1}{4p}}= O\left( 1\right)$ because of $\zeta > 3\frac{4p}{4p-1}$ and $p>4$.

By applying similar arguments, we can also show
\begin{equation*}
I_{2},I_{3},I_{4}={\cal O}_{p}\left( 1\right) .
\end{equation*}
\end{proof}

\section{Supplement to the Proof of Theorem \ref{th:limdis}}

\paragraph{\underline{Notation $\mathbb{E}_{\cal C}$ 
and ${\rm Var}_{\cal C}$ and ${\rm Cov}_{\cal C}$:}}
 In the remainder of this supplementary file we
write $\mathbb{E}_{\cal C}$, ${\rm Var}_{\cal C}$ and ${\rm Cov}_{\cal C}$
 for the expectation, variance and covariance operators conditional
on ${\cal C}$,~i.e., $\mathbb{E}_{\cal C}(A)=\mathbb{E}(A|{\cal C})$,
${\rm Var}_{\cal C}(A) = {\rm Var}(A|{\cal C})$
and ${\rm Cov}_{\cal C}(A,B) = {\rm Cov}(A,B |{\cal C})$.

\bigskip

What is left to show to complete the proof of Theorem \ref{th:limdis} is  
that Lemma~\ref{lemma:vanishing} and Lemma~\ref{lemma:denCLT} in the main text appendix hold.
Before showing this, we first present two further intermediate lemmas.

\begin{lemma}
   \label{lemma:normXweak}
   Under the assumptions of Theorem~\ref{th:limdis} we have for $k=1,\ldots ,K$,
   \begin{align*}
      \qquad && (a) && \| P_{\lambda^0} \widetilde X_k  \| &= o_p(\sqrt{NT})  \; ,
    \nonumber \\
     \qquad && (b) &&  \|  \widetilde X_k P_{f^0}\| &= o_p(\sqrt{NT}) \; ,
    \nonumber \\
      \qquad && (c) && \|P_{\lambda^0} e  X^{\prime}_k \| &= o_p(N^{3/2})          \, ,
    \nonumber \\
      \qquad && (d) &&  \|P_{\lambda^0} e P_{f^0} \| &= {\cal O}_p(1)  \; .
      && \qquad
   \end{align*}
\end{lemma}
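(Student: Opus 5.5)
All four bounds come from one device: write the projector as a finite-rank product and bound each factor. Write $P_{\lambda^0} = \lambda^0(\lambda^{0\prime}\lambda^0)^{-1}\lambda^{0\prime}$. Assumption~\ref{ass:A1} gives $\|(\lambda^{0\prime}\lambda^0/N)^{-1}\| = {\cal O}_p(1)$, and Assumption~\ref{ass:A5}$(vi)$ gives $\|\lambda^0\| \le \|\lambda^0\|_F = {\cal O}_p(\sqrt N)$. Therefore
\[
\|P_{\lambda^0} A\| \le {\cal O}_p(N^{-1/2})\,\|\lambda^{0\prime} A\|
\]
for any conformable $A$, and the analogous bound holds for $f^0$. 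Each statement then reduces to bounding a matrix with only $R$ rows or columns. Such a matrix can be controlled in Frobenius norm through second moments conditional on ${\cal C}$, using the independence and martingale structure of Assumption~\ref{ass:A5}.

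\textbf{Parts (a) and (b).} It suffices to show $\|\lambda^{0\prime}\widetilde X_k\|_F^2 = o_p(N^2T)$. Conditional on ${\cal C}$, the rows of $\widetilde X_k$ are mean zero and independent across $i$, and $\lambda^0$ is ${\cal C}$-measurable. Hence
\[
\mathbb{E}_{\cal C}\|\lambda^{0\prime}\widetilde X_k\|_F^2 = \sum_{t}\sum_i \|\lambda^0_i\|^2\,{\rm Var}_{\cal C}(X_{k,it}) \le \max_i\|\lambda^0_i\|^2 \sum_{i,t} {\rm Var}_{\cal C}(X_{k,it}).
\]
By Assumption~\ref{ass:A5}$(iv)$, $\sum_{i,t}{\rm Var}_{\cal C}(X_{k,it}) = {\cal O}_p(NT)$. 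The factor $\max_i\|\lambda^0_i\|^2$ is $o_p(N)$ by the bounded fourth moments (indeed ${\cal O}_p(N^{1/2})$). This gives ${\cal O}_p(NT)\cdot o_p(N) = o_p(N^2T)$, which is exactly what is needed. Alternatively, one can avoid the max by computing $\sum_i\|\lambda_i^0\|^2\,{\rm Var}_{\cal C}$ directly with a Cauchy--Schwarz argument over $i$.

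For (b), the same argument applies with $f^0$, now using column sums. The rows $\widetilde X_{k,i\cdot}$ are independent across $i$, and the conditional variance of $\sum_t f^0_t \widetilde X_{k,it}$ is bounded by $\max_t\|f^0_t\|^2\sum_{t,s}|{\rm Cov}_{\cal C}(X_{k,it},X_{k,is})|$. Summing over $i$ and applying Assumption~\ref{ass:A5}$(iv)$ yields $\|\widetilde X_k f^0\|_F^2 = o_p(T^2 N)$, and hence $\|\widetilde X_k P_{f^0}\| = o_p(\sqrt{NT})$.

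\textbf{Part (d).} Here $\|P_{\lambda^0} e P_{f^0}\| \le {\cal O}_p((NT)^{-1/2})\,\|\lambda^{0\prime} e f^0\|$. The matrix $\lambda^{0\prime} e f^0 = \sum_{i,t} \lambda^0_i e_{it} f^{0\prime}_t$ is an $R\times R$ sum of terms that are conditionally mean zero and uncorrelated given ${\cal C}$, by Assumption~\ref{ass:A5}$(i)$--$(iii)$. Its conditional second moment is therefore $\sum_{i,t}\|\lambda^0_i\|^2\|f^0_t\|^2\,\mathbb{E}_{\cal C} e_{it}^2 = {\cal O}_p(NT)$. It follows that $\|\lambda^{0\prime} e f^0\| = {\cal O}_p(\sqrt{NT})$, and (d) follows.

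\textbf{Part (c).} This is the main obstacle, because $e$ and $X_k$ are correlated through predetermined regressors. Decompose $X_k = \overline X_k + \widetilde X_k$.

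For the $\overline X_k$ piece, use
\[
\|P_{\lambda^0} e\,\overline X_k'\| \le \|P_{\lambda^0}e\|\,\|\overline X_k\|.
\]
First, $\|\overline X_k\| \le \|\overline X_k\|_F = {\cal O}_p(\sqrt{NT})$. Second, $\|\lambda^{0\prime} e\|_F^2$ has conditional mean ${\cal O}_p(NT)$, so $\|P_{\lambda^0} e\| = {\cal O}_p(\sqrt T)$. The product is ${\cal O}_p(T\sqrt N) = o_p(N^{3/2})$.

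For the $\widetilde X_k$ piece, bound the spectral norm by the Frobenius norm: $\|\lambda^{0\prime} e\,\widetilde X_k'\|_F^2 = \sum_j\|\sum_{i,t}\lambda^0_i e_{it}\widetilde X_{k,jt}\|^2$. Split the sum into the terms with $j\neq i$ and the terms with $j=i$.

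For $j\ne i$, conditional independence across individuals means only matching $(i,t)$ pairs survive in the second moment, up to the bounded covariance sums of Assumption~\ref{ass:A5}$(iv)$. This gives an order of $N\cdot NT\cdot T = N^2T^2$.

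For $j=i$, the term is $\lambda^0_i\sum_t e_{it}\widetilde X_{k,it}$. Its mean is nonzero, of size ${\cal O}_p(T)$ per $i$ (the Nickell-type correlation, summable in $T$ by the covariance bound). Its variance is controlled by Assumption~\ref{ass:A5}$(v)$. These terms contribute ${\cal O}_p(NT^2)$.

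Altogether $\|\lambda^{0\prime} e\,\widetilde X_k'\|_F = {\cal O}_p(NT)$, so $\|P_{\lambda^0} e\,\widetilde X_k'\| = {\cal O}_p(N^{-1/2}\cdot NT) = {\cal O}_p(N^{3/2})$. This is borderline: the target is $o_p(N^{3/2})$, not ${\cal O}_p(N^{3/2})$. I expect to sharpen the $j\ne i$ cross term to $o_p$ by exploiting the martingale structure of $e_{it}$ in Assumption~\ref{ass:A5}$(i)$, which kills the pairings $t\neq t'$. Failing that, I will use $\max_i\|\lambda_i^0\|$ more carefully. The diagonal piece is already ${\cal O}_p(\sqrt N\,T)\cdot N^{-1/2} = {\cal O}_p(T)$, comfortably $o_p(N^{3/2})$.
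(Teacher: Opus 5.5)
Your parts (a), (b) and (d) are fine and follow the paper's argument: pull out $\lambda^0(\lambda^{0\prime}\lambda^0)^{-1}$ or $f^0(f^{0\prime}f^0)^{-1}$, then bound the conditional second moment of a matrix with $R$ rows. In (a), using $\max_i\|\lambda^0_i\|^2={\cal O}_p(N^{1/2})$ gives a weaker rate than the paper's ${\cal O}_p(\sqrt T)$, but it still suffices.

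Part (c), however, is not established.
\begin{itemize}
\item For the $\overline X_k$ piece you write ${\cal O}_p(T\sqrt N)=o_p(N^{3/2})$. But with $N/T\to\kappa^2\in(0,\infty)$ we have $T\sqrt N\asymp N^{3/2}$, so you only get ${\cal O}_p(N^{3/2})$. The submultiplicative bound $\|P_{\lambda^0}e\|\,\|\overline X_k\|$ cannot do better, because both factors are generically of exact order. First, $\|P_{\lambda^0}e\|\ge R^{-1/2}\|P_{\lambda^0}e\|_F\asymp\sqrt T$, e.g.\ for homoscedastic errors. Second, in Example~1, $\overline X_k=\lambda^0F^{0\prime}$ has $\|\overline X_k\|\asymp\sqrt{NT}$.
\item For the $\widetilde X_k$ piece you also stop at ${\cal O}_p(N^{3/2})$ and only announce a sharpening.
\item Your claim that $\lambda^0_i\sum_t e_{it}\widetilde X_{k,it}$ has nonzero conditional mean is wrong. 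The entry $(eX_k')_{ij}=\sum_t e_{it}X_{k,jt}$ involves only contemporaneous products, and $\mathbb{E}_{\cal C}(e_{it}X_{k,it})=0$ by Assumption~\ref{ass:A5}$(i)$. The Nickell-type correlation between $e_{it}$ and $X_{k,i\tau}$, $\tau>t$, lives in $e'X_k$ (which is where $B_1$ comes from), not in $eX_k'$.
\end{itemize}

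The missing step is to compute the conditional second moment directly for the full $X_k$, with no centering:
\[
\mathbb{E}_{\cal C}\|\lambda^{0\prime}eX_k'\|_F^2=\sum_{r=1}^R\sum_{i,j,l=1}^N\sum_{t,s=1}^T\lambda^0_{ir}\lambda^0_{lr}\,\mathbb{E}_{\cal C}\left(e_{it}e_{ls}X_{k,jt}X_{k,js}\right).
\]
Every term with $i\neq l$ or $t\neq s$ vanishes, for the following reasons.
\begin{itemize}
\item If $i\neq l$, one of $e_{it},e_{ls}$ is conditionally independent of the remaining three factors and has conditional mean zero, by Assumption~\ref{ass:A5}$(iii)$.
\item If $i=l\neq j$ and $t\neq s$, the term factors as $\mathbb{E}_{\cal C}(e_{it}e_{is})\,\mathbb{E}_{\cal C}(X_{k,jt}X_{k,js})=0$, by Assumption~\ref{ass:A5}$(ii)$. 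This, not a covariance sum, is what removes your $N^2T^2$.
\item If $i=l=j$ and, say, $t<s$, then $e_{it}$, $X_{k,it}$ and $X_{k,is}$ all lie in the conditioning set of Assumption~\ref{ass:A5}$(i)$ for $e_{is}$, so the term is zero.
\end{itemize}
What remains is $\sum_r\sum_{i,j,t}(\lambda^0_{ir})^2\,\mathbb{E}_{\cal C}(e_{it}^2X_{k,jt}^2)={\cal O}_p(N^2T)$. Hence $\|\lambda^{0\prime}eX_k'\|_F={\cal O}_p(N\sqrt T)$, and
\[
\|P_{\lambda^0}eX_k'\|={\cal O}_p(\sqrt{NT})=o_p(N^{3/2}),
\]
with plenty of room to spare. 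This is exactly the paper's proof of (c). Splitting $X_k$ into $\overline X_k+\widetilde X_k$ is unnecessary, because all the cancellations come from $e$.
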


\begin{proof}[\bf Proof of Lemma~\ref{lemma:normXweak}]
     \# Part (a): We have
       \begin{align*}
        \|P_{\lambda^0} \widetilde X_k \| &=
        \|\lambda^0 (\lambda^{0\prime}\lambda^0)^{-1} \lambda^{0\prime}  \widetilde X_k \|
        \nonumber \\
            &\leq
           \|\lambda^0 (\lambda^{0\prime}\lambda^0)^{-1}\| \| \lambda^{0\prime}  \widetilde X_k \|
        \nonumber \\
            &\leq
           \|\lambda^0 \| \| (\lambda^{0\prime}\lambda^0)^{-1}\| \| \lambda^{0\prime}  \widetilde X_k \|_F
           = {\cal O}_p(N^{-1/2})  \| \lambda^{0\prime}  \widetilde X_k \|_F \; ,
     \end{align*}
     where we used part (i) and (ii) of Lemma~\ref{lemma:inequalities}
     and Assumption~\ref{ass:A1}.
     We have
     \begin{align*}
        \mathbb{E}\left\{  \mathbb{E}_{\cal C}\left[  \| \lambda^{0\prime}  \widetilde X_k \|_F^2  \right] \right\}
          &=  \mathbb{E}\left\{  \sum_{r=1}^R \sum_{t=1}^T
          \mathbb{E}_{\cal C} \left[    \left(  \sum_{i=1}^N  \lambda^0_{ir}
          \widetilde X_{k,it} \right)^2
          \right] \right\}
        \\
             &=    \mathbb{E}\left\{  \sum_{r=1}^R \sum_{t=1}^T   \sum_{i=1}^N
               (\lambda^0_{ir})^2
          \mathbb{E}_{\cal C} \left(   \widetilde X_{k,it}^2  \right) \right\}
       \\
             &=     \sum_{r=1}^R \sum_{t=1}^T   \sum_{i=1}^N
               \mathbb{E}\left[ (\lambda^0_{ir})^2
           {\rm Var}_{\cal C} \left(  X_{k,it}   \right) \right]
      \\
           &=   {\cal O}_p( NT ) ,
     \end{align*}
     where we used $\widetilde X_{k,it}$ is mean zero and independent across $i$,
     conditional on ${\cal C}$, and our bounds on the moments
     of $\lambda^0_{ir}$ and $X_{k,it}$. We therefore have $ \| \lambda^{0\prime}  \widetilde X_k \|_F =
     {\cal O}_p(\sqrt{NT})$ and    the above inequality thus
     gives $\|P_{\lambda^0} \widetilde X_k \|  = {\cal O}_p(\sqrt{T} ) =  o_p(\sqrt{NT})$.

     \# The proof for part (b) is similar. As above we first obtain
     $ \|  \widetilde X_k P_{f^0}\| =   \|P_{f^0} \widetilde X_k'  \| \leq
       {\cal O}_p(T^{-1/2})  \| f^{0\prime}  \widetilde X_k' \|_F$.
    Next, we have
      \begin{align*}
          \mathbb{E}_{\cal C}\left[  \| f^{0\prime}  \widetilde X_k' \|_F^2 \right]
          &= \sum_{r=1}^R \sum_{i=1}^N
          \mathbb{E}_{\cal C} \left[    \left(  \sum_{t=1}^T  f^0_{tr}
          \widetilde X_{k,it} \right)^2  \right]
        \\
             &=
             \sum_{r=1}^R \sum_{i=1}^N   \sum_{t,s=1}^T f^0_{tr} f^0_{sr}
          \mathbb{E}_{\cal C} \left(
          \widetilde X_{k,it}    \widetilde X_{k,is}  \right)
        \\
            &\leq \left[ \sum_{r=1}^R
            \left(  \max_t | f^0_{tr} | \right)^2 \right]
             \sum_{i=1}^N   \sum_{t,s=1}^T
        \left|  {\rm Cov}_{\cal C} \left(
          X_{k,it} , X_{k,is} \right)  \right|
       \\
            &= {\cal O}_p(T^{2/(4+\epsilon)}) \, {\cal O}_p(NT) = o_p(N T^2) ,
     \end{align*}
     where we used that uniformly bounded $\mathbb{E} \| f^0_t \|^{4+\epsilon}$
     implies $\max_t | f^0_{tr} | = {\cal O}_p( T^{1/(4+\epsilon)} )$.
     We thus have $ \| f^{0\prime}  \widetilde X_k' \|_F^2 = o_p(T \sqrt{N})$
     and therefore $\|  \widetilde X_k P_{f^0}\|   = o_p(\sqrt{NT})$.

     \# Next, we show part (c).
      First, we  have
     \begin{align*}
        \mathbb{E} \left\{  \mathbb{E}_{\cal C} \left[ \left(  \|\lambda^{0\,\prime} e X'_k \|_F \right)^2  \right]
        \right\}
          &=   \mathbb{E} \left\{  \mathbb{E}_{\cal C}
      \left[ \sum_{r=1}^R \sum_{j=1}^N
           \left( \sum_{i=1}^N \sum_{t=1}^T \lambda^{0}_{ir} e_{it} X_{k,jt} \right)^2    \right] \right\}
        \nonumber \\
           &=   \mathbb{E} \left\{   \sum_{r=1}^R \sum_{i,j,l=1}^N
            \sum_{t,s=1}^T
           \lambda^{0}_{ir}    \lambda^{0}_{lr}
            \mathbb{E}_{\cal C}  \left( e_{it} e_{ls} X_{k,jt}   X_{k,js}      \right) \right\}
        \nonumber \\
          &=  \sum_{r=1}^R \sum_{i,j=1}^N    \sum_{t=1}^T
            \mathbb{E} \left[ (\lambda^{0}_{ir})^2
             \mathbb{E}_{\cal C} \left( e_{it}^2  X_{k,jt}^2     \right)
             \right]
         = {\cal O}(N^2 T) \; ,
     \end{align*}
     where we used that $ \mathbb{E}_{\cal C}  \left( e_{it} e_{ls} X_{k,jt}   X_{k,js}     \right)$
     is only non-zero if $i=l$ (because of cross-sectional independence conditional on ${\cal C}$)
     and $t=s$ (because regressors are pre-determined).
     We can thus conclude $\|\lambda^{0\,\prime} e X'_k \|_F = {\cal O}_p(N \sqrt{T})$.
    Using this we find
     \begin{align*}
        \|P_{\lambda^0} e X'_k \| &=
        \|\lambda^0 (\lambda^{0\prime}\lambda^0)^{-1} \lambda^{0\prime}  e X'_k \|
        \nonumber \\
            &\leq
           \|\lambda^0 (\lambda^{0\prime}\lambda^0)^{-1}\| \| \lambda^{0\prime}  e X'_k \|
        \nonumber \\
            &\leq
           \|\lambda^0 \|  \| (\lambda^{0\prime}\lambda^0)^{-1}\| \| \lambda^{0\prime}  e X'_k \|_F
           = {\cal O}_p(N^{-1/2})  {\cal O}_p(N \sqrt{T}) =  {\cal O}_p(\sqrt{NT}) \; .
     \end{align*}
     This is what we wanted to show.

      \# For part (d), we first find
      $\frac{1}{\sqrt{NT}}\left\Vert f^{0\prime}e\lambda^0 \right\Vert_{F}={\cal O}_{p}\left( 1\right)$, because
     \begin{eqnarray*}
        \mathbb{E} \left\{  \mathbb{E}_{\cal C} \left[ \left( \frac{\left\Vert f^{0\prime }e\lambda^0 \right\Vert _{F}}{\sqrt{NT}}%
            \right) ^{2}  \right] \right\}
            &=&\mathbb{E} \left\{  \frac{1}{NT}\mathbb{E}_{\cal C} \left[ \left(
        \sum_{i=1}^{N}\sum_{t=1}^{T}e_{it}f_{t}^{0\prime }\lambda^0_{i}\right) ^{2}  \right]  \right\}
           \nonumber \\
          &=& \mathbb{E} \left\{ \frac{1}{NT}\sum_{i=1}^{N}\sum_{j=1}^{N}\sum_{t=1}^{T}\sum_{s=1}^{T}\mathbb{E}_{\cal C}%
        \left( e_{it}e_{js}   \right) f_{t}^{0\prime }\lambda_{i}^0\lambda_{j}^{0\prime}f^0_{s} \right\}
            \nonumber \\
         &=&\frac{1}{NT}\sum_{i=1}^{N}\sum_{t=1}^{T}
         \mathbb{E}\left[ \mathbb{E}_{\cal C} \left( e_{it}^2  \right)
           f_{t}^{0\prime
       }\lambda^0_{i}\lambda_{i}^{0\prime }f^0_{t}  \right]
     \nonumber \\
        &=&{\cal O}\left( 1\right) ,
     \end{eqnarray*}
     where we used $e_{it}$ is independent across $i$ and over $t$,
     conditional on ${\cal C}$.
     Thus we obtain
     \begin{align*}
        \|P_{\lambda^0} e P_{f^0} \|
              &= \| \lambda^0 (\lambda^{0\prime}\lambda^0)^{-1} \lambda^{0\prime}
                     e f^0 (f^{0\prime}f^0)^{-1} \, f^{0\prime} \|
         \nonumber \\
              &\leq \| \lambda^0 \|  \left\| (\lambda^{0\prime}\lambda^0)^{-1} \right\|
                    \| \lambda^{0\prime} e f^0 \| \left\| (f^{0\prime}f^0)^{-1} \right\| \| f^{0\prime} \|
         \nonumber \\
              &\leq {\cal O}_p(N^{1/2})  {\cal O}_p(N^{-1})
                    \| \lambda^{0\prime} e f^0 \|_F {\cal O}_p(T^{-1}) {\cal O}_p(T^{1/2}) = {\cal O}_p(1) \;,
     \end{align*}
     where we used part (i) and (ii) of Lemma~\ref{lemma:inequalities}.
 \end{proof}

\begin{lemma}
  \label{lemma:eeterms}
Suppose $A$ and $B$ are  $T\times T$ and  $N\times N$ matrices that
are independent of $e$, conditional on ${\cal C}$, such that
$\mathbb{E}_{\cal C}\left( \left\Vert A\right\Vert _{F}^{2}   \right)={\cal O}_p\left(
NT\right) $ and $\mathbb{E}_{\cal C} \left( \left\Vert B\right\Vert _{F}^{2} \right)={\cal O}_p\left( NT\right)$,
and let Assumption~\ref{ass:A5} be satisfied.
Then there exists a finite non-random constant $c_0$ such that
\begin{align*}
  (a) &&  \mathbb{E}_{\cal C}\left( \left\{ \limfunc{Tr}\left[ \left( e^{\prime }e-\mathbb{E}_{\cal C} \left( e^{\prime
}e\right) \right) A\right]  \right\}^2  \right) &\leq c_0 \, N \,  \mathbb{E}_{\cal C} \left( \left\Vert
A\right\Vert _{F}^{2}   \right) \; ,
\nonumber \\
  (b) && \mathbb{E}_{\cal C}\left( \left\{ \limfunc{Tr}\left[ \left( ee^{\prime }-\mathbb{E}_{\cal C} \left( ee^{\prime
}\right) \right) B \right] \right\}^{2}  \right)  &\leq
c_0 \, T \,  \mathbb{E}_{\cal C} \left( \left\Vert B\right\Vert _{F}^{2}  \right) \; .
\end{align*}
\end{lemma}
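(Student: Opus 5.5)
The plan is a direct second-moment computation conditional on ${\cal C}$, done for part (a); part (b) then follows by the same argument with the roles of $i$ and $t$ exchanged (replace $e$ by $e'$, $N$ by $T$, $A$ by $B$). Under Assumption~\ref{ass:A5}(ii)--(iii) the $e_{it}$ are independent across $i$ and over $t$, conditional on ${\cal C}$. Write $\sigma_{it}^2=\mathbb{E}_{\cal C}(e_{it}^2)$. Then
\begin{align*}
\left[e'e-\mathbb{E}_{\cal C}(e'e)\right]_{ts} = \sum_{i=1}^N \xi_{i,ts}, \qquad \xi_{i,ts} = e_{it}e_{is} - \delta_{ts}\,\sigma_{it}^2 ,
\end{align*}
where $\delta_{ts}$ is the Kronecker delta. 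Hence ${\rm Tr}[(e'e-\mathbb{E}_{\cal C} e'e)A] = \sum_i \sum_{t,s} \xi_{i,ts} A_{st}$. Because $A$ is independent of $e$ conditional on ${\cal C}$, I will first condition on ${\cal C}\vee\sigma(A)$. This treats $A$ as fixed while leaving the joint law of $e$ unchanged, so all moments of $e$ stay as under ${\cal C}$. At the end I take $\mathbb{E}_{\cal C}$ over $A$.

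Next I expand the square, which gives
\begin{align*}
\sum_{i,j}\sum_{t,s,u,v} A_{st}A_{vu}\,\mathbb{E}(\xi_{i,ts}\xi_{j,uv}).
\end{align*}
For $i\neq j$ the two factors are independent with mean zero, so these terms vanish. For $i=j$ I enumerate the index patterns for which $\mathbb{E}(\xi_{i,ts}\xi_{i,uv})\neq 0$, using independence over $t$ and mean zero:
\begin{itemize}
\item $t=s=u=v$: the expectation is ${\rm Var}(e_{it}^2)\le \mathbb{E}e_{it}^4$.
\item $t\neq s$ and $\{u,v\}=\{t,s\}$: the expectation is $\sigma_{it}^2\sigma_{is}^2$.
\item All remaining patterns have zero expectation. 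For example, $t=s$, $u=v$, $t\neq u$ gives a product of two independent mean-zero factors, and any pattern containing an index that appears exactly once gives zero.
\end{itemize}

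With these patterns the conditional second moment equals
\begin{align*}
\sum_i\Big[\sum_t {\rm Var}_{\cal C}(e_{it}^2)A_{tt}^2 + \sum_{t\neq s}\sigma_{it}^2\sigma_{is}^2\,(A_{st}^2 + A_{st}A_{ts})\Big].
\end{align*}
By Assumption~\ref{ass:A5}(vi) the eighth, hence the fourth, conditional moments of $e_{it}$ are uniformly bounded by a non-random constant. Using $|A_{st}A_{ts}|\le (A_{st}^2+A_{ts}^2)/2$, each inner bracket is at most $c\,\|A\|_F^2$. Summing over the $N$ values of $i$ gives the bound $c_0 N\|A\|_F^2$. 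Taking $\mathbb{E}_{\cal C}$ then yields (a).

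The only delicate point is the index bookkeeping, i.e. ensuring that no mixed pattern survives. In particular, the diagonal centering removes the $t=s,\,u=v,\,t\ne u$ cross terms that would otherwise contribute $(\sum_t A_{tt})^2$ and break the bound. The hypotheses $\mathbb{E}_{\cal C}\|A\|_F^2={\cal O}_p(NT)$ are not needed for the inequality itself. They only matter for how the lemma is applied later.
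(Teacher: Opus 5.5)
Your proposal is correct and follows essentially the same route as the paper: a direct conditional second-moment expansion in which cross-sectional and serial independence kill every index pattern except $t=s=u=v$ and $\{u,v\}=\{t,s\}$, followed by bounding the surviving terms by $N\sup_{i,t}(\cdot)\,\|A\|_F^2$, with (b) handled symmetrically. The only cosmetic differences are these. You condition on ${\cal C}\vee\sigma(A)$, whereas the paper factorizes $\mathbb{E}_{\cal C}(\xi\xi AA)=\mathbb{E}_{\cal C}(\xi\xi)\,\mathbb{E}_{\cal C}(AA)$. You bound the $A_{st}A_{ts}$ term elementwise via $|A_{st}A_{ts}|\le (A_{st}^2+A_{ts}^2)/2$, whereas the paper uses trace and Frobenius inequalities with $\Sigma^i={\rm diag}(\Sigma_{i1},\ldots,\Sigma_{iT})$.
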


\begin{proof}[\bf Proof]
\# Part (a): Denote $A_{ts}$ to be the $\left( t,s\right) ^{th}$ element of $A$.
We have
\begin{align*}
\limfunc{Tr}\left\{ \left( e^{\prime }e-\mathbb{E}_{\cal C}\left( e^{\prime }e\right)
\right) A\right\}
&= \sum_{t=1}^{T}\sum_{s=1}^{T}\left( e^{\prime }e-\mathbb{E}_{\cal C}\left( e^{\prime
}e\right) \right) _{ts}A_{ts} \nonumber \\
&= \sum_{t=1}^{T}\sum_{s=1}^{T}\left( \sum_{i=1}^{N}\left(
e_{it}e_{is}-\mathbb{E}_{\cal C}\left( e_{it}e_{is}\right) \right) \right) A_{ts}.
\end{align*}%
Therefore,
\begin{align*}
&\mathbb{E}_{\cal C}\left( \limfunc{Tr}\left\{ \left( e^{\prime }e-\mathbb{E}_{\cal C}\left( e^{\prime
}e\right) \right) A\right\} \right) ^{2} \nonumber \\
&\qquad  = \sum_{t=1}^{T}\sum_{s=1}^{T}\sum_{p=1}^{T}\sum_{q=1}^{T}\mathbb{E}_{\cal C}\left[ \left(
\sum_{i=1}^{N}\left( e_{it}e_{is}-\mathbb{E}_{\cal C}\left( e_{it}e_{is}\right) \right)
\right) \left( \sum_{j=1}^{N}\left( e_{jp}e_{jq}-\mathbb{E}_{\cal C}\left( e_{jp}e_{jq}\right)
\right) \right) \right]
\mathbbm{E}_{\cal C} \left( A_{ts}A_{pq}  \right).
\end{align*}%
Let $\Sigma_{it}=\mathbb{E}_{\cal C}(e_{it}^2)$. Then we find
\begin{align*}
&\mathbb{E}_{\cal C}\left\{ \left( \sum_{i=1}^{N}\left( e_{it}e_{is}-\mathbb{E}_{\cal C}\left(
e_{it}e_{is}\right) \right) \right) \left( \sum_{j=1}^{N}\left(
e_{jp}e_{jq}-\mathbb{E}_{\cal C}\left( e_{jp}e_{jq}\right) \right) \right) \right\}
  \nonumber \\ &\qquad\qquad\qquad\qquad
=\sum_{i=1}^{N}\sum_{j=1}^{N}\left\{ \mathbb{E}_{\cal C}\left(
e_{it}e_{is}e_{jp}e_{jq}\right) -\mathbb{E}_{\cal C}\left( e_{it}e_{is}\right) \mathbb{E}_{\cal C}\left(
e_{jp}e_{jq}\right) \right\}
  \nonumber \\ &\qquad\qquad\qquad\qquad
=\left\{
  \begin{array}{l@{\quad}l}
   \Sigma _{it}\Sigma _{is}  & \text{ if }\left( t=p\right) \neq \left(
s=q\right) \text{ and }\left( i=j\right)   \\
   \Sigma _{it}\Sigma _{is}  & \text{ if }\left( t=q\right) \neq \left(
s=p\right) \text{ and }\left( i=j\right)   \\
   \mathbb{E}_{\cal C}\left( e_{it}^{4}\right) -\Sigma _{it}^{2}   &   \text{ if }\left(
t=s=p=q\right) \text{ and }\left( i=j\right)  \\
    0 & \text{ otherwise.}
  \end{array} \right.
\end{align*}%
Therefore,
\begin{align*}
&\mathbb{E}_{\cal C}\left( \limfunc{Tr}\left\{ \left( e^{\prime }e-\mathbb{E}_{\cal C}\left( e^{\prime
}e\right) \right) A\right\} \right) ^{2}
\\
& \qquad
\leq \sum_{t=1}^{T}\sum_{s=1}^{T}\sum_{i=1}^{N}\Sigma _{it}\Sigma
_{is}\left( \mathbb{E}_{\cal C} \left( A_{ts}^{2}\right) +\mathbb{E}_{\cal C} \left( A_{ts}A_{st}\right) \right)
+\sum_{t=1}^{T}\sum_{i=1}^{N}\left( \mathbb{E}_{\cal C}\left( e_{it}^{4}\right) -\Sigma
_{it}^{2}\right) \mathbb{E}_{\cal C}  A_{tt}^{2}.
\end{align*}%
Define $\Sigma ^{i}={\rm diag}\left(  \Sigma _{i1},...,\Sigma _{iT} \right) .$ Then, we have%
\begin{eqnarray}
\sum_{t=1}^{T}\sum_{s=1}^{T}\sum_{i=1}^{N}\Sigma _{it}\Sigma _{is}\left(
\mathbb{E}_{\cal C}  A_{ts}^{2}\right) &=&\mathbb{E}_{\cal C} \left( \sum_{i=1}^{N}\limfunc{Tr}\left( A^{\prime
}\Sigma ^{i}A\Sigma ^{i}\right) \right) \nonumber \\
&\leq &\sum_{i=1}^{N}\mathbb{E}_{\cal C} \left\Vert A\Sigma ^{i}\right\Vert _{F}^{2}\leq
\sum_{i=1}^{N}\left\Vert \Sigma ^{i}\right\Vert ^{2}\mathbb{E}_{\cal C} \left\Vert A\right\Vert
_{F}^{2} \nonumber \\
&\leq &N\left( \sup_{it}\Sigma _{it}^{2}\right) \mathbb{E}_{\cal C} \left\Vert A\right\Vert
_{F}^{2}.
\end{eqnarray}%
Also,%
\begin{eqnarray}
\sum_{t=1}^{T}\sum_{s=1}^{T}\sum_{i=1}^{N}\Sigma _{it}\Sigma _{is}\mathbb{E}_{\cal C} \left(
A_{ts}A_{st}\right) &=&\mathbb{E}_{\cal C} \left[ \sum_{i=1}^{N}\limfunc{Tr}\left( \Sigma
^{i}AA\Sigma ^{i}\right) \right] \nonumber \\
&\leq &\sum_{i=1}^{N}\mathbb{E}_{\cal C} \left\Vert \Sigma ^{i}A\right\Vert _{F}\left\Vert
A\Sigma ^{i}\right\Vert _{F}\leq \sum_{i=1}^{N}\left\Vert \Sigma
^{i}\right\Vert ^{2}\mathbb{E}_{\cal C} \left\Vert A\right\Vert _{F}^{2} \nonumber \\
&\leq &N\left( \sup_{it}\Sigma _{it}^{2}\right) \mathbb{E}_{\cal C} \left\Vert A\right\Vert
_{F}^{2} \; .
\end{eqnarray}%
Finally,
\begin{eqnarray}
\sum_{t=1}^{T}\sum_{i=1}^{N}\left( \mathbb{E}_{\cal C}\left( e_{it}^{4}\right) -\Sigma
_{it}^{2}\right) \mathbb{E}_{\cal C}  A_{tt}^{2}
&\leq &N\left( \sup_{it}\mathbb{E}_{\cal C}\left( e_{it}^{4}\right) \right) \mathbb{E}_{\cal C} \left\Vert
A\right\Vert _{F}^{2},
\end{eqnarray}
and $\sup_{it}\mathbb{E}_{\cal C}\left( e_{it}^{4}\right)$ is assumed
bounded by Assumption~\ref{ass:A5}$(vi)$.

\# Part (b): The proof is analogous to the proof of part (a).
\end{proof}

\begin{proof}[\bf Proof of Lemma~\ref{lemma:vanishing}]
   \# For part (a) we have
     \begin{align*}
           \left|  \frac 1 {\sqrt{NT}} {\rm Tr} \left( P_{f^0} \, e^{\prime}\, P_{\lambda^0} \, \widetilde  X_k \right) \right|
             &=
             \left| \frac 1 {\sqrt{NT}} {\rm Tr} \left( P_{f^0} \, e^{\prime}\, P_{\lambda^0}  P_{\lambda^0} \widetilde  X_k P_{f^0} \right) \right|
            \nonumber \\
          & \leq  \frac R {\sqrt{NT}}
            \left\| P_{\lambda^0}  \, e \,  P_{f^0} \right\|
            \left\| P_{\lambda^0} \widetilde  X_k \right\|
            \left\| P_{f^0} \right\|
        \nonumber \\
          &=    \frac{1}{\sqrt{NT}} \; {\cal O}_p(1) \, o_p(\sqrt{NT}) \, {\cal O}_p(1)
        \nonumber \\
          &= o_p(1),
     \end{align*}
where the second-last equality follows by Lemma~\ref{lemma:normXweak} (a) and (d).

  \# To show statement (b) we define $\zeta_{k,ijt} = e_{it} \widetilde X_{k,jt}$. We then have
    \begin{align*}
          \frac 1 {\sqrt{NT}} {\rm Tr} \left(  P_{\lambda^0} \, e \,\widetilde X_k' \right)
             &=  \sum_{r,q=1}^R  \left[\left( \frac{\lambda^{0 \prime}  \lambda^{0}} N \right)^{-1} \right]_{rq}
              \underbrace{  \frac 1 {N \sqrt{NT}}
                \sum_{t=1}^T \sum_{i,j=1}^N  \lambda_{ir}^{0} \lambda_{jq}^{0} \zeta_{k,ijt}
                }_{ \equiv A_{k,rq} } .
    \end{align*}
    We only have $ \mathbb{E_{\cal C}}\left( \zeta_{k,ijt} \zeta_{k,lm s}  \right) \neq 0$
    if $t=s$ (because regressors are pre-determined)
    and $i=l$ and $j=m$ (because of cross-sectional independence). Therefore
    \begin{align*}
        \mathbb{E}\left\{  \mathbb{E}_{\cal C}\left( A_{k,rq}^2   \right) \right\}
         &=   \mathbb{E}\left\{ \frac 1 {N^3 T}    \sum_{t,s=1}^T \sum_{i,j,l,m=1}^N
         \lambda_{ir} \lambda_{jq} \lambda_{lr} \lambda_{mq}
         \, \mathbb{E}_{\cal C}\left( \zeta_{k,ijt} \zeta_{k,lm s}   \right) \right\}
       \\
          &=   \frac 1 {N^3 T}    \sum_{t=1}^T \sum_{i,j=1}^N
        \mathbb{E}\left[ \lambda_{ir}^2 \lambda_{jq}^2
         \, \mathbb{E}_{\cal C}\left( \zeta_{k,ijt}^2   \right) \right] = {\cal O}(1/N) = o_p(1).
    \end{align*}
    We thus have $A_{k,rq} = o_p(1)$ and
    therefore also $ \frac 1 {\sqrt{NT}} {\rm Tr} \left(  P_{\lambda^0} \, e \,\widetilde X_k' \right)
    = o_p(1)$.

 \# The proof for statement (c) is similar to the proof of statement (b).
 Define    $\xi_{k,its} = e_{it} \widetilde X_{k,is} - \mathbb{E}_{\cal C}\left( e_{it} \widetilde X_{k,is}  \right)$.
 We then have
 \begin{align*}
   \frac 1 {\sqrt{NT}} {\rm Tr} \left\{ P_{f^0} \, \left[ e^{\prime} \, \widetilde  X_k
                        - \mathbb{E}_{\cal C}\left( e^{\prime} \, \widetilde  X_k  \right) \right] \right\}
       &= \sum_{r,q=1}^R  \left[\left( \frac{f' f} T \right)^{-1} \right]_{rq}
              \underbrace{  \frac 1 {T \sqrt{NT}}
                \sum_{i=1}^N \sum_{t,s=1}^T  f_{tr} f_{sq} \xi_{k,its}
                }_{ \equiv B_{k,rq} } .
 \end{align*}
 Therefore
 \begin{align*}
      \mathbb{E}_{\cal C}\left( B_{k,rq}^2  \right)
         &=   \frac 1 {T^3 N}
          \sum_{i,j=1}^N \sum_{t,s,u,v=1}^T  f_{tr} f_{sq} f_{ur} f_{vq}  \mathbb{E}_{\cal C}\left(  \xi_{k,its} \xi_{k,juv}   \right)
    \\
        &\leq
         \left( \max_{t,\widetilde r} | f_{t \widetilde r} | \right)^4
         \frac 1 {T^3 N}
          \sum_{i,j=1}^N \sum_{t,s,u,v=1}^T
         \left|  {\rm Cov}_{\cal C}\left(  e_{it} \widetilde X_{k,is}  , e_{ju} \widetilde X_{k,jv}   \right) \right|
    \\
        &=
         \left( \max_{t,\widetilde r} | f_{t \widetilde r} | \right)^4
         \frac 1 {T^3 N}
          \sum_{i=1}^N \sum_{t,s,u,v=1}^T
         \left|  {\rm Cov}_{\cal C}\left(  e_{it} \widetilde X_{k,is}  , e_{iu} \widetilde X_{k,iv}    \right) \right|
  \\
       &= {\cal O}_p( T^{4/(4+\epsilon)} ) {\cal O}_p( 1/T )
   \\
      &= o_p(1),
 \end{align*}
 where we used uniformly bounded $\mathbb{E} \| f^0_t \|^{4+\epsilon}$
     implies $\max_t | f^0_{tr} | = {\cal O}_p( T^{1/(4+\epsilon)} )$.

  \#   Part (d) and (e):
  We have
   $\|\lambda^0 \, (\lambda^{0\prime}\lambda^0)^{-1} \, (f^{0\prime}f^0)^{-1} \, f^{0\prime}\| = {\cal O}_p((NT)^{-1/2})$,
    $\|e\|={\cal O}_p(N^{1/2})$,
     $\|X_k\|={\cal O}_p(\sqrt{NT})$ and
     $\|P_{\lambda^0} e P_{f^0} \| = {\cal O}_p(1)$, which was shown in Lemma~\ref{lemma:normXweak}.
     Therefore:
     \begin{align*}
         &\frac 1 {\sqrt{NT}} {\rm Tr}\left(e P_{f^0} \, e' \, M_{\lambda^0} \, X_k \,
              f^0 \, (f^{0\prime}f^0)^{-1} \, (\lambda^{0\prime}\lambda^0)^{-1} \, \lambda^{0\prime} \right)
        \nonumber \\
         &\qquad \qquad =
           \frac 1 {\sqrt{NT}} {\rm Tr}\left(P_{\lambda^0} e P_{f^0} \, e' \, M_{\lambda^0} \, X_k \,
              f^0 \, (f^{0\prime}f^0)^{-1} \, (\lambda^{0\prime}\lambda^0)^{-1} \, \lambda^{0\prime} \right)
        \nonumber \\
         &\qquad \qquad \leq
           \frac R {\sqrt{NT}} \left\|P_{\lambda^0} e P_{f^0}\right\| \|e\| \|X_k\|
              \left\| f^0 \, (f^{0\prime}f^0)^{-1} \, (\lambda^{0\prime}\lambda^0)^{-1} \, \lambda^{0\prime} \right\|
            = {\cal O}_p(N^{-1/2}) = o_p(1) \; .
     \end{align*}
     which shows statement (d). The proof for part (e) is analogous.

   \# To prove statement (f) we need to use in addition
        $ \| P_{\lambda^0} \, e  \, X'_k \| = o_p(N^{3/2})$, which was also
      shown in Lemma~\ref{lemma:normXweak}. We find
      \begin{align*}
         & \frac 1 {\sqrt{NT}} {\rm Tr}\left(e^{\prime}M_{\lambda^0} \, X_k \, M_{f^0} \, e^{\prime}
                \, \lambda^0 \, (\lambda^{0\prime}\lambda^0)^{-1} \, (f^{0\prime}f^0)^{-1} \, f^{0\prime} \right)
       \nonumber \\
        & \qquad\qquad =
          \frac 1 {\sqrt{NT}} {\rm Tr}\left(e^{\prime}M_{\lambda^0} \, X_k \, e^{\prime} \, P_{\lambda^0}
                \, \lambda^0 \, (\lambda^{0\prime}\lambda^0)^{-1} \, (f^{0\prime}f^0)^{-1} \, f^{0\prime} \right)
       \nonumber \\
        & \qquad\qquad \qquad -
          \frac 1 {\sqrt{NT}} {\rm Tr}\left(e^{\prime}M_{\lambda^0} \, X_k \, P_{f^0} \, e^{\prime}\, P_{\lambda^0}
                \, \lambda^0 \, (\lambda^{0\prime}\lambda^0)^{-1} \, (f^{0\prime}f^0)^{-1} \, f^{0\prime} \right)
       \nonumber \\
        & \qquad\qquad \leq
         \frac R {\sqrt{NT}}  \| e \|  \| P_{\lambda^0} \, e  \, X'_k \|
                \, \| \lambda^0 \, (\lambda^{0\prime}\lambda^0)^{-1} \, (f^{0\prime}f^0)^{-1} \, f^{0\prime} \|
       \nonumber \\
        & \qquad\qquad \qquad -
         \frac R {\sqrt{NT}} \| e   \| \| X_k \| \|   P_{\lambda^0}  \, e \, P_{f^0} \|
                \| \lambda^0 \, (\lambda^{0\prime}\lambda^0)^{-1} \, (f^{0\prime}f^0)^{-1} \, f^{0\prime} \|
        \nonumber \\
          &\qquad\qquad =  o_p(1) \; .
      \end{align*}

\# Now we want to prove part (g) and (h) of the present lemma.
   For part (g) we have
   \begin{align*}
         & \frac 1 {\sqrt{NT}} {\rm Tr}\left\{ \left[ e e' - \mathbb{E}_{\cal C} \left( e e' \right) \right]  \, M_{\lambda^0} \,   X_k \, f^0 \, (f^{0\prime}f^0)^{-1} \, (\lambda^{0\prime}\lambda^0)^{-1} \, \lambda^{0\prime} \right\}
          \nonumber \\
          &=     \frac 1 {\sqrt{NT}} {\rm Tr}\left\{ \left[ e e' - \mathbb{E}_{\cal C} \left( e e' \right) \right]  \, M_{\lambda^0} \,   \overline X_k \, f^0 \, (f^{0\prime}f^0)^{-1} \, (\lambda^{0\prime}\lambda^0)^{-1} \, \lambda^{0\prime} \right\}
          \\ & \qquad
            +
            \frac 1 {\sqrt{NT}} {\rm Tr}\left\{ \left[ e e' - \mathbb{E}_{\cal C} \left( e e' \right) \right]  \, M_{\lambda^0} \,   \widetilde X_k
            P_{f^0} \, f^0 \, (f^{0\prime}f^0)^{-1} \, (\lambda^{0\prime}\lambda^0)^{-1} \, \lambda^{0\prime} \right\}
        \nonumber \\
        &= \frac 1 {\sqrt{NT}} {\rm Tr}\left\{ \left[ e e' - \mathbb{E}_{\cal C} \left( e e' \right) \right]  \, M_{\lambda^0} \,   \overline X_k \, f^0 \, (f^{0\prime}f^0)^{-1} \, (\lambda^{0\prime}\lambda^0)^{-1} \, \lambda^{0\prime} \right\}
          \\ & \qquad
          +  \frac 1 {\sqrt{NT}}
            \left\| e e' - \mathbb{E}_{\cal C} \left( e e' \right) \right\|
            \left\| \widetilde X_k P_{f^0} \right\|
            \left\|  f^0 \, (f^{0\prime}f^0)^{-1} \, (\lambda^{0\prime}\lambda^0)^{-1} \, \lambda^{0\prime}  \right\|
        \nonumber \\
        &= \frac 1 {\sqrt{NT}} {\rm Tr}\left\{ \left[ e e' - \mathbb{E}_{\cal C} \left( e e' \right) \right]  \, M_{\lambda^0} \,   \overline X_k \, f^0 \, (f^{0\prime}f^0)^{-1} \, (\lambda^{0\prime}\lambda^0)^{-1} \, \lambda^{0\prime} \right\}
         + o_p(1) .
   \end{align*}
   Thus, what is left to prove is
   $ \frac 1 {\sqrt{NT}} {\rm Tr}\left\{ \left[ e e' - \mathbb{E}_{\cal C} \left( e e' \right) \right]  \, M_{\lambda^0} \,   \overline X_k \, f^0 \, (f^{0\prime}f^0)^{-1} \, (\lambda^{0\prime}\lambda^0)^{-1} \, \lambda^{0\prime} \right\} = o_p(1)$.
   For this we define
   \begin{align*}
      B_k &=  M_{\lambda^0} \, \overline X_k \, f^0 \, (f^{0\prime}f^0)^{-1} \, (\lambda^{0\prime}\lambda^0)^{-1} \, \lambda^{0\prime} \; .
   \end{align*}
   Using part (i) and (ii) of Lemma~\ref{lemma:inequalities} we find
   \begin{align*}
      \| B_k \|_F &\leq R^{1/2} \|B_k\|
        \nonumber \\
         &\leq R^{1/2} \| \overline X_k \|
           \left\| f^0 \, (f^{0\prime}f^0)^{-1} \, (\lambda^{0\prime}\lambda^0)^{-1} \, \lambda^{0\prime} \right\|
        \nonumber \\
         &\leq R^{1/2} \| \overline X_k \|_F
           \left\| f^0 \, (f^{0\prime}f^0)^{-1} \, (\lambda^{0\prime}\lambda^0)^{-1} \, \lambda^{0\prime} \right\| \; .
   \end{align*}
   and therefore
   \begin{align*}
      \mathbb{E}_{\cal C} \left( \| B_k \|_F^2   \right)
            &\leq R
                \left\| f^0 \, (f^{0\prime}f^0)^{-1} \, (\lambda^{0\prime}\lambda^0)^{-1} \, \lambda^{0\prime} \right\|^2
                 \mathbb{E}_{\cal C} \left( \| \overline X_k \|_F^2  \right)
       \nonumber \\
             &= {\cal O}(1) \; ,
   \end{align*}
   where we used $ \mathbb{E}_{\cal C}\left( \| \overline X_k \|_F^2  \right) = {\cal O}(NT)$, which is true because we assumed uniformly
   bounded moments of $\overline X_{k,it}$.
   Applying Lemma~\ref{lemma:eeterms} we therefore find
   \begin{align*}
      \mathbb{E}_{\cal C} \left(
        \frac 1 {\sqrt{NT}} {\rm Tr}\left\{ \left[ e e' - \mathbb{E}_{\cal C} \left( e e' \right) \right]  B_k \right\}  
      \right)^2
        &\leq c_0 \, \frac T {NT} \, \mathbb{E}_{\cal C} \left( \| B_k \|_F^2   \right)  = o(1) \; ,
   \end{align*}
   and thus
   \begin{align*}
      \frac 1 {\sqrt{NT}} {\rm Tr}\left\{ \left[ e e' - \mathbb{E}_{\cal C} \left( e e' \right) \right]  B_k \right\} &= o_p(1) \; ,
   \end{align*}
   which is what we wanted to show.
   The proof for part (h) is analogous.

   \#  Part (i): Conditional on ${\cal C}$ the expression
   $e_{it}^2  \mathfrak{X}_{it} \, \mathfrak{X}_{it}'
             -  \mathbb{E}_{\cal C} \left(  e_{it}^2 \, \mathfrak{X}_{it} \, \mathfrak{X}_{it}'  
                  \right)$
        is mean zero,
        and it is also uncorrelated across $i$.
        This together with the bounded moments that we assume implies
      \begin{align*}
          {\rm Var}_{\cal C}\left\{
          \frac 1 {NT} \, \sum_{i=1}^N \, \sum_{t=1}^T
             \left[  e_{it}^2  \, \mathfrak{X}_{it} \, \mathfrak{X}_{it}'
             -  \mathbb{E}_{\cal C} \left(  e_{it}^2 \, \mathfrak{X}_{it} \, \mathfrak{X}_{it}'  
                  \right)    \right]    \right\}  &= {\cal O}_p(1/N) = o_p(1) ,
      \end{align*}
      which shows the required result.

   \# Part (j):
     Define the $K \times K$ matrix $A =   \frac 1 {NT} \, \sum_{i=1}^N \, \sum_{t=1}^T
            \,    e_{it}^2 
            \left(  \mathfrak{X}_{it}   +  {\cal X}_{it}   \right)
            \left( \mathfrak{X}_{it}   -  {\cal X}_{it}  \right)'$.
      Then we have
      \begin{align*}
           \frac 1 {NT} \, \sum_{i=1}^N \, \sum_{t=1}^T
            \,   e_{it}^2  
            \left(  \mathfrak{X}_{it} \, \mathfrak{X}_{it}'   -  {\cal X}_{it} \, {\cal X}_{it}' \right)
             &=  \frac 1 2 \left( A + A' \right).
       \end{align*}
       Let $B_k$ be the $N \times T$ matrix with
        elements $B_{k,it} =  e_{it}^2  \left(  \mathfrak{X}_{k,it}   +  {\cal X}_{k,it}   \right)$.
       We have $\| B_k \| \leq \|B_k\|_F = {\cal O}_p( \sqrt{NT} )$, because the moments
       of  $B_{k,it}$ are uniformly bounded.
       The components of $A$ can be written as
       $A_{l k} = \frac 1 {NT} {\rm Tr}[ B_l  ( \mathfrak{X}_{k}   -  {\cal X}_{k})' ]$. We therefore have
       \begin{align*}
            | A_{l k} |
            \leq  \frac 1 {NT} {\rm rank}( \mathfrak{X}_{k}   -  {\cal X}_{k} )
                  \| B_l \| \left\|  \mathfrak{X}_{k}   -  {\cal X}_{k} \right\|   .
       \end{align*}
       We have $ \mathfrak{X}_{k}   -  {\cal X}_{k}  =
          \widetilde X_k \, P_{f^0} +  P_{\lambda^0} \, \widetilde X_k \, M_{f^0}$.
        Therefore   ${\rm rank}( \mathfrak{X}_{k}   -  {\cal X}_{k} ) \leq 2 R$ and
        \begin{align*}
             | A_{l k} |
           & \leq \frac{2 R} {NT}
                  \| B_l \|
                 \left(  \left\|   \widetilde X_k \, P_{f^0}
                  \right| + \left\| P_{\lambda^0} \, \widetilde X_k \, M_{f^0} \right\|   \right)
           \\
             &\leq  \frac{2 R} {NT}
                               \| B_l \|
                 \left(  \left\|   \widetilde X_k \, P_{f^0}
                  \right| + \left\| P_{\lambda^0} \, \widetilde X_k  \right\|   \right)
              = \frac{2 R} {NT}     {\cal O}_p( \sqrt{NT} )  o_p( \sqrt{NT} ) = o_p(1),
        \end{align*}
        where we used Lemma~\ref{lemma:normXweak}.
        This shows the desired result.
\end{proof}

\begin{proof}[\bf Proof of Lemma~\ref{lemma:denCLT}]
Let $c$ be a $K$-vector such that $\left\Vert c\right\Vert =1.$ The
required result follows by the Cramer-Wold device, if we show
\begin{align*}
\frac{1}{\sqrt{NT}}\sum_{i=1}^{N}\sum_{t=1}^{T}e_{it}\mathfrak{X}_{it}^{\prime }c \, \Rightarrow \, {\cal N}\left( 0,c^{\prime }\Omega c\right) \, .
\end{align*}
For this, define $\xi_{it} =e_{it}\mathfrak{X}_{it}^{\prime }c$.
Furthermore
define $\xi_m = \xi_{M,m} = \xi_{NT,it}$, with $M = NT$ and $m = T(i-1)+t \in \{1,\ldots,M\}$.
We then have the following:

\begin{itemize}
\item[(i)]
  Under Assumption~\ref{ass:A5}$(i)$, $(ii)$, $(iii)$ the sequence
 $\{ \xi_m, \, m=1,\ldots,M \}$  is a martingale difference sequence
 under the filtration 
 ${\cal F}_m = {\cal C} \vee \sigma(\{ \xi_{n}: n < m \})$.

\item[(ii)] $\mathbb{E}(\xi_{it}^4 )$ is uniformly bounded,
because by Assumption~\ref{ass:A5}$(vi)$
  $\mathbb{E}_{\cal C} e_{it}^8$ and $\mathbb{E}_{\cal C}\left( \| X_{it} \|^{8+\epsilon}   \right)$
  are uniformly bounded by a non-random constant
  (applying Cauchy-Schwarz and the law of iterated expectations).

\item[(iii)]
$\frac 1 {M}  \sum_{m=1}^M \xi_{m}^2 = c' \Omega c + o_p(1)$. \\
This is true, because firstly under our assumptions we have
$\mathbb{E}_{\cal C}
\left\{ \left[ \frac 1 {M}  \sum_{m=1}^M 
\left( \xi_{m}^2 - \mathbb{E}_{\cal C}( \xi_{m}^2) \right) \right]^2
\right\}
=
\mathbb{E}_{\cal C}
\left\{ \frac 1 {M^2}  \sum_{m=1}^M 
\left( \xi_{m}^2 - \mathbb{E}_{\cal C}( \xi_{m}^2) \right)^2
\right\} = {\cal O}_P(1/M) =  o_P(1)$, implying we have
$\frac 1 {M}  \sum_{m=1}^M \xi_{m}^2
=  \frac 1 {M}  \sum_{m=1}^M \mathbb{E}_{\cal C}( \xi_{m}^2  )
    + o_p(1)$.
We furthermore have
$\frac 1 {M}  \sum_{m=1}^M \mathbb{E}_{\cal C}( \xi_{m}^2  ) 
  = {\rm Var}_{\cal C}( M^{-1/2} \sum_{m=1}^M \xi_{m} )$,
and using the result in equation \eqref{VarEqOmega} of the main text
we find   
  ${\rm Var}_{\cal C}( M^{-1/2} \sum_{m=1}^M \xi_{m} )
  ={\rm Var}_{\cal C}( (NT)^{-1/2} \sum_{i=1}^N \sum_{t=1}^T \xi_{it} )
  = c' \Omega c + o_p(1)$.

\end{itemize}
These three properties of $\{ \xi_m, \, m=1,\ldots,M \}$ allow us to
apply Corollary~5.26 in White~\cite*{White2001}, which is based on
    Theorem~2.3 in Mcleish~\cite*{Mcleish1974},  to obtain
    $\frac 1 {  \sqrt{M}} \sum_{m=1}^M \xi_{m}
         \to_d {\cal N}(0,c' \Omega c)$.
    This concludes the proof, because
   $ \frac 1 {  \sqrt{M}} \sum_{m=1}^M \xi_{m}  =     \frac{1}{\sqrt{NT}}\sum_{i=1}^{N}\sum_{t=1}^{T}e_{it}\mathfrak{X}_{it}^{\prime }c$.
\end{proof}

\section{Expansions of Projectors and Residuals}

The incidental parameter estimators  $\widehat f$ and $\widehat \lambda$
as well as the residuals $\widehat e$ enter into the asymptotic bias and variance
estimators for the LS estimator $\widehat \beta$.
To describe the properties of $\widehat f$, $\widehat \lambda$ and $\widehat e$,
it is convenient to have asymptotic  expansions of the projectors
$M_{\widehat \lambda}(\beta)$ and $M_{\widehat f}(\beta)$
that correspond to the minimizing parameters $\widehat \lambda(\beta)$
and $\widehat f(\beta)$ in equation \eqref{LNT123}. Note the minimizing $\widehat \lambda(\beta)$
and $\widehat f(\beta)$ can be defined for all values of $\beta$, not only for the
optimal value $\beta=\widehat \beta$.
The corresponding residuals are
$\widehat e(\beta) = Y \, - \, \beta \cdot X \, - \, \widehat \lambda(\beta) \, \widehat f'(\beta)$.

\begin{theorem}
   \label{theorem:expansions}
   Under Assumptions~\ref{ass:A1}, \ref{ass:A3}, and \ref{ass:A4}(i) we have the following expansions
  \begin{align*}
     M_{\widehat \lambda}(\beta) &= M_{\lambda^0} + M_{\widehat \lambda,e}^{(1)}
                                     + M_{\widehat \lambda,e}^{(2)}
                                     - \sum_{k=1}^K \left( \beta_k - \beta^0_k \right)  M_{\widehat \lambda,k}^{(1)}
                                     + M_{\widehat \lambda}^{({\rm rem})}(\beta) \; ,
    \nonumber \\
     M_{\widehat f}(\beta) &= M_{f^0} + M_{\widehat f,e}^{(1)}
                                     + M_{\widehat f,e}^{(2)}
                                     - \sum_{k=1}^K \left( \beta_k - \beta^0_k \right)  M_{\widehat f,k}^{(1)}
                                     + M_{\widehat f}^{({\rm rem})}(\beta) \; ,
    \nonumber \\
    \widehat e(\beta) &=  M_{\lambda^0} \, e \, M_{f^0}
            + \widehat e^{(1)}_e - \sum_{k=1}^K \left( \beta_k - \beta^0_k \right) \widehat e^{(1)}_k
            + \widehat e^{({\rm rem})}(\beta) \; ,
  \end{align*}
  where the spectral norms of the remainders satisfy for any series $\eta_{NT} \rightarrow 0$:
  \begin{align*}
     \sup_{\{\beta :\left\| \beta -\beta^{0} \right\| \leq \eta_{NT}\}}
        \frac{\left\| M_{\widehat \lambda}^{({\rm rem})}(\beta) \right\|}
        { \|\beta - \beta^0\|^2 + (NT)^{-1/2} \, \|e\| \, \|\beta - \beta^0\| \,  \,  + (NT)^{-3/2} \, \|e\|^3}
            &= {\cal O}_p\left(1\right) \, ,
     \nonumber \\
     \sup_{\{\beta :\left\| \beta -\beta^{0} \right\| \leq \eta_{NT}\}}
        \frac{\left\| M_{\widehat f}^{({\rm rem})}(\beta) \right\|}
        { \|\beta - \beta^0\|^2 + (NT)^{-1/2} \, \|e\| \, \|\beta - \beta^0\| \,  \,  + (NT)^{-3/2} \, \|e\|^3}
            &= {\cal O}_p\left(1\right) \, ,
     \nonumber \\
     \sup_{\{\beta :\left\| \beta -\beta^{0} \right\| \leq \eta_{NT}\}}
       \frac{ \left\| \widehat e^{({\rm rem})}(\beta) \right\| }
         { (NT)^{1/2} \|\beta - \beta^0\|^2 + \|e\| \, \|\beta - \beta^0\| + (NT)^{-1} \|e\|^3 }
           &= {\cal O}_p\left(1\right)  \; ,
  \end{align*}
  and we have ${\rm rank}(\widehat e^{({\rm rem})}(\beta)) \leq 7R$,
  and the expansion coefficients are given by
  \begin{align*}
     M^{(1)}_{\widehat \lambda,e} &= - \, M_{\lambda^0} \, e \,
      f^0 \, (f^{0\prime}f^0)^{-1} \, (\lambda^{0\prime}\lambda^0)^{-1} \lambda^{0\prime}
       \, - \, \lambda^0 \, (\lambda^{0\prime}\lambda^0)^{-1} \, (f^{0\prime}f^0)^{-1} \, f^{0\prime}
              \, e' \, M_{\lambda^0}  \; ,
    \nonumber \\
   M^{(1)}_{\widehat \lambda,k} &= - \, M_{\lambda^0} \, X_k  \,
      f^0 \, (f^{0\prime}f^0)^{-1} \, (\lambda^{0\prime}\lambda^0)^{-1} \lambda^{0\prime}
       \, - \, \lambda^0 \, (\lambda^{0\prime}\lambda^0)^{-1} \, (f^{0\prime}f^0)^{-1} \, f^{0\prime}
              \,  X'_k  \, M_{\lambda^0}  \; ,
    \nonumber \\
   M^{(2)}_{\widehat \lambda,e} &=
        M_{\lambda^0} \, e \, f^0 \, (f^{0\prime}f^0)^{-1} \, (\lambda^{0\prime}\lambda^0)^{-1} \lambda^{0\prime}
             \, e \, f^0 \, (f^{0\prime}f^0)^{-1} \, (\lambda^{0\prime}\lambda^0)^{-1} \lambda^{0\prime}
       \nonumber \\ & \qquad
       +\lambda^0 \, (\lambda^{0\prime}\lambda^0)^{-1} \, (f^{0\prime}f^0)^{-1} \, f^{0\prime}
             \, e' \, \lambda^0 \, (\lambda^{0\prime}\lambda^0)^{-1} \, (f^{0\prime}f^0)^{-1} \, f^{0\prime}
             \, e' \, M_{\lambda^0}
       \nonumber \\ & \qquad
          - M_{\lambda^0} \, e \, M_{f^0} \, e' \,
\lambda^0\,(\lambda^{0\prime}\lambda^0)^{-1}\,(f^{0\prime}f^0)^{-1}\,(\lambda^{0\prime}\lambda^0)^{-1}\,\lambda^{0\prime}
       \nonumber \\ & \qquad
-\lambda^0\,(\lambda^{0\prime}\lambda^0)^{-1}\,(f^{0\prime}f^0)^{-1}\,(\lambda^{0\prime}\lambda^0)^{-1}\,\lambda^{0\prime}
   \, e \, M_{f^0} \, e' \, M_{\lambda^0}
       \nonumber \\ & \qquad
          - M_{\lambda^0} \, e \,
      f^0 \, (f^{0\prime}f^0)^{-1} \, (\lambda^{0\prime}\lambda^0)^{-1} \, (f^{0\prime}f^0)^{-1} \, f^{0\prime}
         \, e' \, M_{\lambda^0}
       \nonumber \\ & \qquad
        + \lambda^0 \, (\lambda^{0\prime}\lambda^0)^{-1} \, (f^{0\prime}f^0)^{-1} \, f^{0\prime}
            \, e' \, M_{\lambda^0} \, e \,
              f^0 \, (f^{0\prime}f^0)^{-1} \, (\lambda^{0\prime}\lambda^0)^{-1} \lambda^{0\prime} \, ,
  \end{align*}
  analogously
  \begin{align*}
   M^{(1)}_{\widehat f,e} &= \, - \, M_{f^0} \, e' \,
      \lambda^0 \, (\lambda^{0\prime}\lambda^0)^{-1} \, (f^{0\prime}f^0)^{-1} f^{0\prime}
       \, - \, f^0 \, (f^{0\prime}f^0)^{-1} \, (\lambda^{0\prime}\lambda^0)^{-1} \, \lambda^{0\prime}
              \, e \, M_{f^0}   \; ,
    \nonumber \\
  M^{(1)}_{\widehat f,k} &=
       \, - \, M_{f^0} \,  X'_k  \,
      \lambda^0 \, (\lambda^{0\prime}\lambda^0)^{-1} \, (f^{0\prime}f^0)^{-1} f^{0\prime}
       \, - \, f^0 \, (f^{0\prime}f^0)^{-1} \, (\lambda^{0\prime}\lambda^0)^{-1} \, \lambda^{0\prime}
              \,  X_k \,M_{f^0}  \; ,
    \nonumber \\
  M^{(2)}_{\widehat f,e} &=
       M_{f^0} \, e' \, \lambda^0 \, (\lambda^{0\prime}\lambda^0)^{-1} \, (f^{0\prime}f^0)^{-1} f^{0\prime}
             \, e' \, \lambda^0 \, (\lambda^{0\prime}\lambda^0)^{-1} \, (f^{0\prime}f^0)^{-1} f^{0\prime}
       \nonumber \\ & \qquad
       +f^0 \, (f^{0\prime}f^0)^{-1} \, (\lambda^{0\prime}\lambda^0)^{-1} \, \lambda^{0\prime}
             \, e \, f^0 \, (f^{0\prime}f^0)^{-1} \, (\lambda^{0\prime}\lambda^0)^{-1} \, \lambda^{0\prime}
             \, e \, M_{f^0}
       \nonumber \\ & \qquad
          - M_{f^0} \, e' \, M_{\lambda^0} \, e \,
f^0\,(f^{0\prime}f^0)^{-1}\,(\lambda^{0\prime}\lambda^0)^{-1}\,(f^{0\prime}f^0)^{-1}\,f^{0\prime}
       \nonumber \\ & \qquad
-f^0\,(f^{0\prime}f^0)^{-1}\,(\lambda^{0\prime}\lambda^0)^{-1}\,(f^{0\prime}f^0)^{-1}\,f^{0\prime}
   \, e' \, M_{\lambda^0} \, e \, M_{f^0}
       \nonumber \\ & \qquad
          - M_{f^0} \, e' \,
      \lambda^0 \, (\lambda^{0\prime}\lambda^0)^{-1} \, (f^{0\prime}f^0)^{-1} \, (\lambda^{0\prime}\lambda^0)^{-1} \, \lambda^{0\prime}
         \, e \, M_{f^0}
       \nonumber \\ & \qquad
        + f^0 \, (f^{0\prime}f^0)^{-1} \, (\lambda^{0\prime}\lambda^0)^{-1} \, \lambda^{0\prime}
            \, e \, M_{f^0} \, e' \,
              \lambda^0 \, (\lambda^{0\prime}\lambda^0)^{-1} \, (f^{0\prime}f^0)^{-1} f^{0\prime} \, ,
  \end{align*}
  and finally
  \begin{align*}
   \widehat e^{(1)}_k &= M_{\lambda^0} \, X_k \, M_{f^0} \; ,
             \nonumber \\
   \widehat e^{(1)}_e &= - M_{\lambda^0} \, e \, M_{f^0} \, e' \,
   \lambda^0\,(\lambda^{0\prime}\lambda^0)^{-1}\,(f^{0\prime}f^0)^{-1}\,f^{0\prime}
             \nonumber \\ & \qquad
  - \lambda^0\,(\lambda^{0\prime}\lambda^0)^{-1}\,(f^{0\prime}f^0)^{-1}\,f^{0\prime}
   \, e' \, M_{\lambda^0} \, e \, M_{f^0}
             \nonumber \\ & \qquad
  - M_{\lambda^0} \, e \,
      f^0 \, (f^{0\prime}f^0)^{-1} \, (\lambda^{0\prime}\lambda^0)^{-1} \, \lambda^{0\prime}
              \, e \, M_{f^0} \; .
  \end{align*}
\end{theorem}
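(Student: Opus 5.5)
The approach is analytic perturbation theory for the spectral projector, as in Kato~\cite{Kato} and as used in Moon and Weidner~\cite*{MoonWeidner2015} for Theorem~\ref{th:ass_expand}. Write $Z(\beta)=Y-\beta\cdot X=\lambda^0 f^{0\prime}+E(\beta)$ with $E(\beta)=e-\sum_k(\beta_k-\beta^0_k)X_k$. By Lemma~\ref{lemma:Optimization}, $P_{\widehat\lambda}(\beta)$ is the projector onto the eigenspace of the $R$ largest eigenvalues of $Z Z'/(NT)$. Split this matrix as $A_0+A_1$, where
\begin{align*}
A_0=\lambda^0 f^{0\prime}f^0\lambda^{0\prime}/(NT),\qquad A_1=(\lambda^0 f^{0\prime}E'+E f^0\lambda^{0\prime}+EE')/(NT).
\end{align*}
By Assumption~\ref{ass:A1}, $A_0$ has $R$ eigenvalues bounded away from zero wpa1, and the rest of its spectrum is the eigenvalue $0$, with eigenprojector $M_{\lambda^0}$.

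The perturbation satisfies $\|A_1\|\le c\,(\|E\|/\sqrt{NT}+\|E\|^2/(NT))$. Assumption~\ref{ass:A4}(i) gives $\|X_k\|\le\|X_k\|_F={\cal O}_p(\sqrt{NT})$, so $\|E\|/\sqrt{NT}\le \|e\|/\sqrt{NT}+{\cal O}_p(\|\beta-\beta^0\|)$. This tends to zero uniformly over $\|\beta-\beta^0\|\le\eta_{NT}$ by Assumption~\ref{ass:A3}. Hence wpa1 a fixed contour $\Gamma$ around $0$ separates the two groups of eigenvalues, and
\begin{align*}
M_{\widehat\lambda}(\beta)=\frac{1}{2\pi i}\oint_\Gamma (z-A_0-A_1)^{-1}dz=\sum_{p\ge0}\frac{1}{2\pi i}\oint_\Gamma (z-A_0)^{-1}\big[A_1(z-A_0)^{-1}\big]^p dz ,
\end{align*}
where the Neumann series converges in spectral norm.

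The explicit coefficients come from evaluating the residues at $z=0$. Write $(z-A_0)^{-1}=z^{-1}M_{\lambda^0}+\sum_{j\ge0}z^j S^{j+1}$ with $S=(NT)\,\lambda^0(\lambda^{0\prime}\lambda^0)^{-1}(f^{0\prime}f^0)^{-1}(\lambda^{0\prime}\lambda^0)^{-1}\lambda^{0\prime}$, the reduced resolvent of $A_0$ up to sign. The $p=1$ term is $-M_{\lambda^0}A_1 S-SA_1M_{\lambda^0}$. Keeping only the parts of $A_1$ that are linear in $E$, and using $f^{0\prime}E'$ together with the identity $S\lambda^0 f^{0\prime}/(NT)=\lambda^0(\lambda^{0\prime}\lambda^0)^{-1}(f^{0\prime}f^0)^{-1}f^{0\prime}$, this term produces $M^{(1)}_{\widehat\lambda,e}$ and $-(\beta_k-\beta^0_k)M^{(1)}_{\widehat\lambda,k}$. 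The $EE'$ part of the $p=1$ term, plus the $p=2$ term with both factors taken from the linear part of $A_1$, gives all contributions quadratic in $E$. Restricted to $e$, these are exactly the six terms of $M^{(2)}_{\widehat\lambda,e}$; I will verify this by direct bookkeeping of which of $M_{\lambda^0}$ or $S$ sits between the factors.

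Everything else goes into $M^{(\rm rem)}_{\widehat\lambda}$. This includes the quadratic terms containing $(\beta-\beta^0)^2$ or $(\beta-\beta^0)e$, which are bounded by $\|\beta-\beta^0\|^2+(NT)^{-1/2}\|e\|\|\beta-\beta^0\|$. It also includes the tail $p\ge3$ together with the cubic pieces of $p=2$, bounded by a geometric series in $\|E\|/\sqrt{NT}$ times $(\|E\|/\sqrt{NT})^3$. By the inequality $(a+b)^3\le4(a^3+b^3)$ and $\|\beta-\beta^0\|^3\le\|\beta-\beta^0\|^2$, this tail is dominated by the stated denominator. The result for $M_{\widehat f}(\beta)$ follows by the same argument applied to $Z'Z$, or by the $N\leftrightarrow T$ symmetry of the model.

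For the residuals I use $\widehat e(\beta)=M_{\widehat\lambda}(\beta)\,Z\,M_{\widehat f}(\beta)$. This identity holds because $\widehat\lambda\widehat f'$ is the rank-$R$ truncated SVD of $Z$, so $P_{\widehat\lambda}Z=ZP_{\widehat f}$. Substituting $Z=\lambda^0f^{0\prime}+E$ and both projector expansions, the zeroth-order term is $M_{\lambda^0}EM_{f^0}$, which splits into $M_{\lambda^0}eM_{f^0}-\sum_k(\beta_k-\beta^0_k)\widehat e^{(1)}_k$. The next-order terms come from $M_{\lambda^0}E M^{(1)}_{\widehat f}$, $M^{(1)}_{\widehat\lambda}E M_{f^0}$, and $M^{(2)}$-type terms acting on $\lambda^0 f^{0\prime}$, using $M_{\lambda^0}\lambda^0=0$ and $M^{(1)}_{\widehat\lambda,e}\lambda^0f^{0\prime}M^{(1)}_{\widehat f,e}$. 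After simplification these should collapse to the three terms of $\widehat e^{(1)}_e$. The remainder bound is the projector remainder bound multiplied by $\|Z\|={\cal O}_p(\sqrt{NT})$.

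For the rank claim, write $\widehat e-M_{\lambda^0}ZM_{f^0}=(M_{\widehat\lambda}-M_{\lambda^0})ZM_{\widehat f}+M_{\lambda^0}Z(M_{\widehat f}-M_{f^0})$. Each bracketed difference of projectors of equal rank has rank at most $2R$, and $M_{\lambda^0}\lambda^0 f^{0\prime}M_{f^0}=0$. The three explicit terms of $\widehat e^{(1)}_e$ each have rank at most $R$. Summing gives a rank bound of $4R+3R=7R$ for $\widehat e^{(\rm rem)}$.

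I expect the main obstacle to be the exact second-order bookkeeping: showing that the $EE'$ contribution of the $p=1$ term and the $p=2$ residues combine to exactly the listed $M^{(2)}_{\widehat\lambda,e}$ and $\widehat e^{(1)}_e$, with the correct signs and no stray terms. It is equally delicate to keep the remainder bound uniform in $\beta$ while the mixed $(\beta-\beta^0)e$ terms are moved into it. To handle this I would first treat the case $\beta=\beta^0$ and then track the terms linear in $\beta-\beta^0$ separately.
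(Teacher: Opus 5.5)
Your proposal is correct and is essentially the paper's argument: the paper imports the Kato resolvent expansion of $M_{\widehat\lambda}(\beta)$ from Moon and Weidner (2015) and makes it explicit to second order, which is the contour-integral/Neumann-series computation you reconstruct. One small fix: the Laurent expansion should read $(z-A_0)^{-1}=z^{-1}M_{\lambda^0}-\sum_{j\ge0}z^jS^{j+1}$, which is the sign your $p=1$ term already uses and which the $p=2$ bookkeeping requires. The remaining differences are cosmetic: the paper obtains $\widehat e$ from $\widehat e=M_{\widehat\lambda}Z$ alone, so it needs neither the $M_{\widehat f}$ expansion nor the cancellation through the cross term $M^{(1)}_{\widehat\lambda,e}\lambda^0f^{0\prime}M^{(1)}_{\widehat f,e}$, and it bounds the rank via $\widehat e^{({\rm rem})}=(E-M_{\lambda^0}EM_{f^0})+(\lambda^0f^{0\prime}-\widehat\lambda\widehat f')-\widehat e^{(1)}_e$ with ranks $2R+2R+3R$, whereas your two-projector decomposition gives $4R+3R$ and is equally valid.
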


\begin{proof}[\bf Proof]
   The general expansion of $M_{\widehat \lambda}(\beta)$ is given in
   Moon and Weidner \cite*{MoonWeidner2015}, and in the theorem we just make
   this expansion explicit up to a particular order.
   The result for $M_{\widehat f}(\beta)$
   is just obtained by symmetry ($N \leftrightarrow T$, $\lambda \leftrightarrow f$, $e \leftrightarrow e'$,
   $X_k \leftrightarrow X_k'$). For the residuals $\widehat e$ we have
   \begin{align*}
     \widehat e &= M_{\widehat \lambda} \, \left( Y - \sum_{k=1} \, \widehat \beta_k \, X_k \right)
   = M_{\widehat \lambda} \, \left[ e - 
     \left( \widehat \beta - \beta^0 \right) \cdot X
                                 +  \lambda^0 f^{0\prime} \right]  \; ,
   \end{align*}
   and plugging in the expansion of $M_{\widehat \lambda}$ gives the expansion of $\widehat e$.
   We have $\widehat e(\beta)
        = A_0  + \lambda^0 f^{0\prime} - \widehat \lambda(\beta) \widehat f'(\beta)$,
   where $A_0=e - \sum_k (\beta_k-\beta^0_k) X_k$.
   Therefore $\widehat e^{({\rm rem})}(\beta)=A_1+A_2+A_3$ with
   $A_1 = A_0 -  M_{\lambda^0} \, A_0 \, M_{f^0}$,
   $A_2 =\lambda^0 f^{0\prime} - \widehat \lambda(\beta) \widehat f'(\beta)$, and
   $A_3 =-\widehat e^{(1)}_e$.
   We find ${\rm rank}(A_1)\leq 2R$, ${\rm rank}(A_2)\leq 2R$, ${\rm rank}(A_3)\leq 3R$, and thus
   ${\rm rank}(\widehat e^{({\rm rem})}(\beta)) \leq 7R$, as stated in the theorem.
\end{proof}

Having expansions for $M_{\widehat \lambda}(\beta)$ and $M_{\widehat f}(\beta)$, we also have expansions
for $P_{\widehat \lambda}(\beta)=\mathbb{I}_N-M_{\widehat \lambda}(\beta)$ and $P_{\widehat f}(\beta)=\mathbb{I}_T-M_{\widehat f}(\beta)$.
The reason why we give expansions of the projectors and not expansions of $\widehat \lambda(\beta)$ and $\widehat f(\beta)$
directly is for the latter we would need to specify a normalization, whereas the projectors
are independent of any normalization choice. An expansion for $\widehat \lambda(\beta)$ can, for example, be defined by
$\widehat \lambda(\beta) = P_{\widehat \lambda}(\beta) \lambda^0$, in which case the normalization of $\widehat \lambda(\beta)$
is implicitly defined by the normalization of $\lambda^0$.

\section{Consistency Proof for Bias and Variance Estimators
  (Proof of Theorem~\ref{th:biascorrection})}
\label{app:ProofThBias}  

It is convenient to introduce some alternative notation for
Definition~\ref{def:estimators}
in section~\ref{sec:BiasCorrection} of the main text.

\begin{paragraph}{\bf Definition}
\it
Let $\Gamma: \mathbb{R} \rightarrow \mathbb{R}$ be the truncation kernel defined by $\Gamma(x)=1$ for $|x|\leq 1$, and $\Gamma(x)=0$
otherwise. Let $M$ be a bandwidth parameter that depends on $N$ and $T$.
For an $N\times N$ matrix $A$ with elements $A_{ij}$ and a $T\times T$ matrix $B$ with elements
$B_{ts}$ we define
\begin{itemize}
   \item[(i)] the diagonal truncations
     $A^{\rm truncD} = {\rm diag}[ (A_{ii})_{i=1,\ldots,N} ]$
     and
     $B^{\rm truncD} =  {\rm diag}[ (B_{tt})_{t=1,\ldots,T} ]$.

   \item[(ii)] the right-sided  Kernel truncation of $B$,
          which is a $T \times T$ matrix $B^{\rm truncR}$
          with elements $B^{\rm truncR}_{ts} = \Gamma\left( \frac{s-t} M \right) B_{ts}$ for $t<s$,
          and $B^{\rm truncR}_{ts}=0$ otherwise.

\end{itemize}
\end{paragraph}

Here, we suppress the dependence of $B^{\rm truncR}$ on the bandwidth parameter $M$. Using this notation we can represent the 
estimators for the bias in Definition~\ref{def:estimators} as follows:
    \begin{align*}
       \widehat B_{1,k} &= \frac 1 N \, {\rm Tr}\left[ P_{\widehat f} \left( \widehat e' X_k \right)^{\rm truncR} \right]  \; ,
       \nonumber \\
       \widehat B_{2,k} &=
       \frac 1 T
         {\rm Tr}\left[ \left( \widehat e \, {\widehat e}^{\prime} \right)^{\rm truncD} \, M_{\widehat \lambda} \, X_k \,
              \widehat f \, (\widehat f^{\prime} \widehat f)^{-1} \,
                             (\widehat \lambda^{\prime}\widehat \lambda)^{-1} \, \widehat \lambda^{\prime} \right] \; ,
       \nonumber \\
       \widehat B_{3,k} &=
       \frac 1 N
       {\rm Tr}\left[ \left( \widehat e' \, \widehat e \right)^{\rm truncD} \, M_{\widehat f} \, X^{\prime}_k \,
              \widehat \lambda \, (\widehat \lambda^{\prime} \widehat \lambda)^{-1} \,
                             (\widehat f^{\prime}\widehat f)^{-1} \, \widehat f^{\prime} \right]
                   \; .
    \end{align*}

Before proving Theorem~\ref{th:biascorrection} we establish some
preliminary results.

\begin{corollary}
   \label{lemma:sqrtNTcons}
   Under the Assumptions of Theorem~\ref{th:limdis}
    we have $\sqrt{NT} \left( \widehat \beta - \beta^0 \right) = {\cal O}_p(1)$.
\end{corollary}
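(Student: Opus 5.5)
The plan is to obtain the corollary directly from Corollary~\ref{cor:limit}. That result delivers $\sqrt{NT}(\widehat\beta-\beta^0)=W_{NT}^{-1}C_{NT}+o_p(1)={\cal O}_p(1)$, provided three things hold: the assumptions of Theorems~\ref{th:consistency} and~\ref{th:ass_expand}, $\beta^0$ interior, and $C_{NT}={\cal O}_p(1)$. So the work reduces to checking that the assumptions of Theorem~\ref{th:limdis} imply each of these.

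First, the assumptions of Theorems~\ref{th:consistency} and~\ref{th:ass_expand}. Assumption~\ref{ass:A1} and Assumption~\ref{ass:A4} are assumed directly. Assumption~\ref{ass:A2} follows from Assumption~\ref{ass:A5}, by Remark~(3) after that assumption. Assumption~\ref{ass:A3}$^*$, and hence Assumption~\ref{ass:A3}, follows from Lemma~\ref{lemma:Enorm}(i), which gives $\|e\|={\cal O}_p(\sqrt{\max(N,T)})={\cal O}_p(N^{1/2})=o_p(N^{2/3})$ under $N/T\to\kappa^2$. Compactness of $\mathbb B$ and interiority of $\beta^0$ are Assumption~\ref{ass:A5}(vii). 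Non-degeneracy of $W_{NT}$ was established in the supplement, in the section regarding the proof of Corollary~\ref{cor:limit}.

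Second, and this is the only substantive step, the bound $C_{NT}={\cal O}_p(1)$. It is in fact already contained in the proof of Theorem~\ref{th:limdis}, which rests only on Lemmas~\ref{lemma:vanishing} and~\ref{lemma:denCLT} and not on the present corollary, so there is no circularity. For $C^{(1)}$, use the decomposition $M_{\lambda^0}X_kM_{f^0}=\mathfrak X_k-\widetilde X_kP_{f^0}-P_{\lambda^0}\widetilde X_k+P_{\lambda^0}\widetilde X_kP_{f^0}$. Lemma~\ref{lemma:denCLT} shows that $(NT)^{-1/2}\mathrm{Tr}(e'\mathfrak X_k)$ converges in distribution and is therefore ${\cal O}_p(1)$. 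Lemma~\ref{lemma:vanishing}(a)--(c) shows the remaining terms are $o_p(1)$ plus the deterministic-given-${\cal C}$ term $-(NT)^{-1/2}\mathrm{Tr}[P_{f^0}\mathbb E_{\cal C}(e'X_k)]$. That term equals $-\sqrt{N/T}\,N^{-1}\mathrm{Tr}[\cdot]\to-\kappa B_1$ by Assumption~\ref{ass:A6}, so it is ${\cal O}_p(1)$. For $C^{(2)}$, Lemma~\ref{lemma:vanishing}(d)--(h) reduces it to $o_p(1)$ plus the two conditional-expectation terms, which converge to $-\kappa^{-1}B_2-\kappa B_3$ by Assumption~\ref{ass:A6}. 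Hence $C_{NT}={\cal O}_p(1)$.

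Combining these steps, Corollary~\ref{cor:limit} applies, and with $\|W_{NT}^{-1}\|={\cal O}_p(1)$ we get $\sqrt{NT}(\widehat\beta-\beta^0)=W_{NT}^{-1}C_{NT}+o_p(1)={\cal O}_p(1)$. The main point needing care is the bookkeeping: the bias terms must be shown only bounded, via the existence of the probability limits in Assumption~\ref{ass:A6}, not vanishing, and the supplementary lemmas they rely on must be stated under exactly the assumptions of Theorem~\ref{th:limdis}.
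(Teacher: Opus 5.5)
Your argument is correct and is essentially the paper's. The paper notes that the corollary follows directly from Theorem~\ref{th:limdis}, since a sequence converging in distribution to a normal law is ${\cal O}_p(1)$; the proof of that theorem is exactly your route of applying Corollary~\ref{cor:limit} after controlling $C_{NT}$ via Lemmas~\ref{lemma:vanishing} and~\ref{lemma:denCLT}. You have unpacked that proof rather than citing its conclusion, so your separate check of $C_{NT}={\cal O}_p(1)$ is valid but unnecessary.
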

This corollary directly follows from Theorem~\ref{th:limdis}.

\begin{corollary}
   \label{lemma:Pfhat}
   Under the Assumptions of Theorem~\ref{th:biascorrection} we have
  \begin{align*}
      \left\| P_{\widehat \lambda} - P_{\lambda^0} \right\| &= \left\| M_{\widehat \lambda} - M_{\lambda^0} \right\|
              = {\cal O}_p(N^{-1/2}) \; ,
      \nonumber \\
      \left\| P_{\widehat f} - P_{f^0} \right\| &= \left\| M_{\widehat f} - M_{f^0} \right\| = {\cal O}_p(T^{-1/2}) \; .
   \end{align*}
\end{corollary}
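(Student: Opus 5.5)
The natural route is to evaluate the projector expansion of Theorem~\ref{theorem:expansions} at $\beta=\widehat\beta$ and bound every term in spectral norm. By symmetry ($N\leftrightarrow T$, $\lambda\leftrightarrow f$, $e\leftrightarrow e'$, $X_k\leftrightarrow X_k'$) it suffices to treat $M_{\widehat\lambda}$; the statement for $M_{\widehat f}$ follows the same way, with $N$ and $T$ interchanged. Since $N$ and $T$ grow at the same rate, $N^{-1/2}$ and $T^{-1/2}$ are of the same order anyway. The equality $\|P_{\widehat\lambda}-P_{\lambda^0}\|=\|M_{\widehat\lambda}-M_{\lambda^0}\|$ is trivial, because $P=\mathbb{I}-M$.

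We write
\begin{align*}
M_{\widehat\lambda}-M_{\lambda^0}
 = M^{(1)}_{\widehat\lambda,e}+M^{(2)}_{\widehat\lambda,e}
 -\sum_k(\widehat\beta_k-\beta^0_k)M^{(1)}_{\widehat\lambda,k}
 +M^{(\rm rem)}_{\widehat\lambda}(\widehat\beta).
\end{align*}
The basic ingredients are the following:
\begin{itemize}
\item[(i)] $\|e\|={\cal O}_p(N^{1/2})$, by Lemma~\ref{lemma:Enorm}(i), whose conditions hold under Assumption~\ref{ass:A5};
\item[(ii)] $\|X_k\|\le\|X_k\|_F={\cal O}_p(\sqrt{NT})$;
\item[(iii)] $\|\lambda^0\|={\cal O}_p(N^{1/2})$, $\|f^0\|={\cal O}_p(T^{1/2})$, $\|(\lambda^{0\prime}\lambda^0)^{-1}\|={\cal O}_p(N^{-1})$ and $\|(f^{0\prime}f^0)^{-1}\|={\cal O}_p(T^{-1})$, by Assumption~\ref{ass:A1};
\item[(iv)] $\sqrt{NT}\,\|\widehat\beta-\beta^0\|={\cal O}_p(1)$, by Corollary~\ref{lemma:sqrtNTcons}.
\end{itemize}

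For the first-order term we use the product bound of Lemma~\ref{lemma:inequalities}(ii):
\[
\|M^{(1)}_{\widehat\lambda,e}\|\le 2\|e\|\,\|f^0\|\,\|(f^{0\prime}f^0)^{-1}\|\,\|(\lambda^{0\prime}\lambda^0)^{-1}\|\,\|\lambda^0\|={\cal O}_p\big(N^{1/2}T^{-1/2}N^{-1/2}\big)={\cal O}_p(N^{-1/2}).
\]
The same counting gives $\|M^{(1)}_{\widehat\lambda,k}\|={\cal O}_p(\sqrt{NT}\,(NT)^{-1/2})={\cal O}_p(1)$. Multiplied by $\|\widehat\beta-\beta^0\|$, this term is ${\cal O}_p((NT)^{-1/2})$. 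Each summand of $M^{(2)}_{\widehat\lambda,e}$ contains two factors of $e$ and two ``inverse'' blocks, each contributing $(NT)^{-1/2}$. Hence $\|M^{(2)}_{\widehat\lambda,e}\|={\cal O}_p(\|e\|^2/(NT))={\cal O}_p(N^{-1})$.

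Finally, the remainder bound of Theorem~\ref{theorem:expansions} applies with $\eta_{NT}$ chosen so that $\|\widehat\beta-\beta^0\|\le\eta_{NT}$ wpa1; consistency (Theorem~\ref{th:consistency}) guarantees such a sequence exists. It gives
\[
\|M^{(\rm rem)}_{\widehat\lambda}(\widehat\beta)\|={\cal O}_p\big((NT)^{-1}+(NT)^{-1/2}N^{1/2}(NT)^{-1/2}+(NT)^{-3/2}N^{3/2}\big)=o_p(N^{-1/2}).
\]
Summing the four bounds yields ${\cal O}_p(N^{-1/2})$.

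The only delicate point is making sure the hypotheses of Theorem~\ref{theorem:expansions} (Assumptions~\ref{ass:A1}, \ref{ass:A3} and \ref{ass:A4}(i)) hold under the present assumptions; this follows from Remark~(4) after Assumption~\ref{ass:A5}. One must also justify evaluating the uniform remainder bound at the random point $\widehat\beta$. I would handle this by choosing $\eta_{NT}=\max\{(NT)^{-1/4},\,\|\widehat\beta-\beta^0\|\text{-dominating deterministic sequence}\}$; concretely, $\eta_{NT}=(NT)^{-1/4}$ works by (iv).
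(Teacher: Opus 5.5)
Your proposal is correct and follows the paper's own argument. You evaluate the projector expansion of Theorem~\ref{theorem:expansions} at $\widehat\beta$ and bound $M^{(1)}_{\widehat\lambda,e}$, $M^{(2)}_{\widehat\lambda,e}$ and $M^{(1)}_{\widehat\lambda,k}$ by ${\cal O}_p(N^{-1/2})$, ${\cal O}_p(N^{-1})$ and ${\cal O}_p(1)$, using $\|e\|={\cal O}_p(N^{1/2})$ and $\|X_k\|={\cal O}_p(\sqrt{NT})$; you then combine this with $\sqrt{NT}$-consistency and handle $M_{\widehat f}$ by symmetry. Your explicit treatment of the remainder, evaluating the uniform bound at the random point $\widehat\beta$ via $\eta_{NT}=(NT)^{-1/4}$, fills in a step the paper leaves implicit.
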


\begin{proof}[\bf Proof]
   Using $\|e\|={\cal O}_p(N^{1/2})$ and $\|X_k\|={\cal O}_p(N)$ we find
   the expansion terms in Theorem~\ref{theorem:expansions} satisfy
   \begin{align*}
      \left\|M_{\widehat \lambda,e}^{(1)}\right\| &= {\cal O}_p(N^{-1/2}) \; , &
      \left\|M_{\widehat \lambda,e}^{(2)}\right\| &= {\cal O}_p(N^{-1}) \; , &
      \left\|M_{\widehat \lambda,k}^{(1)}\right\| &= {\cal O}_p(1) \; .
   \end{align*}
   Together with corollary \ref{lemma:sqrtNTcons} the result for $\left\| M_{\widehat \lambda} - M_{\lambda^0} \right\|$
   immediately follows. In addition we have $P_{\widehat \lambda} - P_{\lambda^0}=-M_{\widehat \lambda}+M_{\lambda^0}$.
   The proof for $M_{\widehat f}$ and $P_{\widehat f}$ is analogous.
\end{proof}

\begin{lemma}
   \label{lemma:A1A2}
   Under the Assumptions of Theorem~\ref{th:biascorrection} we have
     \begin{align*}
         A_1 &\equiv \frac 1 {NT} \sum_{i=1}^N \sum_{t=1}^T e_{it}^2
               \left( {\cal X}_{it} {\cal X}_{it}' - \widehat {\cal X}_{it} \widehat {\cal X}_{it}' \right)
                     = o_p(1)  \; ,
        \\
         A_2 &\equiv \frac 1 {NT} \sum_{i=1}^N \sum_{t=1}^T \left( e_{it}^2 - \widehat e_{it}^2 \right)
                                               \widehat {\cal X}_{it} \widehat {\cal X}_{it}' = o_p(1) \; .
     \end{align*}
\end{lemma}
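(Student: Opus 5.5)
The plan is to reduce both statements to spectral-norm bounds on the differences $\widehat{\cal X}_k - {\cal X}_k$ and $\widehat e - e$. For each, we exploit that the differences have either bounded rank or are small in Frobenius norm. We use the expansions of Theorem~\ref{theorem:expansions}, the rates in Corollary~\ref{lemma:Pfhat}, and $\sqrt{NT}(\widehat\beta-\beta^0)={\cal O}_p(1)$ from Corollary~\ref{lemma:sqrtNTcons}. We also use the uniform boundedness of $\lambda^0_i,f^0_t$ from Assumption~\ref{ass:bc}(i), together with the moment bounds of Assumption~\ref{ass:A5}(vi).

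For $A_1$ we work entrywise in $k,\ell$, writing $({\cal X}_k{\cal X}_\ell - \widehat{\cal X}_k\widehat{\cal X}_\ell)_{it} = {\cal X}_{k,it}({\cal X}_\ell-\widehat{\cal X}_\ell)_{it} + ({\cal X}_k-\widehat{\cal X}_k)_{it}\widehat{\cal X}_{\ell,it}$. Each piece then has the form $(NT)^{-1}{\rm Tr}[G'(\,{\cal X}_\ell-\widehat{\cal X}_\ell)]$ with $G_{it}=e_{it}^2{\cal X}_{k,it}$ (or with $\widehat{\cal X}$). We have $\|G\|\le\|G\|_F={\cal O}_p(\sqrt{NT})$ by the bounded moments; for the $\widehat{\cal X}$ version we first replace $\widehat{\cal X}$ by ${\cal X}$ plus the difference, and the resulting quadratic term is handled in the same way. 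Next, write $\widehat{\cal X}_\ell - {\cal X}_\ell = (M_{\widehat\lambda}-M_{\lambda^0})X_\ell M_{\widehat f} + M_{\lambda^0}X_\ell(M_{\widehat f}-M_{f^0})$. Since $M_{\widehat\lambda}-M_{\lambda^0} = P_{\lambda^0}-P_{\widehat\lambda}$ has rank at most $2R$, this difference has rank at most $4R$. Its spectral norm is at most $\|X_\ell\|({\cal O}_p(N^{-1/2})+{\cal O}_p(T^{-1/2})) = {\cal O}_p(\sqrt{NT}N^{-1/2})$. Applying Lemma~\ref{lemma:inequalities}(v) then gives $|{\rm Tr}(\cdot)|/(NT)\le 4R\,\|G\|\,\|\widehat{\cal X}_\ell-{\cal X}_\ell\|/(NT) = {\cal O}_p(N^{-1/2})=o_p(1)$.

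For $A_2$ we write $e_{it}^2-\widehat e_{it}^2 = -(\widehat e_{it}-e_{it})^2 - 2e_{it}(\widehat e_{it}-e_{it})$, and put $D=\widehat e - e$. By Theorem~\ref{theorem:expansions} at $\beta=\widehat\beta$, $D = -P_{\lambda^0}e - eP_{f^0}+P_{\lambda^0}eP_{f^0} + \widehat e^{(1)}_e - (\widehat\beta-\beta^0)\cdot M_{\lambda^0}XM_{f^0} + \widehat e^{(\rm rem)}$. The key bound to aim for is $\|D\|_F^2 = o_p(NT)$, indeed ${\cal O}_p(N+T)$. The low-rank pieces are handled via $\|\cdot\|_F\le\sqrt{\rm rank}\,\|\cdot\|$: we have $\|P_{\lambda^0}e\|_F\le\sqrt R\|e\|={\cal O}_p(\sqrt N)$, $\|\widehat e^{(1)}_e\|={\cal O}_p(1)$, and the remainder is ${\cal O}_p(N^{-1/2})$ with rank $\le 7R$. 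The term $(\widehat\beta-\beta^0)\cdot{\cal X}$ has Frobenius norm ${\cal O}_p(1)$. Then the quadratic part of $A_2$ is bounded by $\max_{it}\|\widehat{\cal X}_{it}\|^2 \|D\|_F^2/(NT)$, and the cross part by Cauchy–Schwarz through $\big[(NT)^{-1}\sum e_{it}^2\|\widehat{\cal X}_{it}\|^4\big]^{1/2}\big[(NT)^{-1}\|D\|_F^2\big]^{1/2}$.

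The main obstacle is controlling $\widehat{\cal X}_{it}$ entrywise, i.e. its max-norm or fourth moments, since $\widehat{\cal X}$ involves estimated projectors and the crude bound $\max\|X_{it}\|={\cal O}_p((NT)^{1/(8+\epsilon)})$ may lose against $\|D\|_F^2/(NT)={\cal O}_p(1/N)$. That loss is harmless here, because $(NT)^{2/(8+\epsilon)}/N\to0$. To bound $\widehat{\cal X}$ itself we write $\widehat{\cal X} = X - P_{\widehat\lambda}X - XP_{\widehat f} + P_{\widehat\lambda}XP_{\widehat f}$. We then use that the rows of $P_{\widehat\lambda}$ are close to those of $P_{\lambda^0}$, whose entries are ${\cal O}_p(1/N)$ under bounded $\lambda^0_i$, so that $\max_{it}|(P_{\widehat\lambda}X)_{it}|$ is of the same order as $\max|X_{it}|$ up to $o_p$ corrections. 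If the max-norm route gets messy, the fallback is to avoid entrywise bounds for $\widehat{\cal X}$ altogether: we would instead split $\widehat{\cal X}={\cal X}+(\widehat{\cal X}-{\cal X})$, use the low-rank trace bound for the correction as in $A_1$, and use bounded moments only for ${\cal X}$.
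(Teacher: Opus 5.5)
Your proof is correct. For $A_1$ it is essentially the paper's argument. The paper also uses that ${\cal X}_k-\widehat{\cal X}_k$ has rank at most $4R$ and spectral norm ${\cal O}_p(N^{1/2})$, together with $|{\rm Tr}(C)|\le\|C\|\,{\rm rank}(C)$.

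The one difference in $A_1$ is this. The paper bounds the partner matrix $e_{it}^2\widehat{\cal X}_{k,it}$ by ${\cal O}_p(\sqrt{NT})$ ``analogously'' to $e_{it}^2{\cal X}_{k,it}$, i.e.\ it tacitly gives $\widehat{\cal X}$ the moments of ${\cal X}$. Your splitting $\widehat{\cal X}={\cal X}+(\widehat{\cal X}-{\cal X})$ avoids this. The extra quadratic piece is at most
$\max_{it}e_{it}^2\,\|\widehat{\cal X}_k-{\cal X}_k\|_F\|\widehat{\cal X}_\ell-{\cal X}_\ell\|_F/(NT)={\cal O}_p((NT)^{1/4}N/(NT))=o_p(1)$, by Lemma~\ref{lemma:maxRV}.

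For $A_2$ your route is genuinely different. The paper factors $e_{it}^2-\widehat e_{it}^2=(e_{it}+\widehat e_{it})(e_{it}-\widehat e_{it})$ and splits $e-\widehat e$ into two pieces:
\begin{itemize}
\item the full-rank piece $-(\widehat\beta-\beta^0)\cdot X$, handled by Cauchy--Schwarz and the $\sqrt{NT}$ rate;
\item a piece $C_2$ of rank at most $11R$ with $\|C_2\|={\cal O}_p(N^{1/2})$, handled by the trace--rank inequality against Hadamard-type matrices with entries $e_{it}\widehat{\cal X}_{k_1,it}\widehat{\cal X}_{k_2,it}$ and $\widehat e_{it}\widehat{\cal X}_{k_1,it}\widehat{\cal X}_{k_2,it}$.
\end{itemize}
You instead compress all the expansion information into the single bound $\|\widehat e-e\|_F^2={\cal O}_p(N)$, and then need only pointwise and Cauchy--Schwarz bounds.

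Both routes rest on entrywise control of $\widehat{\cal X}$. The paper simply asserts it, e.g.\ $\max_{it}\widehat{\cal X}_{k_1,it}^2\widehat{\cal X}_{k_2,it}^2={\cal O}_p((NT)^{4/(8+\epsilon)})$. Your version has the merit of isolating and justifying this input, and your sketch works once phrased through row norms. Let $u_i$ be the $i$th unit vector. Then $|(P_{\widehat\lambda}X_k)_{it}|\le\|P_{\widehat\lambda}u_i\|\,\|X_{k,\cdot t}\|$, where
\begin{itemize}
\item $\|P_{\widehat\lambda}u_i\|\le\|P_{\lambda^0}u_i\|+\|P_{\widehat\lambda}-P_{\lambda^0}\|={\cal O}_p(N^{-1/2})$, by Assumption~\ref{ass:bc}(i) and Corollary~\ref{lemma:Pfhat};
\item $\max_t\|X_{k,\cdot t}\|={\cal O}_p(N^{1/2}T^{1/(8+\epsilon)})$.
\end{itemize}
Treating $X_kP_{\widehat f}$ and $P_{\widehat\lambda}X_kP_{\widehat f}$ the same way gives $\max_{it}\|\widehat{\cal X}_{it}\|={\cal O}_p((NT)^{1/(8+\epsilon)})$.

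Note that your cross term needs $\max_{it}\|\widehat{\cal X}_{it}\|^4=o_p(N)$, not just $\max_{it}\|\widehat{\cal X}_{it}\|^2=o_p(N)$. This holds because $8/(8+\epsilon)<1$, which is exactly where the paper also uses $\epsilon>0$.

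Finally, your fallback would not let you dispense with entrywise bounds. Consider the piece $(NT)^{-1}\sum_{i,t}(\widehat e_{it}-e_{it})^2(\widehat{\cal X}-{\cal X})_{k,it}(\widehat{\cal X}-{\cal X})_{\ell,it}$. From rank, spectral and Frobenius information alone it is only ${\cal O}_p(1)$: both low-rank matrices have spectral norm ${\cal O}_p(\sqrt N)$, and nothing then prevents them from concentrating on the same few entries. Some incoherence coming from bounded $\lambda^0_i$, $f^0_t$ is unavoidable, so the primary route is the one to carry out.
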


\begin{lemma}
   \label{lemma:lambdafINV}
   Let $\widehat f$ and $f^0$ be normalized as $\widehat f' \widehat f / T = \mathbb{I}_R$
   and $f^{0\prime} f^0 / T = \mathbb{I}_R$.
   Then, under the assumptions of Theorem~\ref{th:biascorrection}, there exists an $R\times R$ matrix $H=H_{NT}$
   such that\footnote{We consider a limit $N,T\rightarrow \infty$ and for different $N,T$ different $H$-matrices
   can be chosen, but we write $H$ instead of $H_{NT}$ to keep notation simple.}
   \begin{align*}
      \left\| \widehat f - f^0 \, H \right\| &= O_p\left(1\right) \; , &
      \left\| \widehat \lambda - \lambda^0 \, \left(H'\right)^{-1} \right\| &= O_p\left(1\right) \; .
   \end{align*}
   Furthermore
   \begin{align*}
     \left\| \widehat \lambda \, (\widehat \lambda^{\prime}\widehat\lambda)^{-1} \, (\widehat f^{\prime}\widehat f)^{-1} \, \widehat f^{\prime}
        -\lambda^0 \, (\lambda^{0\prime}\lambda^0)^{-1} \, (f^{0\prime}f^0)^{-1} \, f^{0\prime} \right\|
    &= O_p\left( N^{-3/2} \right) \; .
   \end{align*}
\end{lemma}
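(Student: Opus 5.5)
The plan is to construct $H$ explicitly from the projector expansions of Theorem~\ref{theorem:expansions}, evaluated at $\beta=\widehat\beta$. With the normalization $\widehat f'\widehat f/T=\mathbb{I}_R$ we have $P_{\widehat f}=\widehat f\widehat f'/T$. Define
\begin{align*}
   H = f^{0\prime}\widehat f/T .
\end{align*}
Then $f^0H = P_{f^0}\widehat f$, and therefore
\begin{align*}
   \widehat f - f^0 H = M_{f^0}\widehat f = M_{f^0}P_{\widehat f}\,\widehat f = (M_{f^0}-M_{\widehat f})\,P_{\widehat f}\,\widehat f .
\end{align*}
This gives $\|\widehat f - f^0 H\|\le \|M_{\widehat f}-M_{f^0}\|\,\|\widehat f\|$. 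Corollary~\ref{lemma:Pfhat} bounds the first factor by ${\cal O}_p(T^{-1/2})$, and $\|\widehat f\|=\sqrt T$, so the product is ${\cal O}_p(1)$.

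Next we need $H$ to be invertible with $\|H^{-1}\|={\cal O}_p(1)$. Compute $H'H=\widehat f'P_{f^0}\widehat f/T = \mathbb{I}_R-\widehat f' M_{f^0}\widehat f/T$. The last term has norm at most $\|M_{f^0}-M_{\widehat f}\|^2={\cal O}_p(T^{-1})$ by the same identity as above. Hence $H'H\to\mathbb{I}_R$, so $H$ is invertible wpa1 and $\|H\|$ and $\|H^{-1}\|$ are ${\cal O}_p(1)$.

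For $\widehat\lambda$ we use the first-order condition $\widehat\lambda=(Y-\widehat\beta\cdot X)\widehat f(\widehat f'\widehat f)^{-1}$. Substituting $Y-\widehat\beta\cdot X=\lambda^0f^{0\prime}+e-(\widehat\beta-\beta^0)\cdot X$ gives
\begin{align*}
   \widehat\lambda = \lambda^0 f^{0\prime}\widehat f/T + \bigl[e-(\widehat\beta-\beta^0)\cdot X\bigr]\widehat f/T = \lambda^0 H'' + r ,
\end{align*}
where $H''=f^{0\prime}\widehat f/T$ is, under the normalization $f^{0\prime}f^0/T=\mathbb{I}_R$, equal to $H$ itself. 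We need $\lambda^0(H')^{-1}$, and $H'^{-1}-H=H'^{-1}(\mathbb{I}-H'H)={\cal O}_p(T^{-1})$, so replacing $H$ by $(H')^{-1}$ costs $\|\lambda^0\|\,{\cal O}_p(T^{-1})={\cal O}_p(N^{1/2}T^{-1})=o_p(1)$. The remainder $r$ satisfies
\begin{align*}
   \|r\|\le \bigl(\|e\|+\|\widehat\beta-\beta^0\|\max_k\|X_k\|\bigr)\,\|\widehat f\|/T = {\cal O}_p(N^{1/2})\,T^{-1/2}={\cal O}_p(1),
\end{align*}
using Lemma~\ref{lemma:Enorm}, Corollary~\ref{lemma:sqrtNTcons}, and $\|X_k\|={\cal O}_p(\sqrt{NT})$. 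This proves the second claim.

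For the last claim, write $Q(\lambda,f)=\lambda(\lambda'\lambda)^{-1}(f'f)^{-1}f'$, and note that $Q$ is invariant under $\lambda\to\lambda G'$, $f\to f G^{-1}$, so we may evaluate the target at $(\lambda^0(H')^{-1},f^0H)$. Set $\Delta_f=\widehat f-f^0H$ and $\Delta_\lambda=\widehat\lambda-\lambda^0(H')^{-1}$; both have spectral norm ${\cal O}_p(1)$. Expanding $Q$ telescopically, each of the factors $\lambda$, $(\lambda'\lambda)^{-1}$, $(f'f)^{-1}$, $f'$ is swapped one at a time. The swaps are bounded as follows.
\begin{itemize}
   \item[(a)] Swapping $\lambda$ costs $\|\Delta_\lambda\|\cdot N^{-1}T^{-1}\cdot T^{1/2}={\cal O}_p(N^{-1}T^{-1/2})$.
   \item[(b)] Swapping $(\lambda'\lambda)^{-1}$ costs $N^{-2}\,\|\widehat\lambda'\widehat\lambda-(\cdot)\|$ times the other factors. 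Since $\|\widehat\lambda'\widehat\lambda-(\cdot)\|\le 2\|\lambda^0\|\,\|\Delta_\lambda\|+\|\Delta_\lambda\|^2={\cal O}_p(N^{1/2})$, this gives $N^{1/2}\cdot N^{-2}N^{1/2}\cdot T^{-1}T^{1/2}=N^{-1}T^{-1/2}$.
   \item[(c)] The swaps of $(f'f)^{-1}$ and $f'$ are bounded in the same way.
\end{itemize}
All contributions are ${\cal O}_p(N^{-3/2})$ when $N\asymp T$.

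The main obstacle is the invertibility of $H$, together with the bookkeeping that keeps both normalizations consistent, including the distinction between $H$ and $(H')^{-1}$. The asymptotic orthogonality $H'H\to\mathbb{I}$ derived above handles this.
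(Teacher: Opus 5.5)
Your proof is correct and follows the same route as the paper: Corollary~\ref{lemma:Pfhat} gives $\widehat f\approx f^0H$, the first-order condition for $\widehat\lambda$ transfers this to $\widehat\lambda\approx\lambda^0(H')^{-1}$, and normalization invariance plus perturbation of the two inverse Gram matrices yields the ${\cal O}_p(N^{-3/2})$ bound. The only real difference is your choice $H=f^{0\prime}\widehat f/T$, which is the inverse transpose of the paper's $H=(\widehat f' f^0/T)^{-1}$; you get its invertibility from the asymptotic orthogonality $H'H=\mathbb{I}_R+{\cal O}_p(T^{-1})$ rather than from the paper's self-referential bound $\|H\|\le{\cal O}_p(1)+\|H\|\,{\cal O}_p(T^{-1/2})$ (you should also state explicitly, as the paper does, that $\|(\widehat\lambda'\widehat\lambda)^{-1}\|={\cal O}_p(N^{-1})$, which follows from your bound in (b), Assumption~\ref{ass:A1}, and Weyl's inequality).
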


\begin{lemma}
   \label{lemma:exp}
   Under the Assumptions of Theorem~\ref{th:biascorrection} we have
   \begin{align*}
      {\rm(i)}&& N^{-1} \, \left\| \mathbb{E}_{\cal C}(e'   X_k    ) -
       \left( \widehat e' \, X_k \right)^{\rm truncR} \right\|
                         &= o_p(1) \; ,
     \nonumber \\
     {\rm(ii)}&& N^{-1} \, \left\| \mathbb{E}_{\cal C}(e'   e) -
         \left( \widehat e' \, \widehat e \right)^{\rm truncD} \right\|
                         &= o_p(1) \; ,
     \nonumber \\
     {\rm(iii)}&& T^{-1} \, \left\| \mathbb{E}_{\cal C}(e   e')
           - \left( \widehat e \, \widehat e' \right)^{\rm truncD}  \right\|
                         &= o_p(1) \; .
   \end{align*}
\end{lemma}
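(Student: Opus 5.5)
The strategy is to write $\widehat e = \widehat e(\widehat\beta)$ using the expansion of Theorem~\ref{theorem:expansions}:
\begin{align*}
\widehat e = e - P_{\lambda^0} e - e P_{f^0} + P_{\lambda^0} e P_{f^0} + \widehat e^{(1)}_e - (\widehat\beta-\beta^0)\cdot \widehat e^{(1)} + \widehat e^{(\rm rem)} .
\end{align*}
Each of (i)--(iii) is then treated as ``leading term with $e$ in place of $\widehat e$'' plus ``perturbation terms''. The correction terms all have low rank (at most a multiple of $R$) and small spectral norm. Indeed $\|P_{\lambda^0} e\|, \|eP_{f^0}\| = {\cal O}_p(1)$ by the same argument as Lemma~\ref{lemma:normXweak}(d), and $\|\widehat e^{(1)}_e\| = {\cal O}_p(N^{-1/2})$. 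By Corollary~\ref{lemma:sqrtNTcons}, $(\widehat\beta-\beta^0)\cdot \widehat e^{(1)}$ has norm ${\cal O}_p(1)$, and $\widehat e^{(\rm rem)}$ is $o_p(1)$ with rank $\le 7R$. So $\widehat e = e + D$ with ${\rm rank}(D)\le cR$ and $\|D\|={\cal O}_p(1)$. Elementwise this gives $\sum_{t}D_{it}^2 \le \|D\|_F^2 = {\cal O}_p(1)$ in total over all $(i,t)$, which is the main control I will use.

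For (ii), the plan is to show that $(e'e)^{\rm truncD}$ is close to $\mathbb{E}_{\cal C}(e'e)$. The latter is diagonal with entries $\sum_i \Sigma_{it}$ by conditional independence over $i,t$. Both matrices are diagonal, so the spectral norm of their difference is $\max_t N^{-1}|\sum_i (e_{it}^2-\Sigma_{it})|$. Using the bounded eighth moments with a fourth-moment Chebyshev bound and a union bound over $T$, this maximum is ${\cal O}_p(T^{1/4}N^{-1/2}) = o_p(1)$. For the perturbation, $(\widehat e'\widehat e - e'e)_{tt} = 2\sum_i e_{it}D_{it} + \sum_i D_{it}^2$. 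The maximum over $t$ of $\sum_i D_{it}^2$ is at most $\|D\|^2={\cal O}_p(1)$. For the cross term, Cauchy--Schwarz bounds it by $(\sum_i e_{it}^2)^{1/2}\|D\| = {\cal O}_p(N^{1/2})$ uniformly in $t$, after taking the max over $t$ of $\sum_i e_{it}^2$, which is ${\cal O}_p(N)$. Dividing by $N$ gives $o_p(1)$. Part (iii) is identical with the roles of $N$ and $T$ interchanged.

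For (i), I first use that $\mathbb{E}_{\cal C}(e'X_k)_{ts}$ vanishes for $s\le t$ by Assumption~\ref{ass:A5}(i) and sequential exogeneity, and decays in $s-t$ by Assumption~\ref{ass:bc}(ii). This bounds the kernel-truncation bias $N^{-1}\|\mathbb{E}_{\cal C}(e'X_k)-\mathbb{E}_{\cal C}(e'X_k)^{\rm truncR}\|$ by $\|\cdot\|_1\|\cdot\|_\infty$ (Lemma~\ref{lemma:inequalities}(vii)). Here $\sum_{m>M} c\,m^{-1-\epsilon}\to 0$, since $M\to\infty$. Second comes the stochastic part $N^{-1}\|(e'X_k - \mathbb{E}_{\cal C}e'X_k)^{\rm truncR}\|$. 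This banded matrix has bandwidth $M$, so I bound its spectral norm by $M$ times the maximal entry. Each entry is $N^{-1}\sum_i(e_{it}X_{k,is}-\mathbb{E}_{\cal C}\cdot)$, a sum of conditionally independent terms over $i$. With fourth moments and a union bound over the $TM$ entries, the maximal entry is ${\cal O}_p((TM)^{1/4}N^{-1/2})$. The total $M^{5/4}T^{1/4}N^{-1/2}$ is $o_p(1)$, because $M^5/T\to 0$; this is presumably where the bandwidth condition enters. Third, I must replace $e$ by $\widehat e$ inside the truncation. I intend to bound $(D'X_k)^{\rm truncR}$ by $M\max_{t,s}|(D'X_k)_{ts}| \le M \max_t\|D_{\cdot t}\|\max_s\|X_{k,\cdot s}\|$.

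The last step is the main obstacle. The bound $\max_s\|X_{k,\cdot s}\| = {\cal O}_p(N^{1/2}T^{\delta})$ holds by the moment bounds, but it only gives $M\,\|D\|\,N^{1/2}/N$, and this does not obviously go to zero when $\|D\|$ is merely ${\cal O}_p(1)$. I would therefore exploit the structure of $D$ more finely. The leading pieces are $-P_{\lambda^0}e$, $-eP_{f^0}$ and $-(\widehat\beta-\beta^0)\cdot M_{\lambda^0}X M_{f^0}$. For each, I would compute $(D'X_k)_{ts}$ explicitly, for example $-\sum_i (P_{\lambda^0}e)_{it}X_{k,is}$, and show that it is ${\cal O}_p(N^{1/2})$ or smaller per entry, uniformly in $t,s$, via conditional independence across $i$. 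Then $M N^{1/2}/N = o_p(1)$. The low-norm pieces, $\widehat e^{(1)}_e$ and the remainder, are handled by the crude bound, which then suffices.
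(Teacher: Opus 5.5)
Your treatment of the truncation bias and of the banded stochastic term in (i) is the paper's own Parts A and B, via Lemmas~\ref{lemma:barZ}, \ref{lemma:maxRV} and~\ref{Lemma:Trunc}. So is your treatment of the leading term $(e'e)^{\rm truncD}-\mathbb{E}_{\cal C}(e'e)$ in (ii).

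The gap is in the perturbation step, where you claim $\|P_{\lambda^0}e\|,\|eP_{f^0}\|={\cal O}_p(1)$. This is false. Lemma~\ref{lemma:normXweak}(d) controls only the doubly projected matrix $P_{\lambda^0}eP_{f^0}$. For $R=1$, say, $\|eP_{f^0}\|=\|ef^0\|/\|f^0\|$ with $\|ef^0\|^2\asymp NT$, so $\|eP_{f^0}\|\asymp\sqrt N$; likewise $\|P_{\lambda^0}e\|\asymp\sqrt T$. Hence $\|D\|$ is of order $\sqrt N$, and $\|D\|_F^2$ is of order $N+T$, not ${\cal O}_p(1)$.

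Two smaller slips: $D$ is not of rank $\le cR$, since it contains $(\widehat\beta-\beta^0)\cdot M_{\lambda^0}XM_{f^0}$. Also $\|\widehat e^{(1)}_e\|={\cal O}_p(\|e\|^2/\sqrt{NT})={\cal O}_p(1)$, not ${\cal O}_p(N^{-1/2})$.

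With the correct norm, both of your bounds in (ii), $\max_t\sum_iD_{it}^2\le\|D\|^2$ and $|\sum_ie_{it}D_{it}|\le\|e_{\cdot t}\|\,\|D\|$, are only ${\cal O}_p(N)$. After dividing by $N$ that is ${\cal O}_p(1)$, so (ii) and (iii) do not follow as written.

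To repair this, split off the two large pieces, as the paper does by writing $\widehat e=M_{\lambda^0}eM_{f^0}+e_{\rm rem}$ with $\|e_{\rm rem}\|={\cal O}_p(1)$. Control them through column norms that are uniform in $t$, rather than through the spectral norm:
\begin{itemize}
\item Column $t$ of $eP_{f^0}$ is $ef^0(f^{0\prime}f^0)^{-1}f^0_t$. By Assumption~\ref{ass:bc}(i) its norm is at most $\|e\|\,\|f^0\|\,\|(f^{0\prime}f^0)^{-1}\|\max_t\|f^0_t\|={\cal O}_p(1)$.
\item By Lemma~\ref{lemma:maxRV} with fourth moments, $\max_t\|P_{\lambda^0}e_{\cdot t}\|^2\le\|(\lambda^{0\prime}\lambda^0/N)^{-1}\|\,\max_t\|N^{-1/2}\lambda^{0\prime}e_{\cdot t}\|^2={\cal O}_p(T^{1/2})$.
\end{itemize}
Together with $\max_t\|e_{\cdot t}\|^2={\cal O}_p(N)$, this makes all quadratic and cross terms $o_p(N)$ uniformly in $t$. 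Your spectral-norm argument is then needed only for the pieces of norm ${\cal O}_p(1)$, where it is valid. Part (iii) is the same with rows in place of columns.

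In (i), on the other hand, your concern about the crude bound is unfounded. For any piece of spectral norm ${\cal O}_p(1)$ it gives ${\cal O}_p(MT^{1/8}N^{-1/2})=o_p(1)$ because $M^5/T\to0$, and this is exactly how the paper handles $e_{\rm rem}$. What genuinely needs the entrywise computation are $e'P_{\lambda^0}X_k$ and $P_{f^0}e'M_{\lambda^0}X_k$, because of their large spectral norms. There your plan matches the paper's Part C. You must, however, allow factors such as $T^{1/8}$ from the maxima over $t$ and $s$, rather than a clean ${\cal O}_p(N^{1/2})$; the condition $M^5/T\to0$ absorbs them.
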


\begin{lemma}
   \label{lemma:normsTrunc}
   Under the Assumptions of Theorem~\ref{th:biascorrection} we have
   \begin{align*}
      {\rm(i)}&& N^{-1} \, \left\| \left( \widehat e' \, X_k \right)^{\rm truncR} \right\|
                         &= {\cal O}_p(M T^{1/8}) \; ,
     \nonumber \\
     {\rm(ii)}&& N^{-1} \, \left\| \left( \widehat e' \, \widehat e \right)^{\rm truncD} \right\|
                         &= {\cal O}_p(1) \; ,
     \nonumber \\
     {\rm(iii)}&& T^{-1} \, \left\| \left( \widehat e \, \widehat e' \right)^{\rm truncD}  \right\|
                         &= {\cal O}_p(1) \; .
   \end{align*}
\end{lemma}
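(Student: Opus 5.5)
The statement to prove is Lemma~\ref{lemma:normsTrunc}. The plan is to bound each truncated matrix by elementary norm inequalities, using the expansion of $\widehat e$ from Theorem~\ref{theorem:expansions} evaluated at $\beta=\widehat\beta$. That expansion gives
\[
\widehat e = M_{\lambda^0} e M_{f^0} + \widehat e^{(1)}_e - (\widehat\beta-\beta^0)\cdot \widehat e^{(1)} + \widehat e^{({\rm rem})}.
\]
By Corollary~\ref{lemma:sqrtNTcons} and $\|e\|={\cal O}_p(N^{1/2})$, the last three terms have spectral norm ${\cal O}_p(1)$. Their ranks are bounded by a multiple of $R$, so their Frobenius norms are also ${\cal O}_p(1)$. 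In addition, $M_{\lambda^0}eM_{f^0} = e - P_{\lambda^0}e - eP_{f^0} + P_{\lambda^0}eP_{f^0}$. Here $\|P_{\lambda^0} e\|_F$ and $\|e P_{f^0}\|_F$ are ${\cal O}_p(N^{1/2})$, by the same second-moment computations as in Lemma~\ref{lemma:normXweak}. Hence
\[
\widehat e = e + D, \qquad \|D\|_F = {\cal O}_p(N^{1/2}).
\]

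For (ii) and (iii), the diagonal truncation satisfies $\|A^{\rm truncD}\| = \max_t |A_{tt}|$. For $A=\widehat e'\widehat e$ this is $\max_t \sum_i \widehat e_{it}^2$. Using $(a+b)^2\le 2a^2+2b^2$,
\[
\max_t \sum_i \widehat e_{it}^2 \le 2\max_t \sum_i e_{it}^2 + 2\|D\|_F^2 .
\]
The second term is ${\cal O}_p(N)$. For the first term, write $\sum_i e_{it}^2 = \sum_i \mathbb{E}_{\cal C}(e_{it}^2) + \sum_i\bigl(e_{it}^2-\mathbb{E}_{\cal C}(e_{it}^2)\bigr)$. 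The mean part is ${\cal O}(N)$ uniformly in $t$. For the centered part, bounded eighth moments and a union bound over the $T$ values of $t$, via fourth-moment Markov, give a maximum of $o_p(N)$. This yields (ii); (iii) follows by symmetry.

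For (i), the matrix $(\widehat e'X_k)^{\rm truncR}$ has at most $M$ nonzero entries in each row and in each column. Lemma~\ref{lemma:inequalities}(vii) then gives
\[
\|(\widehat e'X_k)^{\rm truncR}\| \le \|\cdot\|_1^{1/2}\|\cdot\|_\infty^{1/2} \le M\max_{t,s}\bigl|(\widehat e'X_k)_{ts}\bigr|.
\]
So it suffices to show $\max_{t,s}|(\widehat e'X_k)_{ts}| = {\cal O}_p(N T^{1/8})$. By Cauchy--Schwarz,
\[
|(\widehat e'X_k)_{ts}| \le \Bigl(\sum_i \widehat e_{it}^2\Bigr)^{1/2}\Bigl(\sum_i X_{k,is}^2\Bigr)^{1/2}.
\]
The first factor is ${\cal O}_p(N^{1/2})$ uniformly in $t$ by the argument for (ii). For the second factor, $\max_s \sum_i X_{k,is}^2 \le N\max_s N^{-1}\sum_i X_{k,is}^2$. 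With $8+\epsilon$ conditional moments, Markov's inequality on the eighth power with a union bound over $s$ gives $\max_s N^{-1}\sum_i X_{k,is}^2 = {\cal O}_p(T^{1/4})$. Hence the second factor is ${\cal O}_p(N^{1/2}T^{1/8})$, and the product is ${\cal O}_p(NT^{1/8})$. This crude bound is exactly what produces the $T^{1/8}$ in the statement.

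The main obstacle is the uniformity in $t$ and $s$: controlling maxima over $T$ indices of row and column sums, and confirming that replacing $e$ by $\widehat e$ costs only ${\cal O}_p(N)$ in every row sum simultaneously. Routing everything through the single Frobenius bound $\|D\|_F^2={\cal O}_p(N)$ handles the second point. The first point is handled by moment-based union bounds using Assumption~\ref{ass:A5}$(vi)$, and is where the exponent $1/8$ arises.
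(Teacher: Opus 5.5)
Your argument is correct, but it reaches the key uniform bound on the column and row norms of $\widehat e$ by a longer route than the paper.

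The paper never splits $\widehat e$ into $e$ plus a perturbation. It reads off $\|\widehat e\| = {\cal O}_p(N^{1/2})$ directly from Theorem~\ref{theorem:expansions} and Corollary~\ref{lemma:sqrtNTcons}: $\|M_{\lambda^0} e M_{f^0}\| \le \|e\|$, and the remaining terms are ${\cal O}_p(1)$ in spectral norm. It then uses that every column or row of a matrix has Euclidean norm at most its spectral norm, so $\max_t \sum_i \widehat e_{it}^2 \le \|\widehat e\|^2$ and $\max_i \sum_t \widehat e_{it}^2 \le \|\widehat e\|^2$. This gives (ii) and (iii) in one line, and also the first Cauchy--Schwarz factor in (i). The second factor, $\max_s \|X_{k,s}\| = {\cal O}_p(N^{1/2}T^{1/8})$, is handled exactly as you do, via eighth moments and Lemma~\ref{lemma:maxRV}. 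The truncation step is Lemma~\ref{Lemma:Trunc}, which is your $\|\cdot\|_1$/$\|\cdot\|_\infty$ argument.

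Your decomposition $\widehat e = e + D$ with $\|D\|_F = {\cal O}_p(N^{1/2})$, followed by an eighth-moment union bound on $\max_t \sum_i e_{it}^2$, also works under Assumption~\ref{ass:A5}. It buys nothing here, though: it uses conditional independence and eighth moments of $e_{it}$ where only $\|e\| = {\cal O}_p(N^{1/2})$ is needed. Even within your route, you could have bounded $\max_t \sum_i e_{it}^2 \le \|e\|^2$ directly.

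There is one small slip. The term $(\widehat\beta - \beta^0)\cdot \widehat e^{(1)}$, with $\widehat e^{(1)}_k = M_{\lambda^0} X_k M_{f^0}$, does not have rank bounded by a multiple of $R$; it is generically full rank for high-rank regressors. So your rank argument does not give its Frobenius bound. The bound still holds, because $\|M_{\lambda^0}X_kM_{f^0}\|_F \le \|X_k\|_F = {\cal O}_p(\sqrt{NT})$ and $\|\widehat\beta - \beta^0\| = {\cal O}_p((NT)^{-1/2})$.
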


The proof of the above lemmas is given section~\ref{sec:Intermed} below.
Using these lemmas we can now prove Theorem~\ref{th:biascorrection}.

\begin{proof}[\bf Proof of Theorem \ref{th:biascorrection}, Part I: show $\widehat W=W+o_p(1)$.]
   $\phantom{a}$
   \\
   Using $\left| {\rm Tr}\left( C\right) \right| \leq \left\| C\right\| \limfunc{rank}\left( C\right)$
   and corollary \ref{lemma:Pfhat}
   we find:
   \begin{align*}
      \big| \widehat W_{k_1 k_2} - & W_{NT,k_1 k_2} \big|
           \nonumber \\
         & = \bigg| (NT)^{-1} {\rm Tr}\left[ \left( M_{\widehat \lambda} - M_{\lambda^0} \right)
                                      \, X_{k_1} \, M_{\widehat f} \,  X_{k_2}' \right]
            + (NT)^{-1} {\rm Tr}\left[ M_{\lambda^0} \, X_{k_1} \,
                                       \left( M_{\widehat f} - M_{f^0} \right) \,  X_{k_2}' \right]
           \nonumber \\
         & \leq \frac{2R}{NT} \left\| M_{\widehat \lambda} - M_{\lambda^0} \right\|
                                      \| X_{k_1} \| \| X_{k_2} \|
                \frac{2R}{NT} \left\| M_{\widehat f} - M_{f^0} \right\|
                                      \| X_{k_1} \| \| X_{k_2} \|
           \nonumber \\
         & = \frac{2R}{NT} {\cal O}_p(N^{-1}) {\cal O}_p(NT)
             +\frac{2R}{NT} {\cal O}_p(T^{-1}) {\cal O}_p(NT)
           \nonumber \\
         &= o_p(1) \; .
   \end{align*}
   Thus we have $\widehat W=W_{NT}+o_p(1)=W+o_p(1)$.
\end{proof}

\begin{proof}[\bf Proof of Theorem \ref{th:biascorrection}, Part II: show $\widehat \Omega=\Omega+o_p(1)$.]
   $\phantom{a}$
   \\
    Let
    $\Omega_{NT} \equiv  \frac 1 {NT} \, \sum_{i=1}^N \, \sum_{t=1}^T
            \,   e_{it}^2   \, {\cal X}_{it} \, {\cal X}_{it}'$.
     We have
     $\Omega = \Omega_{NT} + o_P(1) 
     = \widehat \Omega + A_1 + A_2 + o_p(1) = \widehat \Omega + o_P(1)$, 
     where $A_1$ and $A_2$ are defined in Lemma \ref{lemma:A1A2},
     and the lemma states $A_1$ and $A_2$ are $o_p(1)$.
\end{proof}

\begin{proof}[\bf Proof of Theorem \ref{th:biascorrection}, Part III: show $\widehat B_1=B_1+o_p(1)$.]
  $\phantom{a}$
  \\
   Let $B_{1,k,NT}=N^{-1} \, {\rm Tr}\left[ P_{f^0} \, \mathbb{E}_{\cal C} \left( e' \, X_k   \right) \right]$.
   According to Assumption~\ref{ass:A6} we have $B_{1,k}=B_{1,k,NT}+o_p(1)$.
   What is left to show is $ B_{1,k,NT}=\widehat B_{1,k} + o_p(1)$.
   Using $\left| {\rm Tr}\left( C\right) \right| \leq \left\| C\right\| \limfunc{rank}\left( C\right)$ we find
   \begin{align*}
     \left|  B_{1,k,NT} - \widehat B_1 \right|
   &= \left| \mathbb{E}_{\cal C} \left[ \frac{1}{N} {\rm Tr}(P_{f^0} \, e^{\prime}\,   \, X_k)    \right]
         - \frac{1}{N} {\rm Tr}\left[ P_{\widehat f} \left( \widehat e' X_k \right)^{\rm truncR} \right] \right|
      \nonumber \\
   &\leq \left| \frac{1}{N} {\rm Tr}\left[
                \left( P_{f^0} - P_{\widehat f} \right) \left( \widehat e' X_k \right)^{\rm truncR} \right] \right|
      \nonumber \\ & \quad
    + \left| \frac{1}{N} {\rm Tr}\left\{
 P_{f^0} \left[ \mathbb{E}_{\cal C} \left(e^{\prime}\,  \, X_k  \right) - \left( \widehat e' X_k \right)^{\rm truncR} \right]
       \right\} \right|
      \nonumber \\
        &\leq \frac{2R}{N} \left\| P_{f^0} - P_{\widehat f} \right\| \,
                                   \left\| \left(\widehat e' X_k \right)^{\rm truncR} \right\|
      \nonumber \\ & \quad
             +\frac{R}{N}  \left\| P_{f^0} \right\| \,
             \left\| \mathbb{E}_{\cal C} \left(e^{\prime}\,  \, X_k   \right) - \left( \widehat e' X_k \right)^{\rm truncR}
             \right\| \; .
   \end{align*}
   We have $\left\| P_{f^0} \right\| = 1$.
   We now apply Lemmas \ref{lemma:exp}, \ref{lemma:Pfhat} and \ref{lemma:normsTrunc} to find
   \begin{align*}
      \left|  B_{1,k,NT} - \widehat B_1 \right|
        &=  N^{-1} \left( {\cal O}_p(N^{-1/2}) {\cal O}_p(M N T^{1/8}) + o_p(N) \right) = o_p(1) \; .
   \end{align*}
   This is what we wanted to show.
\end{proof}

\begin{proof}[\bf Proof of Theorem \ref{th:biascorrection}, final part: show $\widehat B_{2}=B_{2}+o_p(1)$ 
and $\widehat B_{3}=B_{3}+o_p(1)$.]
  $\phantom{a}$
  \\
   Define
   \begin{align*}
      B_{2,k,NT} &= \frac 1 T
         {\rm Tr}\left[ \mathbb{E}_{\cal C}\left( e e' \right) \, M_{\lambda^0} \, X_k \,
              f^0 \, (f^{0\prime}f^0)^{-1} \, (\lambda^{0\prime}\lambda^0)^{-1} \, \lambda^{0\prime} \right] \; .
   \end{align*}
   According to Assumption~\ref{ass:A6} we have $B_{2,k}=B_{2,k,NT}+o_p(1)$.
   What is left to show is $B_{2,k,NT}=\widehat B_{2,k} + o_p(1)$.
    We have
  \begin{align*}
      B_{2,k} - \widehat B_{2,k}
      =&
       \frac 1 T {\rm Tr}\left[ \mathbb{E}_{\cal C}\left( e   e' \right) \, M_{\lambda^0} \, X_k \,
              f^0 \, (f^{0\prime}f^0)^{-1} \, (\lambda^{0\prime}\lambda^0)^{-1} \, \lambda^{0\prime} \right]
       \nonumber \\ & \qquad
       -
       \frac 1 T
         {\rm Tr}\left[ \left( \widehat e \, {\widehat e}^{\prime} \right)^{\rm truncD} \, M_{\widehat \lambda} \, X_k \,
              \widehat f \, (\widehat f^{\prime} \widehat f)^{-1} \,
                             (\widehat \lambda^{\prime}\widehat \lambda)^{-1} \, \widehat \lambda^{\prime} \right]
      \nonumber \\
      =& \frac 1 T
         {\rm Tr}\left[ \left( \widehat e \, {\widehat e}^{\prime} \right)^{\rm truncD} \, M_{\widehat \lambda} \, X_k \,
           \left( f^0 \, (f^{0\prime}f^0)^{-1} \, (\lambda^{0\prime}\lambda^0)^{-1} \, \lambda^{0\prime}
           - \widehat f \, (\widehat f^{\prime} \widehat f)^{-1} \,
                             (\widehat \lambda^{\prime}\widehat \lambda)^{-1} \, \widehat \lambda^{\prime} \right) \right]
      \nonumber \\
       & + \frac 1 T
         {\rm Tr}\left[ \left( \widehat e \, {\widehat e}^{\prime} \right)^{\rm truncD} \,
                 \left( M_{\lambda^0} - M_{\widehat \lambda} \right) \, X_k \,
            f^0 \, (f^{0\prime}f^0)^{-1} \, (\lambda^{0\prime}\lambda^0)^{-1} \, \lambda^{0\prime}
           \right]
      \nonumber \\
       & + \frac 1 T
         {\rm Tr}\left\{ \left[ \mathbb{E}_{\cal C}\left( e   e' \right)
                           - \left( \widehat e \, {\widehat e}^{\prime} \right)^{\rm truncD} \right]\,
                  M_{\lambda^0}  \, X_k \,
            f^0 \, (f^{0\prime}f^0)^{-1} \, (\lambda^{0\prime}\lambda^0)^{-1} \, \lambda^{0\prime}
           \right\} \; .
   \end{align*}
   Using $\left| {\rm Tr}\left( C\right) \right| \leq \left\| C\right\| \limfunc{rank}\left( C\right)$
   (which is true for every square matrix $C$) we find
\begin{align*}
   \left|   B_{2,k} - \widehat B_{2,k} \right|
   \leq &
       \frac R T
         \left\| \left( \widehat e \, {\widehat e}^{\prime} \right)^{\rm truncD} \right\|
         \left\|  X_k \right\|
          \left\| f^0 \, (f^{0\prime}f^0)^{-1} \, (\lambda^{0\prime}\lambda^0)^{-1} \, \lambda^{0\prime}
           - \widehat f \, (\widehat f^{\prime} \widehat f)^{-1} \,
                             (\widehat \lambda^{\prime}\widehat \lambda)^{-1} \, \widehat \lambda^{\prime} \right\|
      \nonumber \\
       & + \frac R T
        \left\| \left( \widehat e \, {\widehat e}^{\prime} \right)^{\rm truncD} \right\|
                 \left\| M_{\lambda^0} - M_{\widehat \lambda} \right\| \left\| X_k \right\|
            \left\| f^0 \, (f^{0\prime}f^0)^{-1} \, (\lambda^{0\prime}\lambda^0)^{-1} \, \lambda^{0\prime} \right\|
      \nonumber \\
       & + \frac R T
            \left\| \mathbb{E}_{\cal C}\left( e  e' \right)
                           - \left( \widehat e \, {\widehat e}^{\prime} \right)^{\rm truncD} \right\| \,
                 \left\| X_k \right\| \,
            \left\| f^0 \, (f^{0\prime}f^0)^{-1} \, (\lambda^{0\prime}\lambda^0)^{-1} \, \lambda^{0\prime} \right\|
       \, .
\end{align*}
Here we used $\left\| M_{f^0} \right\| = \left\| M_{\widehat f} \right\| = 1$.
Using $\left\| X_k \right\|={\cal O}_p(\sqrt{NT})$, and applying
Lemmas~\ref{lemma:Pfhat}, \ref{lemma:lambdafINV}, \ref{lemma:exp} and \ref{lemma:normsTrunc}, we now find
\begin{align*}
   \left|   B_{2,k} - \widehat B_{2,k} \right|
      &= T^{-1}
       \bigg[ {\cal O}_p(T) \, {\cal O}_p((NT)^{1/2}) \, {\cal O}_p(N^{-3/2})
   \nonumber \\ & \qquad \qquad \qquad
            +  {\cal O}_p(T) \, {\cal O}_p(N^{-1/2}) \, {\cal O}_p((NT)^{1/2}) \, {\cal O}_p( (NT)^{-1/2})
   \nonumber \\ & \qquad \qquad \qquad
          +  o_p(T) \,  {\cal O}_p((NT)^{1/2}) \, {\cal O}_p( (NT)^{-1/2}) \bigg] = o_p(1) \; .
\end{align*}
This is what we wanted to show. The proof of $\widehat B_3=B_3+o_p(1)$ is analogous.
\end{proof}

\section{Proof of Intermediate Lemma}
\label{sec:Intermed}

Here we provide the proof of some intermediate lemmas that were stated and
used in section~\ref{app:ProofThBias}.
The following lemma gives a useful bound on the maximum of (correlated) random variables
\begin{lemma}
   \label{lemma:maxRV}
   Let $Z_i$, $i=1,2,\ldots,n$, be $n$ real valued random variables, and
   let $\gamma \geq 1$ and $B>0$ be finite constants (independent of $n$).
   Assume $\max_i \, \mathbb{E}_{\cal C} |Z_i|^{\gamma} \leq B$,
   i.e., the $\gamma$'th moment of the $Z_i$ are finite and uniformly bounded.
   For $n \rightarrow \infty$ we then have
   \begin{align}
      \max_i |Z_i| &= {\cal O}_p\left( n^{1/\gamma} \right) \; .
      \label{eq:Zbound}
   \end{align}
\end{lemma}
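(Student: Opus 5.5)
The plan is to bound the maximum by a sum and then apply a Markov-type inequality. The bound is what matters, and no independence or correlation structure among the $Z_i$ is needed. The starting point is the pointwise inequality
\begin{align*}
   \max_i |Z_i|^{\gamma} \;\leq\; \sum_{i=1}^n |Z_i|^{\gamma} ,
\end{align*}
which holds for any collection of random variables. Because $\gamma \geq 1 > 0$, the map $x \mapsto x^{\gamma}$ is increasing on $[0,\infty)$. So $\left(\max_i |Z_i|\right)^{\gamma} = \max_i |Z_i|^{\gamma}$, and the problem reduces to controlling the sum.

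Next I take conditional expectations and use the uniform moment bound. This gives
\begin{align*}
   \mathbb{E}_{\cal C}\left[ \left(\max_i |Z_i|\right)^{\gamma} \right]
   \;\leq\; \sum_{i=1}^n \mathbb{E}_{\cal C} |Z_i|^{\gamma}
   \;\leq\; n B .
\end{align*}
For any constant $c>0$, the conditional Markov inequality then yields
\begin{align*}
   \mathbb{P}\left( \max_i |Z_i| > c\, n^{1/\gamma} \,\big|\, {\cal C} \right)
   = \mathbb{P}\left( \left(\max_i |Z_i|\right)^{\gamma} > c^{\gamma} n \,\big|\, {\cal C} \right)
   \leq \frac{nB}{c^{\gamma} n} = \frac{B}{c^{\gamma}} .
\end{align*}

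Finally, I take unconditional expectations of this bound. The conditional probability bound $B/c^{\gamma}$ is non-random and independent of $n$, so the unconditional probability satisfies the same bound. It can be made smaller than any $\epsilon>0$ by choosing $c = (B/\epsilon)^{1/\gamma}$, uniformly in $n$. This is exactly the definition of $\max_i |Z_i| = {\cal O}_p(n^{1/\gamma})$ in \eqref{eq:Zbound}.

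There is no real obstacle in this argument. The only points that need care are that $B$ must be non-random, as the statement assumes, so that passing from the conditional to the unconditional probability is immediate, and that the constant $c$ must not depend on $n$.
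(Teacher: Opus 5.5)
Your proof is correct and takes essentially the paper's route: bound the maximum by the sum, use the uniform conditional moment bound to get $nB$, and finish with Markov's inequality. The only difference is cosmetic. The paper first uses Jensen's inequality to obtain $\mathbb{E}_{\cal C}\max_i|Z_i|\le (nB)^{1/\gamma}$ and then applies Markov to the first moment, whereas you apply Markov directly to the $\gamma$-th power, which skips Jensen and would work for any $\gamma>0$.
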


\begin{proof}[\bf Proof]
   Using Jensen's inequality one obtains
   $\mathbb{E}_{\cal C} \max_i |Z_i| \leq \left( \mathbb{E}_{\cal C} \max_i |Z_i|^\gamma \right)^{1/\gamma}
    \leq \left( \mathbb{E}_{\cal C} \sum_{i=1}^n |Z_i|^\gamma \right)^{1/\gamma}
    \leq \left( n \, \max_i \mathbb{E}_{\cal C} |Z_i|^\gamma \right)^{1/\gamma}
    \leq n^{1/\gamma} \, B^{1/\gamma}$.
   Markov's inequality then gives equation \eqref{eq:Zbound}.
\end{proof}

\begin{lemma}
   \label{lemma:barZ}
   Let
   \begin{align*}
      \bar Z^{(1)}_{k,t\tau}
            &= N^{-1/2} \sum_{i=1}^N \left[ e_{it} X_{k,i\tau} - \mathbb{E}_{\cal C} \left( e_{it} X_{k,i\tau} \right) \right] \; ,
      \nonumber \\
      \bar Z^{(2)}_{t}
            &= N^{-1/2} \sum_{i=1}^N \left[ e_{it}^2 - \mathbb{E}_{\cal C} \left( e_{it}^2 \right) \right] \; ,
      \nonumber \\
      \bar Z^{(3)}_{i}
            &= T^{-1/2} \sum_{t=1}^T \left[ e_{it}^2 - \mathbb{E}_{\cal C} \left( e_{it}^2 \right) \right] \; .
   \end{align*}
   Under assumption \ref{ass:A5} we have
   \begin{align*}
      \mathbb{E}_{\cal C} \left| \bar Z^{(1)}_{k,t\tau} \right|^4 &\leq B \; ,
      \nonumber \\
      \mathbb{E}_{\cal C} \left| \bar Z^{(2)}_{t\tau} \right|^4 &\leq B \; ,
      \nonumber \\
      \mathbb{E}_{\cal C} \left| \bar Z^{(3)}_{i} \right|^4 &\leq B \; ,
   \end{align*}
   for some $B>0$, i.e., the 
   conditional expectations $\bar Z^{(1)}_{k,t\tau}$, $\bar Z^{(2)}_{t\tau}$, and $\bar Z^{(3)}_{i}$
   are uniformly bounded over $t,\tau$, or $i$, respectively.
\end{lemma}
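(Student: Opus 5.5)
Each of the three quantities is a normalized sum of terms that are conditionally mean zero and conditionally independent (or martingale differences), so I would bound their fourth moments by expanding the fourth power. The tool is the standard moment bound for sums of independent mean-zero variables, and Burkholder's inequality for martingale differences.

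For $\bar Z^{(1)}_{k,t\tau}$ and $\bar Z^{(2)}_{t}$, the summands over $i$ are
\[
\xi_i = e_{it}X_{k,i\tau}-\mathbb{E}_{\cal C}(e_{it}X_{k,i\tau}), \qquad \xi_i = e_{it}^2-\mathbb{E}_{\cal C}(e_{it}^2).
\]
By Assumption~\ref{ass:A5}(iii) they are independent across $i$ conditional on ${\cal C}$, and by construction they have conditional mean zero. Expanding the fourth power gives
\[
\mathbb{E}_{\cal C}\Big(\sum_i \xi_i\Big)^4 = \sum_i \mathbb{E}_{\cal C}\xi_i^4 + 3\sum_{i\neq j}\mathbb{E}_{\cal C}\xi_i^2\,\mathbb{E}_{\cal C}\xi_j^2,
\]
because every term containing an index to the first power vanishes. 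Dividing by $N^2$ leaves something bounded by a constant times $\max_i \mathbb{E}_{\cal C}\xi_i^4$.

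It then remains to bound $\mathbb{E}_{\cal C}\xi_i^4$ uniformly. For $\bar Z^{(2)}$ this is controlled by $\mathbb{E}_{\cal C}e_{it}^8$, which Assumption~\ref{ass:A5}(vi) bounds. For $\bar Z^{(1)}$ I would apply Cauchy--Schwarz:
\[
\mathbb{E}_{\cal C}(e_{it}X_{k,i\tau})^4 \leq \big(\mathbb{E}_{\cal C}e_{it}^8\big)^{1/2}\big(\mathbb{E}_{\cal C}X_{k,i\tau}^8\big)^{1/2},
\]
which is bounded by the same assumption (using the $8+\epsilon$ moment of $X$ and Jensen). Centering only changes the constant, via $(a+b)^4\le 8(a^4+b^4)$ and Jensen. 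The bound holds for every pair $(t,\tau)$; the ordering of $t$ and $\tau$ is irrelevant because independence is across $i$.

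For $\bar Z^{(3)}_i$ the sum runs over $t$. Assumption~\ref{ass:A5}(ii) makes $e_{it}$ independent over $t$ conditional on ${\cal C}$, so $e_{it}^2-\mathbb{E}_{\cal C}e_{it}^2$ are independent mean-zero variables in $t$. The same expansion then gives a bound of order $T^{-2}\big(T+3T^2\big)\max_t\mathbb{E}_{\cal C}(e_{it}^2-\mathbb{E}_{\cal C}e_{it}^2)^4$, which is bounded by the eighth-moment assumption. (The statement's $\bar Z^{(2)}_{t\tau}$ is evidently $\bar Z^{(2)}_t$.)

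I expect no real obstacle. The only care needed is to check that the $\mathcal{C}$-conditional independence in Assumption~\ref{ass:A5} applies to the joint pairs $(e_{it},X_{k,i\tau})$ across $i$, which part (iii) provides since it covers the whole vector $\{(X_{it},e_{it})\}_t$. The constant $B$ is non-random because the moment bounds in (vi) are non-random and uniform.
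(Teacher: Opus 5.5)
Your proposal is correct and follows essentially the same route as the paper. The paper also expands the fourth power of the sum of ${\cal C}$-conditionally independent, mean-zero summands, so that only $\sum_i \mathbb{E}_{\cal C}\xi_i^4$ and $3\sum_{i\neq j}\mathbb{E}_{\cal C}\xi_i^2\,\mathbb{E}_{\cal C}\xi_j^2$ survive, and it bounds the individual fourth moments via Cauchy--Schwarz and the eighth-moment conditions of Assumption~\ref{ass:A5}(vi), treating $\bar Z^{(2)}_t$ and $\bar Z^{(3)}_i$ analogously via parts (iii) and (ii). The reference to Burkholder's inequality is unnecessary, since conditional independence is all you actually use.
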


\begin{proof}[\bf Proof]
   \# We start with the proof for $\bar Z^{(1)}_{k,t\tau}$.
      Define
      $Z^{(1)}_{k,t\tau,i} = e_{it} X_{k,i\tau} - \mathbb{E}_{\cal C} \left( e_{it} X_{k,i\tau} \right)$.
      By assumption we have finite 8th moments for $e_{it}$ and $X_{k,i\tau}$ uniformly
      across $k,i,t,\tau$, and thus (using Cauchy Schwarz inequality) we have finite 4th moment
      of $Z^{(1)}_{k,t\tau,i}$ uniformly across $k,i,t,\tau$. For ease of notation we now
      fix $k,t,\tau$ and write $Z_i=Z^{(1)}_{k,t\tau,i}$.
      We have $\mathbb{E}_{\cal C}(Z_i)=0$ and
      $\mathbb{E}_{\cal C}(Z_{i} Z_{j} Z_{k} Z_{l})=0$ if $i \notin \{j,k,l\}$
      (and the same holds for permutations of $i,j,k,l$).
      Using this we compute
     \begin{align*}
        \mathbb{E}_{\cal C} \left( \sum_{i=1}^N \, Z_i \right)^4
         &= \sum_{i,j,k,l=1}^N \, \mathbb{E}_{\cal C}\left( Z_i Z_j Z_k Z_l\right)
        \nonumber \\
         &= 3 \, \sum_{i\neq j} \, \mathbb{E}_{\cal C}\left( Z_i^2 \, Z_j^2\right)
            + \sum_{i} \mathbb{E}_{\cal C}\left( Z_i^4 \right)
        \nonumber \\
         &= 3 \, \sum_{i,j=1}^N \, \mathbb{E}_{\cal C}\left( Z_i^2\right) \, \mathbb{E}_{\cal C}\left( Z_j^2\right)
    + \sum_{i=1}^N \, \left\{ \mathbb{E}_{\cal C}\left( Z_i^4 \right) - 3 \left[ \mathbb{E}_{\cal C}\left(Z_i^2\right) \right]^2 \right\}
       \; ,
     \end{align*}
     Because we argued $\mathbb{E}_{\cal C}\left( Z_i^4 \right)$ is bounded uniformly, the last equation shows
     $\bar Z^{(1)}_{k,t\tau} = N^{-1/2} \sum_{i=1}^N \, Z^{(1)}_{k,t\tau,i}$ is bounded uniformly
     across $k,t,\tau$. This is what we wanted to show.

   \# The proofs for $\bar Z^{(2)}_{t}$ and $\bar Z^{(3)}_{i}$ are analogous.
 \end{proof}

\begin{lemma}
   \label{Lemma:Trunc}
   For a $T\times T$ matrix $A$ we have\footnote{For the boundaries of $\tau$ we could write
   $\max(1,t-M)$ instead of $t-M$, and $\min(T,t+M)$ instead of $t+M$, to guarantee $1\leq \tau \leq T$.
   Since this would complicate notation, we prefer the convention $A_{t\tau}=0$ for $t<1$ or $\tau<1$
   of $t>T$ or $\tau>T$.}
   \begin{align*}
      \left\| A^{\rm truncR} \right\| \, &\leq  \, M \left\| A^{\rm truncR} \right\|_{\max}
         \, \equiv \, M \, \max_t \, \max_{t<\tau\leq t+M} |A_{t\tau}| \, ,
   \end{align*}
\end{lemma}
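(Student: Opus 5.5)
The plan is to use the inequality $\|A\|^2 \leq \|A\|_1 \, \|A\|_\infty$ from Lemma~\ref{lemma:inequalities}(vii), applied to the matrix $A^{\rm truncR}$. This reduces the spectral norm bound to bounding the maximal absolute row sum and the maximal absolute column sum of the truncated matrix.

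First, recall the structure of $A^{\rm truncR}$. Because $\Gamma$ is the truncation kernel, its only non-zero entries are $A^{\rm truncR}_{t\tau} = A_{t\tau}$ for $t < \tau \leq t+M$. All other entries vanish, including those outside the index range $1,\ldots,T$, by the convention in the footnote.

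Next, bound the row sums. Each row $t$ contains at most $M$ non-zero entries, namely those with $\tau = t+1,\ldots,t+M$ (at most $\lfloor M \rfloor$ of them, which is $\leq M$). Each of these is bounded in absolute value by $\|A^{\rm truncR}\|_{\max} = \max_t \max_{t<\tau\leq t+M} |A_{t\tau}|$. Hence
\begin{align*}
\|A^{\rm truncR}\|_\infty &\leq M \, \|A^{\rm truncR}\|_{\max} .
\end{align*}

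Symmetrically, bound the column sums. Each column $\tau$ has non-zero entries only for $t \in \{\tau-M,\ldots,\tau-1\}$, again at most $M$ of them. This gives
\begin{align*}
\|A^{\rm truncR}\|_1 &\leq M \, \|A^{\rm truncR}\|_{\max} .
\end{align*}

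Combining the two bounds via Lemma~\ref{lemma:inequalities}(vii) yields $\|A^{\rm truncR}\|^2 \leq M^2 \|A^{\rm truncR}\|_{\max}^2$. Taking square roots gives the claim.

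There is no real obstacle here. The only point requiring care is the counting of the entries inside the band, where non-integer $M$ is handled by the count $\lfloor M \rfloor \leq M$, and the identification of $\|A^{\rm truncR}\|_{\max}$ with the maximum over the band, which holds because all entries outside the band are zero.
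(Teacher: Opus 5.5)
Your proof is correct and follows essentially the same route as the paper. The paper also bounds both the maximal row sum and the maximal column sum of $A^{\rm truncR}$ by $M \left\| A^{\rm truncR} \right\|_{\max}$ using the band structure, and then applies Lemma~\ref{lemma:inequalities}(vii); your handling of non-integer $M$ via $\lfloor M \rfloor \leq M$ corresponds to the paper's standing assumption that $M$ is an integer.
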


\begin{proof}[\bf Proof]
   For the $1$-norm of $A^{\rm truncR}$ we find
   \begin{align*}
     \left\| A^{\rm truncR} \right\|_1  &= \max_{t=1\ldots T} \, \sum_{\tau=t+1}^{t+M} \, |A_{t\tau}|
       \nonumber \\
            &\leq M \,  \max_{t<\tau\leq t+M} \, |A_{t\tau}| = M \left\| A^{\rm truncR} \right\|_{\max}  \; ,
   \end{align*}
   and analogously we find the same bound for the $\infty$-norm $\left\| A^{\rm truncR} \right\|_\infty$.
   Applying part (vii) of Lemma~\ref{lemma:inequalities} we therefore also get this bound for the operator norm
   $\| A^{\rm truncR} \|$.
\end{proof}

\begin{proof}[\bf Proof of Lemma~\ref{lemma:A1A2}]
     \# We first show $A_1\equiv (NT)^{-1} \sum_{i=1}^N \sum_{t=1}^T e_{it}^2
               \left( {\cal X}_{it} {\cal X}_{it}' - \widehat {\cal X}_{it} \widehat {\cal X}_{it}' \right)
                     = o_p(1)$.
     Let $B_{1,it} =  {\cal X}_{it}- \widehat {\cal X}_{it}$,
     $B_{2,it} = e_{it}^2  {\cal X}_{it}$, and
     $B_{3,it} = e_{it}^2  \widehat {\cal X}_{it}$.
     Note $B_1$, $B_2$, and $B_3$ can either be viewed as $K$-vectors for each
     pair $(i,t)$, or equivalently as $N\times T$ matrices $B_{1,k}$, $B_{2,k}$, and $B_{3,k}$ for each $k=1,\ldots,K$.
     We have $A_1 = (NT)^{-1} \sum_i \sum_t \left( B_{1,it} B_{2,it}' +  B_{3,it} B_{1,it}' \right)$,
     or equivalently
     \begin{align*}
        A_{1,k_1 k_2} &= \frac 1 {NT} {\rm Tr}\left( B_{1,k_1} B_{3,k_2}' + B_{2,k_1} B_{1,k_2}' \right) \; .
     \end{align*}
     Using $\|M_{\widehat \lambda} - M_{\lambda^0}\| = {\cal O}_p(N^{-1/2})$,
     $\|M_{\widehat f} - M_{f^0}\| = {\cal O}_p(N^{-1/2})$,
     $\|X_k\| = {\cal O}_p(\sqrt{NT}) = {\cal O}_p(N)$,
     we find for $B_{1,k}  =
                     (M_{\lambda^0} - M_{\widehat \lambda}) X_k M_{f^0}
                    + M_{\widehat \lambda} X_k (M_{f^0} - M_{\widehat f}) $
    that $\|B_{1,k}\| = {\cal O}_p(N^{1/2})$.
     In addition we have ${\rm rank}(B_{1,k}) \leq 4R$.
      We also have
      \begin{align*}
        \| B_{2,k} \|^4 &\leq \| B_{2,k} \|_F^4
                   \nonumber \\
            &= \left( \sum_{i=1}^N \sum_{t=1}^T e_{it}^4  {\cal X}_{k,it}^2 \right)^2
                   \nonumber \\
            &\leq \left( \sum_{i=1}^N \sum_{t=1}^T e_{it}^8 \right)
                  \left( \sum_{i=1}^N \sum_{t=1}^T {\cal X}_{k,it}^4 \right)
             = {\cal O}_p(NT) \, {\cal O}_p(NT) \, ,
     \end{align*}
     which implies $\| B_{2,k} \|={\cal O}_p(\sqrt{NT})$, and analogously we find $\| B_{3,k} \|={\cal O}_p(\sqrt{NT})$.
     Therefore
     \begin{align*}
        | A_{1,k_1 k_2} | &\leq \frac {4R} {NT}
                      \left( \| B_{1,k_1} \| \| B_{3,k_2}\| + \|B_{2,k_1}\| \| B_{1,k_2}\| \right)
        \nonumber \\
           &= \frac {4R} {NT}
                      \left( {\cal O}_p(N^{1/2}) {\cal O}_p(\sqrt{NT}) + {\cal O}_p(\sqrt{NT}) {\cal O}_p(N^{1/2}) \right)
            = o_p(1) \; .
     \end{align*}
     This is what we wanted to show.

     \# Finally, we want to show $A_2 \equiv (NT)^{-1} \sum_{i=1}^N \sum_{t=1}^T \left( e_{it}^2 - \widehat e_{it}^2 \right)
                                               \widehat {\cal X}_{it} \widehat {\cal X}_{it}' = o_p(1) $.
     According to theorem \ref{theorem:expansions} we have $e - \widehat e = C_1 + C_2$,
     where we defined $C_1 = - \sum_{k=1}^K \left( \widehat \beta_k - \beta^0_k \right) \, X_k $,
     and $C_2=\sum_{k=1}^K \left( \widehat \beta_k - \beta^0_k \right)
               \left( P_{\lambda^0} \, X_k \, M_{f^0} +  X_k \, P_{f^0} \right)
             + P_{\lambda^0} \, e \, M_{f^0} + e \, P_{f^0}
            - \widehat e^{(1)}_e - \widehat e^{({\rm rem})}$,
     which satisfies $\|C_2\|={\cal O}_p(N^{1/2})$, and ${\rm rank}(C_2) \leq 11 R$
     (actually, one can easily prove $\leq 5R$, but this does not follow from theorem \ref{theorem:expansions}).
     Using this notation we have
     \begin{align*}
         A_2 &= \frac 1 {NT} \sum_{i=1}^N \sum_{t=1}^T
                 (e_{it}+\widehat e_{it}) (C_{1,it} + C_{2,it})
                                               \widehat {\cal X}_{it} \widehat {\cal X}_{it}' \; ,
     \end{align*}
     which can also be written as
     \begin{align*}
        A_{2,k_1 k_2} &= - \,\sum_{k_3=1}^K \left( \widehat \beta_{k_3} - \beta^0_{k_3} \right)
                        \left( C_{5,k_1 k_2 k_3} + C_{6,k_1 k_2 k_3} \right)
                         + \frac 1 {NT} {\rm Tr}\left( C_2 \, C_{3,k_1 k_2}  \right)
                         + \frac 1 {NT} {\rm Tr}\left( C_2 \, C_{4,k_1 k_2}  \right) \; ,
     \end{align*}
     where we defined
     \begin{align*}
        C_{3,k_1 k_2,it} &= e_{it} \widehat {\cal X}_{k_1,it} \widehat {\cal X}_{k_2,it} \; ,
        \nonumber \\
        C_{4,k_1 k_2,it} &=\widehat e_{it} \widehat {\cal X}_{k_1,it} \widehat {\cal X}_{k_2,it} \; ,
        \nonumber \\
        C_{5,k_1 k_2 k_3}
       &=
 \frac 1 {NT} \sum_{i=1}^N \sum_{t=1}^T \, e_{it} \widehat {\cal X}_{k_1,it} \widehat {\cal X}_{k_2,it} X_{k_3,it} \; ,
        \nonumber \\
    C_{6,k_1 k_2 k_3}
   &=\frac 1 {NT} \sum_{i=1}^N \sum_{t=1}^T \, \widehat e_{it} \widehat {\cal X}_{k_1,it} \widehat {\cal X}_{k_2,it} X_{k_3,it} \; .
     \end{align*}
     Again, because we have uniformly bounded $8$th moments for $e_{it}$ and $X_{k,it}$, we find
     \begin{align*}
        \| C_{3,k_1 k_2} \|^4 &\leq \| C_{3,k_1 k_2} \|_F^4
                  \nonumber \\
                &= \left( \sum_{i=1}^N \sum_{t=1}^T e_{it}^2 \widehat {\cal X}_{k_1,it}^2 \widehat {\cal X}_{k_2,it}^2
                   \right)^2
                  \nonumber \\
                &\leq  \left( \sum_{i=1}^N \sum_{t=1}^T e_{it}^4 \right)
             \left( \sum_{i=1}^N \sum_{t=1}^T \widehat {\cal X}_{k_1,it}^4 \widehat {\cal X}_{k_2,it}^4 \right)
                  \nonumber \\
                &= {\cal O}_p(N^2 T^2) \; ,
     \end{align*}
     i.e., $\| C_{3,k_1 k_2} \|={\cal O}_p(\sqrt{NT})$. Furthermore
     \begin{align*}
        \| C_{4,k_1 k_2} \|^2 &\leq \| C_{3,k_1 k_2} \|_F^2
                  \nonumber \\
                &= \sum_{i=1}^N \sum_{t=1}^T \widehat e_{it}^2 \widehat {\cal X}_{k_1,it}^2 \widehat {\cal X}_{k_2,it}^2
                  \nonumber \\
                &\leq \left( \sum_{i=1}^N \sum_{t=1}^T \widehat e_{it}^2 \right)
                       \max_{i=1\ldots N} \max_{t=1\ldots T}
                       \left( \widehat {\cal X}_{k_1,it}^2 \widehat {\cal X}_{k_2,it}^2 \right)
                  \nonumber \\
                &\leq \left( \sum_{i=1}^N \sum_{t=1}^T e_{it}^2 \right)
                       \max_{i=1\ldots N} \max_{t=1\ldots T}
                       \left( \widehat {\cal X}_{k_1,it}^2 \widehat {\cal X}_{k_2,it}^2 \right)
                  \nonumber \\
                &= {\cal O}_p(NT) {\cal O}_p((NT)^{(4/(8+\epsilon))}) = o_p((NT)^{(3/4)}) \; .
     \end{align*}
     Here we used the assumption that $X_k$ has uniformly bounded moments of order
     $8+\epsilon$ for some $\epsilon>0$. We also used
         $\sum_{i=1}^N \sum_{t=1}^T \widehat e_{it}^2 \leq \sum_{i=1}^N \sum_{t=1}^T e_{it}^2$.

     For $C_5$ we find
     \begin{align*}
        C_{5,k_1 k_2 k_3}^2
       &\leq
       \left( \frac 1 {NT} \sum_{i=1}^N \sum_{t=1}^T \, e_{it}^2 \right)
       \left( \frac 1 {NT} \widehat {\cal X}_{k_1,it}^2 \widehat {\cal X}_{k_2,it}^2 X_{k_3,it}^2 \right)
       \nonumber \\
       &= {\cal O}_p(1) \; ,
     \end{align*}
     i.e., $C_{5,k_1 k_2 k_3} = {\cal O}_p(1)$, and analogously $C_{6,k_1 k_2 k_3} = {\cal O}_p(1)$, because
     $\sum_{i=1}^N \sum_{t=1}^T \widehat e_{it}^2 \leq \sum_{i=1}^N \sum_{t=1}^T e_{it}^2$.

     Using these results we obtain
     \begin{align*}
        | A_{2,k_1 k_2} | &\leq - \,\sum_{k_3=1}^K \left\| \widehat \beta_{k_3} - \beta^0_{k_3} \right\|
                        \left| C_{5,k_1 k_2 k_3} + C_{6,k_1 k_2 k_3} \right|
                         + \frac {11 R} {NT} \|C_2\| \|C_{3,k_1 k_2} \|
                         + \frac {11 R} {NT} \| C_2\| \| C_{4,k_1 k_2} \|
               \nonumber \\
                      &= {\cal O}_p((NT)^{-1/2}) {\cal O}_p(1)
                           + \frac {11 R} {NT} {\cal O}_p(N^{1/2}) {\cal O}_p(\sqrt{NT})
                           + \frac {11 R} {NT} {\cal O}_p(N^{1/2}) o_p((NT)^{3/4})
                       = o_p(1) \; .
     \end{align*}
     This is what we wanted to show.
\end{proof}

Remember, the truncation Kernel $\Gamma(.)$ is defined by $\Gamma(x)=1$ for $|x|\leq 1$ and $\Gamma(x)=0$ otherwise.
Without loss of generality we assume in the following the bandwidth parameter $M$ is a positive integer
(without this assumption, one needs to replace $M$ everywhere below by the largest integer contained in $M$,
but nothing else changes).

\begin{proof}[\bf Proof of Lemma \ref{lemma:lambdafINV}]
  By Lemma~\ref{lemma:Pfhat} we know asymptotically $P_{\widehat f}$ is close to $P_{f^0}$ and therefore
  ${\rm rank}(P_{\widehat f}P_{f^0})={\rm rank}(P_{f^0}P_{f^0})=R$ , i.e.,
  ${\rm rank}(P_{\widehat f}f^0)=R$ asymptotically. We can therefore write
  $\widehat f = P_{\widehat f} f^0 H$, where $H=H_{NT}$ is a non-singular $R\times R$ matrix.

  We now want to show $\|H\|=O_p(1)$ and $\|H^{-1}\|=O_p(1)$.
  Because of our normalization of $\widehat f$ and $f^0$ we have $H=(\widehat f' P_{\widehat f} f^0/T)^{-1}=(\widehat f' f^0/T)^{-1}$,
  and therefore $\|H^{-1}\|\leq \|\widehat f\| \|f^0\| /T ={\cal O}_p(1)$.
  We also have $\widehat f = f^0 H + (P_{\widehat f}-P_{f^0}) f^0 H$,
  and thus $H=f^{0\prime} \widehat f/T - f^{0\prime} (P_{\widehat f}-P_{f^0}) f^0 H /T$,
  i.e., $\|H\| \leq {\cal O}_p(1) + \|H\| {\cal O}_p\left(T^{-1/2}\right)$
  which shows $\|H\|={\cal O}_p(1)$.
  Note all the following results only require $\|H\|={\cal O}_p(1)$ and $\|H^{-1}\|={\cal O}_p(1)$, but apart from that
  are independent of the choice of normalization.

  The advantage of expressing $\widehat f$ in terms of $P_{\widehat f}$ as above is that
  the result $\left\| P_{\widehat f} - P_{f^0} \right\| = {\cal O}_p\left(T^{-1/2}\right)$
  of Lemma~\ref{lemma:Pfhat} immediately implies
  \begin{align*}
     \left\| \widehat f - f^0 \, H \right\| &= {\cal O}_p\left(1\right) \; .
  \end{align*}
  The FOC wrt $\lambda$ in the minimization of the first line in equation \eqref{LNT123} reads
  \begin{align}
     \widehat \lambda \, \widehat f' \widehat f &= \left(Y-\sum_{k=1}^{K} \widehat \beta_{k} X_{k} \right) \widehat f \; ,
     \label{hatf_close}
  \end{align}
  which yields
  \begin{align*}
    \widehat \lambda &= \left[ \lambda^0 f^{0\prime} - \sum_{k=1}^{K} \left( \widehat \beta_{k} - \beta^0_k \right) X_{k} \right]
                   \widehat f \left(\widehat f' \widehat f\right)^{-1}
      \nonumber \\
                &= \left[ \lambda^0 f^{0\prime} + \sum_{k=1}^{K} \left( \beta^0_k - \widehat \beta_{k} \right) X_{k} + e\right]
                   P_{\widehat f} f^0 \left(f^{0\prime} P_{\widehat f} f^0\right)^{-1} \, \left(H'\right)^{-1}
      \nonumber \\
                &= \lambda^0 \, \left(H'\right)^{-1}
            + \lambda^0 f^{0\prime} \left( P_{\widehat f}-P_{f^0} \right) f^0
                                    \left(f^{0\prime} P_{\widehat f} f^0\right)^{-1} \, \left(H'\right)^{-1}
      \nonumber \\ & \qquad \qquad \qquad
            + \lambda^0 f^{0\prime} f^0
                 \left[ \left(f^{0\prime} P_{\widehat f} f^0\right)^{-1} - \left(f^{0\prime} f^0\right)^{-1} \right]
                       \, \left(H'\right)^{-1}
      \nonumber \\ & \qquad \qquad \qquad
            + \left[ \sum_{k=1}^{K} \left( \beta^0_k - \widehat \beta_{k} \right) X_{k} + e\right]
                   P_{\widehat f} f^0 \left(f^{0\prime} P_{\widehat f} f^0\right)^{-1} \, \left(H'\right)^{-1} \; .
  \end{align*}
  We have
  $\left(f^{0\prime} P_{\widehat f} f^0\right/T)^{-1} - \left(f^{0\prime} f^0/T\right)^{-1}={\cal O}_p(T^{-1/2})$, because
  $\left\| P_{\widehat f} - P_{f^0} \right\| = {\cal O}_p\left(T^{-1/2}\right)$
  and $f^{0\prime} f^0/T$ by assumption is converging to a positive definite matrix (or given our particular choice of
  normalization is just the identity matrix $\mathbb{I}_R$).
  In addition, we have $\|e\|={\cal O}_p(\sqrt{T})$, $\|X_k\|={\cal O}_p(\sqrt{NT})$ and
  by corollary \ref{lemma:sqrtNTcons} also $\|\widehat \beta - \beta^0\|={\cal O}_p(1/\sqrt{NT})$. Therefore
  \begin{align}
     \left\| \widehat \lambda - \lambda^0 \, \left(H'\right)^{-1} \right\| &= {\cal O}_p\left(1\right) \; ,
     \label{hatlambda_close}
  \end{align}
  which is what we wanted to prove.

  Next, we want to show
  \begin{align}
     \label{lambdafsquare}
     \left\| \left( \frac{\widehat \lambda^{\prime} \, \widehat \lambda} N \right)^{-1}
   - \left( \frac{ \left(H\right)^{-1} \,\lambda^{0\prime} \, \lambda^0 \, \left(H'\right)^{-1}} N \right)^{-1} \right\|
       = {\cal O}_p\left(N^{-1/2}\right) \; ,
     \nonumber \\
     \left\| \left( \frac{\widehat f^{\prime} \, \widehat f} T \right)^{-1}
   -  \left( \frac{H' \,f^{0\prime} \, f^0 \,  H} T \right)^{-1} \right\|
       = {\cal O}_p\left(T^{-1/2}\right) \; .
  \end{align}
  Let $A=N^{-1} \, \widehat \lambda^{\prime} \, \widehat \lambda$ and
  $B=N^{-1} \, \left(H\right)^{-1} \, \lambda^{0\prime} \, \lambda^0 \, \left(H'\right)^{-1}$.
  Using \eqref{hatlambda_close} we find
  \begin{align*}
     \| A-B \| &= \frac 1 {2N} \left\|
                         \left[ \widehat \lambda^{\prime} + \left(H\right)^{-1} \, \lambda^{0\prime} \right]
                         \left[ \widehat \lambda - \lambda^0 \, \left(H'\right)^{-1} \right]
                        +\left[ \widehat \lambda^{\prime} - \left(H\right)^{-1} \, \lambda^{0\prime} \right]
                         \left[ \widehat \lambda + \lambda^0 \, \left(H'\right)^{-1} \right] \right\|
           \nonumber \\
               &= N^{-1} \, {\cal O}_p(N^{1/2}) \, {\cal O}_p(1) = {\cal O}_p \left( N^{-1/2} \right) \; .
  \end{align*}
  By assumption \ref{ass:A1} we know
  \begin{align*}
      \left\| \left( \frac{\lambda^{0\prime} \, \lambda^0} N \right)^{-1} \right\| &= {\cal O}_p(1) \; ,
  \end{align*}
  and thus also $\left\|B^{-1}\right\|= {\cal O}_p(1)$, and therefore $\left\|A^{-1}\right\|= {\cal O}_p(1)$
  (using $\| A-B \|=o_p(1)$ and applying Weyl's inequality to the smallest eigenvalue of $B$).
  Because $A^{-1} - B^{-1} = A^{-1} (B-A) B^{-1}$ we find
  \begin{align*}
     \left\| A^{-1} - B^{-1} \right\| &\leq  \left\|A^{-1}\right\| \, \left\|B^{-1}\right\| \, \left\|A-B\right\|
                        \nonumber \\
                                      &= {\cal O}_p\left( N^{-1/2} \right) \; .
  \end{align*}
  Thus, we have shown the first statement of \eqref{lambdafsquare}, and analogously one can show the second one.
  Combining \eqref{hatlambda_close}, \eqref{hatf_close} and \eqref{lambdafsquare} we obtain
  \begin{align*}
     & \left\| \frac{\widehat \lambda}{\sqrt{N}} \, \left( \frac{\widehat \lambda^{\prime}\widehat\lambda}{N} \right)^{-1}
            \, \left(\frac{\widehat f^{\prime}\widehat f}T\right)^{-1} \, \frac{\widehat f^{\prime}}{\sqrt{T}}
        - \frac{\lambda^0}{\sqrt{N}} \, \left(\frac{\lambda^{0\prime}\lambda^0}N\right)^{-1} \,
             \left(\frac{f^{0\prime}f^0}T\right)^{-1} \, \frac{f^{0\prime}} {\sqrt{T}} \right\|
    \nonumber \\  & =
       \left\| \frac{\widehat \lambda}{\sqrt{N}} \, \left( \frac{\widehat \lambda^{\prime}\widehat\lambda}{N} \right)^{-1}
            \, \left(\frac{\widehat f^{\prime}\widehat f}T\right)^{-1} \, \frac{\widehat f^{\prime}}{\sqrt{T}}
        - \frac{\lambda^0 \left(H'\right)^{-1}}{\sqrt{N}} \,
                \left(\frac{\left(H\right)^{-1}\lambda^{0\prime}\lambda^0 \left(H'\right)^{-1}}N\right)^{-1} \,
             \left(\frac{H' f^{0\prime}f^0 H}T\right)^{-1} \, \frac{H' f^{0\prime}} {\sqrt{T}} \right\|
    \nonumber \\  & \qquad \qquad =
       {\cal O}_p\left( N^{-1/2} \right) \; ,
  \end{align*}
  which is equivalent to the statement in the lemma.
  Note also
  $\widehat \lambda \, (\widehat \lambda^{\prime}\widehat\lambda)^{-1} \, (\widehat f^{\prime}\widehat f)^{-1} \, \widehat f^{\prime}$
  is independent of $H$, i.e., independent of the choice of normalization.
\end{proof}

\begin{proof}[\bf Proof of Lemma~\ref{lemma:exp}]
   \# Part A of the proof: We start by showing
   \begin{align}
       N^{-1} \, \left\|
           \mathbb{E}_{\cal C}\left[ e'  X_k - \left( e'  X_k \right)^{\rm truncR} \right] \right\|
                         &= o_p(1) \; .
       \label{exp_proof_part1}
   \end{align}
   Let $A=e'  X_k$ and $B=A-A^{\rm truncR}$.
   By definition of the left-sided truncation (using the truncation kernel $\Gamma(.)$ defined above)
   we have $B_{t\tau}=0$ for $t<\tau\leq t+M$ and $B_{t\tau}=A_{t\tau}$ otherwise.
   By assumption~\ref{ass:A5} we have $\mathbb{E}_{\cal C}(A_{t\tau})=0$ for $t \geq \tau$.
   For $t<\tau$ we have
   $\mathbb{E}_{\cal C}(A_{t\tau}) = \sum_{i=1}^N  \mathbb{E}_{\cal C}(e_{it}  X_{k,i\tau})$.
   We thus have $\mathbb{E}_{\cal C}(B_{t\tau})=0$ for $\tau\leq t+M$,
   and $\mathbb{E}_{\cal C} B_{t\tau} =  \sum_{i=1}^N  \mathbb{E}_{\cal C}(e_{it}  X_{k,i\tau}) $ for $\tau > t+M$.
   Therefore
   \begin{align*}
     \left\| \mathbb{E}_{\cal C}(B) \right\|_1  &= \max_{t=1\ldots T} \, \sum_{\tau=1}^T |\mathbb{E}_{\cal C}(B_{t\tau})|
                         \nonumber \\
                           &\leq    \max_{t=1\ldots T} \, \sum_{\tau=t+M+1}^T \,
                           \left|   \sum_{i=1}^N  \mathbb{E}_{\cal C}(e_{it}  X_{k,i\tau}) \right|
                     \leq     N  \max_{t=1\ldots T} \, \sum_{\tau=t+M+1}^T    c \, (\tau - t)^{- (1+ \epsilon)}
                            = o_p(N) \; ,
   \end{align*}
   where we used $M\rightarrow \infty$.
   Analogously we can show $\left\| \mathbb{E}_{\cal C}(B) \right\|_\infty  = o_p(N)$.
   Using part (vii) of Lemma~\ref{lemma:inequalities} we therefore also find
   $\left\| \mathbb{E}_{\cal C}(B) \right\| = o_p(N)$,
   which is equivalent to equation \eqref{exp_proof_part1} we wanted to show in this part of the proof.
   Analogously we can show
   \begin{align*}
       N^{-1} \, \left\|
           \mathbb{E}_{\cal C}\left[ e'  e - \left( e'  e \right)^{\rm truncD} \right] \right\|
                         &= o_p(1) \; ,
       \nonumber \\
       T^{-1} \, \left\|
           \mathbb{E}_{\cal C}\left[ e   e' - \left( e   e' \right)^{\rm truncD} \right] \right\|
                         &= o_p(1) \; .
   \end{align*}

   \# Part B of the proof: Next, we want to show
   \begin{align}
       N^{-1} \, \left\|
            \left[ e'  X_k \, - \,
                    \mathbb{E}_{\cal C}\left(e'  X_k\right) \right]^{\rm truncR} \right\|
                         &= o_p(1) \; .
       \label{exp_proof_part2}
   \end{align}
   Using Lemma~\ref{Lemma:Trunc} we have
   \begin{align*}
       N^{-1} \left\| \left[ e'  X_k \, - \,
                    \mathbb{E}_{\cal C}\left(e'  X_k\right) \right]^{\rm truncR} \right\|
          &\leq M \, \max_t \, \max_{t<\tau\leq t+M} \, N^{-1} \, \left| e_t'  X_{k,\tau}
                                                - \mathbb{E}_{\cal C} \left( e_t'  X_{k,\tau} \right) \right|
         \nonumber \\
          &\leq M \, \max_t \, \max_{t<\tau\leq t+M} \, N^{-1} \, \left| \sum_{i=1}^N \left[ e_{it} X_{k,i\tau}
                                       - \mathbb{E}_{\cal C} \left( e_{it} X_{k,i\tau} \right) \right] \right|
         \nonumber \\
          &\leq M \, N^{-1/2} \, \max_t \, \max_{t<\tau\leq t+M} \, \left| \bar Z^{(1)}_{k,t\tau} \right| .
   \end{align*}
   According to Lemma~\ref{lemma:barZ} we know $\mathbb{E}_{\cal C} \left| \bar Z^{(1)}_{k,t\tau} \right|^4$
   is bounded uniformly across $t$ and $\tau$. Applying Lemma~\ref{lemma:maxRV} we therefore find
   $\max_t \, \max_{t<\tau\leq t+M} \bar Z^{(1)}_{t\tau} = {\cal O}_p((MT)^{1/4})$. Thus we have
   \begin{align*}
      M \, N^{-1/2} \, \max_t \, \max_{t<\tau\leq t+M} \, \left| \bar Z^{(1)}_{t\tau} \right|
           &= {\cal O}_p\left(M \, N^{-1/2} \, (MT)^{1/4}\right) \, = \, o_p(1) \; .
   \end{align*}
   Here we used $M^5/T \rightarrow 0$.
   Analogously we can show
   \begin{align*}
       N^{-1} \, \left\|
            \left[ e'  e \, - \,
                    \mathbb{E}_{\cal C} \left(e'  e\right) \right]^{\rm truncD} \right\|
                         &= o_p(1) \; ,
       \nonumber \\
       T^{-1} \, \left\|
            \left[ e  e' \, - \,
                    \mathbb{E}_{\cal C}  \left(e  e'\right) \right]^{\rm truncD} \right\|
                         &= o_p(1) \; .
   \end{align*}

   \# Part C of the proof: Finally, we want to show
   \begin{align}
       N^{-1} \, \left\|
            \left[ e'  X_k \, - \,
                   \widehat e' \, X_k  \right]^{\rm truncR} \right\|
                         &= o_p(1) \; .
       \label{exp_proof_part3}
   \end{align}
   According to theorem \ref{theorem:expansions} we have
   $\widehat e =M_{\lambda^0}  e  M_{f^0}   + e_{\rm rem}$,
   where $e_{\rm rem} \equiv \widehat e^{(1)}_e
            - \sum_{k=1}^K \left( \widehat \beta_k - \beta^0_k \right) \widehat e^{(1)}_k
            + \widehat e^{({\rm rem})}$.
   We then have
   \begin{align*}
      & N^{-1} \, \left\|
            \left[ e'  X_k \, - \,
                   \widehat e' \, X_k  \right]^{\rm truncR} \right\|
      \\
       &\leq             N^{-1} \, \left\|
            \left[ e_{\rm rem}'  X_k    \right]^{\rm truncR} \right\|
          +   N^{-1} \, \left\|
            \left[ P_{f^0} e'  M_{\lambda^0}  X_k   \right]^{\rm truncR} \right\|
          +     N^{-1} \, \left\|
            \left[ e'  P_{\lambda^0} X_k   \right]^{\rm truncR} \right\| .
   \end{align*}
  Using
   corollary \ref{lemma:sqrtNTcons} we find the remainder term satisfies
   $\| e_{\rm rem} \| = {\cal O}_p(1)$.     Using Lemma~\ref{Lemma:Trunc} we find
   \begin{align*}
       N^{-1} \, \left\|
            \left[  e'_{\rm rem} \, X_k  \right]^{\rm truncR} \right\|
          &= \frac M N \, \max_{t,\tau} \, \widehat e'_{{\rm rem},t} \, X_{k,\tau}
      \nonumber \\
          &\leq \frac M N \, \max_{t,\tau} \, \|  e_{{\rm rem},t} \| \,  \| X_{k,\tau} \|
      \nonumber \\
          &\leq \frac M N \, \|   e_{\rm rem} \| \, \max_{\tau}  \| X_{k,\tau} \|
      \nonumber \\
          &\leq \frac M N  {\cal O}_p(1) {\cal O}_p(N^{1/2} T^{1/8}) = o_p(1) \; ,
   \end{align*}
   where we used the fact that the norm of each column $ e_{{\rm rem},t}$
   is smaller than the operator norm of the whole matrix $ e_{\rm rem}$.
   In addition we used Lemma~\ref{lemma:maxRV}
   and the fact that $N^{-1/2} \, \| X_{k,\tau} \| = \sqrt{ N^{-1} \sum_{i=1}^N X_{k,i\tau}^2 }$
   has finite 8'th moment to show $\max_{\tau}  \| X_{k,\tau} \| = {\cal O}_p(N^{1/2} T^{1/8})$.
      Using again Lemma~\ref{Lemma:Trunc} we find
   \begin{align*}
      N^{-1} \, \left\| \left[ P_{f^0} e' M_{\lambda^0} X_k  \right]^{\rm truncR} \right\|
        &\leq  N^{-1} \,  M \, \max_{t,\tau=1\ldots T}  \,
              \left| f^0_t \, (f^{0\prime} \, f^0)^{-1} \, f^{0\prime} \, e' M_{\lambda^0} X_{k,\tau} \right|
       \nonumber \\
        &\leq  N^{-1} \, M \, \|e\| \, \|f^0\| \, \left\|(f^{0\prime} \, f^0)^{-1} \right\| \,
                \max_t \, \| f^0_t \| \, \max_{\tau} \| X_{k,\tau} \|
       \nonumber \\
        &=  N^{-1} \, M \, {\cal O}_p(N^{1/2}) \, {\cal O}_p(T^{1/2}) \, {\cal O}_p(T^{-1})
               \, {\cal O}_p(N^{1/2} T^{1/8}) = o_p(1) \; ,
    \end{align*}
    and
    \begin{align*}
          \, \left\|
            \left[   e'  P_{\lambda^0} X_k   \right]^{\rm truncR} \right\|
            &\leq N^{-1/2} M \max_{t =1\ldots T}
             \left( N^{-1/2} \sum_i e_{it} \lambda^0_i \right)
             (N^{-1} \lambda^{0\prime} \, \lambda^0)^{-1}
            \max_{\tau =1\ldots T}    \left(  N^{-1} \sum_j \lambda^{0 \prime}_j  X_{jt}  \right)
         \\
           &= N^{-1/2} M {\cal O}_p(T^{1/8})   {\cal O}_p(1)  {\cal O}_p(T^{1/8})    = o_p(1).
    \end{align*}
   Thus, we proved equation \eqref{exp_proof_part3}.
    Analogously we obtain
   \begin{align*}
       N^{-1} \, \left\|
            \left[ e'  e \, - \,
                   \widehat e' \, \widehat e  \right]^{\rm truncD} \right\|
                         &= o_p(1) \; ,
      \nonumber \\
       T^{-1} \, \left\|
            \left[ e  e' \, - \,
                   \widehat e \, \widehat e'  \right]^{\rm truncD} \right\|
                         &= o_p(1) \; .
   \end{align*}

  \# Combining \eqref{exp_proof_part1}, \eqref{exp_proof_part2}, and \eqref{exp_proof_part3},
  we obtain
 $N^{-1} \, \left\| \mathbb{E}_{\cal C}(e'   X_k    ) -
       \left( \widehat e' \, X_k \right)^{\rm truncR} \right\|
                         = o_p(1)$. The proof of the other two statements of the lemma is analogous.
\end{proof}

\begin{proof}[\bf Proof of Lemma~\ref{lemma:normsTrunc}]
   Using theorem \ref{theorem:expansions} and \ref{lemma:sqrtNTcons} we find
   $\| \widehat e \| = {\cal O}_p(N^{1/2})$.
   Applying Lemma~\ref{Lemma:Trunc} we therefore find
   \begin{align*}
      N^{-1} \, \left\| \left( \widehat e' \, X_k \right)^{\rm truncR} \right\|
            &\leq \frac M N \, \max_{t,\tau} \, \left| \widehat e'_t \, X_{k,\tau} \right|
      \nonumber \\
          &\leq \frac M N \, \max_{t,\tau} \, \| \widehat e_{t} \| \,  \| X_{k,\tau} \|
      \nonumber \\
          &\leq \frac M N \, \| \widehat e \| \, \max_{\tau}  \| X_{k,\tau} \|
      \nonumber \\
          &\leq \frac M N  {\cal O}_p(N^{1/2}) {\cal O}_p(N^{1/2} T^{1/8}) = {\cal O}_p(M T^{1/8}) \; ,
   \end{align*}
   where we used the result $\max_{\tau}  \| X_{k,\tau} \| = {\cal O}_p(N^{1/2} T^{1/8})$ that was
   already obtained in the proof of the last theorem.

   The proof for the statement (ii) and (iii) is analogous.
\end{proof}

\section{Proofs for Section \ref{sec:testing} (Testing)}

\begin{proof}[\bf Proof of Theorem \ref{th:gradient}]
   Using the expansion for $L_{NT}(\beta)$ in Lemma S.1
   in the supplementary material 
   of Moon and Weidner~\cite*{MoonWeidner2015} we find for the derivative
   (the sign convention $\epsilon_k=\beta^0_k - \beta_k$ results in the minus sign below)
   \begin{align*}
      \frac{\partial L_{NT}}{\partial \beta_k}
       &= \, - \, \frac{1} {NT} \, \sum_{g=2}^\infty \, g \,
      \sum_{\kappa_1=0}^K \, \sum_{\kappa_2=0}^K \, \ldots \sum_{\kappa_{g-1}=0}^K \,
     \epsilon_{\kappa_1} \, \epsilon_{\kappa_2} \, \ldots \, \epsilon_{\kappa_{g-1}}
      \, L^{(g)}\left(\lambda^0,\, f^0,\, X_{k},\, X_{\kappa_1}, \ldots
     ,X_{\kappa_{g-1}}\right)
      \nonumber \\
      &=  \left[ 2 W_{NT} (\beta-\beta^0) \right]_k  - \, \frac 2 {\sqrt{NT}} C_{NT,k}
           + \frac 1 {NT} \nabla R_{1,NT,k} + \frac 1 {NT} \nabla R_{2,NT,k} \; ,
   \end{align*}
   where
   \begin{align*}
      W_{NT,k_1 k_2} &= \frac{1} {NT} \, L^{(2)}\left(\lambda^0,\, f^0,\, X_{k_1},\, X_{k_2} \right)  \; ,
     \nonumber \\
      C_{NT,k} &= \frac 1 {2\sqrt{NT}} \, \sum_{g=2}^{G_e} \, g \, (\epsilon_{0})^{g-1}
                         \, L^{(g)}\left(\lambda^0,\, f^0,\, X_{k},\, X_{0}, \ldots ,X_{0}\right)
     \nonumber \\
               &=  \sum_{g=2}^{G_e} \, \frac g {2\sqrt{NT}} \,
                         \, L^{(g)}\left(\lambda^0,\, f^0,\, X_{k},\, e, \ldots ,e\right) \; ,
   \end{align*}
   and
   \begin{align*}
      \nabla R_{1,NT,k}
               &= \, - \, \sum_{g=G_e+1}^\infty \, g \,
        (\epsilon_{0})^{g-1}
      \, L^{(g)}\left(\lambda^0,\, f^0,\, X_{k},\, X_{0}, \ldots ,X_{0}\right)  \; ,
     \nonumber \\
               &= \, - \, \sum_{g=G_e+1}^\infty \, g \,
      \, L^{(g)}\left(\lambda^0,\, f^0,\, X_{k},\, e, \ldots ,e\right)  \; ,
     \nonumber \\
      \nabla R_{2,NT,k}
          &= \, - \, \sum_{g=3}^\infty \, g \, \sum_{r=1}^{g-1} \, {g-1 \choose r}
      \sum_{k_1=1}^K \, \ldots \sum_{{k_r}=1}^K \,
      \epsilon_{k_1} \, \ldots \,\epsilon_{k_r} \, (\epsilon_0)^{g-r-1} \,
    \nonumber \\
        & \qquad \qquad \qquad \qquad \qquad \qquad
      \, L^{(g)}\left(\lambda^0,\, f^0,\, X_{k},\, X_{k_1}, \ldots
     ,X_{k_r},X_0,\ldots,X_0\right)  \; .
    \nonumber \\
          &= \, - \, \sum_{g=3}^\infty \, g \, \sum_{r=1}^{g-1} \, {g-1 \choose r}
      \sum_{k_1=1}^K \, \ldots \sum_{{k_r}=1}^K \,
      \,(\beta^0_{k_1}-\beta_{k_1}) \, \ldots \, (\beta^0_{k_r}-\beta_{k_r})
    \nonumber \\
        & \qquad \qquad \qquad \qquad \qquad \qquad
      \, L^{(g)}\left(\lambda^0,\, f^0,\, X_{k},\, X_{k_1}, \ldots
     ,X_{k_r},e,\ldots,e\right)  \; .
   \end{align*}
   The above expressions for $W_{NT}$ and $C_{NT}$ are equivalent to their definitions given in theorem
   \ref{th:ass_expand}. Using the bound on $L^{(g)}$
   we find\footnote{Here we use ${{n \choose k}} \leq 4^n$.}
   \begin{align*}
      |\nabla R_{1,NT,k}|
               &\leq  c_0 \, NT \,  \frac{\|X_k\|}{\sqrt{NT}} \, \sum_{g=G_e+1}^\infty \, g^2 \,
                                \left(\frac{c_1 \|e\|}{\sqrt{NT}} \right)^{g-1}
    \nonumber \\
               &\leq 2 \, c_0 \, (1+G_e)^2 \, NT \,  \frac{\|X_k\|}{\sqrt{NT}} \,
                         \left(\frac{c_1 \|e\|}{\sqrt{NT}} \right)^{G_e} \,
                          \left[1-\left(\frac{c_1 \|e\|}{\sqrt{NT}} \right)\right]^{-3}
               = o_p(\sqrt{NT}) \; ,
    \nonumber \\
      |\nabla R_{2,NT,k}|
          &\leq c_0 \, NT \, \frac{\|X_{k}\|}{\sqrt{NT}} \,
                \sum_{g=3}^\infty \, g^2 \, \sum_{r=1}^{g-1} \, {g-1 \choose r}
                        \, c_1^{g-1} \,
                   \left( \sum_{\widetilde k=1}^K |\beta_{\widetilde k}-\beta^0_k|  \frac{\|X_{\widetilde k}\|}{\sqrt{NT}}
                             \right)
              \nonumber \\ & \qquad\qquad\qquad\qquad\qquad\qquad\qquad\qquad \times
                    \left( \sum_{\widetilde k=1}^K |\beta_{\widetilde k}-\beta^0_k|  \frac{\|X_{\widetilde k}\|}{\sqrt{NT}}
                            + \frac{\|e\|}{\sqrt{NT}} \right)^{g-2}
        \nonumber \\
          &\leq c_0 \, NT \, \frac{\|X_{k}\|}{\sqrt{NT}} \,
                \sum_{g=3}^\infty \, g^3
                        \, (4c_1)^{g-1} \,
                   \left( \sum_{\widetilde k=1}^K |\beta_{\widetilde k}-\beta^0_k|  \frac{\|X_{\widetilde k}\|}{\sqrt{NT}}
                             \right)
                    \left( \sum_{\widetilde k=1}^K |\beta_{\widetilde k}-\beta^0_{\widetilde k}|  \frac{\|X_{\widetilde k}\|}{\sqrt{NT}}
                            + \frac{\|e\|}{\sqrt{NT}} \right)^{g-2}
        \nonumber \\
          &\leq c_2 \, NT \, \frac{\|X_{k}\|}{\sqrt{NT}} \,
                    \left( \sum_{\widetilde k=1}^K |\beta_{\widetilde k}-\beta^0_k|  \frac{\|X_{\widetilde k}\|}{\sqrt{NT}}
                             \right)
                     \left( \sum_{\widetilde k=1}^K |\beta_{\widetilde k}-\beta^0_{\widetilde k}|  \frac{\|X_{\widetilde k}\|}{\sqrt{NT}}
                            + \frac{\|e\|}{\sqrt{NT}} \right) \; ,
  \end{align*}
  where $c_0=8 R d_{\max}(\lambda^0,f^0)/2$ and $c_1=16 d_{\max}(\lambda^0,f^0)/d_{\min}^2(\lambda^0,f^0)$
  both converge to a constants as $N,T \rightarrow \infty$, and the very last inequality is only true
  if $4 c_1 \left( \sum_{\widetilde k=1}^K |\beta_{\widetilde k}-\beta^0_{\widetilde k}|  \frac{\|X_{\widetilde k}\|}{\sqrt{NT}}
                            + \frac{\|e\|}{\sqrt{NT}} \right)<1$, and $c_2>0$ is an appropriate positive constant.
  To show $\nabla R_{1,NT,k}=o_p(NT)$ we used Assumption~\ref{ass:A3}$^*$.
  From the above inequalities we find for $\eta_{NT} \rightarrow \infty$
  \begin{align*}
      \sup_{\{\beta :\left\| \beta -\beta^{0} \right\| \leq \eta_{NT}\}}
                  \frac{ \left\| \nabla R_{1,NT}(\beta) \right\| }
       { \sqrt{NT} } = o_{p}\left( 1 \right) ,
     \nonumber \\
      \sup_{\{\beta :\left\| \beta -\beta^{0} \right\| \leq \eta_{NT}\}}
                  \frac{ \left\| \nabla R_{2,NT}(\beta) \right\| }
       { NT \, \left\| \beta -\beta^{0} \right\|  } = o_{p}\left( 1 \right) .
  \end{align*}
  Thus $R_{NT}(\beta)=R_{1,NT}(\beta)+R_{2,NT}(\beta)$ satisfies the bound
  in the theorem.
\end{proof}
\begin{proof}[\bf Proof of Theorem \ref{th:testing}]
   Using Theorem~\ref{th:limdis} it is straightforward
   to show $WD_{NT}^*$ has limiting distribution $\chi^2_r$.

    For the LR test we have to show the estimator
   $\widehat c = (NT)^{-1} {\rm Tr} (\widehat e(\widehat \beta) \, \widehat e'(\widehat \beta))$
   is consistent for  $c=\mathbb{E}_{\cal C} e_{it}^2$.
   As already noted in the main text we have $\widehat c = L_{NT}\left( \widehat \beta \right)$,
   and using our expansion and $\sqrt{NT}$-consistency of $\widehat \beta$ we immediately obtain
   \begin{align*}
      \widehat c &= \frac 1 {NT} \,  {\rm Tr} (M_{\lambda^0} e M_{f^0} e') + o_p(1) \; .
   \end{align*}
   Alternatively, one could use the expansion of $\widehat e$ in Theorem~\ref{theorem:expansions} to show this.
   From the above result we find
   \begin{align*}
      \left| \widehat c - \frac 1 {NT} {\rm Tr} (ee') \right|
          &= \frac 1 {NT} \left|
                   {\rm Tr} (P_{\lambda^0} e M_{f^0} e')
                   + {\rm Tr} (e P_{f^0} e') \right| + o_p(1)
       \nonumber \\
             &\leq \frac {2R} {NT} \, \|e\|^2 + o_p(1) = o_p(1)  \; .
   \end{align*}
   By the weak law of large numbers we thus have
   \begin{align*}
      \widehat c = \frac 1 {NT} \sum_{i=1}^N \sum_{t=1}^T e_{it}^2 + o_p(1) = c + o_p(1) \; ,
   \end{align*}
   i.e., $\widehat c$ is indeed consistent for $c$.
   Having this one immediately obtains the result
   for the limiting distribution of $LR_{NT}^*$.

    For the LM test we first want to show equation \eqref{EquivGrads} holds.
      Using the expansion of $\widehat e$ in Theorem~\ref{theorem:expansions} one obtains
      \begin{align*}
         \sqrt{NT} (\widetilde \nabla {\cal L}_{NT})_k \, &= \, - \, \frac 2 {\sqrt{NT}} \,  {\rm Tr}\left(X'_k \widetilde e\right)
             \nonumber \\
                     &= \left[ 2 \, \sqrt{NT}
                \, W_{NT} \, \left( \widetilde \beta - \beta^0\right)  \right]_k + \frac 2 {NT} C^{(1)}(\lambda^0,f^0,X_k,e)
                                  + \frac 2 {NT} C^{(2)}(\lambda^0,f^0,X_k,e)
                       \nonumber \\ & \qquad \qquad \qquad \qquad \qquad \qquad \qquad \qquad
                                     \qquad \qquad \qquad \qquad
                         \, - \, \frac 2 {\sqrt{NT}} \,  {\rm Tr}\left(X'_k \widetilde e^{(\rm rem)}\right)
                \nonumber \\
                     &= \left[ 2 \,\sqrt{NT} \, W_{NT} \, \left( \widetilde \beta - \beta^0\right) + \frac 2 {NT} C_{NT}
                        \right]_k + o_p(1)
                \nonumber \\
                     &= \sqrt{NT}  \left[ \nabla L_{NT}(\widetilde \beta) \right]_k  + o_p(1) \, ,
      \end{align*}
      which is what we wanted to show.
      Here we used $|{\rm Tr}\left(X'_k \widetilde e^{(\rm rem)}\right)| \leq 7 R \|X_k\| \|\widetilde e^{(\rm rem)}\|
                                               = {\cal O}_p(N^{3/2})$.
      Note that $\|X_k\|={\cal O}_p(N)$, and Theorem~\ref{theorem:expansions}, and $\sqrt{NT}$-consistency of
       $\widetilde \beta$, together imply $\|\widetilde e^{(\rm rem)}\| = {\cal O}_p(\sqrt{N})$.
       We also used the expression for $\nabla L_{NT}(\widetilde \beta)$ given in Theorem~\ref{th:gradient},
       and the bound on $\nabla R_{NT}(\beta)$ given there.

    We now use
    equation \eqref{limNablaL} and $\widetilde W=W+o_p(1)$, $\widetilde \Omega=\Omega+o_p(1)$, and $\widetilde B=B+o_p(1)$
   to obtain
   \begin{align*}
      LM^*_{NT} \; \;  \limfunc{\longrightarrow}_d \; \;
                  (C - B)'
                   W^{-1} H' (H W^{-1} \Omega W^{-1} H')^{-1} H W^{-1}
                      (C - B) \; .
   \end{align*}
   Under $H_0$ we thus find $LM^*_{NT} \; \limfunc{\rightarrow}_d \; \chi^2_r$.
\end{proof}

\section{Additional Monte Carlo Results}

We consider an ${\rm AR}(1)$ model with $R$ factors
\begin{align*}
   Y_{it} \, &= \, \rho^0 \, Y_{i,t-1} \, + \, \sum_{r=1}^R \lambda^0_{ir} \, f^0_{tr} \, + \, e_{it} \; .
\end{align*}
We draw the $e_{it}$ independently and identically distributed from a t-distribution with
five degrees of freedom.
The   $\lambda^0_{ir}$ are  independently distributed as ${\cal N}(1,1)$, and we generate the factors
from an ${\rm AR}(1)$ specification, namely $f^0_{tr}=\rho_f \, f^0_{t-1,r} + u_{tr}$, for each $r=1,\ldots,R$,
where $u_{tr} \sim {\rm iid} {\cal N}(0,(1-\rho_f^2)\sigma_f^2)$.   For all simulations we generate 1,000 initial time periods for $f^0_t$ and $Y_{it}$ that are not used for estimation.
  This guarantees the simulated data used for estimation are distributed according to the stationary distribution
  of the model.
  
For $R=1$ this is exactly the simulation design used in the main text Monte Carlo section, but DGPs with $R>1$
were not considered in the main text.
Table~\ref{tab:extra1} reports results for which $R=1$ is used both in the DGP and for the LS estimation.
Table~\ref{tab:extra2} reports results for which $R=1$ is used in the DGP, but $R=2$ is used for the LS estimation.
Table~\ref{tab:extra3} reports results for which $R=2$ is used both in the DGP and for the LS estimation.
The results in Table~\ref{tab:extra1} and~\ref{tab:extra2} are identical to those reported in the main text
Table~\ref{tab:T1} and~\ref{tab:T2}, except we also report results for the CCE estimator.
The results in Table~\ref{tab:extra3} are not contained in the main text.

The CCE estimator is obtained by using $\widehat f^{\rm proxy}_t =  N^{-1} \sum_i ( Y_{it}, \,   Y_{i,t-1} )'$ as a proxy for the
factors and then estimating the parameters $\rho$, $\lambda_{i1}$, $\lambda_{i2}$, $i=1,\ldots,N$,
via OLS in the linear regression model
$ Y_{it}  =  \rho   Y_{i,t-1}  +   \lambda_{i1}  \widehat f^{\rm proxy}_{t1}
   +   \lambda_{i2}  \widehat f^{\rm proxy}_{t2}  +  e_{it} $.
   
The performance of the CCE estimator in Table~\ref{tab:extra1} and~\ref{tab:extra2} are identical
(up to random MC noise), because the number of factors need not be specified for the CCE estimator, and the DGPs
in Table~\ref{tab:extra1} and~\ref{tab:extra2} are identical. These tables show for $R=1$ in the DGP,
the CCE estimator performs very well. From Chudik and Pesaran~\cite*{ChudikPesaran2015} we expect the
CCE estimator to have a bias of order $1/T$ in a dynamic model, which is confirmed in the simulations: 
the bias of the CCE estimator shrinks roughly in inverse proportion to $T$, as $T$ becomes larger.
The $1/T$ bias of the CCE estimator could be corrected for, and we would expect the bias-corrected CCE estimator
to perform similarly to the bias-corrected LS estimator.

However, if there are $R=2$ factors in the true DGP, then it turns out the proxies $\widehat f^{\rm proxy}_t$
do not pick those up correctly.
Table~\ref{tab:extra3} shows for some parameter values and sample sizes (e.g., $\rho^0=0.3$ and $T=10$,
or $\rho^0 = 0.9$ and $T=40$) the CCE estimator is almost unbiased, but for other values, including $T=80$,
the CCE estimator is heavily biased if $R=2$. In particular,   the bias of the CCE estimator does not seem to converge
to zero as $T$ becomes large in this case. By contrast, the correctly specified  LS estimators (i.e., correctly 
using $R=2$ factors in the estimation)
performs very well according to Table~\ref{tab:extra3}. However, an incorrectly specified
LS estimator, which would underestimate the number of factors (e.g., using $R=1$ factors in estimation 
instead of the correct number $R=2$)
would probably perform similarly to the CCE estimator, because not all factors would be corrected for.
Overestimating the number of factors (i.e., using $R=3$ factors in estimation 
instead of the correct number $R=2$) should, however, not pose a problem for the LS estimator, according to Moon and Weidner~\cite*{MoonWeidner2015}.

\newpage

\section*{Tables with Simulation Results}



\newcommand{\bOLSEbiasAA}{0.1232}
\newcommand{\bOLSEstdeAA}{0.1444}
\newcommand{\bOLSErmseAA}{0.1898}
\newcommand{\bQMLEbiasAA}{-0.1419}
\newcommand{\bQMLEstdeAA}{0.1480}
\newcommand{\bQMLErmseAA}{0.2050}
\newcommand{\bBCQMbiasAA}{-0.0713}
\newcommand{\bBCQMstdeAA}{0.0982}
\newcommand{\bBCQMrmseAA}{0.1213}
\newcommand{\bCCEPbiasAA}{-0.1755}
\newcommand{\bCCEPstdeAA}{0.1681}
\newcommand{\bCCEPrmseAA}{0.2430}
\newcommand{\bOLSEbiasAB}{0.0200}
\newcommand{\bOLSEstdeAB}{0.0723}
\newcommand{\bOLSErmseAB}{0.0750}
\newcommand{\bQMLEbiasAB}{-0.3686}
\newcommand{\bQMLEstdeAB}{0.1718}
\newcommand{\bQMLErmseAB}{0.4067}
\newcommand{\bBCQMbiasAB}{-0.2330}
\newcommand{\bBCQMstdeAB}{0.1301}
\newcommand{\bBCQMrmseAB}{0.2669}
\newcommand{\bCCEPbiasAB}{-0.3298}
\newcommand{\bCCEPstdeAB}{0.2203}
\newcommand{\bCCEPrmseAB}{0.3966}
\newcommand{\bOLSEbiasAC}{0.1339}
\newcommand{\bOLSEstdeAC}{0.1148}
\newcommand{\bOLSErmseAC}{0.1764}
\newcommand{\bQMLEbiasAC}{-0.0542}
\newcommand{\bQMLEstdeAC}{0.0596}
\newcommand{\bQMLErmseAC}{0.0806}
\newcommand{\bBCQMbiasAC}{-0.0201}
\newcommand{\bBCQMstdeAC}{0.0423}
\newcommand{\bBCQMrmseAC}{0.0469}
\newcommand{\bCCEPbiasAC}{-0.0819}
\newcommand{\bCCEPstdeAC}{0.0593}
\newcommand{\bCCEPrmseAC}{0.1011}
\newcommand{\bOLSEbiasAD}{0.0218}
\newcommand{\bOLSEstdeAD}{0.0513}
\newcommand{\bOLSErmseAD}{0.0557}
\newcommand{\bQMLEbiasAD}{-0.1019}
\newcommand{\bQMLEstdeAD}{0.1094}
\newcommand{\bQMLErmseAD}{0.1495}
\newcommand{\bBCQMbiasAD}{-0.0623}
\newcommand{\bBCQMstdeAD}{0.0747}
\newcommand{\bBCQMrmseAD}{0.0973}
\newcommand{\bCCEPbiasAD}{-0.1436}
\newcommand{\bCCEPstdeAD}{0.0972}
\newcommand{\bCCEPrmseAD}{0.1734}
\newcommand{\bOLSEbiasAE}{0.1441}
\newcommand{\bOLSEstdeAE}{0.0879}
\newcommand{\bOLSErmseAE}{0.1687}
\newcommand{\bQMLEbiasAE}{-0.0264}
\newcommand{\bQMLEstdeAE}{0.0284}
\newcommand{\bQMLErmseAE}{0.0388}
\newcommand{\bBCQMbiasAE}{-0.0070}
\newcommand{\bBCQMstdeAE}{0.0240}
\newcommand{\bBCQMrmseAE}{0.0250}
\newcommand{\bCCEPbiasAE}{-0.0405}
\newcommand{\bCCEPstdeAE}{0.0277}
\newcommand{\bCCEPrmseAE}{0.0491}
\newcommand{\bOLSEbiasAF}{0.0254}
\newcommand{\bOLSEstdeAF}{0.0353}
\newcommand{\bOLSErmseAF}{0.0434}
\newcommand{\bQMLEbiasAF}{-0.0173}
\newcommand{\bQMLEstdeAF}{0.0299}
\newcommand{\bQMLErmseAF}{0.0345}
\newcommand{\bBCQMbiasAF}{-0.0085}
\newcommand{\bBCQMstdeAF}{0.0219}
\newcommand{\bBCQMrmseAF}{0.0235}
\newcommand{\bCCEPbiasAF}{-0.0617}
\newcommand{\bCCEPstdeAF}{0.0406}
\newcommand{\bCCEPrmseAF}{0.0739}
\newcommand{\bOLSEbiasAG}{0.1517}
\newcommand{\bOLSEstdeAG}{0.0657}
\newcommand{\bOLSErmseAG}{0.1654}
\newcommand{\bQMLEbiasAG}{-0.0130}
\newcommand{\bQMLEstdeAG}{0.0170}
\newcommand{\bQMLErmseAG}{0.0214}
\newcommand{\bBCQMbiasAG}{-0.0021}
\newcommand{\bBCQMstdeAG}{0.0160}
\newcommand{\bBCQMrmseAG}{0.0161}
\newcommand{\bCCEPbiasAG}{-0.0200}
\newcommand{\bCCEPstdeAG}{0.0166}
\newcommand{\bCCEPrmseAG}{0.0260}
\newcommand{\bOLSEbiasAH}{0.0294}
\newcommand{\bOLSEstdeAH}{0.0250}
\newcommand{\bOLSErmseAH}{0.0386}
\newcommand{\bQMLEbiasAH}{-0.0057}
\newcommand{\bQMLEstdeAH}{0.0105}
\newcommand{\bQMLErmseAH}{0.0119}
\newcommand{\bBCQMbiasAH}{-0.0019}
\newcommand{\bBCQMstdeAH}{0.0089}
\newcommand{\bBCQMrmseAH}{0.0091}
\newcommand{\bCCEPbiasAH}{-0.0281}
\newcommand{\bCCEPstdeAH}{0.0162}
\newcommand{\bCCEPrmseAH}{0.0324}
\newcommand{\bOLSEbiasAI}{0.1552}
\newcommand{\bOLSEstdeAI}{0.0487}
\newcommand{\bOLSErmseAI}{0.1627}
\newcommand{\bQMLEbiasAI}{-0.0066}
\newcommand{\bQMLEstdeAI}{0.0112}
\newcommand{\bQMLErmseAI}{0.0130}
\newcommand{\bBCQMbiasAI}{-0.0007}
\newcommand{\bBCQMstdeAI}{0.0109}
\newcommand{\bBCQMrmseAI}{0.0109}
\newcommand{\bCCEPbiasAI}{-0.0100}
\newcommand{\bCCEPstdeAI}{0.0111}
\newcommand{\bCCEPrmseAI}{0.0149}
\newcommand{\bOLSEbiasAJ}{0.0326}
\newcommand{\bOLSEstdeAJ}{0.0179}
\newcommand{\bOLSErmseAJ}{0.0372}
\newcommand{\bQMLEbiasAJ}{-0.0026}
\newcommand{\bQMLEstdeAJ}{0.0056}
\newcommand{\bQMLErmseAJ}{0.0062}
\newcommand{\bBCQMbiasAJ}{-0.0006}
\newcommand{\bBCQMstdeAJ}{0.0053}
\newcommand{\bBCQMrmseAJ}{0.0053}
\newcommand{\bCCEPbiasAJ}{-0.0136}
\newcommand{\bCCEPstdeAJ}{0.0073}
\newcommand{\bCCEPrmseAJ}{0.0154}



\newcommand{\bbOLSEbiasAA}{0.1239}
\newcommand{\bbOLSEstdeAA}{0.1454}
\newcommand{\bbOLSErmseAA}{0.1910}
\newcommand{\bbQMLEbiasAA}{-0.5467}
\newcommand{\bbQMLEstdeAA}{0.1528}
\newcommand{\bbQMLErmseAA}{0.5676}
\newcommand{\bbBCQMbiasAA}{-0.3721}
\newcommand{\bbBCQMstdeAA}{0.1299}
\newcommand{\bbBCQMrmseAA}{0.3942}
\newcommand{\bbCCEPbiasAA}{-0.1767}
\newcommand{\bbCCEPstdeAA}{0.1678}
\newcommand{\bbCCEPrmseAA}{0.2437}
\newcommand{\bbOLSEbiasAB}{0.0218}
\newcommand{\bbOLSEstdeAB}{0.0731}
\newcommand{\bbOLSErmseAB}{0.0763}
\newcommand{\bbQMLEbiasAB}{-0.9716}
\newcommand{\bbQMLEstdeAB}{0.1216}
\newcommand{\bbQMLErmseAB}{0.9792}
\newcommand{\bbBCQMbiasAB}{-0.7490}
\newcommand{\bbBCQMstdeAB}{0.1341}
\newcommand{\bbBCQMrmseAB}{0.7609}
\newcommand{\bbCCEPbiasAB}{-0.3289}
\newcommand{\bbCCEPstdeAB}{0.2203}
\newcommand{\bbCCEPrmseAB}{0.3958}
\newcommand{\bbOLSEbiasAC}{0.1343}
\newcommand{\bbOLSEstdeAC}{0.1145}
\newcommand{\bbOLSErmseAC}{0.1765}
\newcommand{\bbQMLEbiasAC}{-0.1874}
\newcommand{\bbQMLEstdeAC}{0.1159}
\newcommand{\bbQMLErmseAC}{0.2203}
\newcommand{\bbBCQMbiasAC}{-0.1001}
\newcommand{\bbBCQMstdeAC}{0.0758}
\newcommand{\bbBCQMrmseAC}{0.1256}
\newcommand{\bbCCEPbiasAC}{-0.0816}
\newcommand{\bbCCEPstdeAC}{0.0592}
\newcommand{\bbCCEPrmseAC}{0.1008}
\newcommand{\bbOLSEbiasAD}{0.0210}
\newcommand{\bbOLSEstdeAD}{0.0518}
\newcommand{\bbOLSErmseAD}{0.0559}
\newcommand{\bbQMLEbiasAD}{-0.4923}
\newcommand{\bbQMLEstdeAD}{0.1159}
\newcommand{\bbQMLErmseAD}{0.5058}
\newcommand{\bbBCQMbiasAD}{-0.3271}
\newcommand{\bbBCQMstdeAD}{0.0970}
\newcommand{\bbBCQMrmseAD}{0.3412}
\newcommand{\bbCCEPbiasAD}{-0.1414}
\newcommand{\bbCCEPstdeAD}{0.0971}
\newcommand{\bbCCEPrmseAD}{0.1715}
\newcommand{\bbOLSEbiasAE}{0.1451}
\newcommand{\bbOLSEstdeAE}{0.0879}
\newcommand{\bbOLSErmseAE}{0.1696}
\newcommand{\bbQMLEbiasAE}{-0.0448}
\newcommand{\bbQMLEstdeAE}{0.0469}
\newcommand{\bbQMLErmseAE}{0.0648}
\newcommand{\bbBCQMbiasAE}{-0.0168}
\newcommand{\bbBCQMstdeAE}{0.0320}
\newcommand{\bbBCQMrmseAE}{0.0362}
\newcommand{\bbCCEPbiasAE}{-0.0407}
\newcommand{\bbCCEPstdeAE}{0.0277}
\newcommand{\bbCCEPrmseAE}{0.0492}
\newcommand{\bbOLSEbiasAF}{0.0255}
\newcommand{\bbOLSEstdeAF}{0.0354}
\newcommand{\bbOLSErmseAF}{0.0436}
\newcommand{\bbQMLEbiasAF}{-0.1822}
\newcommand{\bbQMLEstdeAF}{0.0820}
\newcommand{\bbQMLErmseAF}{0.1999}
\newcommand{\bbBCQMbiasAF}{-0.1085}
\newcommand{\bbBCQMstdeAF}{0.0528}
\newcommand{\bbBCQMrmseAF}{0.1207}
\newcommand{\bbCCEPbiasAF}{-0.0618}
\newcommand{\bbCCEPstdeAF}{0.0404}
\newcommand{\bbCCEPrmseAF}{0.0739}
\newcommand{\bbOLSEbiasAG}{0.1511}
\newcommand{\bbOLSEstdeAG}{0.0663}
\newcommand{\bbOLSErmseAG}{0.1650}
\newcommand{\bbQMLEbiasAG}{-0.0161}
\newcommand{\bbQMLEstdeAG}{0.0209}
\newcommand{\bbQMLErmseAG}{0.0264}
\newcommand{\bbBCQMbiasAG}{-0.0038}
\newcommand{\bbBCQMstdeAG}{0.0177}
\newcommand{\bbBCQMrmseAG}{0.0181}
\newcommand{\bbCCEPbiasAG}{-0.0199}
\newcommand{\bbCCEPstdeAG}{0.0167}
\newcommand{\bbCCEPrmseAG}{0.0260}
\newcommand{\bbOLSEbiasAH}{0.0300}
\newcommand{\bbOLSEstdeAH}{0.0250}
\newcommand{\bbOLSErmseAH}{0.0390}
\newcommand{\bbQMLEbiasAH}{-0.0227}
\newcommand{\bbQMLEstdeAH}{0.0342}
\newcommand{\bbQMLErmseAH}{0.0410}
\newcommand{\bbBCQMbiasAH}{-0.0128}
\newcommand{\bbBCQMstdeAH}{0.0225}
\newcommand{\bbBCQMrmseAH}{0.0258}
\newcommand{\bbCCEPbiasAH}{-0.0282}
\newcommand{\bbCCEPstdeAH}{0.0164}
\newcommand{\bbCCEPrmseAH}{0.0326}
\newcommand{\bbOLSEbiasAI}{0.1550}
\newcommand{\bbOLSEstdeAI}{0.0488}
\newcommand{\bbOLSErmseAI}{0.1625}
\newcommand{\bbQMLEbiasAI}{-0.0072}
\newcommand{\bbQMLEstdeAI}{0.0123}
\newcommand{\bbQMLErmseAI}{0.0143}
\newcommand{\bbBCQMbiasAI}{-0.0011}
\newcommand{\bbBCQMstdeAI}{0.0115}
\newcommand{\bbBCQMrmseAI}{0.0116}
\newcommand{\bbCCEPbiasAI}{-0.0100}
\newcommand{\bbCCEPstdeAI}{0.0111}
\newcommand{\bbCCEPrmseAI}{0.0149}
\newcommand{\bbOLSEbiasAJ}{0.0325}
\newcommand{\bbOLSEstdeAJ}{0.0182}
\newcommand{\bbOLSErmseAJ}{0.0372}
\newcommand{\bbQMLEbiasAJ}{-0.0030}
\newcommand{\bbQMLEstdeAJ}{0.0064}
\newcommand{\bbQMLErmseAJ}{0.0071}
\newcommand{\bbBCQMbiasAJ}{-0.0010}
\newcommand{\bbBCQMstdeAJ}{0.0057}
\newcommand{\bbBCQMrmseAJ}{0.0058}
\newcommand{\bbCCEPbiasAJ}{-0.0136}
\newcommand{\bbCCEPstdeAJ}{0.0074}
\newcommand{\bbCCEPrmseAJ}{0.0155}



\newcommand{\xBCfractionAA}{0.889}
\newcommand{\xBCfractionAB}{0.832}
\newcommand{\xBCfractionAC}{0.791}
\newcommand{\xBCfractionAD}{0.754}
\newcommand{\xBCfractionAE}{0.720}
\newcommand{\xBCfractionAF}{0.689}
\newcommand{\xBCfractionAG}{0.660}
\newcommand{\xBCfractionAH}{0.633}
\newcommand{\xBCfractionAI}{0.752}
\newcommand{\xBCfractionAJ}{0.806}
\newcommand{\xBCfractionAK}{0.778}
\newcommand{\xBCfractionAL}{0.742}
\newcommand{\xBCfractionAM}{0.708}
\newcommand{\xBCfractionAN}{0.677}
\newcommand{\xBCfractionAO}{0.648}
\newcommand{\xBCfractionAP}{0.621}
\newcommand{\xBCfractionAQ}{0.589}
\newcommand{\xBCfractionAR}{0.718}
\newcommand{\xBCfractionAS}{0.728}
\newcommand{\xBCfractionAT}{0.704}
\newcommand{\xBCfractionAU}{0.674}
\newcommand{\xBCfractionAV}{0.644}
\newcommand{\xBCfractionAW}{0.616}
\newcommand{\xBCfractionAX}{0.590}
\newcommand{\xBCfractionAY}{0.299}
\newcommand{\xBCfractionAZ}{0.428}
\newcommand{\xBCfractionBA}{0.486}
\newcommand{\xBCfractionBB}{0.510}
\newcommand{\xBCfractionBC}{0.519}
\newcommand{\xBCfractionBD}{0.516}
\newcommand{\xBCfractionBE}{0.508}
\newcommand{\xBCfractionBF}{0.495}



\newcommand{\bbbOLSEbiasAA}{0.1452}
\newcommand{\bbbOLSEstdeAA}{0.0870}
\newcommand{\bbbOLSErmseAA}{0.1693}
\newcommand{\bbbQMLEbiasAA}{-0.0268}
\newcommand{\bbbQMLEstdeAA}{0.0286}
\newcommand{\bbbQMLErmseAA}{0.0392}
\newcommand{\bbbBCQMbiasAA}{-0.0056}
\newcommand{\bbbBCQMstdeAA}{0.0239}
\newcommand{\bbbBCQMrmseAA}{0.0245}
\newcommand{\bbbCCEPbiasAA}{-0.0408}
\newcommand{\bbbCCEPstdeAA}{0.0278}
\newcommand{\bbbCCEPrmseAA}{0.0494}
\newcommand{\bbbOLSEbiasAB}{0.0254}
\newcommand{\bbbOLSEstdeAB}{0.0351}
\newcommand{\bbbOLSErmseAB}{0.0433}
\newcommand{\bbbQMLEbiasAB}{-0.0173}
\newcommand{\bbbQMLEstdeAB}{0.0305}
\newcommand{\bbbQMLErmseAB}{0.0351}
\newcommand{\bbbBCQMbiasAB}{-0.0100}
\newcommand{\bbbBCQMstdeAB}{0.0253}
\newcommand{\bbbBCQMrmseAB}{0.0272}
\newcommand{\bbbCCEPbiasAB}{-0.0618}
\newcommand{\bbbCCEPstdeAB}{0.0406}
\newcommand{\bbbCCEPrmseAB}{0.0739}
\newcommand{\bbbOLSEbiasAC}{0.1446}
\newcommand{\bbbOLSEstdeAC}{0.0884}
\newcommand{\bbbOLSErmseAC}{0.1694}
\newcommand{\bbbQMLEbiasAC}{-0.0267}
\newcommand{\bbbQMLEstdeAC}{0.0284}
\newcommand{\bbbQMLErmseAC}{0.0390}
\newcommand{\bbbBCQMbiasAC}{-0.0082}
\newcommand{\bbbBCQMstdeAC}{0.0241}
\newcommand{\bbbBCQMrmseAC}{0.0255}
\newcommand{\bbbCCEPbiasAC}{-0.0407}
\newcommand{\bbbCCEPstdeAC}{0.0276}
\newcommand{\bbbCCEPrmseAC}{0.0492}
\newcommand{\bbbOLSEbiasAD}{0.0255}
\newcommand{\bbbOLSEstdeAD}{0.0351}
\newcommand{\bbbOLSErmseAD}{0.0434}
\newcommand{\bbbQMLEbiasAD}{-0.0172}
\newcommand{\bbbQMLEstdeAD}{0.0300}
\newcommand{\bbbQMLErmseAD}{0.0346}
\newcommand{\bbbBCQMbiasAD}{-0.0083}
\newcommand{\bbbBCQMstdeAD}{0.0212}
\newcommand{\bbbBCQMrmseAD}{0.0228}
\newcommand{\bbbCCEPbiasAD}{-0.0615}
\newcommand{\bbbCCEPstdeAD}{0.0403}
\newcommand{\bbbCCEPrmseAD}{0.0735}
\newcommand{\bbbOLSEbiasAE}{0.1434}
\newcommand{\bbbOLSEstdeAE}{0.0865}
\newcommand{\bbbOLSErmseAE}{0.1675}
\newcommand{\bbbQMLEbiasAE}{-0.0261}
\newcommand{\bbbQMLEstdeAE}{0.0285}
\newcommand{\bbbQMLErmseAE}{0.0387}
\newcommand{\bbbBCQMbiasAE}{-0.0100}
\newcommand{\bbbBCQMstdeAE}{0.0247}
\newcommand{\bbbBCQMrmseAE}{0.0266}
\newcommand{\bbbCCEPbiasAE}{-0.0401}
\newcommand{\bbbCCEPstdeAE}{0.0276}
\newcommand{\bbbCCEPrmseAE}{0.0487}
\newcommand{\bbbOLSEbiasAF}{0.0255}
\newcommand{\bbbOLSEstdeAF}{0.0358}
\newcommand{\bbbOLSErmseAF}{0.0440}
\newcommand{\bbbQMLEbiasAF}{-0.0175}
\newcommand{\bbbQMLEstdeAF}{0.0307}
\newcommand{\bbbQMLErmseAF}{0.0353}
\newcommand{\bbbBCQMbiasAF}{-0.0089}
\newcommand{\bbbBCQMstdeAF}{0.0208}
\newcommand{\bbbBCQMrmseAF}{0.0227}
\newcommand{\bbbCCEPbiasAF}{-0.0614}
\newcommand{\bbbCCEPstdeAF}{0.0410}
\newcommand{\bbbCCEPrmseAF}{0.0738}
\newcommand{\bbbOLSEbiasAG}{0.1506}
\newcommand{\bbbOLSEstdeAG}{0.0655}
\newcommand{\bbbOLSErmseAG}{0.1642}
\newcommand{\bbbQMLEbiasAG}{-0.0132}
\newcommand{\bbbQMLEstdeAG}{0.0170}
\newcommand{\bbbQMLErmseAG}{0.0216}
\newcommand{\bbbBCQMbiasAG}{-0.0017}
\newcommand{\bbbBCQMstdeAG}{0.0159}
\newcommand{\bbbBCQMrmseAG}{0.0160}
\newcommand{\bbbCCEPbiasAG}{-0.0202}
\newcommand{\bbbCCEPstdeAG}{0.0166}
\newcommand{\bbbCCEPrmseAG}{0.0262}
\newcommand{\bbbOLSEbiasAH}{0.0293}
\newcommand{\bbbOLSEstdeAH}{0.0250}
\newcommand{\bbbOLSErmseAH}{0.0385}
\newcommand{\bbbQMLEbiasAH}{-0.0057}
\newcommand{\bbbQMLEstdeAH}{0.0103}
\newcommand{\bbbQMLErmseAH}{0.0118}
\newcommand{\bbbBCQMbiasAH}{-0.0024}
\newcommand{\bbbBCQMstdeAH}{0.0095}
\newcommand{\bbbBCQMrmseAH}{0.0098}
\newcommand{\bbbCCEPbiasAH}{-0.0279}
\newcommand{\bbbCCEPstdeAH}{0.0159}
\newcommand{\bbbCCEPrmseAH}{0.0321}
\newcommand{\bbbOLSEbiasAI}{0.1510}
\newcommand{\bbbOLSEstdeAI}{0.0658}
\newcommand{\bbbOLSErmseAI}{0.1647}
\newcommand{\bbbQMLEbiasAI}{-0.0131}
\newcommand{\bbbQMLEstdeAI}{0.0169}
\newcommand{\bbbQMLErmseAI}{0.0214}
\newcommand{\bbbBCQMbiasAI}{-0.0023}
\newcommand{\bbbBCQMstdeAI}{0.0159}
\newcommand{\bbbBCQMrmseAI}{0.0161}
\newcommand{\bbbCCEPbiasAI}{-0.0201}
\newcommand{\bbbCCEPstdeAI}{0.0165}
\newcommand{\bbbCCEPrmseAI}{0.0260}
\newcommand{\bbbOLSEbiasAJ}{0.0295}
\newcommand{\bbbOLSEstdeAJ}{0.0248}
\newcommand{\bbbOLSErmseAJ}{0.0385}
\newcommand{\bbbQMLEbiasAJ}{-0.0057}
\newcommand{\bbbQMLEstdeAJ}{0.0104}
\newcommand{\bbbQMLErmseAJ}{0.0119}
\newcommand{\bbbBCQMbiasAJ}{-0.0019}
\newcommand{\bbbBCQMstdeAJ}{0.0089}
\newcommand{\bbbBCQMrmseAJ}{0.0091}
\newcommand{\bbbCCEPbiasAJ}{-0.0280}
\newcommand{\bbbCCEPstdeAJ}{0.0162}
\newcommand{\bbbCCEPrmseAJ}{0.0324}
\newcommand{\bbbOLSEbiasAK}{0.1512}
\newcommand{\bbbOLSEstdeAK}{0.0655}
\newcommand{\bbbOLSErmseAK}{0.1648}
\newcommand{\bbbQMLEbiasAK}{-0.0131}
\newcommand{\bbbQMLEstdeAK}{0.0168}
\newcommand{\bbbQMLErmseAK}{0.0214}
\newcommand{\bbbBCQMbiasAK}{-0.0030}
\newcommand{\bbbBCQMstdeAK}{0.0159}
\newcommand{\bbbBCQMrmseAK}{0.0162}
\newcommand{\bbbCCEPbiasAK}{-0.0201}
\newcommand{\bbbCCEPstdeAK}{0.0166}
\newcommand{\bbbCCEPrmseAK}{0.0260}
\newcommand{\bbbOLSEbiasAL}{0.0292}
\newcommand{\bbbOLSEstdeAL}{0.0251}
\newcommand{\bbbOLSErmseAL}{0.0385}
\newcommand{\bbbQMLEbiasAL}{-0.0055}
\newcommand{\bbbQMLEstdeAL}{0.0103}
\newcommand{\bbbQMLErmseAL}{0.0117}
\newcommand{\bbbBCQMbiasAL}{-0.0018}
\newcommand{\bbbBCQMstdeAL}{0.0085}
\newcommand{\bbbBCQMrmseAL}{0.0087}
\newcommand{\bbbCCEPbiasAL}{-0.0277}
\newcommand{\bbbCCEPstdeAL}{0.0161}
\newcommand{\bbbCCEPrmseAL}{0.0321}



\newcommand{\aOLSEbiasAA}{-0.0007}
\newcommand{\aOLSEstdeAA}{0.0182}
\newcommand{\aOLSErmseAA}{0.0182}
\newcommand{\aQMLEbiasAA}{-0.0076}
\newcommand{\aQMLEstdeAA}{0.0332}
\newcommand{\aQMLErmseAA}{0.0340}
\newcommand{\aBCQMbiasAA}{-0.0043}
\newcommand{\aBCQMstdeAA}{0.0243}
\newcommand{\aBCQMrmseAA}{0.0247}
\newcommand{\aOLSEbiasAB}{-0.0004}
\newcommand{\aOLSEstdeAB}{0.0178}
\newcommand{\aOLSErmseAB}{0.0178}
\newcommand{\aQMLEbiasAB}{-0.0074}
\newcommand{\aQMLEstdeAB}{0.0331}
\newcommand{\aQMLErmseAB}{0.0339}
\newcommand{\aBCQMbiasAB}{-0.0041}
\newcommand{\aBCQMstdeAB}{0.0242}
\newcommand{\aBCQMrmseAB}{0.0245}
\newcommand{\aOLSEbiasAC}{0.0153}
\newcommand{\aOLSEstdeAC}{0.0251}
\newcommand{\aOLSErmseAC}{0.0294}
\newcommand{\aQMLEbiasAC}{-0.0113}
\newcommand{\aQMLEstdeAC}{0.0303}
\newcommand{\aQMLErmseAC}{0.0323}
\newcommand{\aBCQMbiasAC}{-0.0032}
\newcommand{\aBCQMstdeAC}{0.0229}
\newcommand{\aBCQMrmseAC}{0.0231}
\newcommand{\aOLSEbiasAD}{0.0474}
\newcommand{\aOLSEstdeAD}{0.0382}
\newcommand{\aOLSErmseAD}{0.0609}
\newcommand{\aQMLEbiasAD}{-0.0291}
\newcommand{\aQMLEstdeAD}{0.0387}
\newcommand{\aQMLErmseAD}{0.0484}
\newcommand{\aBCQMbiasAD}{-0.0071}
\newcommand{\aBCQMstdeAD}{0.0272}
\newcommand{\aBCQMrmseAD}{0.0281}
\newcommand{\aOLSEbiasAE}{0.0567}
\newcommand{\aOLSEstdeAE}{0.0633}
\newcommand{\aOLSErmseAE}{0.0850}
\newcommand{\aQMLEbiasAE}{-0.0137}
\newcommand{\aQMLEstdeAE}{0.0260}
\newcommand{\aQMLErmseAE}{0.0294}
\newcommand{\aBCQMbiasAE}{-0.0041}
\newcommand{\aBCQMstdeAE}{0.0207}
\newcommand{\aBCQMrmseAE}{0.0211}
\newcommand{\aOLSEbiasAF}{0.1491}
\newcommand{\aOLSEstdeAF}{0.0763}
\newcommand{\aOLSErmseAF}{0.1675}
\newcommand{\aQMLEbiasAF}{-0.0403}
\newcommand{\aQMLEstdeAF}{0.0298}
\newcommand{\aQMLErmseAF}{0.0501}
\newcommand{\aBCQMbiasAF}{-0.0126}
\newcommand{\aBCQMstdeAF}{0.0226}
\newcommand{\aBCQMrmseAF}{0.0259}



\newcommand{\cWDsizeAA}{0.219}
\newcommand{\cLRsizeAA}{0.214}
\newcommand{\cLMsizeAA}{0.192}
\newcommand{\dWDsizeAA}{0.066}
\newcommand{\dLRsizeAA}{0.062}
\newcommand{\dLMsizeAA}{0.056}
\newcommand{\cWDlpowerAA}{0.094}
\newcommand{\cLRlpowerAA}{0.089}
\newcommand{\cLMlpowerAA}{0.076}
\newcommand{\cWDrpowerAA}{0.526}
\newcommand{\cLRrpowerAA}{0.515}
\newcommand{\cLMrpowerAA}{0.487}
\newcommand{\dWDlpowerAA}{0.128}
\newcommand{\dLRlpowerAA}{0.123}
\newcommand{\dLMlpowerAA}{0.121}
\newcommand{\dWDrpowerAA}{0.235}
\newcommand{\dLRrpowerAA}{0.227}
\newcommand{\dLMrpowerAA}{0.206}
\newcommand{\ccWDlpowerAA}{0.010}
\newcommand{\ccLRlpowerAA}{0.011}
\newcommand{\ccLMlpowerAA}{0.010}
\newcommand{\ccWDrpowerAA}{0.211}
\newcommand{\ccLRrpowerAA}{0.208}
\newcommand{\ccLMrpowerAA}{0.206}
\newcommand{\ddWDlpowerAA}{0.105}
\newcommand{\ddLRlpowerAA}{0.104}
\newcommand{\ddLMlpowerAA}{0.112}
\newcommand{\ddWDrpowerAA}{0.199}
\newcommand{\ddLRrpowerAA}{0.197}
\newcommand{\ddLMrpowerAA}{0.193}
\newcommand{\cWDsizeAB}{0.199}
\newcommand{\cLRsizeAB}{0.198}
\newcommand{\cLMsizeAB}{0.195}
\newcommand{\dWDsizeAB}{0.055}
\newcommand{\dLRsizeAB}{0.054}
\newcommand{\dLMsizeAB}{0.054}
\newcommand{\cWDlpowerAB}{0.066}
\newcommand{\cLRlpowerAB}{0.064}
\newcommand{\cLMlpowerAB}{0.063}
\newcommand{\cWDrpowerAB}{0.549}
\newcommand{\cLRrpowerAB}{0.545}
\newcommand{\cLMrpowerAB}{0.540}
\newcommand{\dWDlpowerAB}{0.154}
\newcommand{\dLRlpowerAB}{0.151}
\newcommand{\dLMlpowerAB}{0.153}
\newcommand{\dWDrpowerAB}{0.194}
\newcommand{\dLRrpowerAB}{0.191}
\newcommand{\dLMrpowerAB}{0.190}
\newcommand{\ccWDlpowerAB}{0.008}
\newcommand{\ccLRlpowerAB}{0.008}
\newcommand{\ccLMlpowerAB}{0.008}
\newcommand{\ccWDrpowerAB}{0.236}
\newcommand{\ccLRrpowerAB}{0.237}
\newcommand{\ccLMrpowerAB}{0.235}
\newcommand{\ddWDlpowerAB}{0.143}
\newcommand{\ddLRlpowerAB}{0.143}
\newcommand{\ddLMlpowerAB}{0.145}
\newcommand{\ddWDrpowerAB}{0.181}
\newcommand{\ddLRrpowerAB}{0.182}
\newcommand{\ddLMrpowerAB}{0.181}
\newcommand{\cWDsizeAC}{0.560}
\newcommand{\cLRsizeAC}{0.556}
\newcommand{\cLMsizeAC}{0.532}
\newcommand{\dWDsizeAC}{0.089}
\newcommand{\dLRsizeAC}{0.088}
\newcommand{\dLMsizeAC}{0.076}
\newcommand{\cWDlpowerAC}{0.306}
\newcommand{\cLRlpowerAC}{0.305}
\newcommand{\cLMlpowerAC}{0.284}
\newcommand{\cWDrpowerAC}{0.791}
\newcommand{\cLRrpowerAC}{0.787}
\newcommand{\cLMrpowerAC}{0.769}
\newcommand{\dWDlpowerAC}{0.100}
\newcommand{\dLRlpowerAC}{0.097}
\newcommand{\dLMlpowerAC}{0.096}
\newcommand{\dWDrpowerAC}{0.309}
\newcommand{\dLRrpowerAC}{0.305}
\newcommand{\dLMrpowerAC}{0.279}
\newcommand{\ccWDlpowerAC}{0.008}
\newcommand{\ccLRlpowerAC}{0.008}
\newcommand{\ccLMlpowerAC}{0.009}
\newcommand{\ccWDrpowerAC}{0.187}
\newcommand{\ccLRrpowerAC}{0.185}
\newcommand{\ccLMrpowerAC}{0.181}
\newcommand{\ddWDlpowerAC}{0.055}
\newcommand{\ddLRlpowerAC}{0.052}
\newcommand{\ddLMlpowerAC}{0.062}
\newcommand{\ddWDrpowerAC}{0.210}
\newcommand{\ddLRrpowerAC}{0.208}
\newcommand{\ddLMrpowerAC}{0.208}
\newcommand{\cWDsizeAD}{0.593}
\newcommand{\cLRsizeAD}{0.591}
\newcommand{\cLMsizeAD}{0.586}
\newcommand{\dWDsizeAD}{0.056}
\newcommand{\dLRsizeAD}{0.055}
\newcommand{\dLMsizeAD}{0.055}
\newcommand{\cWDlpowerAD}{0.254}
\newcommand{\cLRlpowerAD}{0.253}
\newcommand{\cLMlpowerAD}{0.248}
\newcommand{\cWDrpowerAD}{0.871}
\newcommand{\cLRrpowerAD}{0.869}
\newcommand{\cLMrpowerAD}{0.866}
\newcommand{\dWDlpowerAD}{0.128}
\newcommand{\dLRlpowerAD}{0.127}
\newcommand{\dLMlpowerAD}{0.129}
\newcommand{\dWDrpowerAD}{0.225}
\newcommand{\dLRrpowerAD}{0.224}
\newcommand{\dLMrpowerAD}{0.224}
\newcommand{\ccWDlpowerAD}{0.005}
\newcommand{\ccLRlpowerAD}{0.005}
\newcommand{\ccLMlpowerAD}{0.005}
\newcommand{\ccWDrpowerAD}{0.226}
\newcommand{\ccLRrpowerAD}{0.227}
\newcommand{\ccLMrpowerAD}{0.225}
\newcommand{\ddWDlpowerAD}{0.119}
\newcommand{\ddLRlpowerAD}{0.119}
\newcommand{\ddLMlpowerAD}{0.120}
\newcommand{\ddWDrpowerAD}{0.213}
\newcommand{\ddLRrpowerAD}{0.213}
\newcommand{\ddLMrpowerAD}{0.212}
\newcommand{\cWDsizeAE}{0.326}
\newcommand{\cLRsizeAE}{0.311}
\newcommand{\cLMsizeAE}{0.272}
\newcommand{\dWDsizeAE}{0.098}
\newcommand{\dLRsizeAE}{0.091}
\newcommand{\dLMsizeAE}{0.077}
\newcommand{\cWDlpowerAE}{0.192}
\newcommand{\cLRlpowerAE}{0.180}
\newcommand{\cLMlpowerAE}{0.147}
\newcommand{\cWDrpowerAE}{0.619}
\newcommand{\cLRrpowerAE}{0.605}
\newcommand{\cLMrpowerAE}{0.563}
\newcommand{\dWDlpowerAE}{0.184}
\newcommand{\dLRlpowerAE}{0.171}
\newcommand{\dLMlpowerAE}{0.171}
\newcommand{\dWDrpowerAE}{0.335}
\newcommand{\dLRrpowerAE}{0.318}
\newcommand{\dLMrpowerAE}{0.294}
\newcommand{\ccWDlpowerAE}{0.014}
\newcommand{\ccLRlpowerAE}{0.014}
\newcommand{\ccLMlpowerAE}{0.016}
\newcommand{\ccWDrpowerAE}{0.196}
\newcommand{\ccLRrpowerAE}{0.193}
\newcommand{\ccLMrpowerAE}{0.196}
\newcommand{\ddWDlpowerAE}{0.114}
\newcommand{\ddLRlpowerAE}{0.115}
\newcommand{\ddLMlpowerAE}{0.127}
\newcommand{\ddWDrpowerAE}{0.233}
\newcommand{\ddLRrpowerAE}{0.234}
\newcommand{\ddLMrpowerAE}{0.231}
\newcommand{\cWDsizeAF}{0.260}
\newcommand{\cLRsizeAF}{0.255}
\newcommand{\cLMsizeAF}{0.248}
\newcommand{\dWDsizeAF}{0.056}
\newcommand{\dLRsizeAF}{0.053}
\newcommand{\dLMsizeAF}{0.057}
\newcommand{\cWDlpowerAF}{0.081}
\newcommand{\cLRlpowerAF}{0.079}
\newcommand{\cLMlpowerAF}{0.076}
\newcommand{\cWDrpowerAF}{0.680}
\newcommand{\cLRrpowerAF}{0.675}
\newcommand{\cLMrpowerAF}{0.668}
\newcommand{\dWDlpowerAF}{0.200}
\newcommand{\dLRlpowerAF}{0.195}
\newcommand{\dLMlpowerAF}{0.200}
\newcommand{\dWDrpowerAF}{0.268}
\newcommand{\dLRrpowerAF}{0.262}
\newcommand{\dLMrpowerAF}{0.267}
\newcommand{\ccWDlpowerAF}{0.005}
\newcommand{\ccLRlpowerAF}{0.005}
\newcommand{\ccLMlpowerAF}{0.005}
\newcommand{\ccWDrpowerAF}{0.288}
\newcommand{\ccLRrpowerAF}{0.288}
\newcommand{\ccLMrpowerAF}{0.288}
\newcommand{\ddWDlpowerAF}{0.185}
\newcommand{\ddLRlpowerAF}{0.187}
\newcommand{\ddLMlpowerAF}{0.184}
\newcommand{\ddWDrpowerAF}{0.248}
\newcommand{\ddLRrpowerAF}{0.252}
\newcommand{\ddLMrpowerAF}{0.247}
\newcommand{\cWDsizeAG}{0.591}
\newcommand{\cLRsizeAG}{0.582}
\newcommand{\cLMsizeAG}{0.552}
\newcommand{\dWDsizeAG}{0.174}
\newcommand{\dLRsizeAG}{0.167}
\newcommand{\dLMsizeAG}{0.136}
\newcommand{\cWDlpowerAG}{0.421}
\newcommand{\cLRlpowerAG}{0.412}
\newcommand{\cLMlpowerAG}{0.378}
\newcommand{\cWDrpowerAG}{0.792}
\newcommand{\cLRrpowerAG}{0.787}
\newcommand{\cLMrpowerAG}{0.765}
\newcommand{\dWDlpowerAG}{0.163}
\newcommand{\dLRlpowerAG}{0.160}
\newcommand{\dLMlpowerAG}{0.150}
\newcommand{\dWDrpowerAG}{0.438}
\newcommand{\dLRrpowerAG}{0.426}
\newcommand{\dLMrpowerAG}{0.399}
\newcommand{\ccWDlpowerAG}{0.013}
\newcommand{\ccLRlpowerAG}{0.016}
\newcommand{\ccLMlpowerAG}{0.015}
\newcommand{\ccWDrpowerAG}{0.128}
\newcommand{\ccLRrpowerAG}{0.127}
\newcommand{\ccLMrpowerAG}{0.126}
\newcommand{\ddWDlpowerAG}{0.040}
\newcommand{\ddLRlpowerAG}{0.039}
\newcommand{\ddLMlpowerAG}{0.051}
\newcommand{\ddWDrpowerAG}{0.206}
\newcommand{\ddLRrpowerAG}{0.201}
\newcommand{\ddLMrpowerAG}{0.209}
\newcommand{\cWDsizeAH}{0.666}
\newcommand{\cLRsizeAH}{0.663}
\newcommand{\cLMsizeAH}{0.656}
\newcommand{\dWDsizeAH}{0.060}
\newcommand{\dLRsizeAH}{0.058}
\newcommand{\dLMsizeAH}{0.059}
\newcommand{\cWDlpowerAH}{0.318}
\newcommand{\cLRlpowerAH}{0.314}
\newcommand{\cLMlpowerAH}{0.307}
\newcommand{\cWDrpowerAH}{0.912}
\newcommand{\cLRrpowerAH}{0.911}
\newcommand{\cLMrpowerAH}{0.908}
\newcommand{\dWDlpowerAH}{0.171}
\newcommand{\dLRlpowerAH}{0.169}
\newcommand{\dLMlpowerAH}{0.172}
\newcommand{\dWDrpowerAH}{0.322}
\newcommand{\dLRrpowerAH}{0.316}
\newcommand{\dLMrpowerAH}{0.320}
\newcommand{\ccWDlpowerAH}{0.005}
\newcommand{\ccLRlpowerAH}{0.005}
\newcommand{\ccLMlpowerAH}{0.005}
\newcommand{\ccWDrpowerAH}{0.236}
\newcommand{\ccLRrpowerAH}{0.236}
\newcommand{\ccLMrpowerAH}{0.238}
\newcommand{\ddWDlpowerAH}{0.153}
\newcommand{\ddLRlpowerAH}{0.153}
\newcommand{\ddLMlpowerAH}{0.154}
\newcommand{\ddWDrpowerAH}{0.291}
\newcommand{\ddLRrpowerAH}{0.291}
\newcommand{\ddLMrpowerAH}{0.291}



\newcommand{\bbbbOLSEbiasAA}{0.1861}
\newcommand{\bbbbOLSEstdeAA}{0.1562}
\newcommand{\bbbbOLSErmseAA}{0.2429}
\newcommand{\bbbbQMLEbiasAA}{-0.4968}
\newcommand{\bbbbQMLEstdeAA}{0.1910}
\newcommand{\bbbbQMLErmseAA}{0.5322}
\newcommand{\bbbbBCQMbiasAA}{-0.3323}
\newcommand{\bbbbBCQMstdeAA}{0.1580}
\newcommand{\bbbbBCQMrmseAA}{0.3680}
\newcommand{\bbbbCCEPbiasAA}{-0.1002}
\newcommand{\bbbbCCEPstdeAA}{0.2063}
\newcommand{\bbbbCCEPrmseAA}{0.2294}
\newcommand{\bbbbOLSEbiasAB}{0.0309}
\newcommand{\bbbbOLSEstdeAB}{0.0801}
\newcommand{\bbbbOLSErmseAB}{0.0859}
\newcommand{\bbbbQMLEbiasAB}{-0.9305}
\newcommand{\bbbbQMLEstdeAB}{0.1644}
\newcommand{\bbbbQMLErmseAB}{0.9449}
\newcommand{\bbbbBCQMbiasAB}{-0.7057}
\newcommand{\bbbbBCQMstdeAB}{0.1754}
\newcommand{\bbbbBCQMrmseAB}{0.7272}
\newcommand{\bbbbCCEPbiasAB}{-0.2750}
\newcommand{\bbbbCCEPstdeAB}{0.2302}
\newcommand{\bbbbCCEPrmseAB}{0.3586}
\newcommand{\bbbbOLSEbiasAC}{0.1989}
\newcommand{\bbbbOLSEstdeAC}{0.1185}
\newcommand{\bbbbOLSErmseAC}{0.2315}
\newcommand{\bbbbQMLEbiasAC}{-0.1569}
\newcommand{\bbbbQMLEstdeAC}{0.1018}
\newcommand{\bbbbQMLErmseAC}{0.1870}
\newcommand{\bbbbBCQMbiasAC}{-0.0758}
\newcommand{\bbbbBCQMstdeAC}{0.0700}
\newcommand{\bbbbBCQMrmseAC}{0.1031}
\newcommand{\bbbbCCEPbiasAC}{0.0036}
\newcommand{\bbbbCCEPstdeAC}{0.1074}
\newcommand{\bbbbCCEPrmseAC}{0.1074}
\newcommand{\bbbbOLSEbiasAD}{0.0326}
\newcommand{\bbbbOLSEstdeAD}{0.0543}
\newcommand{\bbbbOLSErmseAD}{0.0633}
\newcommand{\bbbbQMLEbiasAD}{-0.4209}
\newcommand{\bbbbQMLEstdeAD}{0.1607}
\newcommand{\bbbbQMLErmseAD}{0.4505}
\newcommand{\bbbbBCQMbiasAD}{-0.2732}
\newcommand{\bbbbBCQMstdeAD}{0.1235}
\newcommand{\bbbbBCQMrmseAD}{0.2998}
\newcommand{\bbbbCCEPbiasAD}{-0.1040}
\newcommand{\bbbbCCEPstdeAD}{0.1070}
\newcommand{\bbbbCCEPrmseAD}{0.1492}
\newcommand{\bbbbOLSEbiasAE}{0.2096}
\newcommand{\bbbbOLSEstdeAE}{0.0884}
\newcommand{\bbbbOLSErmseAE}{0.2274}
\newcommand{\bbbbQMLEbiasAE}{-0.0592}
\newcommand{\bbbbQMLEstdeAE}{0.0377}
\newcommand{\bbbbQMLErmseAE}{0.0702}
\newcommand{\bbbbBCQMbiasAE}{-0.0185}
\newcommand{\bbbbBCQMstdeAE}{0.0287}
\newcommand{\bbbbBCQMrmseAE}{0.0341}
\newcommand{\bbbbCCEPbiasAE}{0.0520}
\newcommand{\bbbbCCEPstdeAE}{0.0711}
\newcommand{\bbbbCCEPrmseAE}{0.0881}
\newcommand{\bbbbOLSEbiasAF}{0.0366}
\newcommand{\bbbbOLSEstdeAF}{0.0356}
\newcommand{\bbbbOLSErmseAF}{0.0511}
\newcommand{\bbbbQMLEbiasAF}{-0.0741}
\newcommand{\bbbbQMLEstdeAF}{0.0859}
\newcommand{\bbbbQMLErmseAF}{0.1134}
\newcommand{\bbbbBCQMbiasAF}{-0.0406}
\newcommand{\bbbbBCQMstdeAF}{0.0552}
\newcommand{\bbbbBCQMrmseAF}{0.0686}
\newcommand{\bbbbCCEPbiasAF}{-0.0310}
\newcommand{\bbbbCCEPstdeAF}{0.0512}
\newcommand{\bbbbCCEPrmseAF}{0.0599}
\newcommand{\bbbbOLSEbiasAG}{0.2174}
\newcommand{\bbbbOLSEstdeAG}{0.0649}
\newcommand{\bbbbOLSErmseAG}{0.2269}
\newcommand{\bbbbQMLEbiasAG}{-0.0275}
\newcommand{\bbbbQMLEstdeAG}{0.0192}
\newcommand{\bbbbQMLErmseAG}{0.0335}
\newcommand{\bbbbBCQMbiasAG}{-0.0054}
\newcommand{\bbbbBCQMstdeAG}{0.0170}
\newcommand{\bbbbBCQMrmseAG}{0.0179}
\newcommand{\bbbbCCEPbiasAG}{0.0759}
\newcommand{\bbbbCCEPstdeAG}{0.0500}
\newcommand{\bbbbCCEPrmseAG}{0.0908}
\newcommand{\bbbbOLSEbiasAH}{0.0404}
\newcommand{\bbbbOLSEstdeAH}{0.0239}
\newcommand{\bbbbOLSErmseAH}{0.0469}
\newcommand{\bbbbQMLEbiasAH}{-0.0134}
\newcommand{\bbbbQMLEstdeAH}{0.0166}
\newcommand{\bbbbQMLErmseAH}{0.0214}
\newcommand{\bbbbBCQMbiasAH}{-0.0047}
\newcommand{\bbbbBCQMstdeAH}{0.0122}
\newcommand{\bbbbBCQMrmseAH}{0.0131}
\newcommand{\bbbbCCEPbiasAH}{-0.0012}
\newcommand{\bbbbCCEPstdeAH}{0.0281}
\newcommand{\bbbbCCEPrmseAH}{0.0281}
\newcommand{\bbbbOLSEbiasAI}{0.2232}
\newcommand{\bbbbOLSEstdeAI}{0.0472}
\newcommand{\bbbbOLSErmseAI}{0.2281}
\newcommand{\bbbbQMLEbiasAI}{-0.0134}
\newcommand{\bbbbQMLEstdeAI}{0.0118}
\newcommand{\bbbbQMLErmseAI}{0.0179}
\newcommand{\bbbbBCQMbiasAI}{-0.0016}
\newcommand{\bbbbBCQMstdeAI}{0.0113}
\newcommand{\bbbbBCQMrmseAI}{0.0114}
\newcommand{\bbbbCCEPbiasAI}{0.0873}
\newcommand{\bbbbCCEPstdeAI}{0.0364}
\newcommand{\bbbbCCEPrmseAI}{0.0946}
\newcommand{\bbbbOLSEbiasAJ}{0.0433}
\newcommand{\bbbbOLSEstdeAJ}{0.0164}
\newcommand{\bbbbOLSErmseAJ}{0.0463}
\newcommand{\bbbbQMLEbiasAJ}{-0.0052}
\newcommand{\bbbbQMLEstdeAJ}{0.0066}
\newcommand{\bbbbQMLErmseAJ}{0.0084}
\newcommand{\bbbbBCQMbiasAJ}{-0.0012}
\newcommand{\bbbbBCQMstdeAJ}{0.0058}
\newcommand{\bbbbBCQMrmseAJ}{0.0059}
\newcommand{\bbbbCCEPbiasAJ}{0.0125}
\newcommand{\bbbbCCEPstdeAJ}{0.0176}
\newcommand{\bbbbCCEPrmseAJ}{0.0216}

\begin{table}[H]
   \begin{center}
   \caption{\label{tab:T1} %
                           Simulation results for the ${\rm AR}(1)$ model described in the main text with
                           $N=100$, $\rho_f=0.5$, $\sigma_f=0.5$, and different values
                           of $T$ (with corresponding bandwidth $M$) and true AR(1) coefficient $\rho^0$.
                           }
                         
           \begin{tabular}{ll@{\qquad}lll@{\qquad \;}lll}
    \hline 
                  &   &     & $\rho^0=0.3$ & & & $\rho^0=0.9$ \\[0.1cm]
                  &   & OLS & FLS & BC-FLS & OLS & FLS & BC-FLS  \\
      \hline 
    $T=5$ &  bias &   \bOLSEbiasAA &  \bQMLEbiasAA & \bBCQMbiasAA &  \bOLSEbiasAB &  \bQMLEbiasAB & \bBCQMbiasAB \\
      $(M=2)$             &  std  &   \bOLSEstdeAA &  \bQMLEstdeAA & \bBCQMstdeAA &  \bOLSEstdeAB &  \bQMLEstdeAB & \bBCQMstdeAB \\
                  &  rmse &   \bOLSErmseAA &  \bQMLErmseAA & \bBCQMrmseAA &  \bOLSErmseAB &  \bQMLErmseAB & \bBCQMrmseAB \\[8pt]
   $T=10$  &  bias &   \bOLSEbiasAC &  \bQMLEbiasAC & \bBCQMbiasAC &  \bOLSEbiasAD &  \bQMLEbiasAD & \bBCQMbiasAD \\
       $(M=3)$           &  std  &   \bOLSEstdeAC &  \bQMLEstdeAC & \bBCQMstdeAC &  \bOLSEstdeAD &  \bQMLEstdeAD & \bBCQMstdeAD \\
                  &  rmse &   \bOLSErmseAC &  \bQMLErmseAC & \bBCQMrmseAC &  \bOLSErmseAD &  \bQMLErmseAD & \bBCQMrmseAD \\[8pt]
   $T=20$  &  bias &   \bOLSEbiasAE &  \bQMLEbiasAE & \bBCQMbiasAE &  \bOLSEbiasAF &  \bQMLEbiasAF & \bBCQMbiasAF \\
   $(M=4)$               &  std  &   \bOLSEstdeAE &  \bQMLEstdeAE & \bBCQMstdeAE &  \bOLSEstdeAF &  \bQMLEstdeAF & \bBCQMstdeAF \\
                  &  rmse &   \bOLSErmseAE &  \bQMLErmseAE & \bBCQMrmseAE &  \bOLSErmseAF &  \bQMLErmseAF & \bBCQMrmseAF \\[8pt]
   $T=40$  &  bias &   \bOLSEbiasAG &  \bQMLEbiasAG & \bBCQMbiasAG &  \bOLSEbiasAH &  \bQMLEbiasAH & \bBCQMbiasAH \\
     $(M=5)$             &  std  &   \bOLSEstdeAG &  \bQMLEstdeAG & \bBCQMstdeAG &  \bOLSEstdeAH &  \bQMLEstdeAH & \bBCQMstdeAH \\
                  &  rmse &   \bOLSErmseAG &  \bQMLErmseAG & \bBCQMrmseAG &  \bOLSErmseAH &  \bQMLErmseAH & \bBCQMrmseAH \\[8pt]
   $T=80$ &  bias &   \bOLSEbiasAI &  \bQMLEbiasAI & \bBCQMbiasAI &  \bOLSEbiasAJ &  \bQMLEbiasAJ & \bBCQMbiasAJ \\
      $(M=6)$            &  std  &   \bOLSEstdeAI &  \bQMLEstdeAI & \bBCQMstdeAI &  \bOLSEstdeAJ &  \bQMLEstdeAJ & \bBCQMstdeAJ \\
                  &  rmse &   \bOLSErmseAI &  \bQMLErmseAI & \bBCQMrmseAI &  \bOLSErmseAJ &  \bQMLErmseAJ & \bBCQMrmseAJ \\
      \hline
   \end{tabular}
   \end{center}
\end{table}

\begin{table}[H]
   \begin{center}
   \caption{\label{tab:T2} %
                   Same DGP as Table~\ref{tab:T1},
                   but misspecification in number of factors $R$ is present.
                   The true number of factors is $R=1$,
                   but the FLS and BC-FLS are calculated with $R=2$.
                    }
         \begin{tabular}{ll@{\qquad}lll@{\qquad \;}lll}
    \hline 
                  &   &     & $\rho^0=0.3$ & & & $\rho^0=0.9$ \\[0.1cm]
                  &   & OLS & FLS & BC-FLS & OLS & FLS & BC-FLS  \\
      \hline 
    $T=5$ &  bias &   \bbOLSEbiasAA &  \bbQMLEbiasAA & \bbBCQMbiasAA &  \bbOLSEbiasAB &  \bbQMLEbiasAB & \bbBCQMbiasAB \\
      $(M=2)$             &  std  &   \bbOLSEstdeAA &  \bbQMLEstdeAA & \bbBCQMstdeAA &  \bbOLSEstdeAB &  \bbQMLEstdeAB & \bbBCQMstdeAB \\
                  &  rmse &   \bbOLSErmseAA &  \bbQMLErmseAA & \bbBCQMrmseAA &  \bbOLSErmseAB &  \bbQMLErmseAB & \bbBCQMrmseAB \\[8pt]
   $T=10$  &  bias &   \bbOLSEbiasAC &  \bbQMLEbiasAC & \bbBCQMbiasAC &  \bbOLSEbiasAD &  \bbQMLEbiasAD & \bbBCQMbiasAD \\
       $(M=3)$           &  std  &   \bbOLSEstdeAC &  \bbQMLEstdeAC & \bbBCQMstdeAC &  \bbOLSEstdeAD &  \bbQMLEstdeAD & \bbBCQMstdeAD \\
                  &  rmse &   \bbOLSErmseAC &  \bbQMLErmseAC & \bbBCQMrmseAC &  \bbOLSErmseAD &  \bbQMLErmseAD & \bbBCQMrmseAD \\[8pt]
   $T=20$  &  bias &   \bbOLSEbiasAE &  \bbQMLEbiasAE & \bbBCQMbiasAE &  \bbOLSEbiasAF &  \bbQMLEbiasAF & \bbBCQMbiasAF \\
   $(M=4)$               &  std  &   \bbOLSEstdeAE &  \bbQMLEstdeAE & \bbBCQMstdeAE &  \bbOLSEstdeAF &  \bbQMLEstdeAF & \bbBCQMstdeAF \\
                  &  rmse &   \bbOLSErmseAE &  \bbQMLErmseAE & \bbBCQMrmseAE &  \bbOLSErmseAF &  \bbQMLErmseAF & \bbBCQMrmseAF \\[8pt]
   $T=40$  &  bias &   \bbOLSEbiasAG &  \bbQMLEbiasAG & \bbBCQMbiasAG &  \bbOLSEbiasAH &  \bbQMLEbiasAH & \bbBCQMbiasAH \\
     $(M=5)$             &  std  &   \bbOLSEstdeAG &  \bbQMLEstdeAG & \bbBCQMstdeAG &  \bbOLSEstdeAH &  \bbQMLEstdeAH & \bbBCQMstdeAH \\
                  &  rmse &   \bbOLSErmseAG &  \bbQMLErmseAG & \bbBCQMrmseAG &  \bbOLSErmseAH &  \bbQMLErmseAH & \bbBCQMrmseAH \\[8pt]
   $T=80$ &  bias &   \bbOLSEbiasAI &  \bbQMLEbiasAI & \bbBCQMbiasAI &  \bbOLSEbiasAJ &  \bbQMLEbiasAJ & \bbBCQMbiasAJ \\
      $(M=6)$            &  std  &   \bbOLSEstdeAI &  \bbQMLEstdeAI & \bbBCQMstdeAI &  \bbOLSEstdeAJ &  \bbQMLEstdeAJ & \bbBCQMstdeAJ \\
                  &  rmse &   \bbOLSErmseAI &  \bbQMLErmseAI & \bbBCQMrmseAI &  \bbOLSErmseAJ &  \bbQMLErmseAJ & \bbBCQMrmseAJ \\
      \hline
   \end{tabular}
   \end{center}
\end{table}

\begin{table}[H]
   \begin{center}
   \caption{\label{tab:T3} %
                           Simulation results for the ${\rm AR}(1)$ model with
                           $N=100$, $T=20$, $\rho_f=0.5$, and $\sigma_f=0.5$. For different values of the ${\rm AR}(1)$
                           coefficient $\rho^0$ and of the bandwidth $M$, we give the fraction of the LS estimator bias
                           that is accounted for by the bias correction,
                           i.e. the fraction
              $\sqrt{NT} \, \mathbb{E}(\widehat \beta-\beta)/\mathbb{E}(\widehat W^{-1}\widehat B)$,
                           computed over 10,000 simulation runs. Here and in all following tables it is assumed that $R=1$
                           is correctly specified. }
          \begin{tabular}{l@{\qquad}l@{\; \;}l@{\; \;}l@{\; \;}l@{\; \;}l@{\; \;}l@{\; \;}l@{\; \;}l}
     \hline 
                  &   $M=1$ &  $M=2$ & $M=3$  & $M=4$  & $M=5$  & $M=6$ & $M=7$ & $M=8$\\
      \hline
      $\rho^0=0$   &  \xBCfractionAA & \xBCfractionAB & \xBCfractionAC & \xBCfractionAD & \xBCfractionAE & \xBCfractionAF & \xBCfractionAG & \xBCfractionAH \\
      $\rho^0=0.3$ &  \xBCfractionAI & \xBCfractionAJ & \xBCfractionAK & \xBCfractionAL &  \xBCfractionAM & \xBCfractionAN & \xBCfractionAO & \xBCfractionAP \\
      $\rho^0=0.6$ &  \xBCfractionAQ & \xBCfractionAR & \xBCfractionAS & \xBCfractionAT & \xBCfractionAU & \xBCfractionAV & \xBCfractionAW & \xBCfractionAX \\
      $\rho^0=0.9$ &   \xBCfractionAY & \xBCfractionAZ & \xBCfractionBA & \xBCfractionBB & \xBCfractionBC & \xBCfractionBD & \xBCfractionBE & \xBCfractionBF
      \\  \hline
   \end{tabular}
   \end{center}
\end{table}

\begin{table}[H]
   \begin{center}
   \caption{\label{tab:T4} %
                   Same specification as Table~\ref{tab:T1}.
                   We only report the properties of the bias-corrected LS estimator, but
                   for multiple values of the bandwidth parameter $M$ and two different values for $T$.
                   Results were obtained using 10,000 simulation runs. }
         \begin{tabular}{ll@{\quad \;}lll@{\qquad}lll}
    \hline 
                  &   &   \multicolumn{3}{c}{BC-FLS for $\rho^0=0.3$}    & \multicolumn{3}{c}{ BC-FLS for $\rho^0=0.9$}  \\[0.1cm]
       & & M=2 & M=5 & M=8 & M=2 & M=5 & M=8
   \\
      \hline
    $T=20$  &  bias &  \bbbBCQMbiasAA & \bbbBCQMbiasAC & \bbbBCQMbiasAE &  \bbbBCQMbiasAB & \bbbBCQMbiasAD & \bbbBCQMbiasAF \\
                  &  std  &  \bbbBCQMstdeAA & \bbbBCQMstdeAC & \bbbBCQMstdeAE &    \bbbBCQMstdeAB &    \bbbBCQMstdeAD &    \bbbBCQMstdeAF \\
                  &  rmse &   \bbbBCQMrmseAA &  \bbbBCQMrmseAC &  \bbbBCQMrmseAE &   \bbbBCQMrmseAB &   \bbbBCQMrmseAD &  \bbbBCQMrmseAF
            \\[8pt]
    $T=40$  &  bias &  \bbbBCQMbiasAG & \bbbBCQMbiasAI & \bbbBCQMbiasAK &  \bbbBCQMbiasAH & \bbbBCQMbiasAJ & \bbbBCQMbiasAL \\
                  &  std  &  \bbbBCQMstdeAG & \bbbBCQMstdeAI & \bbbBCQMstdeAK &    \bbbBCQMstdeAH &    \bbbBCQMstdeAJ &    \bbbBCQMstdeAL \\
                  &  rmse &   \bbbBCQMrmseAG &  \bbbBCQMrmseAI &  \bbbBCQMrmseAK &   \bbbBCQMrmseAH &   \bbbBCQMrmseAJ &  \bbbBCQMrmseAL
     \\ \hline
   \end{tabular}
   \end{center}
\end{table}

\begin{table}[H]
   \begin{center}
   \caption{\label{tab:T5} %
                           Simulation results for the ${\rm AR}(1)$ model with
                           $N=100$, $T=20$, $M=4$, and $\rho^0=0.6$.
                           The three different estimators were computed for 10,000 simulation runs, and
                           the mean bias, standard deviation (std), and root mean square error (rmse)
                           are reported. }
              \begin{tabular}{l@{\; \;}l@{\quad \;}lll@{\qquad}lll}
    \hline
                  &   &     & $\rho_f=0.3$ & & & $\rho_f=0.7$ \\[0.1cm]
                  &   & OLS & FLS & BC-FLS & OLS & FLS & BC-FLS  \\
      \hline 
      $\sigma_f=0$    &  bias &   \aOLSEbiasAA &  \aQMLEbiasAA & \aBCQMbiasAA &  \aOLSEbiasAB &  \aQMLEbiasAB & \aBCQMbiasAB \\
                  &  std  &   \aOLSEstdeAA &  \aQMLEstdeAA & \aBCQMstdeAA &  \aOLSEstdeAB &  \aQMLEstdeAB & \aBCQMstdeAB \\
                  &  rmse &   \aOLSErmseAA &  \aQMLErmseAA & \aBCQMrmseAA &  \aOLSErmseAB &  \aQMLErmseAB & \aBCQMrmseAB 
                   \\[8pt]
    $\sigma_f=0.2$    &  bias &   \aOLSEbiasAC &  \aQMLEbiasAC & \aBCQMbiasAC &  \aOLSEbiasAD &  \aQMLEbiasAD & \aBCQMbiasAD \\
                  &  std  &   \aOLSEstdeAC &  \aQMLEstdeAC & \aBCQMstdeAC &  \aOLSEstdeAD &  \aQMLEstdeAD & \aBCQMstdeAD \\
                  &  rmse &   \aOLSErmseAC &  \aQMLErmseAC & \aBCQMrmseAC &  \aOLSErmseAD &  \aQMLErmseAD & \aBCQMrmseAD
                   \\[8pt]
$\sigma_f=0.5$    &  bias &   \aOLSEbiasAE &  \aQMLEbiasAE & \aBCQMbiasAE &  \aOLSEbiasAF &  \aQMLEbiasAF & \aBCQMbiasAF \\
                  &  std  &   \aOLSEstdeAE &  \aQMLEstdeAE & \aBCQMstdeAE &  \aOLSEstdeAF &  \aQMLEstdeAF & \aBCQMstdeAF \\
                  &  rmse &   \aOLSErmseAE &  \aQMLErmseAE & \aBCQMrmseAE &  \aOLSErmseAF &  \aQMLErmseAF & \aBCQMrmseAF
                  \\ \hline
   \end{tabular}
   \end{center}
\end{table}

\begin{table}[H]
   \begin{center}
         \caption{\label{tab:T6} %
                           Simulation results for the ${\rm AR}(1)$ model with
                           $\rho_f=0.5$ and $\sigma_f=0.5$. For the different values of $\rho^0$, $N$, $T$, and $M$,
                           we test the hypothesis $H_0:\rho=\rho^0$ using the uncorrected and
                           bias-corrected Wald, LR, and LM test, and nominal size $5\%$.
                           The bias-corrected tests are indicated by an asterisk superscript.
                           The size of the different tests is reported, based on 10,000 simulation runs.
                         }
          \begin{tabular}{l@{\qquad}lll@{\qquad}lll}
      \hline
     \multicolumn{1}{c}{\phantom{a}} &  \multicolumn{3}{c}{size} &  \multicolumn{3}{c}{size}  \\[0.1cm]
                  & $WD$ & $LR$ & $LM$ & $WD^*$ & $LR^*$ & $LM^*$  \\
      \hline 
            $\rho^0=0$ \\
        $N=100$, $T=20$, $M=4$ &
             \cWDsizeAA & \cLRsizeAA & \cLMsizeAA & \dWDsizeAA & \dLRsizeAA & \dLMsizeAA \\
                $N=400$, $T=80$, $M=6$ &
             \cWDsizeAB & \cLRsizeAB & \cLMsizeAB & \dWDsizeAB & \dLRsizeAB & \dLMsizeAB \\
                $N=400$, $T=20$, $M=4$ &
             \cWDsizeAC & \cLRsizeAC & \cLMsizeAC & \dWDsizeAC & \dLRsizeAC & \dLMsizeAC \\
                $N=1600$, $T=80$, $M=6$ &
             \cWDsizeAD & \cLRsizeAD & \cLMsizeAD & \dWDsizeAD & \dLRsizeAD & \dLMsizeAD 
             \\[9pt]
      $\rho^0=0.6$ \\
       $N=100$, $T=20$, $M=4$ &
             \cWDsizeAE & \cLRsizeAE & \cLMsizeAE & \dWDsizeAE & \dLRsizeAE & \dLMsizeAE \\
                $N=400$, $T=80$, $M=6$ &
             \cWDsizeAF & \cLRsizeAF & \cLMsizeAF & \dWDsizeAF & \dLRsizeAF & \dLMsizeAF \\
                $N=400$, $T=20$, $M=4$ &
             \cWDsizeAG & \cLRsizeAG & \cLMsizeAG & \dWDsizeAG & \dLRsizeAG & \dLMsizeAG \\
                $N=1600$, $T=80$, $M=6$ &
             \cWDsizeAH & \cLRsizeAH & \cLMsizeAH & \dWDsizeAH & \dLRsizeAH & \dLMsizeAH
          \\
      \hline       
   \end{tabular}
   \end{center}
\end{table}

\begin{table}[H]
   \begin{center}
   \caption{\label{tab:T7} %
                           As Table~\ref{tab:T6}, but we report the power for testing the alternatives
                           $H^{\rm left}_{a}: \rho=\rho^0-(NT)^{-1/2}$ and
                           $H^{\rm right}_{a}: \rho=\rho^0+(NT)^{-1/2}$.
                           The bias-corrected tests are indicated by an asterisk superscript.
                         }
              \begin{tabular}{l@{\; \;}l@{\quad \;}lll@{\qquad}lll}
     \multicolumn{2}{c}{\phantom{a}} &  \multicolumn{3}{c}{power} &  \multicolumn{3}{c}{power}
        \\[0.1cm]
                  & & $WD$ & $LR$ & $LM$ & $WD^*$ & $LR^*$ & $LM^*$  \\
      \hline  
      $\rho^0=0$ \\[8pt]
       $N=100$, $T=20$, $M=4$
       & $H^{\rm left}_{a}$ & \cWDlpowerAA & \cLRlpowerAA & \cLMlpowerAA & \dWDlpowerAA & \dLRlpowerAA & \dLMlpowerAA \\
     & $H^{\rm right}_{a}$ & \cWDrpowerAA & \cLRrpowerAA & \cLMrpowerAA & \dWDrpowerAA & \dLRrpowerAA & \dLMrpowerAA \\
                $N=400$, $T=80$, $M=6$
       & $H^{\rm left}_{a}$ & \cWDlpowerAB & \cLRlpowerAB & \cLMlpowerAB & \dWDlpowerAB & \dLRlpowerAB & \dLMlpowerAB \\
     & $H^{\rm right}_{a}$ & \cWDrpowerAB & \cLRrpowerAB & \cLMrpowerAB & \dWDrpowerAB & \dLRrpowerAB & \dLMrpowerAB \\
                $N=400$, $T=20$, $M=4$
       & $H^{\rm left}_{a}$ & \cWDlpowerAC & \cLRlpowerAC & \cLMlpowerAC & \dWDlpowerAC & \dLRlpowerAC & \dLMlpowerAC \\
     & $H^{\rm right}_{a}$ & \cWDrpowerAC & \cLRrpowerAC & \cLMrpowerAC & \dWDrpowerAC & \dLRrpowerAC & \dLMrpowerAC \\
                $N=1600$, $T=80$, $M=6$
       & $H^{\rm left}_{a}$ & \cWDlpowerAD & \cLRlpowerAD & \cLMlpowerAD & \dWDlpowerAD & \dLRlpowerAD & \dLMlpowerAD \\
     & $H^{\rm right}_{a}$ & \cWDrpowerAD & \cLRrpowerAD & \cLMrpowerAD & \dWDrpowerAD & \dLRrpowerAD & \dLMrpowerAD 
     \\[8pt]
      $\rho^0=0.6$ \\[8pt]
       $N=100$, $T=20$, $M=4$
       & $H^{\rm left}_{a}$ & \cWDlpowerAE & \cLRlpowerAE & \cLMlpowerAE & \dWDlpowerAE & \dLRlpowerAE & \dLMlpowerAE \\
     & $H^{\rm right}_{a}$ & \cWDrpowerAE & \cLRrpowerAE & \cLMrpowerAE & \dWDrpowerAE & \dLRrpowerAE & \dLMrpowerAE \\
                $N=400$, $T=80$, $M=6$
       & $H^{\rm left}_{a}$ & \cWDlpowerAF & \cLRlpowerAF & \cLMlpowerAF & \dWDlpowerAE & \dLRlpowerAF & \dLMlpowerAF \\
     & $H^{\rm right}_{a}$ & \cWDrpowerAF & \cLRrpowerAF & \cLMrpowerAF & \dWDrpowerAE & \dLRrpowerAF & \dLMrpowerAF \\
                $N=400$, $T=20$, $M=4$
       & $H^{\rm left}_{a}$ & \cWDlpowerAG & \cLRlpowerAG & \cLMlpowerAG & \dWDlpowerAE & \dLRlpowerAG & \dLMlpowerAG \\
     & $H^{\rm right}_{a}$ & \cWDrpowerAG & \cLRrpowerAG & \cLMrpowerAG & \dWDrpowerAE & \dLRrpowerAG & \dLMrpowerAG \\
                $N=1600$, $T=80$, $M=6$
       & $H^{\rm left}_{a}$ & \cWDlpowerAH & \cLRlpowerAH & \cLMlpowerAH & \dWDlpowerAF & \dLRlpowerAH & \dLMlpowerAH \\
     & $H^{\rm right}_{a}$ & \cWDrpowerAH & \cLRrpowerAH & \cLMrpowerAH & \dWDrpowerAF & \dLRrpowerAH & \dLMrpowerAH \\
     \hline
   \end{tabular}
   \end{center}
\end{table}

\begin{table}[H]
   \begin{center}
   \caption{\label{tab:T8} %
                           As Table~\ref{tab:T7}, but we report the size-corrected power.
                         }
              \begin{tabular}{l@{\; \;}l@{\quad \;}lll@{\qquad}lll}
     \multicolumn{2}{c}{\phantom{a}} &  \multicolumn{3}{c}{size-corrected power} &  \multicolumn{3}{c}{size-corrected power}
        \\[0.1cm]
                  & & $WD$ & $LR$ & $LM$ & $WD^*$ & $LR^*$ & $LM^*$  \\
      \hline  
      $\rho^0=0$ \\[8pt]
       $N=100$, $T=20$, $M=4$
       & $H^{\rm left}_{a}$ & \ccWDlpowerAA & \ccLRlpowerAA & \ccLMlpowerAA & \ddWDlpowerAA & \ddLRlpowerAA & \ddLMlpowerAA \\
     & $H^{\rm right}_{a}$ & \ccWDrpowerAA & \ccLRrpowerAA & \ccLMrpowerAA & \ddWDrpowerAA & \ddLRrpowerAA & \ddLMrpowerAA \\
                $N=400$, $T=80$, $M=6$
       & $H^{\rm left}_{a}$ & \ccWDlpowerAB & \ccLRlpowerAB & \ccLMlpowerAB & \ddWDlpowerAB & \ddLRlpowerAB & \ddLMlpowerAB \\
     & $H^{\rm right}_{a}$ & \ccWDrpowerAB & \ccLRrpowerAB & \ccLMrpowerAB & \ddWDrpowerAB & \ddLRrpowerAB & \ddLMrpowerAB \\
                $N=400$, $T=20$, $M=4$
       & $H^{\rm left}_{a}$ & \ccWDlpowerAC & \ccLRlpowerAC & \ccLMlpowerAC & \ddWDlpowerAC & \ddLRlpowerAC & \ddLMlpowerAC \\
     & $H^{\rm right}_{a}$ & \ccWDrpowerAC & \ccLRrpowerAC & \ccLMrpowerAC & \ddWDrpowerAC & \ddLRrpowerAC & \ddLMrpowerAC \\
                $N=1600$, $T=80$, $M=6$
       & $H^{\rm left}_{a}$ & \ccWDlpowerAD & \ccLRlpowerAD & \ccLMlpowerAD & \ddWDlpowerAD & \ddLRlpowerAD & \ddLMlpowerAD \\
     & $H^{\rm right}_{a}$ & \ccWDrpowerAD & \ccLRrpowerAD & \ccLMrpowerAD & \ddWDrpowerAD & \ddLRrpowerAD & \ddLMrpowerAD 
     \\[8pt]
      $\rho^0=0.6$ \\[8pt]
       $N=100$, $T=20$, $M=4$
       & $H^{\rm left}_{a}$ & \ccWDlpowerAE & \ccLRlpowerAE & \ccLMlpowerAE & \ddWDlpowerAE & \ddLRlpowerAE & \ddLMlpowerAE \\
     & $H^{\rm right}_{a}$ & \ccWDrpowerAE & \ccLRrpowerAE & \ccLMrpowerAE & \ddWDrpowerAE & \ddLRrpowerAE & \ddLMrpowerAE \\
                $N=400$, $T=80$, $M=6$
       & $H^{\rm left}_{a}$ & \ccWDlpowerAF & \ccLRlpowerAF & \ccLMlpowerAF & \ddWDlpowerAE & \ddLRlpowerAF & \ddLMlpowerAF \\
     & $H^{\rm right}_{a}$ & \ccWDrpowerAF & \ccLRrpowerAF & \ccLMrpowerAF & \ddWDrpowerAE & \ddLRrpowerAF & \ddLMrpowerAF \\
                $N=400$, $T=20$, $M=4$
       & $H^{\rm left}_{a}$ & \ccWDlpowerAG & \ccLRlpowerAG & \ccLMlpowerAG & \ddWDlpowerAE & \ddLRlpowerAG & \ddLMlpowerAG \\
     & $H^{\rm right}_{a}$ & \ccWDrpowerAG & \ccLRrpowerAG & \ccLMrpowerAG & \ddWDrpowerAE & \ddLRrpowerAG & \ddLMrpowerAG \\
                $N=1600$, $T=80$, $M=6$
       & $H^{\rm left}_{a}$ & \ccWDlpowerAH & \ccLRlpowerAH & \ccLMlpowerAH & \ddWDlpowerAF & \ddLRlpowerAH & \ddLMlpowerAH \\
     & $H^{\rm right}_{a}$ & \ccWDrpowerAH & \ccLRrpowerAH & \ccLMrpowerAH & \ddWDrpowerAF & \ddLRrpowerAH & \ddLMrpowerAH \\
     \hline
   \end{tabular}
   \end{center}

\end{table}

\begin{table}[H]
   \begin{center}
   \caption{\label{tab:extra1} %
                      Same as Table~\ref{tab:T1} in main paper, but also reporting pooled CCE estimator of
                      Pesaran~(2006). }
      \begin{tabular}{l@{\;\,}l@{\quad}l@{\;\;}l@{\;\;}l@{\;\;}l@{\quad}l@{\;\;}l@{\;\;}l@{\;\;}l}
     \hline 
                  &   &  \multicolumn{4}{c}{$\rho^0=0.3$}    & \multicolumn{4}{c}{$\rho^0=0.9$}    \\[0.1cm]
                  &   & OLS & FLS & BC-FLS & CCE &  OLS & FLS & BC-FLS  & CCE \\
      \hline 
    $T=5$  &  bias &   \bOLSEbiasAA &  \bQMLEbiasAA & \bBCQMbiasAA & \bCCEPbiasAA &  \bOLSEbiasAB &  \bQMLEbiasAB & \bBCQMbiasAB & \bCCEPbiasAB \\
     $(M=2)$             &  std  &   \bOLSEstdeAA &  \bQMLEstdeAA & \bBCQMstdeAA & \bCCEPstdeAA &  \bOLSEstdeAB &  \bQMLEstdeAB & \bBCQMstdeAB & \bCCEPstdeAB \\
                  &  rmse &   \bOLSErmseAA &  \bQMLErmseAA & \bBCQMrmseAA & \bCCEPrmseAA &  \bOLSErmseAB &  \bQMLErmseAB & \bBCQMrmseAB & \bCCEPrmseAB
                   \\[8pt]
   $T=10$  &  bias &   \bOLSEbiasAC &  \bQMLEbiasAC & \bBCQMbiasAC & \bCCEPbiasAC &  \bOLSEbiasAD &  \bQMLEbiasAD & \bBCQMbiasAD & \bCCEPbiasAD \\
   $(M=3)$               &  std  &   \bOLSEstdeAC &  \bQMLEstdeAC & \bBCQMstdeAC & \bCCEPstdeAC &  \bOLSEstdeAD &  \bQMLEstdeAD & \bBCQMstdeAD & \bCCEPstdeAD \\
                  &  rmse &   \bOLSErmseAC &  \bQMLErmseAC & \bBCQMrmseAC & \bCCEPrmseAC &  \bOLSErmseAD &  \bQMLErmseAD & \bBCQMrmseAD & \bCCEPrmseAD 
                   \\[8pt]
   $T=20$  &  bias &   \bOLSEbiasAE &  \bQMLEbiasAE & \bBCQMbiasAE & \bCCEPbiasAE &  \bOLSEbiasAF &  \bQMLEbiasAF & \bBCQMbiasAF & \bCCEPbiasAF \\
    $(M=4)$              &  std  &   \bOLSEstdeAE &  \bQMLEstdeAE & \bBCQMstdeAE & \bCCEPstdeAE &  \bOLSEstdeAF &  \bQMLEstdeAF & \bBCQMstdeAF & \bCCEPstdeAF \\
                  &  rmse &   \bOLSErmseAE &  \bQMLErmseAE & \bBCQMrmseAE & \bCCEPrmseAE &  \bOLSErmseAF &  \bQMLErmseAF & \bBCQMrmseAF & \bCCEPrmseAF 
                   \\[8pt]
        $T=40$  &  bias &   \bOLSEbiasAG &  \bQMLEbiasAG & \bBCQMbiasAG & \bCCEPbiasAG &  \bOLSEbiasAH &  \bQMLEbiasAH & \bBCQMbiasAH & \bCCEPbiasAH \\
    $(M=5)$              &  std  &   \bOLSEstdeAG &  \bQMLEstdeAG & \bBCQMstdeAG & \bCCEPstdeAG &  \bOLSEstdeAH &  \bQMLEstdeAH & \bBCQMstdeAH & \bCCEPstdeAH \\
                  &  rmse &   \bOLSErmseAG &  \bQMLErmseAG & \bBCQMrmseAG & \bCCEPrmseAG &  \bOLSErmseAH &  \bQMLErmseAH & \bBCQMrmseAH & \bCCEPrmseAH 
             \\[8pt]
                $T=80$ &  bias &   \bOLSEbiasAI &  \bQMLEbiasAI & \bBCQMbiasAI & \bCCEPbiasAI &  \bOLSEbiasAJ &  \bQMLEbiasAJ & \bBCQMbiasAJ & \bCCEPbiasAJ \\
    $(M=6)$              &  std  &   \bOLSEstdeAI &  \bQMLEstdeAI & \bBCQMstdeAI & \bCCEPstdeAI &  \bOLSEstdeAJ &  \bQMLEstdeAJ & \bBCQMstdeAJ & \bCCEPstdeAJ \\
                  &  rmse &   \bOLSErmseAI &  \bQMLErmseAI & \bBCQMrmseAI & \bCCEPrmseAI &  \bOLSErmseAJ &  \bQMLErmseAJ & \bBCQMrmseAJ & \bCCEPrmseAJ \\
                  \hline
   \end{tabular}
   \end{center}
\end{table}

\begin{table}[H]
   \begin{center}
   \caption{\label{tab:extra2} %
                      Same as Table~\ref{tab:T2} in main paper, but also reporting pooled CCE estimator of
                      Pesaran~(2006). }
      \begin{tabular}{l@{\;\,}l@{\quad}l@{\;\;}l@{\;\;}l@{\;\;}l@{\quad}l@{\;\;}l@{\;\;}l@{\;\;}l}
     \hline 
                  &   &  \multicolumn{4}{c}{$\rho^0=0.3$}    & \multicolumn{4}{c}{$\rho^0=0.9$}    \\[0.1cm]
                  &   & OLS & FLS & BC-FLS & CCE &  OLS & FLS & BC-FLS  & CCE \\
      \hline 
    $T=5$  &  bias &   \bbOLSEbiasAA &  \bbQMLEbiasAA & \bbBCQMbiasAA & \bbCCEPbiasAA &  \bbOLSEbiasAB &  \bbQMLEbiasAB & \bbBCQMbiasAB & \bbCCEPbiasAB \\
     $(M=2)$             &  std  &   \bbOLSEstdeAA &  \bbQMLEstdeAA & \bbBCQMstdeAA & \bbCCEPstdeAA &  \bbOLSEstdeAB &  \bbQMLEstdeAB & \bbBCQMstdeAB & \bbCCEPstdeAB \\
                  &  rmse &   \bbOLSErmseAA &  \bbQMLErmseAA & \bbBCQMrmseAA & \bbCCEPrmseAA &  \bbOLSErmseAB &  \bbQMLErmseAB & \bbBCQMrmseAB & \bbCCEPrmseAB 
                   \\[8pt]
                      $T=10$  &  bias &   \bbOLSEbiasAC &  \bbQMLEbiasAC & \bbBCQMbiasAC & \bbCCEPbiasAC &  \bbOLSEbiasAD &  \bbQMLEbiasAD & \bbBCQMbiasAD & \bbCCEPbiasAD \\
   $(M=3)$               &  std  &   \bbOLSEstdeAC &  \bbQMLEstdeAC & \bbBCQMstdeAC & \bbCCEPstdeAC &  \bbOLSEstdeAD &  \bbQMLEstdeAD & \bbBCQMstdeAD & \bbCCEPstdeAD \\
                  &  rmse &   \bbOLSErmseAC &  \bbQMLErmseAC & \bbBCQMrmseAC & \bbCCEPrmseAC &  \bbOLSErmseAD &  \bbQMLErmseAD & \bbBCQMrmseAD & \bbCCEPrmseAD 
                   \\[8pt]
                      $T=20$  &  bias &   \bbOLSEbiasAE &  \bbQMLEbiasAE & \bbBCQMbiasAE & \bbCCEPbiasAE &  \bbOLSEbiasAF &  \bbQMLEbiasAF & \bbBCQMbiasAF & \bbCCEPbiasAF \\
    $(M=4)$              &  std  &   \bbOLSEstdeAE &  \bbQMLEstdeAE & \bbBCQMstdeAE & \bbCCEPstdeAE &  \bbOLSEstdeAF &  \bbQMLEstdeAF & \bbBCQMstdeAF & \bbCCEPstdeAF \\
                  &  rmse &   \bbOLSErmseAE &  \bbQMLErmseAE & \bbBCQMrmseAE & \bbCCEPrmseAE &  \bbOLSErmseAF &  \bbQMLErmseAF & \bbBCQMrmseAF & \bbCCEPrmseAF 
                   \\[8pt]
                      $T=40$  &  bias &   \bbOLSEbiasAG &  \bbQMLEbiasAG & \bbBCQMbiasAG & \bbCCEPbiasAG &  \bbOLSEbiasAH &  \bbQMLEbiasAH & \bbBCQMbiasAH & \bbCCEPbiasAH \\
    $(M=5)$              &  std  &   \bbOLSEstdeAG &  \bbQMLEstdeAG & \bbBCQMstdeAG & \bbCCEPstdeAG &  \bbOLSEstdeAH &  \bbQMLEstdeAH & \bbBCQMstdeAH & \bbCCEPstdeAH \\
                  &  rmse &   \bbOLSErmseAG &  \bbQMLErmseAG & \bbBCQMrmseAG & \bbCCEPrmseAG &  \bbOLSErmseAH &  \bbQMLErmseAH & \bbBCQMrmseAH & \bbCCEPrmseAH
                   \\[8pt]
                      $T=80$ &  bias &   \bbOLSEbiasAI &  \bbQMLEbiasAI & \bbBCQMbiasAI & \bbCCEPbiasAI &  \bbOLSEbiasAJ &  \bbQMLEbiasAJ & \bbBCQMbiasAJ & \bbCCEPbiasAJ \\
    $(M=6)$              &  std  &   \bbOLSEstdeAI &  \bbQMLEstdeAI & \bbBCQMstdeAI & \bbCCEPstdeAI &  \bbOLSEstdeAJ &  \bbQMLEstdeAJ & \bbBCQMstdeAJ & \bbCCEPstdeAJ \\
                  &  rmse &   \bbOLSErmseAI &  \bbQMLErmseAI & \bbBCQMrmseAI & \bbCCEPrmseAI &  \bbOLSErmseAJ &  \bbQMLErmseAJ & \bbBCQMrmseAJ & \bbCCEPrmseAJ \\
                  \hline
   \end{tabular}
   \end{center}
\end{table}

\begin{table}[H]
   \begin{center}
   \caption{\label{tab:extra3} %
                      Analogous to Table~\ref{tab:T2} in main paper, but with $R=2$ correctly specified, and
                      also reporting pooled CCE estimator of
                      Pesaran~(2006). }
          \begin{tabular}{l@{\;\,}l@{\quad}l@{\;\;}l@{\;\;}l@{\;\;}l@{\quad}l@{\;\;}l@{\;\;}l@{\;\;}l}
     \hline 
                  &   &  \multicolumn{4}{c}{$\rho^0=0.3$}    & \multicolumn{4}{c}{$\rho^0=0.9$}    \\[0.1cm]
                  &   & OLS & FLS & BC-FLS & CCE &  OLS & FLS & BC-FLS  & CCE \\
      \hline 
    $T=5$  &  bias &   \bbbbOLSEbiasAA &  \bbbbQMLEbiasAA & \bbbbBCQMbiasAA & \bbbbCCEPbiasAA &  \bbbbOLSEbiasAB &  \bbbbQMLEbiasAB & \bbbbBCQMbiasAB & \bbbbCCEPbiasAB \\
     $(M=2)$             &  std  &   \bbbbOLSEstdeAA &  \bbbbQMLEstdeAA & \bbbbBCQMstdeAA & \bbbbCCEPstdeAA &  \bbbbOLSEstdeAB &  \bbbbQMLEstdeAB & \bbbbBCQMstdeAB & \bbbbCCEPstdeAB \\
                  &  rmse &   \bbbbOLSErmseAA &  \bbbbQMLErmseAA & \bbbbBCQMrmseAA & \bbbbCCEPrmseAA &  \bbbbOLSErmseAB &  \bbbbQMLErmseAB & \bbbbBCQMrmseAB & \bbbbCCEPrmseAB 
                   \\[8pt]
                      $T=10$  &  bias &   \bbbbOLSEbiasAC &  \bbbbQMLEbiasAC & \bbbbBCQMbiasAC & \bbbbCCEPbiasAC &  \bbbbOLSEbiasAD &  \bbbbQMLEbiasAD & \bbbbBCQMbiasAD & \bbbbCCEPbiasAD \\
   $(M=3)$               &  std  &   \bbbbOLSEstdeAC &  \bbbbQMLEstdeAC & \bbbbBCQMstdeAC & \bbbbCCEPstdeAC &  \bbbbOLSEstdeAD &  \bbbbQMLEstdeAD & \bbbbBCQMstdeAD & \bbbbCCEPstdeAD \\
                  &  rmse &   \bbbbOLSErmseAC &  \bbbbQMLErmseAC & \bbbbBCQMrmseAC & \bbbbCCEPrmseAC &  \bbbbOLSErmseAD &  \bbbbQMLErmseAD & \bbbbBCQMrmseAD & \bbbbCCEPrmseAD 
                   \\[8pt]
                      $T=20$  &  bias &   \bbbbOLSEbiasAE &  \bbbbQMLEbiasAE & \bbbbBCQMbiasAE & \bbbbCCEPbiasAE &  \bbbbOLSEbiasAF &  \bbbbQMLEbiasAF & \bbbbBCQMbiasAF & \bbbbCCEPbiasAF \\
    $(M=4)$              &  std  &   \bbbbOLSEstdeAE &  \bbbbQMLEstdeAE & \bbbbBCQMstdeAE & \bbbbCCEPstdeAE &  \bbbbOLSEstdeAF &  \bbbbQMLEstdeAF & \bbbbBCQMstdeAF & \bbbbCCEPstdeAF \\
                  &  rmse &   \bbbbOLSErmseAE &  \bbbbQMLErmseAE & \bbbbBCQMrmseAE & \bbbbCCEPrmseAE &  \bbbbOLSErmseAF &  \bbbbQMLErmseAF & \bbbbBCQMrmseAF & \bbbbCCEPrmseAF 
                   \\[8pt]
                      $T=40$  &  bias &   \bbbbOLSEbiasAG &  \bbbbQMLEbiasAG & \bbbbBCQMbiasAG & \bbbbCCEPbiasAG &  \bbbbOLSEbiasAH &  \bbbbQMLEbiasAH & \bbbbBCQMbiasAH & \bbbbCCEPbiasAH \\
    $(M=5)$              &  std  &   \bbbbOLSEstdeAG &  \bbbbQMLEstdeAG & \bbbbBCQMstdeAG & \bbbbCCEPstdeAG &  \bbbbOLSEstdeAH &  \bbbbQMLEstdeAH & \bbbbBCQMstdeAH & \bbbbCCEPstdeAH \\
                  &  rmse &   \bbbbOLSErmseAG &  \bbbbQMLErmseAG & \bbbbBCQMrmseAG & \bbbbCCEPrmseAG &  \bbbbOLSErmseAH &  \bbbbQMLErmseAH & \bbbbBCQMrmseAH & \bbbbCCEPrmseAH
                   \\[8pt]
                      $T=80$ &  bias &   \bbbbOLSEbiasAI &  \bbbbQMLEbiasAI & \bbbbBCQMbiasAI & \bbbbCCEPbiasAI &  \bbbbOLSEbiasAJ &  \bbbbQMLEbiasAJ & \bbbbBCQMbiasAJ & \bbbbCCEPbiasAJ \\
    $(M=6)$              &  std  &   \bbbbOLSEstdeAI &  \bbbbQMLEstdeAI & \bbbbBCQMstdeAI & \bbbbCCEPstdeAI &  \bbbbOLSEstdeAJ &  \bbbbQMLEstdeAJ & \bbbbBCQMstdeAJ & \bbbbCCEPstdeAJ \\
                  &  rmse &   \bbbbOLSErmseAI &  \bbbbQMLErmseAI & \bbbbBCQMrmseAI & \bbbbCCEPrmseAI &  \bbbbOLSErmseAJ &  \bbbbQMLErmseAJ & \bbbbBCQMrmseAJ & \bbbbCCEPrmseAJ \\
                  \hline
   \end{tabular}
      \end{center}
\end{table}
\end{document}